  \providecommand\BibTeX{{%
    \normalfont B\kern-0.5em{\scshape i\kern-0.25em b}\kern-0.8em\TeX}}}
\begin{document}
\title{Scale-Invariant Strength Assortativity of Streaming Butterflies
}

\author{Aida Sheshbolouki}
\affiliation{%
  \institution{University of Waterloo}
  \country{Canada}
  }
\email{aida.sheshbolouki@uwaterloo.ca}
\orcid{0000-0001-5725-1781}

\author{M. Tamer {\"O}zsu}
\affiliation{%
  \institution{University of Waterloo}
  \country{Canada}
  }
\email{tamer.ozsu@uwaterloo.ca}
\orcid{0000-0002-8126-1717}

\renewcommand{\shortauthors}{Aida Sheshbolouki and M. Tamer {\"O}zsu}

\begin{abstract}
Bipartite graphs are rich data structures with prevalent applications and identifier structural features. However, less is known about their growth patterns, particularly in streaming settings. Current works study the patterns of static or aggregated temporal graphs optimized for certain down-stream analytics or ignoring multipartite /non-stationary data distributions, emergence patterns of subgraphs, and streaming paradigms. To address these, we perform statistical network analysis over web log streams and identify the governing patterns underlying the bursty emergence of mesoscopic building blocks, 2,2-bicliques known as butterflies, leading to a phenomenon that we call ``scale-invariant strength assortativity of streaming butterflie''. We provide the graph-theoretic explanation of this phenomenon. We further introduce a set of micro-mechanics in the body of a streaming growth algorithm, \emph{sGrow}, to pinpoint the generative origins. \emph{sGrow} supports streaming paradigms, emergence of 4-vertex graphlets, and provides user-specified configurations for the scale, burstiness, level of strength assortativity, probability of out-of-order records, generation time, and time-sensitive connections. Comprehensive Evaluations on pattern reproducing and stress testing validate the effectiveness, efficiency, and robustness of \emph{sGrow} in realization of the observed patterns independent of initial conditions, scale, temporal characteristics, and model configurations. Theoretical and experimental analysis verify the \emph{robust} ability of \emph{sGrow} in generating streaming graphs based on user-specified configurations that affect the scale and burstiness of the stream, level of strength assortativity, probability of-of-order streaming records, generation time, and time-sensitive connections.
\end{abstract}

\maketitle
\newpage
\tableofcontents
\clearpage
\newpage
\section{Introduction}
We study the bursty emergence of meso-scale building blocks in bipartite streaming graphs representing web logs to uncover the crucial mixing patterns and identify/explain their microscopic generative origins by introducing a streaming growth model. All complex networks have an underlying bipartite structure~\cite{vasques2020transitivity, guillaume2004bipartite}. Even those networks that are naturally unipartite, e.g. social networks, have an inherent bipartite structure driving the topological structure of the unipartite version \cite{newman2003social, vasques2020transitivity, guillaume2004bipartite, vasques2018degree}. These bipartite structures are captured by bipartite graphs, 
which are rich data models that provide full representation without information loss for interactions that naturally occur in one mode (compressed datasets as unipartite graphs~\cite{zhou2007bipartite}), or multiple modes (high order interconnections as hyper graphs~\cite{ito2020fast, aksoy2020hypernetwork, wang2009hypersum}). A prevalent use case is where bipartite graphs capture the interactions of users with entities spanning different domains such as social networks (users-hashtags~\cite{zhang2019language}), web-based services (users-websites, multimedia services, and products~\cite{ting2013personalized, jamali2011modeling, sun2020neighbor, wei2019mmgcn, wang2021bipartite}), financial systems (users-donation campaigns~\cite{akoglu2010oddball}), transportation systems (users-registered vehicles~\cite{ji2018detection}), and  communication systems (users-phone calls~\cite{ying2018fraudetector+}).

Despite the widespread applications, richness and critical role in determining the topological properties, less is known about the growth and generative patterns of bipartite interconnected data particularly in streaming settings~\cite{bonifati2020graph} where the continous rapid temporal evolutions lead to unbounded/unknown stream length and non-stationary distributions of the underlying data snapshots. In this context, the temporal evolutions usually occur wrt the most recent graph topology (i.e. update events are not global); the evolving streaming rates leads to non-uniform inter-event statistics; and multiple generative sources (as well as factors such as transmission delays) cause  out-of-order arrival of data records to a processing unit which has no control over the arrival order or data rate~\cite{golab2010data, ozsu2019stream}. 
Streaming graphs are different from aggregated temporal graphs that are a sequence of graph snapshots (representing a dynamic graph with an entirely available structure that undergoes temporal changes). Moreover, weight addition patterns~\cite{mcglohon2008weighted}, streaming context~\cite{ozsu2019big}, and data-driven semantics~\cite{barabasi2005origin} lead to burstiness in streaming record arrivals. An example is the case of user-product interactions in Alibaba e-commerce services that incurred a processing rate of $470$ million event logs per second during a peak interval~\cite{ozsu2019big}. Current works study and model the generative patterns of static or aggregated temporal graphs  commonly optimized for down stream analytics or ignore (1) multipartite/non-stationary data distributions, (2) \emph{emergence} patterns (not just existence) of building blocks, and (3) streaming paradigms  such as unbounded/time-sensitive updates, evolving streaming rates, and out-of-order/bursty records (e.g.,~\cite{akoglu2008rtm, gorke2012efficient, zeno2021dymond, aksoy2017measuring, leskovec2005graphs, ammar2013wgb, wang2021bipartite, zhou2020data}). In this paper, we perform statistical analysis over web log streams to infer the key features governing the emergence of the mesoscale building blocks of the bipartite streaming graphs. 

\textbf{Mesoscopic network inference.} Frequent subgraphs (motifs \cite{motifs} or graphlets \cite{rossi2020heterogeneous}) as the building blocks of graphs~\cite{motifs} play an important role in understanding the structure of graphs~\cite{ahmed2015algorithms, rossi2020heterogeneous, benson2016higher, liu2019sampling, syed2010motif, wu2011characterizing, paranjape2017motifs, li2021analyzing, ma2019linc, kovanen2011temporal, zeno2021dymond, purohit2018temporal, hu2019discovering, wang2019vertex, sheshbolouki2015feedback, sheshbolouki2021sgrapp}.
Particularly, the presence of butterflies ($2,2$-bicliques) as the simplest and most local form of cycles in the bipartite graphs has been identified as the main driver of transitivity and degree assortativity in the corresponding unipartite graphs (projections)~\cite{vasques2020transitivity}. Moreover, butterflies are of great importance in measuring properties such as cohesion, network stability and error tolerance~\cite{zhu2020hurricane}.  
Recently, various butterfly-based data models and analytic algorithms have been proposed for measuring rectangle-based connectivity, estimation of maximal bicliques and bitruss decomposition in heterogeneous information networks~\cite{fang2021cohesive, hao2020cohesive,aksoy2017measuring, li2021approximately, zheng2020butterfly, dong2021butterfly}. In a previous study~\cite{sheshbolouki2021sgrapp}, 
we have shown that butterflies are temporal motifs with bursty emergence patterns giving rise to significantly higher occurrence numbers over the timeline of edge arrivals compared to random (null) graphs. The quantitative emergence pattern is formulated as the butterfly densification power law (BPL) which states that the number of butterflies at time $t$ follows a power law function of the number of edges at time $t$. In real world graphs displaying BPL, there is a strong positive correlation between the butterfly support and degree of vertices and the probability of butterflies incorporating hubs (i.e. vertices with degree higher than average unique degree of seen vertices) is high. Also, there is a correlation between the frequency and average degree of hubs with the frequency of butterflies.  Hence, the hubs are identified as the main contributors to bursty formation of butterflies. However, BPL does not happen in random bipartite graphs constructed by the preferential attachment model. 
In preferential attachment model~\cite{albert2002statistical}
, the degree of hubs increases over time with the arrival of new vertices; however, the number of butterflies does not grow as rapidly as that of real-world graphs.  Therefore, there exist other factors besides the vertex degree that are impactful in butterfly densification. In this study, our goal is to discover these factors. We aim to answer two questions: \textit{how do butterflies as the building blocks of bipartite streaming graphs emerge over time?} and \textit{what is the generative process underlying their emergence?} The answers to these questions allow us to model the realistic growth patterns in bipartite streaming graphs. 
To further focus our research, we investigate the organizing principles in web-based user-item streams. The sequences of user-item interactions in web-services are typically associated with a weight that can be an explicit value such as rating, or an implicit value denoting the multiplicity of interactions between a pair of vertices. Moreover, the time-labeled interactions are continuously generated with a non-stable rate giving rise to emergence of an unbounded dynamic structure. The edge weights and fine grained temporal information enable exploring the temporal and connectivity patterns. We use publicly available data in which the timestamp and weight are explicitly given in the data records (common in rating graphs). We do not consider implicit weights computed by aggregating multiple edges between two vertices since such aggregations require aggregating the timestamps as well, which in turn manipulates the temporal properties and makes the temporal analysis unreliable. We study the bipartite structures directly as opposed to investigating the unipartite projections. Although projection enables using standard tools for analyzing unipartite graphs, three main issues exist: (1) information loss as the projection is not bijective and timestamps and weights are not captured in projection, (2) edge inflation as high degree vertices are transformed to dense cliques, and (3) unreliable patterns such as community structure and degree mixing patterns rooted from artificial edge densification and subgraph formation~\cite{grujic2009mixing, newman2001random, guillaume2004bipartite, guimera2007module, barber2007modularity, latapy2008basic}. 

\textbf{Contributions.} We explore the preference of vertices forming butterflies to connect to each other wrt strength similarity (i.e. strength assortativity) by integrating the vertex degrees and edge weights as vertex strength and considering burst of edges. To this end, we show the limitation of conventional approaches to study such mixing patterns and introduce a new quantification approach, based on tracking the localization of a vector embedding the data distributions in sequential burst-based graph snapshots, to enable effective temporal analysis of strength assortativity in graphs having abundant cliques, different scales, and multiple/skewed strength distributions; Utilizing this approach, we unveil the “scale-invariant strength assortativity of streaming butterflies” which represents a co-occurrence of three patterns: 
\begin{enumerate}
    \item \textit{Butterfly densification}: The butterfly count grows super-linearly wrt the edge count.
    \item \textit{Strength diversification}: Butterflies display a wide range of vertex strengths with a right skewed distribution and the skew increases over time.
    \item \textit{Steady strength assortativity}: Butterflies display a stable strength assortativity since the right skewed distribution of strength difference of connected butterfly vertices is fixed-shape, although the skew increases over time, and strength differences are localized below the mean.
\end{enumerate}
The co-occurrence of these patterns is counter-intuitive and interesting: As the stream and the number of butterflies grow rapidly, we observe that diversity of strengths for butterfly vertices increases and strong (i.e. high strength) vertices get stronger and obtain weak neighbors with the increasing of variance of strength differences. Therefore, we expect an increasing trend of disassortativity. However, the majority of butterfly edges are formed by vertices with similar strength and this assortativity remains at a fixed level regardless of stream size or butterfly count.
Therefore, the co-occurrence of these patterns implies non-trivial mixing patterns. Moreover, our analyses of existing growth mechanisms that yield skewed distributions, degree correlation, and cohesive structures highlight the essence of new growth models. We explain the confounding data-driven semantics in the domain of user-item interactions as these patterns relate to three graph theory concepts: burstiness, rich-get-richer, and core-periphery. 
Furthermore, we introduce \emph{sGrow}, a streaming growth model that explains the observed patterns and preserves related concepts. 
\emph{sGrow} is based on addition of edge bursts which satisfies streaming data paradigms, preserves realistic patterns quantitatively and qualitatively, and also makes the stream generation scalable. Moreover, \emph{sGrow} enables generating a sequence of bipartite edges attributed with timestamps and weights, isolated/out-of-order edges, and four-vertex graphlets. 
Our evaluations validate that \emph{sGrow} efficiently and effectively reproduces the bursty emergence patterns of streaming butterflies, independent of initial conditions, scale, temporal characteristics, and model configurations. Our experiments also verify the robustness of \emph{sGrow} in generating realistic streaming graphs configured with user-specified properties that affect generation time, scale and burstiness of the stream, level of strength assortativity, probability of out-of-order streaming records, and time-sensitive connections. Our contributions in this paper are the following:
\begin{itemize}
    \item A new measure for assortativity analysis. We introduce an effective quantification approach for meso\-scale mixing patterns in weighted bipartite streaming graphs which is also applicable to static and/or unipartite graphs.
    \item Butterfly emergence patterns. We uncover the \textit{scale-invariant strength assortativity of streaming butterflies} rooted from three real-world streaming graph patterns (butterfly densification, strength diversification, and steady strength assortativity) and three graph theory concepts (burstiness, rich-get-richer, and core-periphery).
    \item A streaming growth model. We introduce \emph{sGrow}, a streaming growth algorithm that explains 
    the observed patterns and preserves the confounding concepts while supporting streaming paradigms, emergence of 4-vertex graphlets, and user-specified configurations. Accordingly, we introduce: 
    \begin{itemize}
        \item A reference guide supported by extensive stress-testing experiments for configuring the parameters in benchmarking applications.
        \item A set of microscopic mechanisms to benefit development of streaming algorithms and optimized-models.
    \end{itemize}
\end{itemize}

\textbf{Impact.} While our quantification approach can benefit network inference over temporal graphs, our analysis of bipartite graph streams and our growth model are impactful in the following cases:
\begin{itemize}
    \item \textit{Streaming graph benchmarks}. Performance evaluation of algorithms including but not limited to butterfly-based algorithms relies on considering the characteristics of input graphs~\cite{alucc2014diversified, ammar2013wgb, bonifati2018survey}. Given the lack of streaming graph generators, our work help understand the important graph characteristics as well as providing realistic \emph{configurable} and scalable graphs for stress testing purposes.
    \item \textit{Machine learning benchmarks}. Collecting/annotating the training/testing datasets for graph-based models in domains such as outlier detection and computer vision is challenging due to the nature of data (e.g., rare outliers and diverse image instances), and expenses of manually labeling the  instances~\cite{zhao2020using, wang2021bipartite, park2021deep}.
    Solutions include building benchmark datasets via artificial instance injection
    ~\cite{campos2016evaluation, zhao2020using, emmott2013systematic} 
    and applying weakly-supervised techniques 
    ~\cite{liu2020weakly,wang2021bipartite, gu2014superpixel}. 
    Our analytical approach can be extended to such domains to inform the design of graph-based models with the temporal connectivity patterns 
    and also generating realistic yet synthetic datasets to which the artificial instances are injected. 
    \item \textit{Concept drift modeling}. To improve the performance of online adaptive learning algorithms in stream-based recommender systems for web activities, 
    it is important to consider the temporal evolution of modeled concepts due to a change in the distribution of log data or a change in the relation between data and target variable  (i.e. concept drift) ~\cite{al2021survey, rappaz2021recommendation, gama2014survey}. Considering a butterfly as two users with mutual preferences and two items with mutual perceptions, our work impacts modeling the parallel drift of concepts such as user preferences and item perception.
    \item \textit{Algorithm developments}. Graph analytics and generative models utilize microscopic mechanisms and graph pattern for algorithm design~\cite{wang2021bipartite, sheshbolouki2021sgrapp}. 
    Our 
    microscopic mechanisms and growth patterns benefit these cases as well. 
\end{itemize}

The rest of paper is organized as following: Section~\ref{sec:preliminaries} reviews preliminaries and datasets; Section~\ref{sec:metric} introduces a new perspective for strength assortativity quantification; Section~\ref{sec:observations} is devoted to our observations in real-world and synthetic streams as well as a brief survey of related works on graph patterns and growth rules; Section~\ref{sec:model} introduces sGrow; Section~\ref{sec:evaluations} reports the evaluations; Section~\ref{sec:conclusion} concludes the paper.
\section{Preliminaries and datasets}\label{sec:preliminaries}
We denote a bipartite graph as $G=(V , E)$, where $V=V_i\cup V_j$, $V_i \cap V_j=\emptyset$ and $E \subseteq V_i \times V_j$. We refer to $V_i=\{ v_i\}$ and $V_j=\{v_j\}$ as i-vertices and j-vertices. We denote the set of immediate/nearest neighbors of an i-vertex $v_i$, called j-neighbors, as $N_j(v_i)$; similarly for i-neighbors $N_i(v_j)$.

\begin{definition}[\hypertarget{sgr}{Streaming Graph Record}]\label{def:sgr}
A streaming graph record (\textit{sgr}) is a quadruple $r^m=\langle v_i^m,v_j^m,\omega_{ij}^m,\tau^m \rangle$ where $m$ is the sgr index, $\omega_{ij}^m$ is the weight of the edge between $v_i$ and $v_j$, and $\tau^m$ is the timestamp of the sgr assigned by the generative source.
\end{definition}

\begin{definition}[\hypertarget{wbsg}{Weighted Bipartite Streaming Graph}]\label{def:wbsgraph} A weighted bipartite streaming graph  is an unbounded 
sequence of sgrs denoted as $\Re= \langle r^1, r^2, \cdots \rangle$. 
\end{definition}

The sequence of sgrs can be ordered by either timestamps or arrival times. We consider the latter to support out-of-order sgrs (late arrivals).

\begin{definition}[\hypertarget{Burst}{Burst}]\label{def:burst}
A burst is the batch of subsequent sgrs with same timestamp. The number of bursts in the stream is denoted as $N_b$ and equals to the number of unique timestamps.
\end{definition}

We highlight that a burst is the batch of \emph{subsequent}, not \emph{all}, sgrs with same timestamp. This definition enables burst-based analysis of out-of-order sgrs. 

\begin{definition}[\hypertarget{bbgs}{Burst-based Graph Snapshot}]\label{def:burstbasedgraphsnapshot}
A burst-based graph snapshot, $G_{N_b}$, is the bipartite graph formed by the prefix of sgrs seen since the first timestamp $\tau^1$ until $N_b$-th unique timestamp. i.e. $G_{N_b}=(V,E)$, s.t. $V=V_i\cup V_j$, $V_i \cap V_j=\emptyset$, $E \subseteq V_i \times V_j$, and $E=\{r^m|\tau^m\in [\tau^0, \tau^{N_b}] \}$.
\end{definition}



Weighted bipartite streaming graphs display bursty patterns since (1) weight additions in temporal graphs follow bursty patterns~\cite{mcglohon2008weighted}, (2) data streams are commonly characterized as bursty~\cite{zhong2021burstsketch, ozsu2019stream}, and (3) data-driven semantics imply burstiness (for instance, human-initiated events are driven by the queuing processes of human decision making leading to non-Poisson inter-event statistics ~\cite{barabasi2005origin}). That is why real-world user-item interactions are characterized as bursty and the distribution of timestamps change over time. In other words, groups of interactions occur in short periods of intense activity (bursts) separated by relatively long gaps of inactivity. We study user-item data sets and due to the bursty characterization of these data streams, we conduct our analysis over graph snapshots created based on the number of bursts in the stream.
This enables comparing the graph snapshots at different scales (number of vertices/edges) and also evaluating the temporal properties in parallel to structural analyses.

\begin{definition}[Vertex Strength]\label{def:strength}
Vertex strength (shortly strength) is defined as the total weight of edges connected to the vertex, $S_i=\Sigma_{j\in N_j(i)}\omega_{ij}$, $S_j=\Sigma_{i\in N_i(j)}\omega_{ij}$~\cite{yook2001weighted, barrat2004architecture, barrat2004weighted}.
\end{definition}
Vertex strength is a natural generalization of the connectivity of graphs and is considered as a significant measure of graph properties in terms of weights~\cite{barrat2004weighted}. We use this notion for parallel study of edge weights and vertex degrees and their impact on the emergence patterns of butterflies. 

\textbf{Datasets}-- We use seven real-world graph datasets: Ciao~\cite{guo2014etaf}, Epinions~\cite{tangetal12b}, WikiLens~\cite{frankowski2007recommenders}, MovieLens100k  (ML100k)~\cite{herlocker2017algorithmic}, MovieLens1m (ML1m)~\cite{harper2015movielens}, Amazon~\cite{jindal2008opinion, lim2010detecting, mukherjee2012spotting}, and Yahoo songs (Yahoo)~\cite{dror2012yahoo}. All datasets are available at public repositories KONECT~\cite{kunegis2013konect} and Netzschleuder~\cite{netzscleuder}. These datasets include naturally occurring bipartite interactions as a set of records including the user ID, item ID, rating, and timestamp. The rating values are in the set $\{1,2,3,4,5\}$ in all datasets except for WikiLens with ratings in $\{0,0.5,1,..,4.5,5\}$. In WikiLens, we rounded the ratings and replaced ratings equal to $0$ with $1$ to convert the rating scale to $1-5$. 
Table~\ref{tab:graphs} provides the statistics of all the graph streams. These streams cover different structural properties (edge density, average vertex degree, and wedge (i.e. two-path) count) and temporal characteristics (number and average size of \hyperlink{Burst}{bursts}). Ciao and Amazon have low average degree of both i- and j-vertices, while they are bursty streams. Epinions has higher average degree of i-vertices compared to that of j-vertices with a very high number of wedges (the building blocks of butterflies),  and it is a bursty stream with large bursts. WikiLens has high average degree of i-vertices but it is not bursty.  ML100k has high average degree of i- and j-vertices and high number of wedges and it is roughly as bursty as ML1m and Yahoo which have higher average degree of i- and j-vertices and higher number of wedges.
\begin{table*}[ht]\caption{Real-world user-item graph datasets. $d_i$ and $d_j$ denote the average degree of i-vertices and j-vertices, respectively. $N_b$ and $b=|E|/N_b$ denote the number and the average size of bursts, respectively. $\bigwedge$ denotes the number of wedges (two-paths).} 
\small \centering
    \begin{tabular}{p{1.1cm} p{1.1cm} p{1.1cm} p{1.4cm} p{0.7cm} p{0.7cm} p{1.5cm} p{0.8cm} p{2cm}} \hline
    & $|V_i|$ & $|V_j|$ & $|E|$ & $d_i$ & $d_j$& $N_b$ & $b$ &  $\bigwedge$ 
    \\  \hline 
    Ciao & $17,615$ & $16,121$ & $72,665$ & $4.1$ & $4.5$ & $4,919$ & $14.8$ & $4,896,641$
    \\
    Epinions & $120,492$ & $755,760$ & $13,668,320$ & $113.4$ & $18$ & $501$ & $27,282$
    & $69,245,866,714$
    \\ 
    WikiLens & $326$ & $5,111$ & $26,937$ & $82.6$ & $5.2$ & $26,239$ & $1$ & $6,316,744$ 
    \\
    ML100k & $943$ & $1,682$ & $100,000$ & $106$& $59.4$ & $49,282$ & $2$ & $18,367,254$ 
    \\ 
    ML1m & $6,040$ & $3,706$ & $1,000,210$ & $165.6$& $269.9$ & $458,455$ & $2.2$ & $602,009,923$ 
    \\ 
    Amazon & $2,146,057$ & $1,230,915$ & $5,838,041$ & $2.7$ & $4.7$ & $3,329$ & $1,753.7$ & $627,186,651$
    \\
    Yahoo & $1,000,990$ & $624,961$ & $256,804,235$ & $256.5$ & $410.9$ & $105,331,405$ & $2.4$ & $4,627,224,528,654$
    \end{tabular}\label{tab:graphs}
    
\end{table*}

Our analyses rely on exact butterfly listing over landmark windows which is computationally expensive in bursty streams. We use the exact algorithm in sGrapp suit~\cite{sheshbolouki2021sgrapp} to list the butterflies over sequential \hyperlink{bbgs}{burst-based graph snapshots}. We study the emergence of a certain number of butterflies in different streams with different structural/temporal properties. That is, we consider the prefix of streams until the arrival of up to $\approx$$6.5\times10^6$ butterflies which covers the entire stream in WikiLens with $26220$ bursts and a prefix of $10000$, $9600$, $460$, $2000$, and $15000$ bursts in ML1m, Ml100k, Epinions, Amazon, and Yahoo, respectively. In Ciao, we checked the entire stream with $4900$ bursts and $\approx$$6.4\times10^5$ butterflies (Table~\ref{tab:snapshots}). We divide the corresponding timeline of burst arrival into  $20$ equally distanced points and at each point we study the butterflies in the burst-based graph snapshot (Definition~\ref{def:burstbasedgraphsnapshot}). In our analyses we care about the value and the trend of data points and the number of graph snapshots (here $20$) just change the smoothness of the plots and does not affect the results since we check streams with different distribution of timestamps and the scale of graph snapshots differs in various streams. The number of edges/butterflies in each burst varies in different graphs depending on the burstiness of the graph stream. 

\begin{table}[ht]\caption{The number of edges $|E^{20}|$, the number of bursts $N_b^{20}$, and the butterfly count $\bowtie^{20}$ of the $20$th graph snapshot in real-world graph streams. } 
\small \centering
    \begin{tabular}{p{1cm} p{1cm} p{1cm} p{1.2cm}}\hline
    & $|E^{20}|$ & $N_b^{20}$ & $\bowtie^{20}$\\  \hline 
    Ciao & $72,574$ & $4,900$ & $636,440$\\
    Epinions & $296,665$ & $460$ & $6,418,862$\\ 
    WikiLens & $26,918$ & $26,220$ & $6,556,913$\\
    ML100k & $18,696$ & $9,600$ & $6,492,834$\\ 
    ML1m & $22,795$ & $10,000$ & $6,678,784$\\
    Amazon & $2,194,798$ & $2,000$ & $6,496,236$ \\
    Yahoo & $42,105$ & $15,000$ & $6,496,563$
    \end{tabular}\label{tab:snapshots}
\end{table}
\section{Assortativity Analysis in Graph Streams}\label{sec:metric}
The tendency of vertices to connect to similar vertices with respect to one of their quantitative/qualitative attributes is called assortativity/homophily ~\cite{newman2002assortative}. In addition to connectivity insights (our primary goal in this paper), assortativity provides information about the dynamic behavior and robustness of the graph~\cite{trajanovski2013robustness, d2012robustness}. For instance, degree disassortative complex networks compared to degree assortative networks exhibit higher epidemiological threshold leading to easier immunization, however assortative networks get higher resilience to systemic risk by degree-targeted immunization policies~\cite{d2012robustness} (see ~\cite{noldus2015assortativity} for a complete survey). Assortativity is usually studied with respect to vertex degrees. 
A previous study~\cite{leung2007weighted} has shown that studying the assortativity by considering just the degree does not completely uncover the organizational patterns in the structure of graphs. Leung and Chau ~\cite{leung2007weighted} have introduced the weighted assortativity coefficient to measure the tendency of having a high-weighted edge between vertices with similar degrees. However, we are interested in strength assortativity, i.e. measuring the tendency of having an edge between vertices with similar strength, particularly in butterflies. In this section, we first establish the requirements for an effective measurement of strength assortativity that can accommodate analysis of mesoscopic, bipartite, and temporal structures; next, we introduce a new metric for strength mixing patterns called strength assortativity localization factor.

The assortativity coefficient ($r$)~\cite{newman2002assortative} is a common metric for assortativity~\cite{zou2019complex, vasques2020transitivity, van2021random}. Assuming that we are interested in quantifying the tendency of vertices to connect to each other based on the  similarity of their attribute $K$, $r$ is computed as the pearson correlation of $K$ of linked vertices and lies in the range $-1\leq r\leq 1$. Positive (negative) $r$ signals (dis)assortativity and $r=0$ denotes random mixing. Another approach to study assortativity is to compute the average $K$ of nearest-neighbors for each vertex and then aggregating the values by restricting the class of vertices with $K=k$. We denote it as $\langle$$K_n$$\rangle$ which is a function of $K$. An increasing (decreasing) $\langle$$K_n$$\rangle$ signals  (dis)assortativity. This can be inferred by checking the sign of the slope of a linear fit for log-log plot of $\langle$$K_n$$\rangle$ as a function of $K$. In the following, we investigate the effectiveness of $r$ and $\langle$$K_n$$\rangle$ in quantifying the strength assortativity of butterflies. 

We consider the evolution of two distributions over sequential graph snapshots: (1) $Pr(\delta)$, the probability distribution of strength difference for connected butterfly vertices which is computed as $Pr(\delta)=\frac{F(\delta)}{\Sigma F(\delta)}$, where $F(\delta)$ is the number of edges with strength difference $\delta$ and the sum runs over the range of $\delta$ values, and (2) $Pr(S_i)$, the probability distribution of strength for butterfly i-vertices which is computed as $Pr(S_i)=\frac{F(S_i)}{\Sigma F({S_i})}$, where $F(S_i)$ is the number of butterfly i-vertices with strength $S_i$. The same notations stand for j-vertices and $Pr(S_j)$.

As a running example in this section, we use the real-world graph stream Epinions and pick $20$ equally-distanced points in the timeline of burst arrival. At each point ($N_b$), we calculate $r$ for the strengths of linked butterfly vertices in the corresponding graph snapshot $G_{N_b}$ (Figure~\ref{fig:epinionr}.a). We also consider $Pr(\delta)$ at two points corresponding to the arrival of $92$ (Figure~\ref{fig:epinionr}.d) and $437$ bursts (Figure~\ref{fig:epinionr}.g). At $N_b=92$, the probability that a butterfly edge has strength difference below the average strength difference $\mu_{\delta}$ is $Pr(\delta \leq \mu_{\delta})=0.67$. However, the assortativity coefficient is $r=0.007$ suggesting no (dis)assortativity (i.e. random connection of butterfly vertices with no tendency to connect to (dis)similar vertices). Also, at $N_b=437$, we observe that majority of butterfly edges fall in the region behind $\mu_{\delta}$ with probability $Pr(\delta \leq \mu_{\delta})=0.71$, while $r=-0.17$ suggests strength disassortativity. 

The reason behind this confusing behavior of $r$ is its bias toward the distribution of strength of i- and j-vertices with respect to their average. To clarify, we consider $Pr(S_i)$ and $Pr(S_j)$ at these two time points (Figure~\ref{fig:epinionr}.e,f,h,i). At $N_b=92$, the probability that a butterfly i(j)-vertex has strength less than or equal to the average strength of butterfly i(j)-vertices $\mu_i(\mu_{j})$ is almost equal to the probability that a butterfly i(j)-vertex has strength greater than the average strength of butterfly i(j)-vertices ($Pr(S_i\leq\mu_{i})=0.57$ and $Pr(S_j\leq\mu_{j})=0.54$). Therefore, many strength deviations from the mean strength, particularly for j-vertices, would be zero, making the coefficient an insignificant value close to zero ($r=0.007$). At $N_b=437$, a large majority of butterfly i(j)-vertices have strength above the average strength of butterfly i(j)-vertices ($Pr(S_i>\mu_{i})=0.9$, $Pr(S_j>\mu_{j})=0.8$), therefore their high deviations from the mean lowers the coefficient. In summary, the assortativity coefficient reflects the global correlation between $Pr(S_i)$ and $Pr(S_j)$ (two separate distributions).  The assortativity coefficient fails to capture the pairwise correlations between strength of connected i- and j-vertices forming butterflies.
\begin{figure*}[t]
    \subfigure[]{\includegraphics[width=0.32\textwidth]{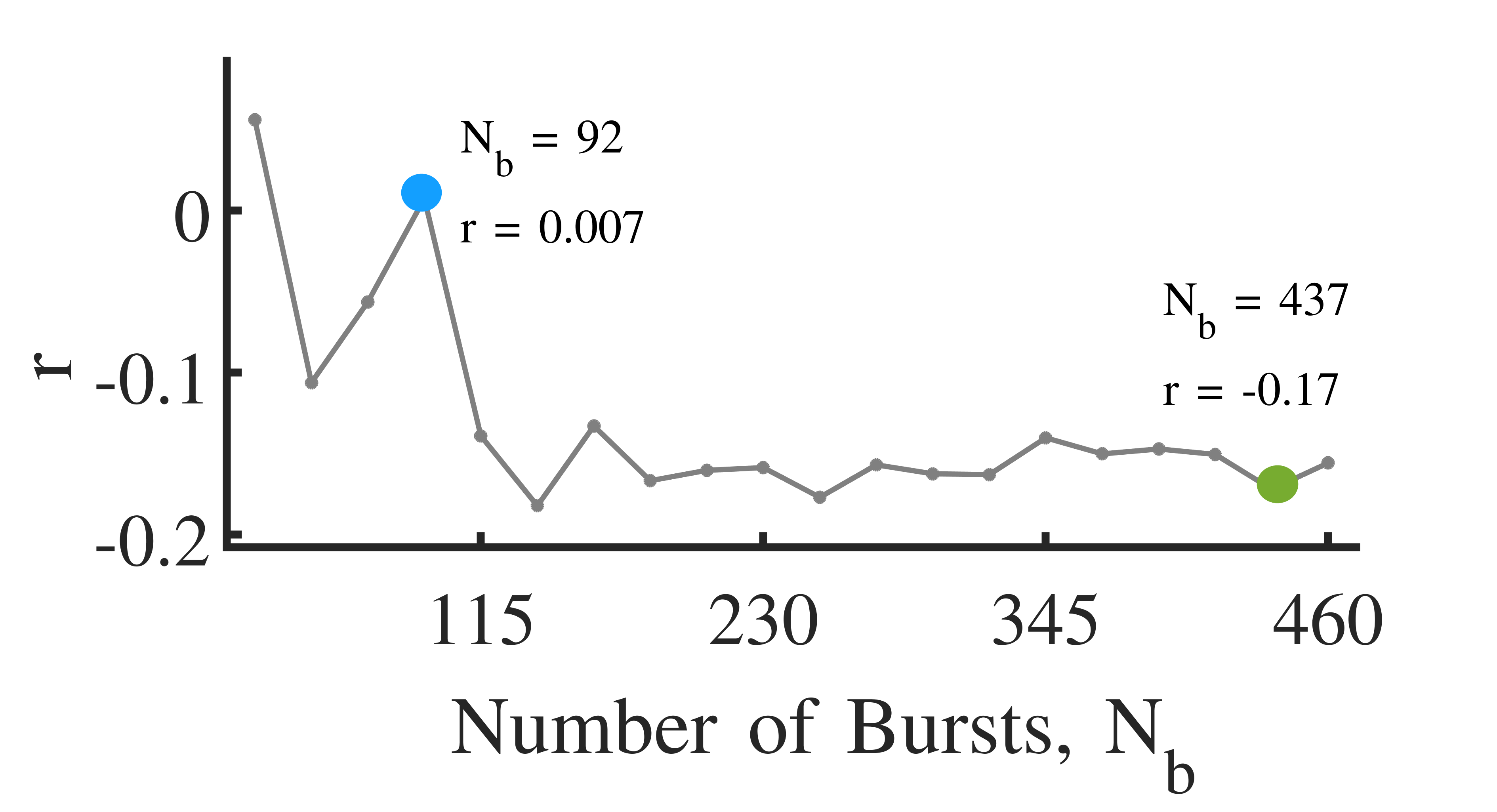}}
    \subfigure[$N_b=92$]{\includegraphics[width=0.32\textwidth]{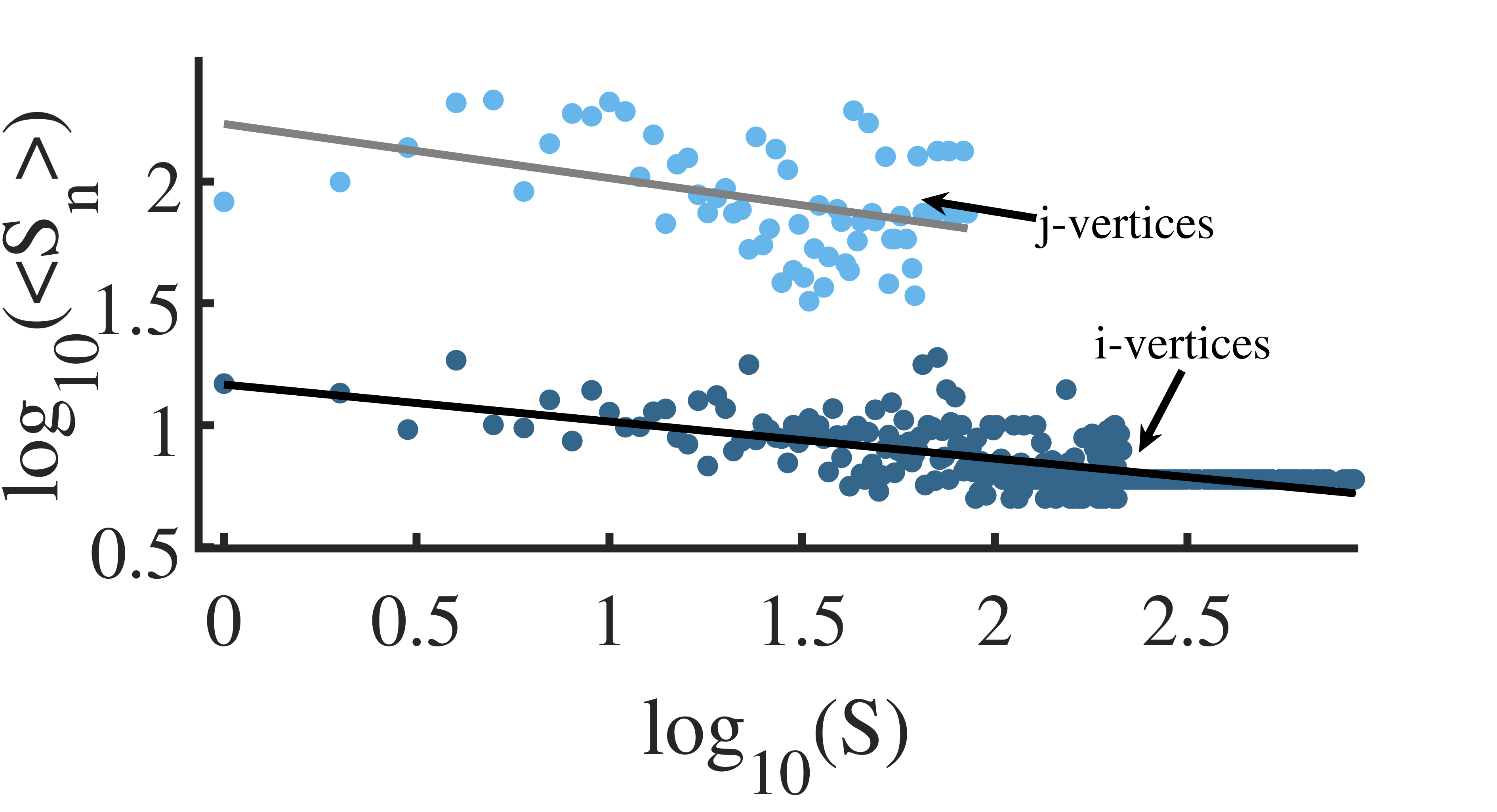}} 
    \subfigure[$N_b=437$]{\includegraphics[width=0.32\textwidth]{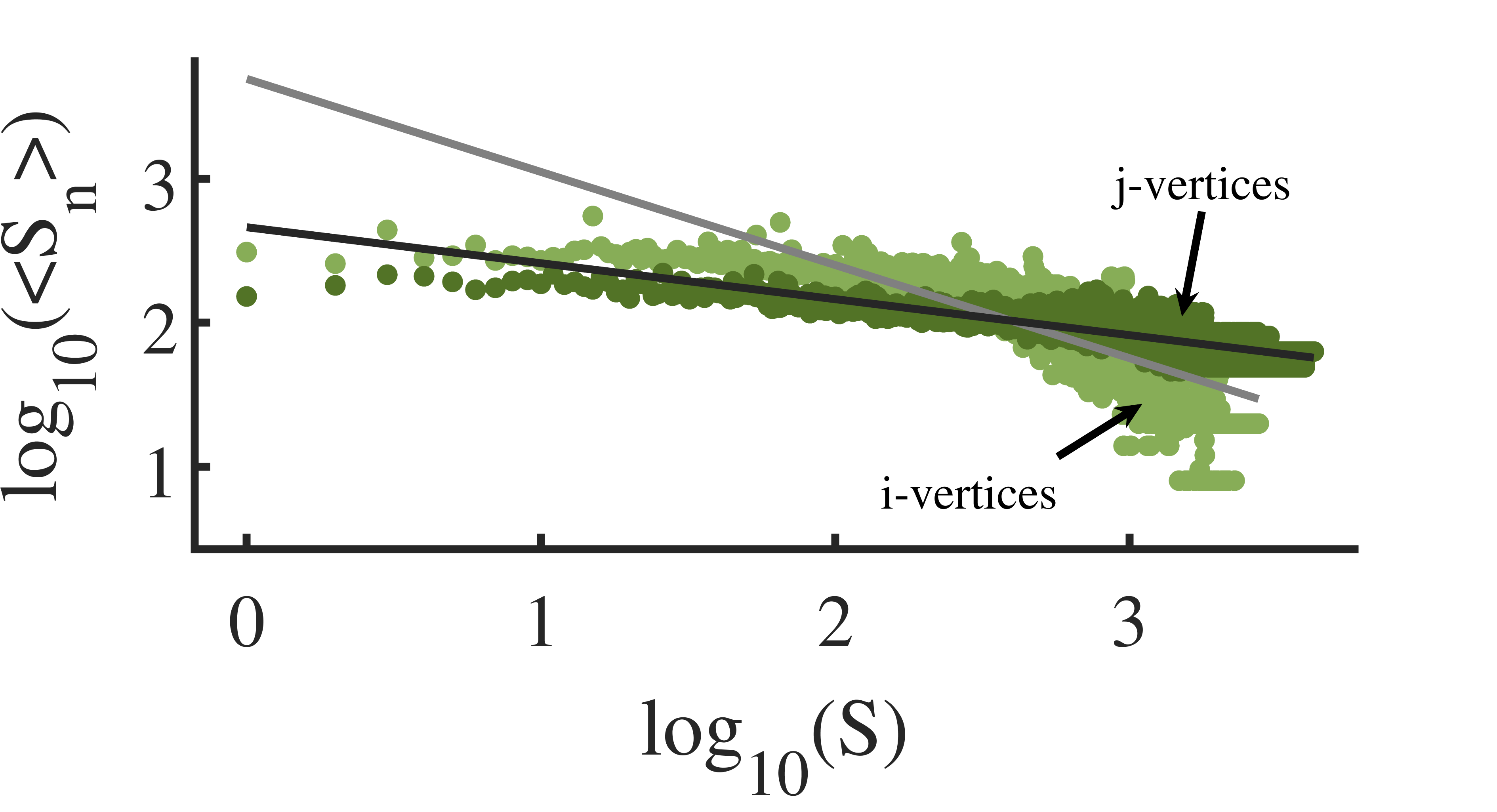}}
     
    \subfigure[$Pr(\delta)$ at $N_b=92$]{\includegraphics[width=0.32\textwidth]{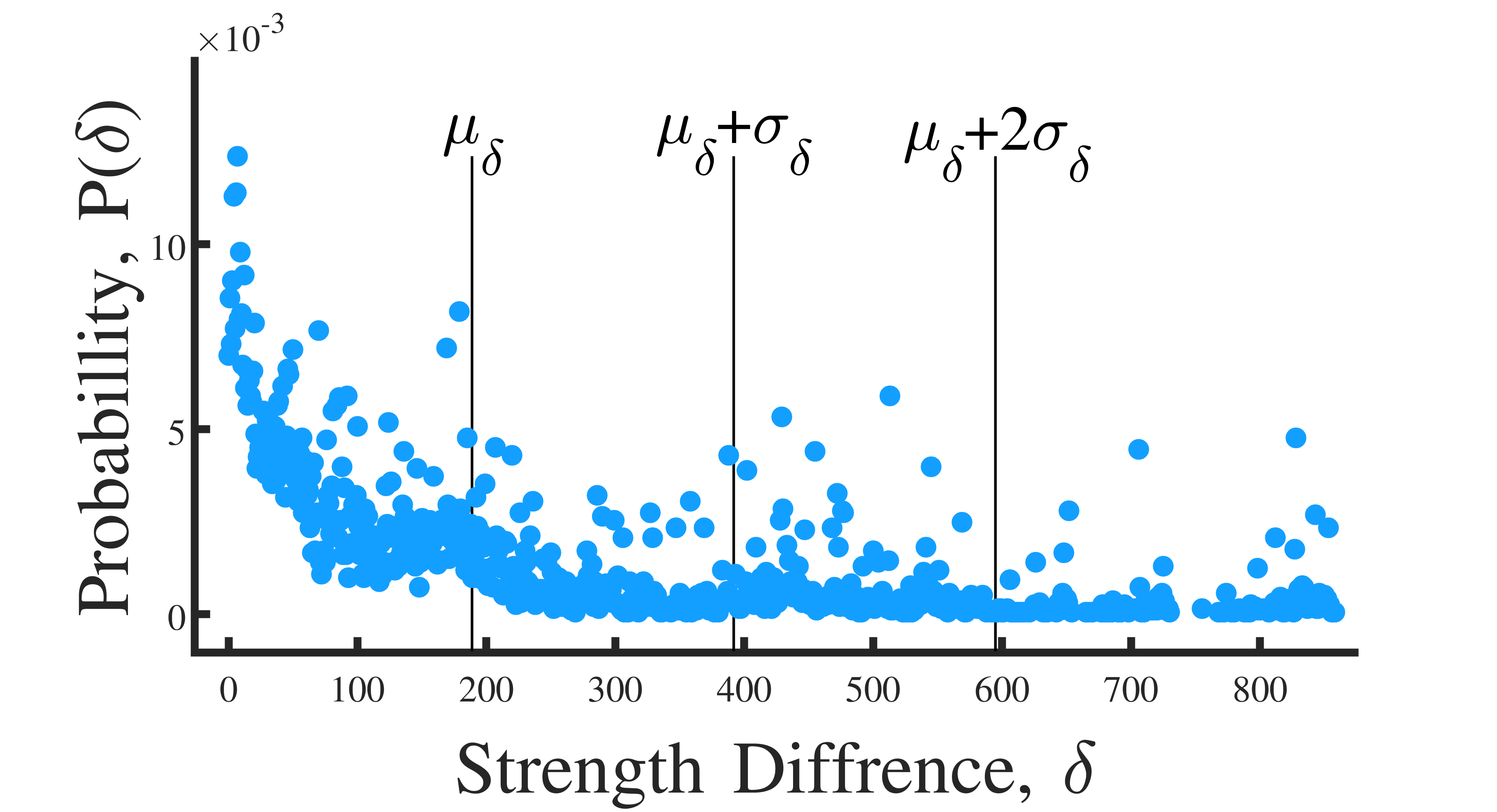}} 
    \subfigure[$Pr(S_i)$ at $N_b=92$]{\includegraphics[width=0.32\textwidth]{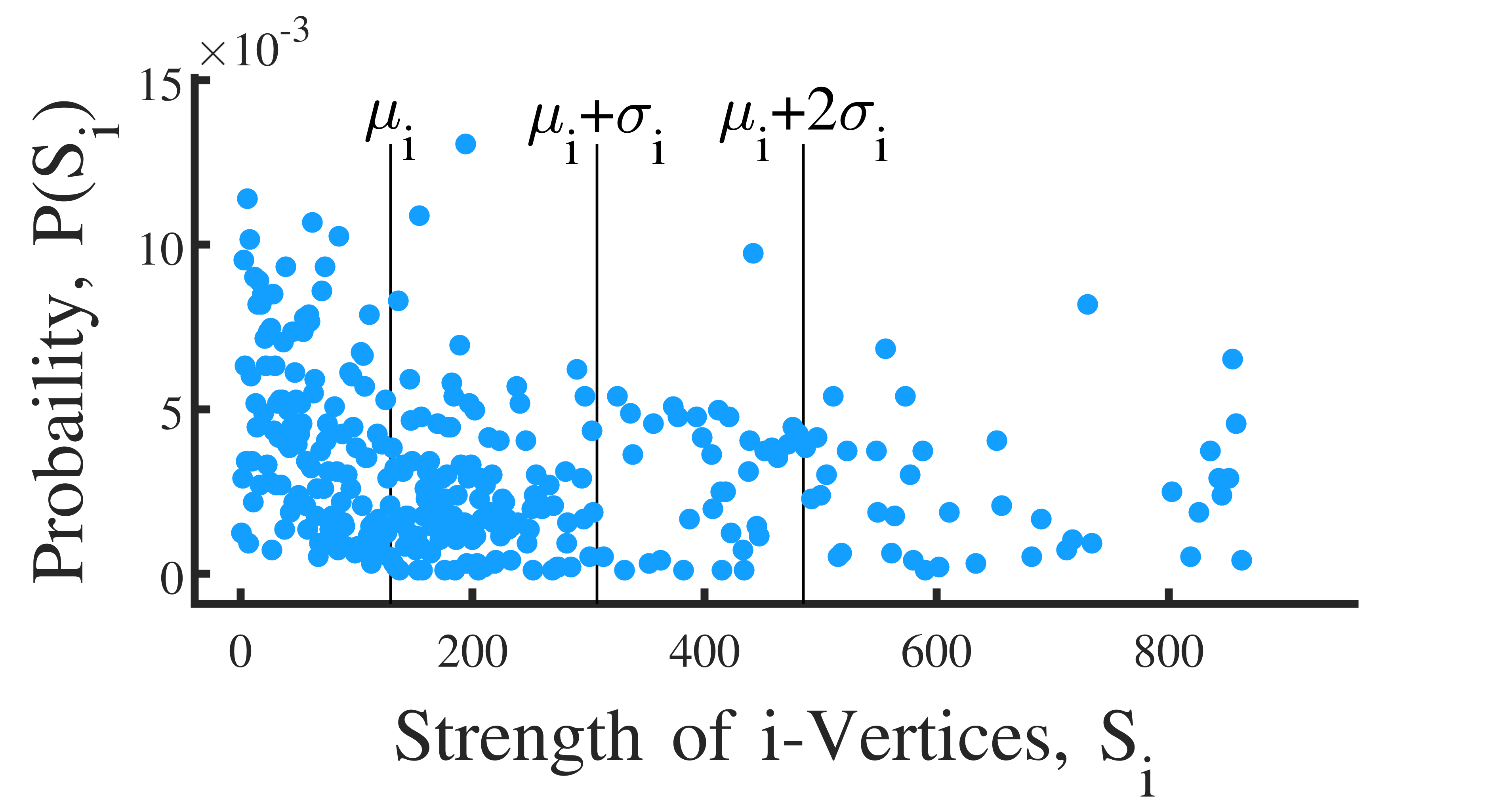}} 
    \subfigure[$Pr(S_j)$ at $N_b=92$]{\includegraphics[width=0.32\textwidth]{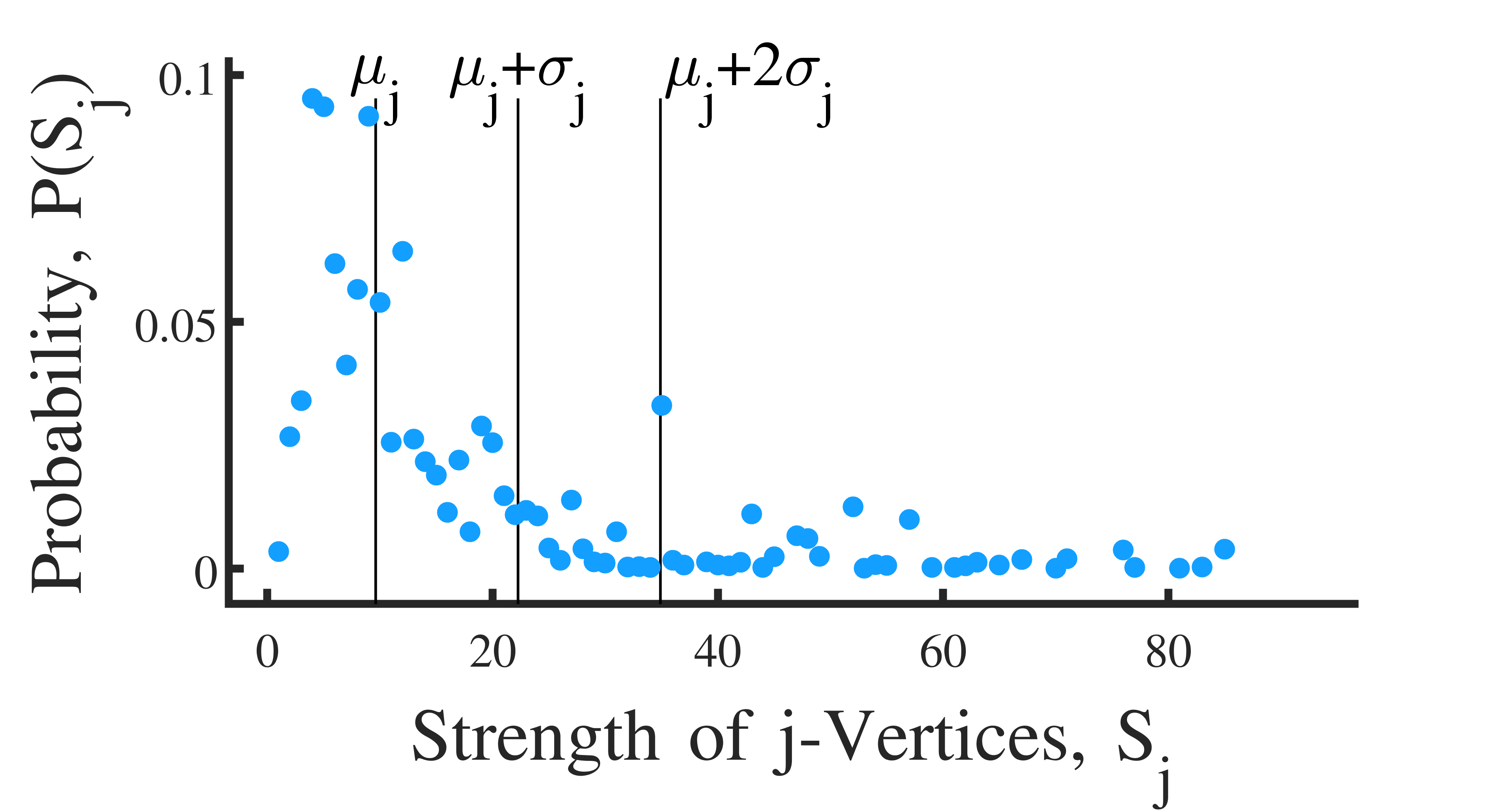}} 
    
    \subfigure[$Pr(\delta)$ at $N_b=437$]{\includegraphics[width=0.32\textwidth]{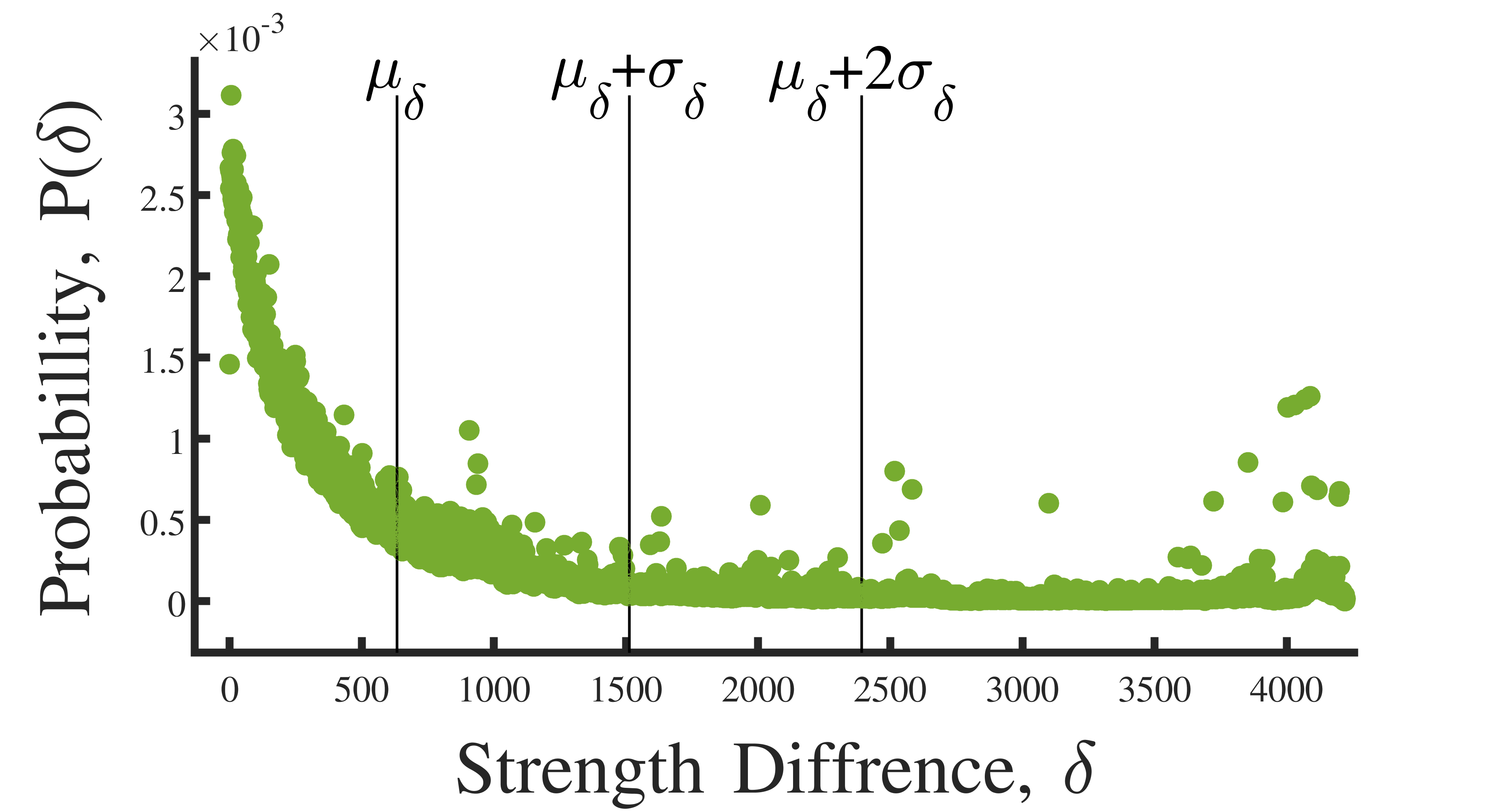}} 
    \subfigure[$Pr(S_i)$ at $N_b=437$]{\includegraphics[width=0.32\textwidth]{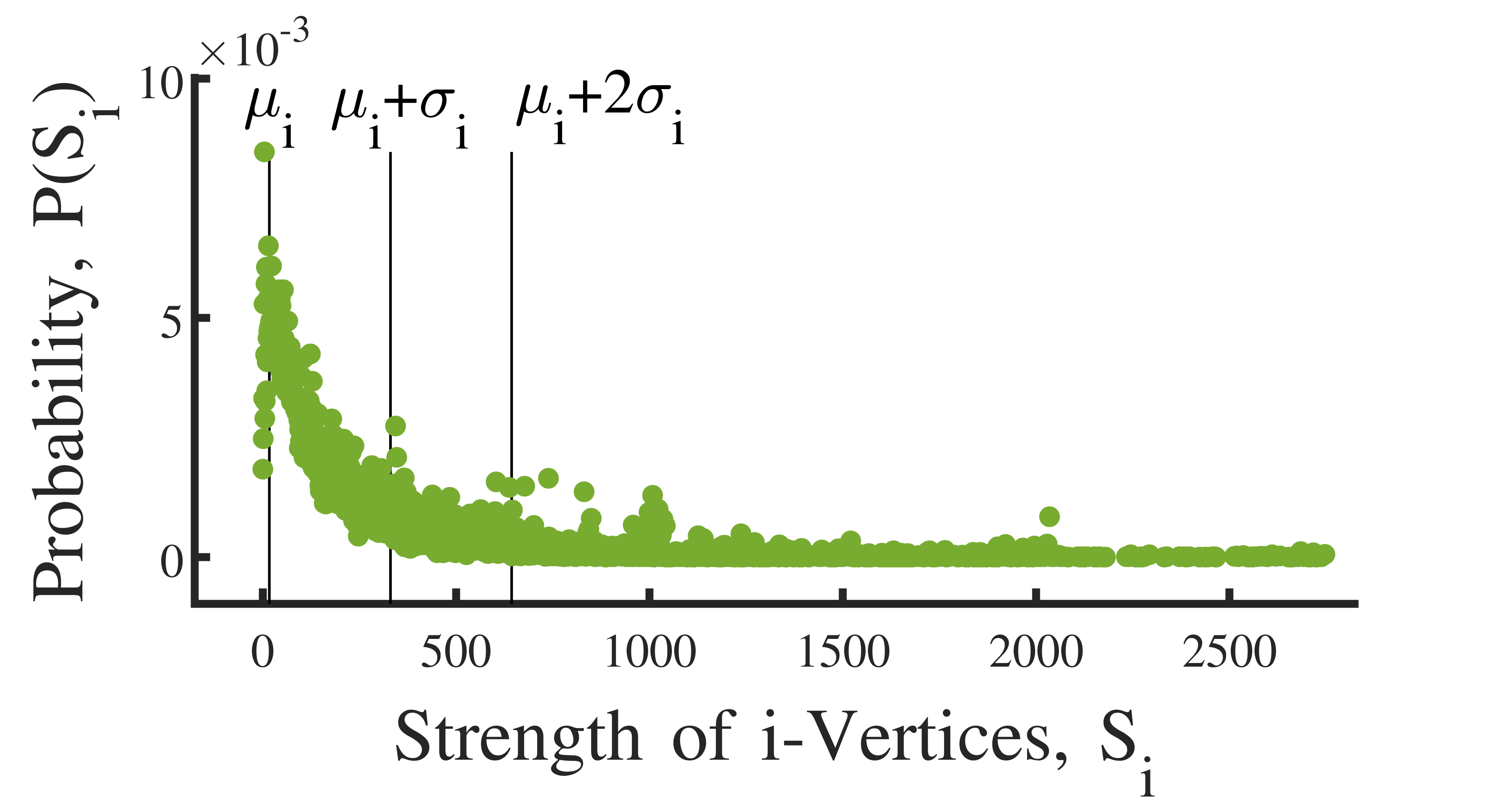}} 
    \subfigure[$Pr(S_j)$ at $N_b=437$]{\includegraphics[width=0.32\textwidth]{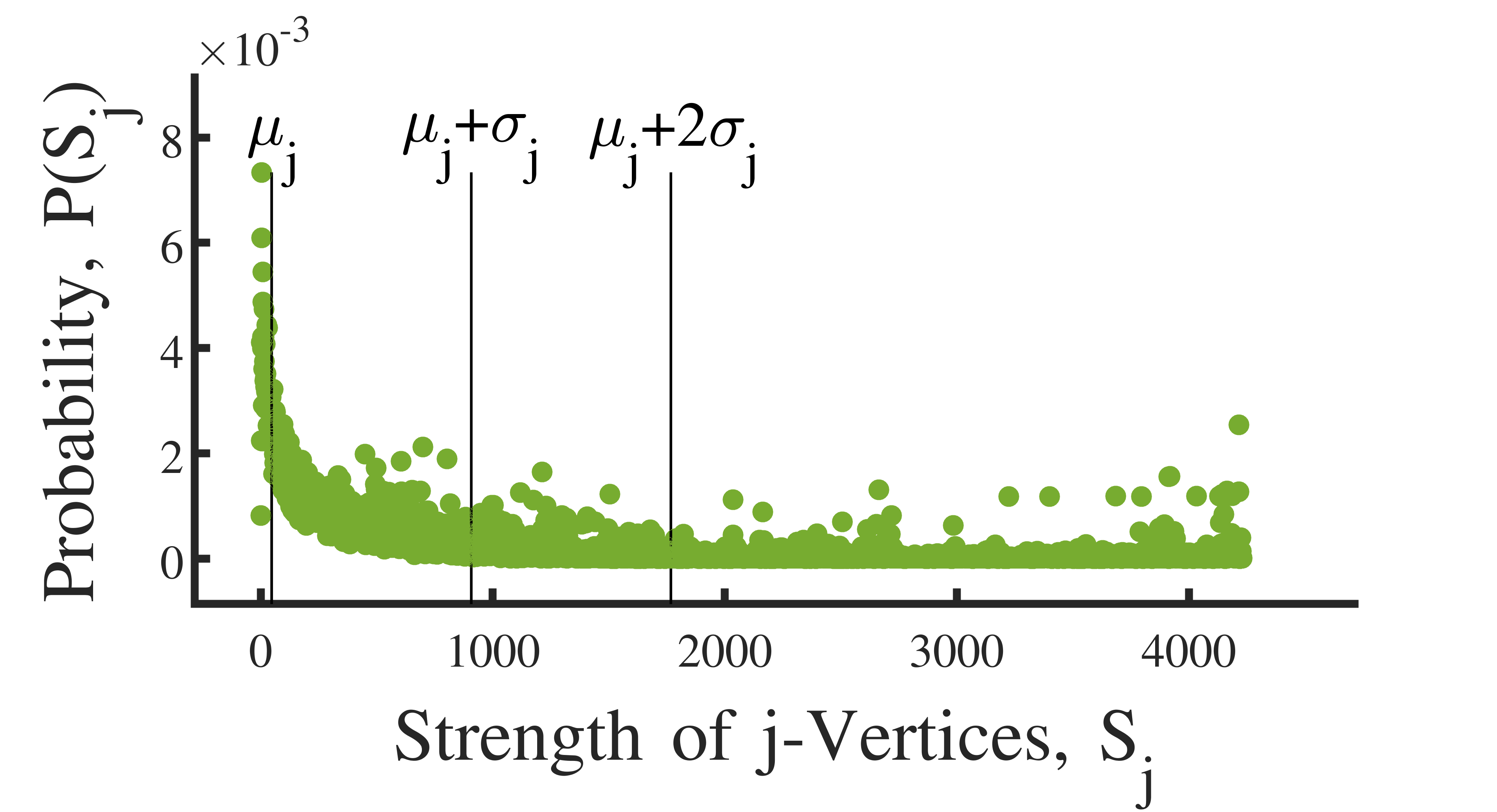}} 
    \caption{(a) Assortativity coefficient over timeline of burst arrival in Epinions stream. Nearest-neighbor average strength of vertices with strength S at (b) $N_b=92$ and (c) $N_b=437$. Distribution of strength differences of connected butterfly vertices  at (d) $N_b=92$ and (g) $N_b=437$. Distribution of strength of butterfly (e,h) i-vertices and (f,i) j-vertices at $N_b=92$ and $N_b=437$, respectively.}
    \label{fig:epinionr}
\end{figure*}

We examine the neighborhood-based approach for studying assortativity. Figures~\ref{fig:epinionr}.b-c, show the nearest-neighbor average strength of vertices with strength $S$~\cite{pastor2001dynamical}. At first glance, the decreasing trend suggests strength disassortativity: the higher the strength of a vertex, the lower the average strength of its neighbors and vice versa. However, we should consider the skewed $Pr(S_i)$ and $Pr(S_j)$ with high-strength vertices. Suppose that low-strength vertices are connected to many low-strength vertices and one high-strength vertex. In this case, the average strength of neighbors for these low-strength vertices would be high although the majority of neighbors have similar low strengths. That is, relatively few vertices having high strengths (because of many connections and/or connections with high weights) skew the average strength of their low-strength neighbors and hence mislead the assortativity interpretation. This again highlights the issue of measuring strength assortativity in graphs with broad and skewed strength distribution. 

To conclude, using conventional assortativity metrics is not reliable for analyzing the strength assortativity of butterflies in bipartite streaming graphs since (1) $r$ is vertex-centric and reflects the global strength correlations rather than pairwise strength correlations. Particularly, in case of butterflies that each vertex contributes duplicate values because of two adjacent edges, vertices with strength equal/close to the mean (zero $S-\mu$), decrease the overall correlation, no matter what the strength of their neighbor is, (2) $r$ is designed for unipartite graphs and using it in bipartite graphs can bias the outcome by the strength distributions of i- and j-vertices, and (3) The neighborhood-based approach can be misleading in case of graphs with broad and skewed strength distributions since high-strength vertices have outlier impacts and make the interpretation difficult.

\subsection{The Proposed Strength Assortativity Quantification Approach}~\label{subsec:measure}

Informed by the above discussion, an appropriate measure for the tendency of vertices to connect to vertices with similar strength that is applicable to butterfly edges should satisfy the following properties: (1) it should directly reflect the probability distribution of strength differences rather than the global correlations in the distribution of strengths; (2) it should not be designed based on neighbor information since in case of skewed distribution of strengths, it would be biased by the outlier vertices; and (3) it should enable comparison of strength assortativity for sequential graph snapshots in the same stream as well as comparison of strength assortativity of graph snapshots in different graph streams.

Our goal is to quantify and compare the distribution of strength differences in low dimension to enable temporal analysis over sequential graph snapshots of streams. A common approach for comparing distributions (usually degree distributions) is Kolmogrov-Smirnov test. However, this is sensitive to the distribution range and is not ideal for analyzing sequential graph snapshots and different graph  streams. The Degree Distribution Quantification and Comparison (DDQC) approach~\cite{aliakbary2014quantification} quantifies the degree distribution of a graph based on $4\times2^\beta$ regions in the degree distribution and uses this quantification for comparison. The regions are determined in two steps: first, the degree distribution is divided into four regions covering the intervals between five subsequent points: min(degree), $\mu -\alpha\sigma$, $\mu$, $\mu +\alpha\sigma$, and $max(degree)$, where $\mu$ is the mean degree and $\sigma$ is the standard deviation of degrees and $\alpha$ is a configurable parameter. Next, each region is divided into $2^\beta$ equal sub-regions, where $\beta$ is the second configurable parameter. Given these regions in the probability distribution, a vector is constructed with  $4\times2^\beta$ elements each representing the summation of probabilities in a corresponding region.

We consider the probability distribution of strength difference of connected butterfly vertices $Pr(\delta)$ given a graph snapshot $G_{N_b}$. Using burst-based snapshots enables fair comparison of different graphs with different temporal characteristics.  Inspired by the DDQC approach, we divide $Pr(\delta)$ into four regions based on the mean and standard deviation of $\delta$s ($\mu_\delta$ and $\sigma_\delta$, see Figure~\ref{fig:epinionr}.d,g). As long as the first region covers the low $\delta$s, the number/coverage of other regions for the tail of right-skewed distribution is not impactful in mixing pattern analysis. Accordingly, we summarize the probability distribution as an embedding vector $F$ with four elements ($\Sigma_{i=1,..,4}F_i=1$). Each element corresponds to a region as below: 
\begin{equation}
   F_1=\Sigma Pr(\delta) \text{, } \forall \delta\leq \mu_\delta
\end{equation}
\begin{equation}
   F_2=\Sigma Pr(\delta) \text{, } \forall \mu_\delta<\delta\leq\mu_\delta+\sigma_\delta
\end{equation}
\begin{equation}
   F_3=\Sigma Pr(\delta) \text{, } \forall \mu_\delta+\sigma_\delta<\delta\leq \mu_\delta+2\sigma_\delta
\end{equation}
\begin{equation}
   F_4=\Sigma Pr(\delta) \text{, } \forall \delta>\mu_\delta+2\sigma_\delta
\end{equation}

The vector $F$ provides fine-grained information. Additionally, to express the strength assortativity as an scalar for simple network inference in temporal analyses, we define the \textit{strength assortativity localization factor} as $r^s=F_1-0.5$ to track the localization of $\delta$s ($F$) on the region behind mean ($F_1$). $r^s$  lies in the range $[-0.5,0.5]$. A (negative) positive $r^s$ highlights strength (dis)assortativity and a zero value corresponds to random strength mixing. $r=0.5$ denotes perfect strength assortativity and $r=-0.5$ denotes perfect strength disassortativity.
\section{Butterfly Emergence Patterns}\label{sec:observations}
In this section, we analyze the butterfly emergence patterns in streaming graphs. We start by
 analyzing the real-world streaming graphs to identify the characteristic mixing patterns of butterflies (\S~\ref{subsec:realObservations}). We then briefly survey the related works on graph patterns and growth models (\S~\ref{subsec:relatedworks}). Next, we study synthetic streaming graphs based on seminal structure growth models to explain the generative processes underlying the observed patterns (\S~\ref{subsec:syntheticObservations}). Finally, we discuss our findings (\S~\ref{subsec:discussobservation}).
\subsection{Analysis of Real-world Graph Streams}\label{subsec:realObservations}
Figure~\ref{fig:butterflycountReal} shows the growth of butterfly count in real-world streams. To quantify this growth, we define the butterfly rate of each graph snapshot as the number of butterflies ($\bowtie$) in the graph normalized by the number of edges ($|E|$), $\bowtie/|E|$, and compute the average butterfly rate (plus/minus the standard deviation) over the sequential snapshots.  
As provided in  
Figure~\ref{fig:butterflycountReal}, the average butterfly rate is greater than $1$ in all streams, indicating that the number of butterflies grows superlinearly with respect to the number of edges in the sequential graph snapshots. In some graphs the superlinearity starts after some time.
\begin{figure*}[h]
    \centering
  \subfigure[Ciao, $8.9$$\pm$$1.8$]{\includegraphics[width=0.24\textwidth]{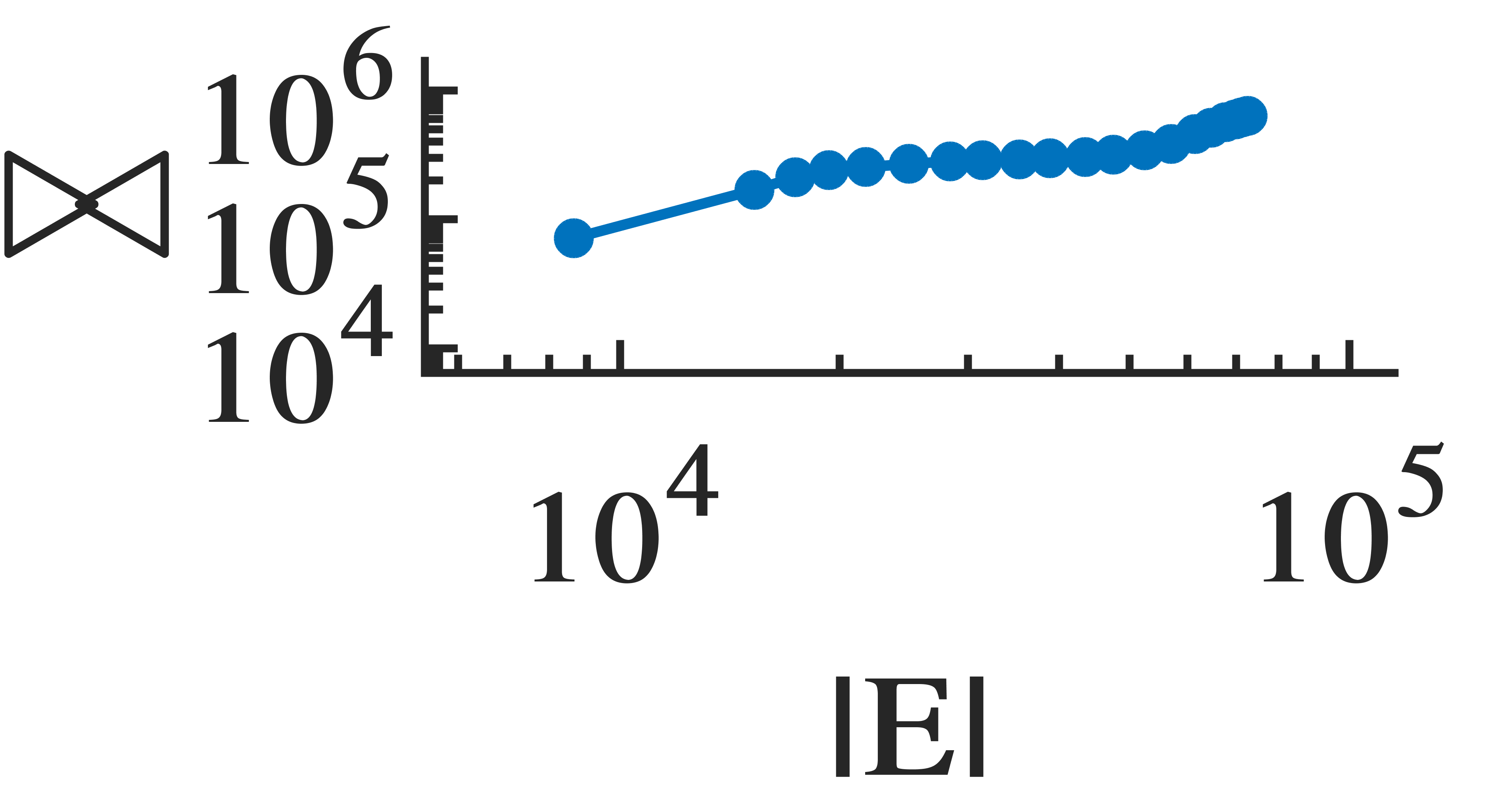}}
  \subfigure[Epinions, $10.6$$\pm$$7.6$]{\includegraphics[width=0.24\textwidth]{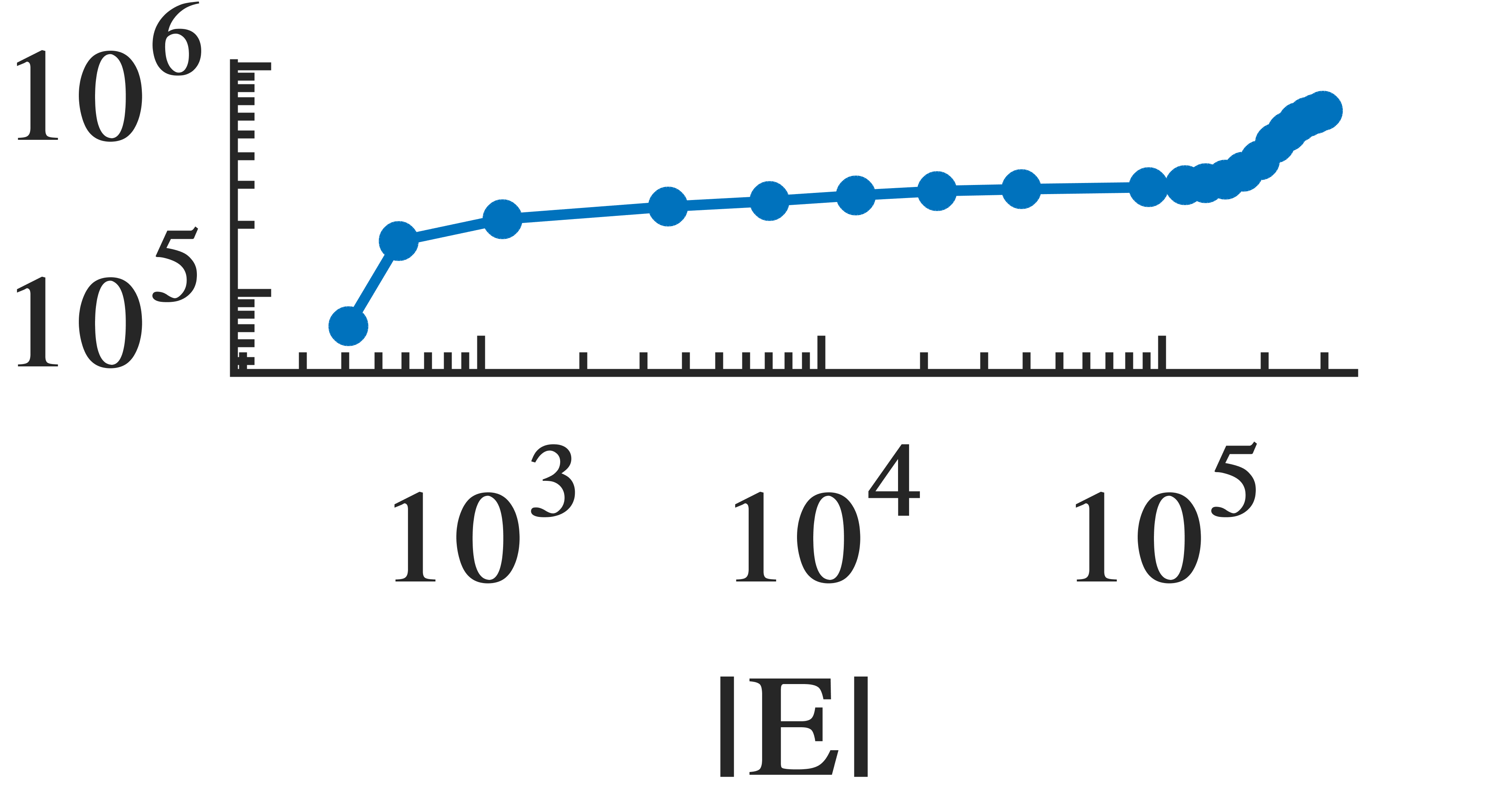}}
  \subfigure[WikiLens, $138$$\pm$$77$]{\includegraphics[width=0.24\textwidth]{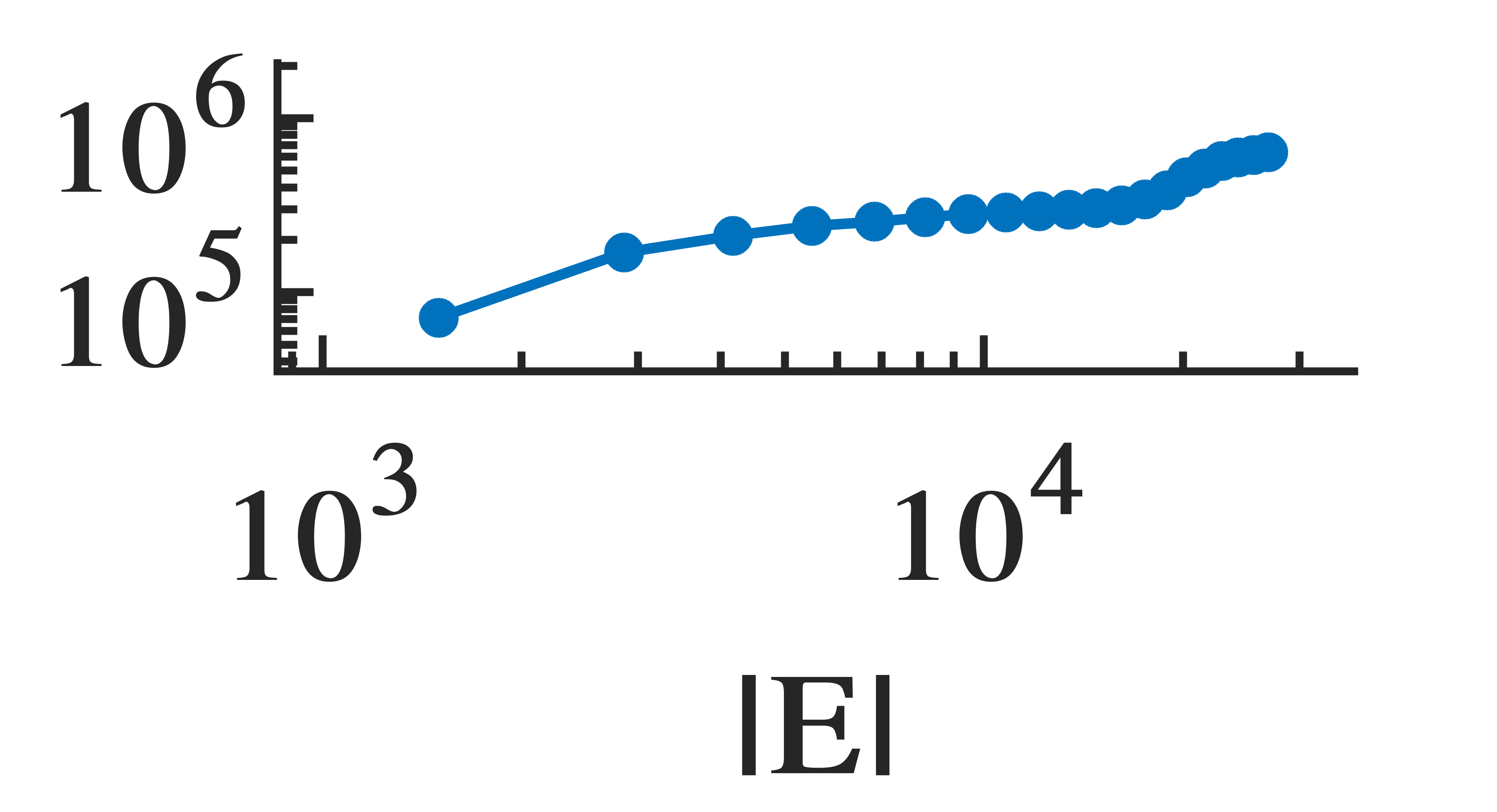}}
  \subfigure[ML100k, $171.4$$\pm$$99.7$]{\includegraphics[width=0.24\textwidth]{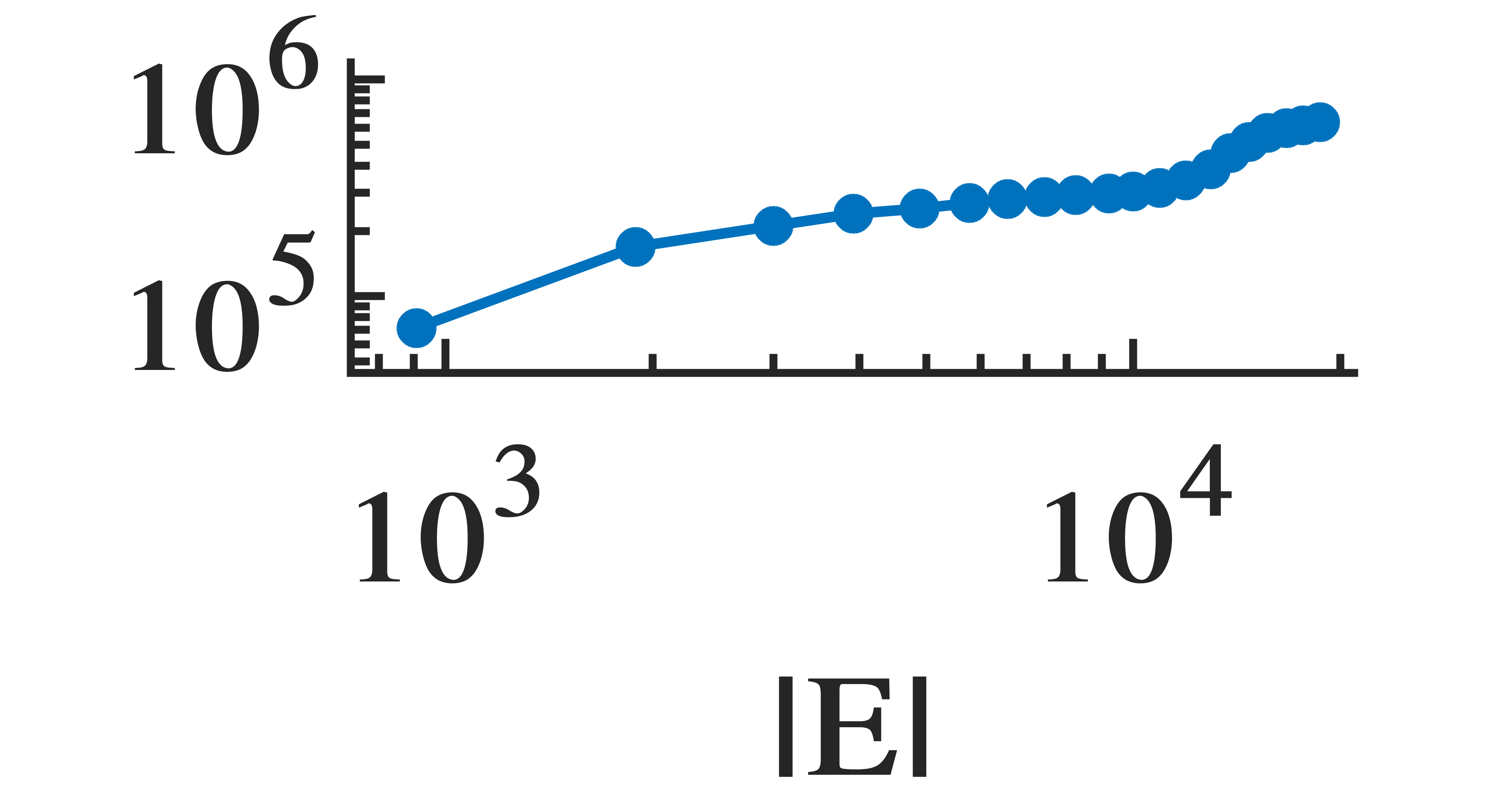}}
  \subfigure[ML1m, $142.1$$\pm$$79.4$]{\includegraphics[width=0.24\textwidth]{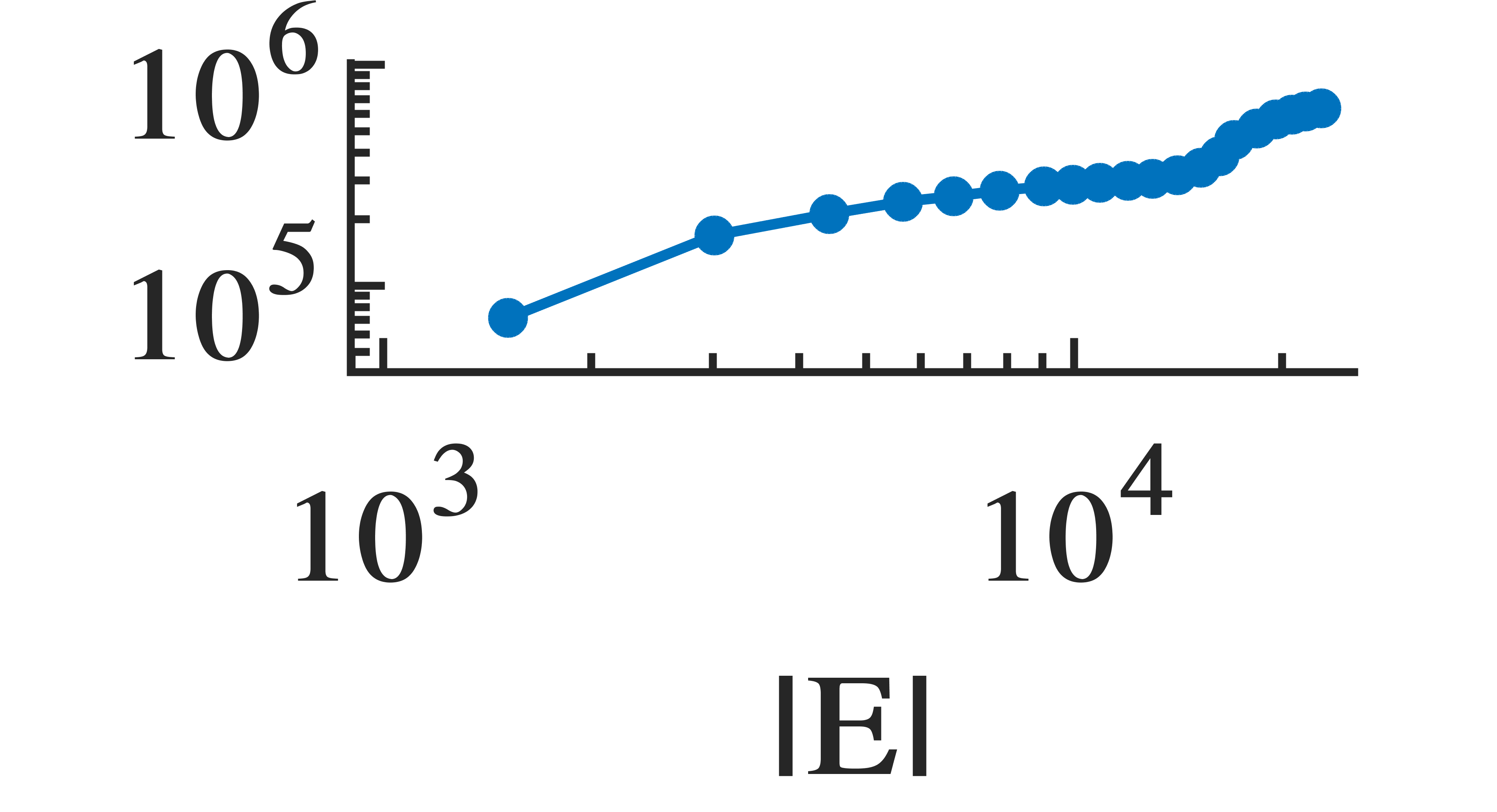}}
  \subfigure[Amazon, $0.9$$\pm$$1$]{\includegraphics[width=0.24\textwidth]{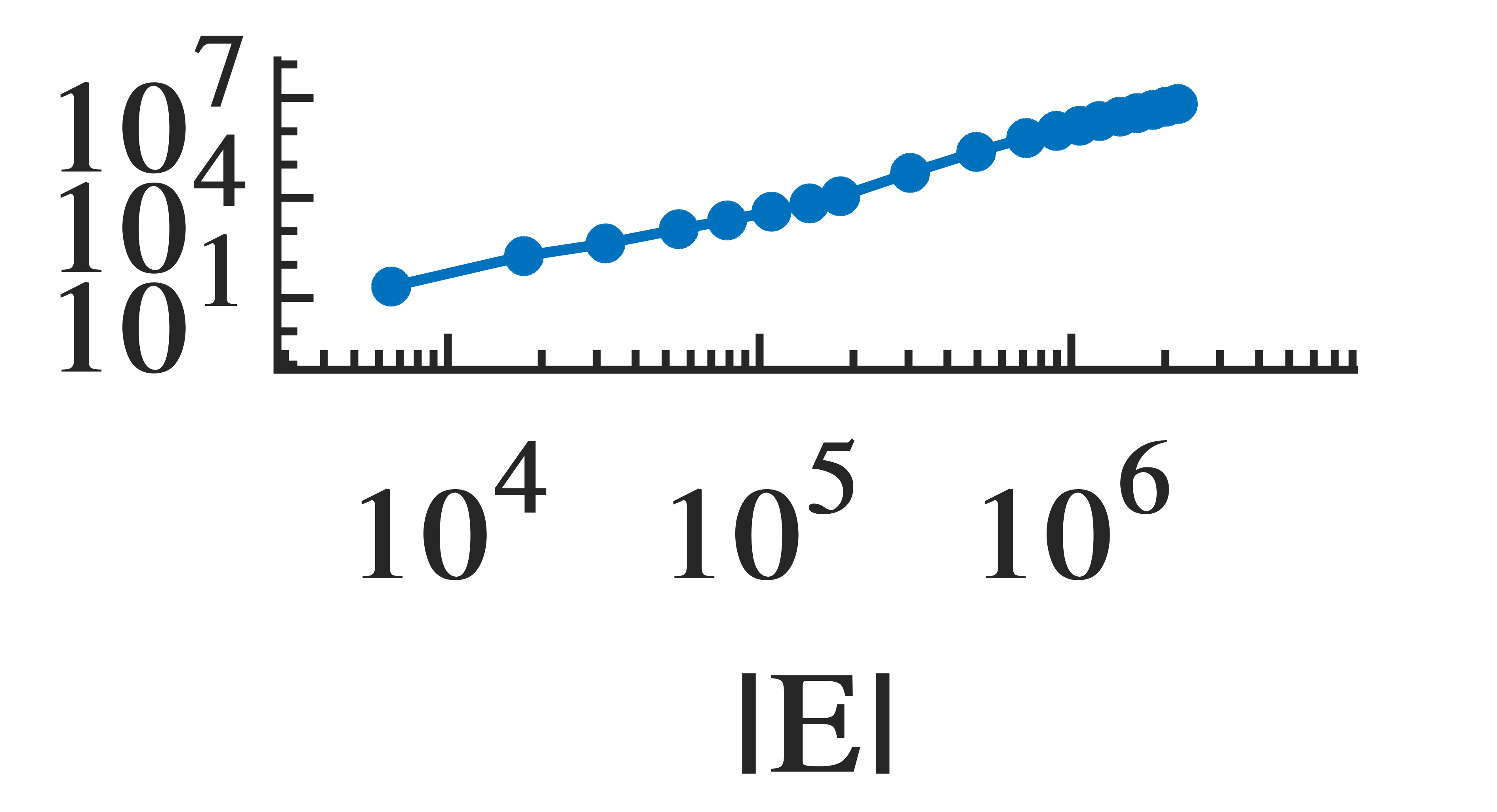}}
  \subfigure[Yahoo, $73$$\pm$$52.1$]{\includegraphics[width=0.24\textwidth]{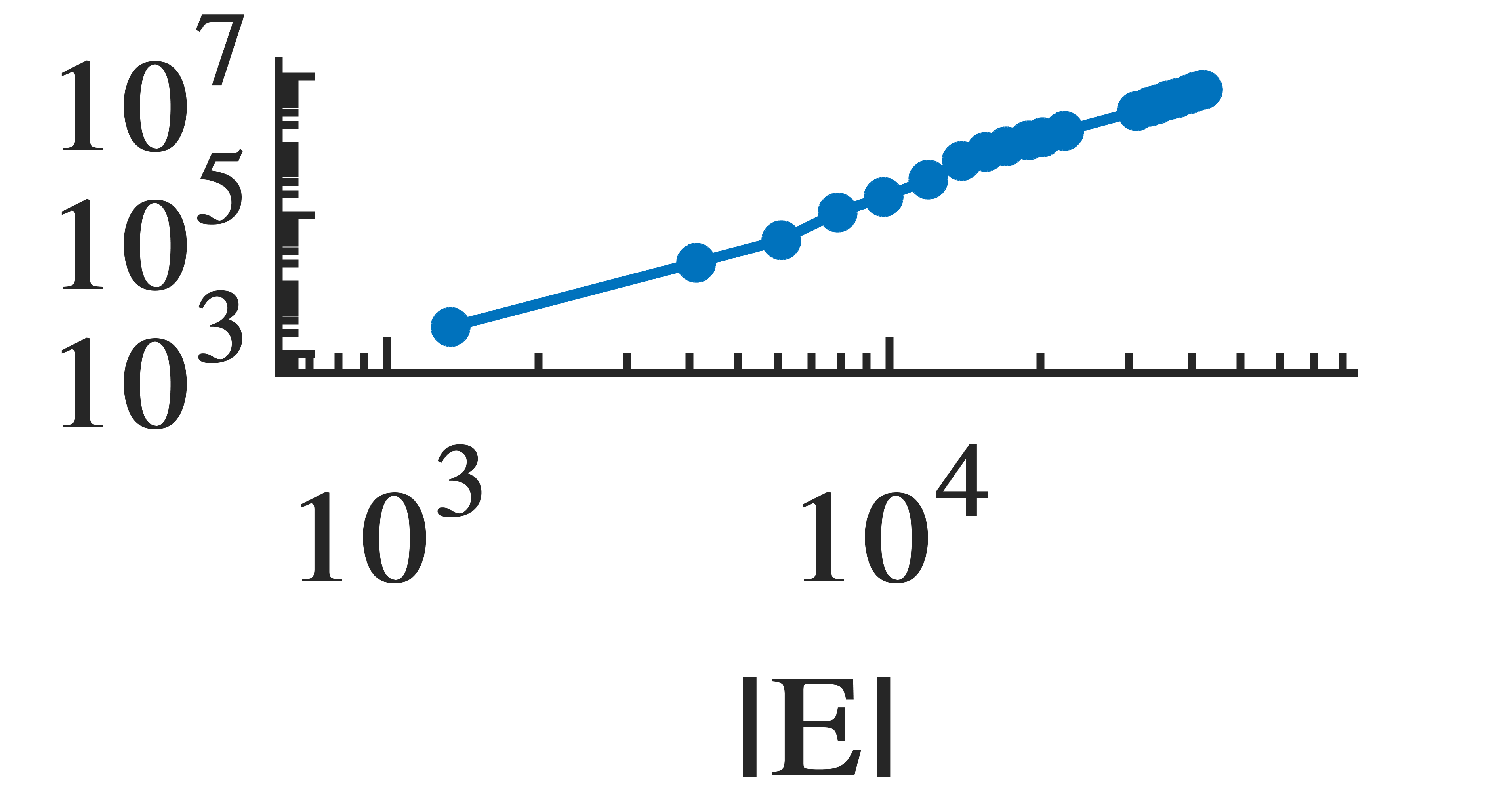}}
   \caption{Butterfly count versus edge count in real-world streams with various average butterfly rates.} 
    \label{fig:butterflycountReal}
\end{figure*}

We compute $Pr(\delta)$, the probability distribution of strength-differences of connected butterfly vertices, for the graph snapshots in the streams. We embed each probability distribution in a vector $F$ as explained in \S~\ref{subsec:measure}. Figures~\ref{fig:Frealgraphs} and \ref{fig:randrs} demonstrate the evolution of $F$ elements and their corresponding strength assortativity localization factor ($r^s$) over the timeline of burst arrivals. We observe that in all graph streams, butterfly edges have strength-difference less than equal to the average strength-difference ($\mu_\delta$) with probability $Pr(\delta$$\leq$$\mu_\delta)$$\approx$$0.7$ ($F$ is localized on $F_1$). The tail of $Pr(\delta)$ for all graphs is heavier in the region $[\mu _\delta$, $\mu_\delta$$+$$\sigma_\delta]$ with probability of $Pr(\mu_\delta$$<$$\delta$$\leq$$\mu_\delta$$+$$\sigma_\delta)$$\approx$$0.25$ (according to $F_2$ values) and gets lighter at the end. This demonstrates that the majority of butterfly edges are formed by vertices with similar strengths at all time points. Also, the strength assortativity localization factor is $0.15$$\leq$$r^s$$\leq$$0.2$ in all graphs at almost all time points (Figure~\ref{fig:randrs}). 
\begin{figure*}[h]
    \centering
    \subfigure[Ciao]{\includegraphics[width=0.24\textwidth]{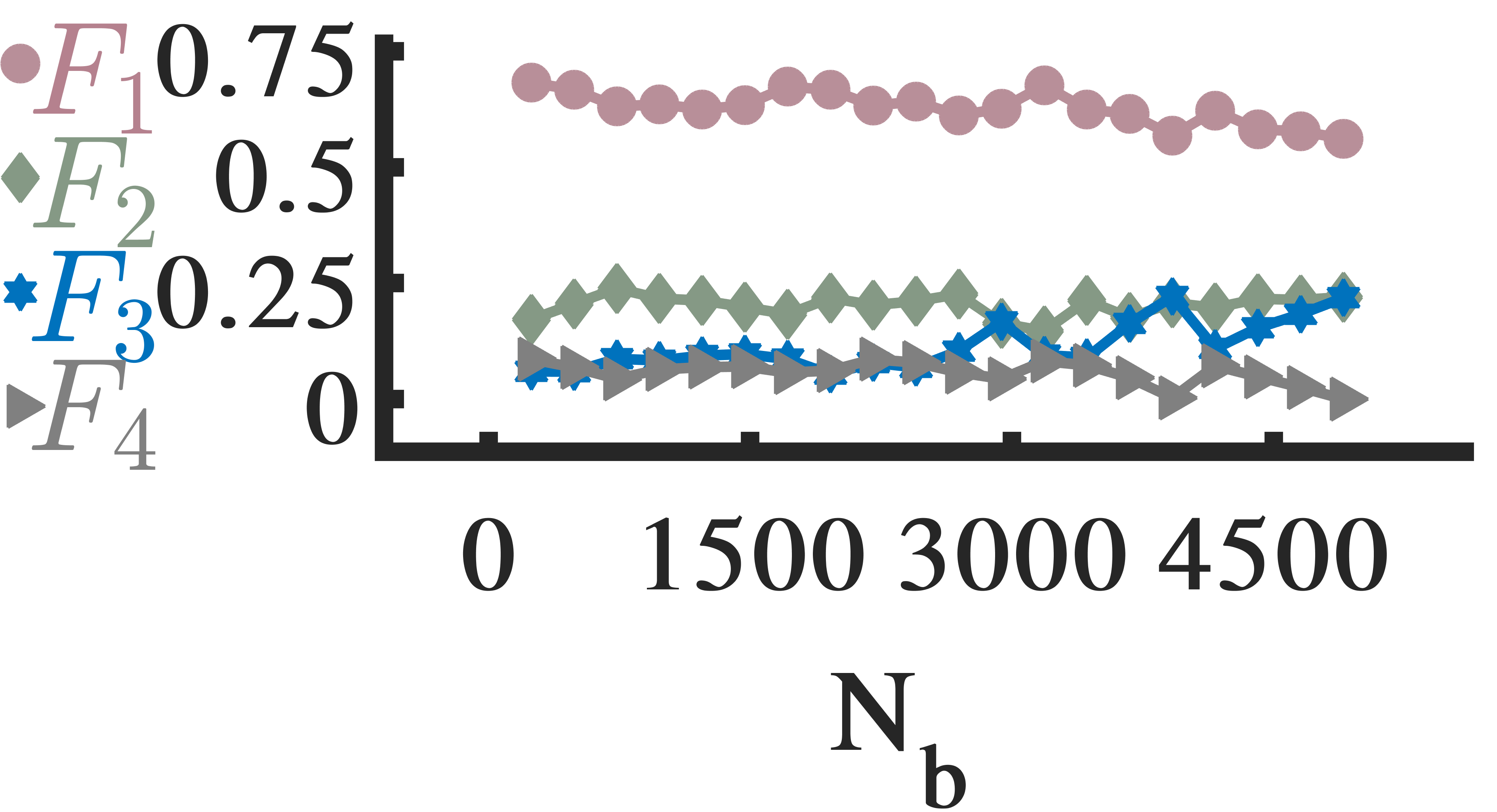}}
     \subfigure[Epinions]{\includegraphics[width=0.24\textwidth]{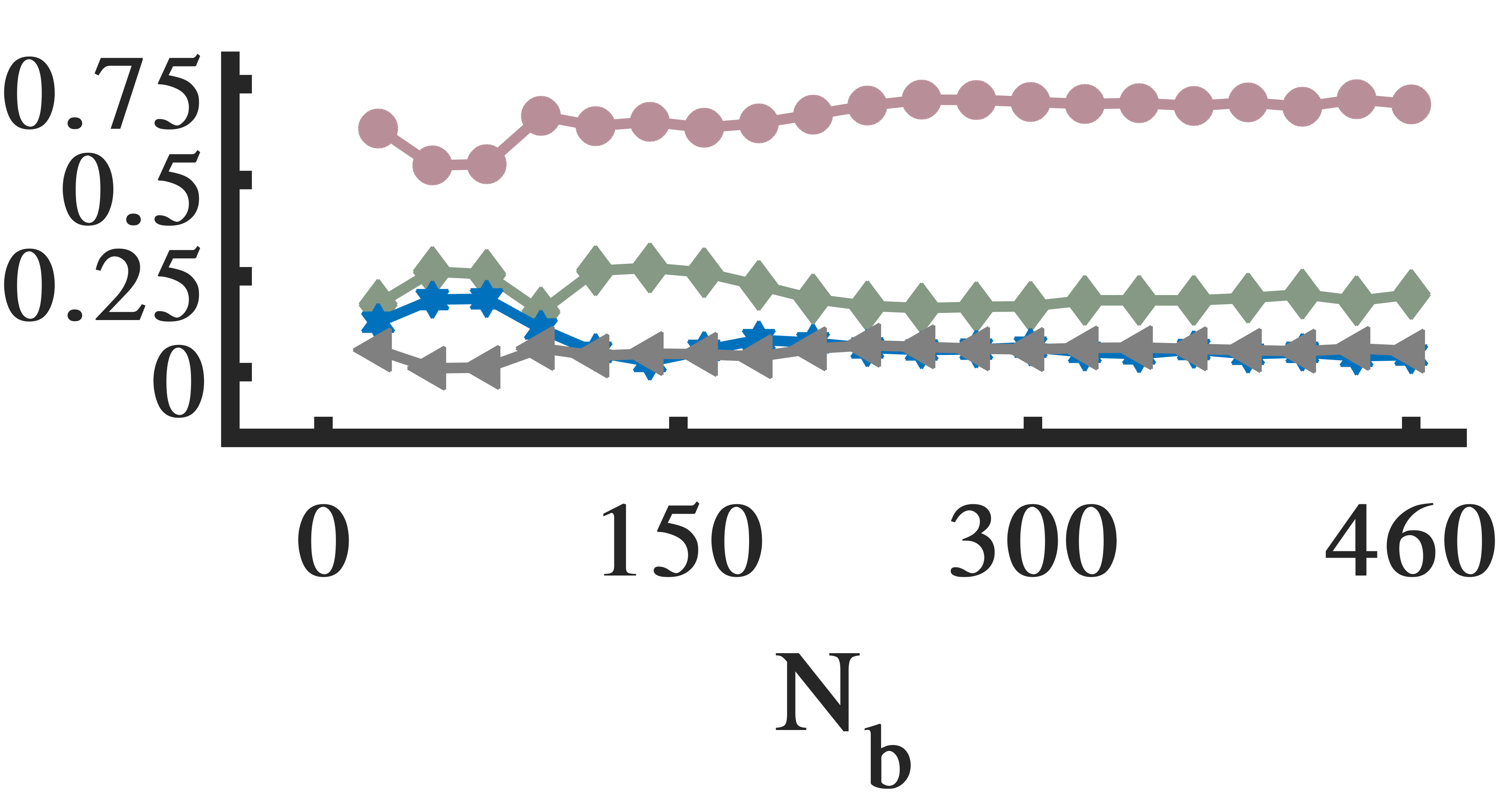}}
  \subfigure[WikiLens]{\includegraphics[width=0.24\textwidth]{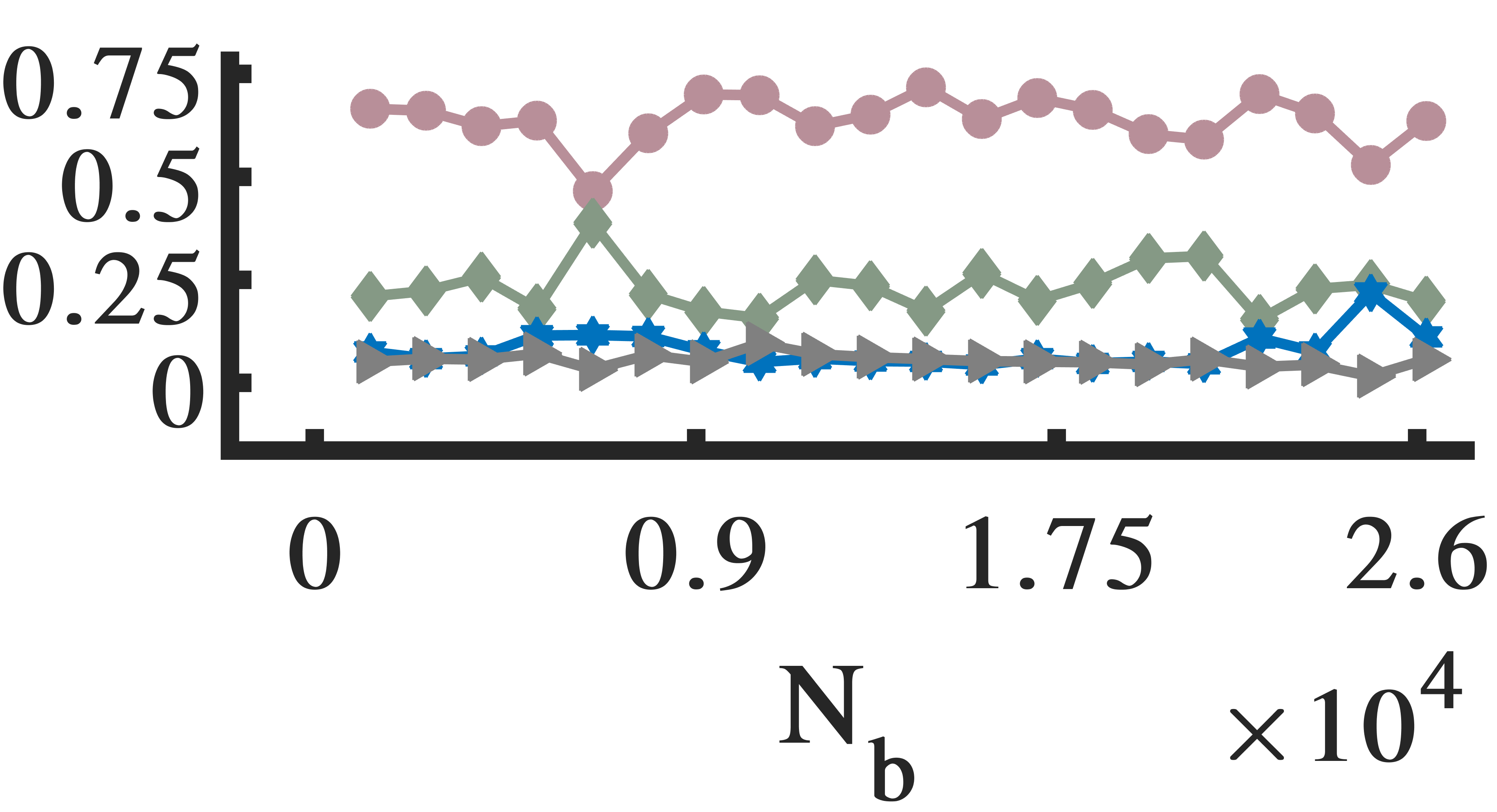}}
  \subfigure[ML100k]{\includegraphics[width=0.24\textwidth]{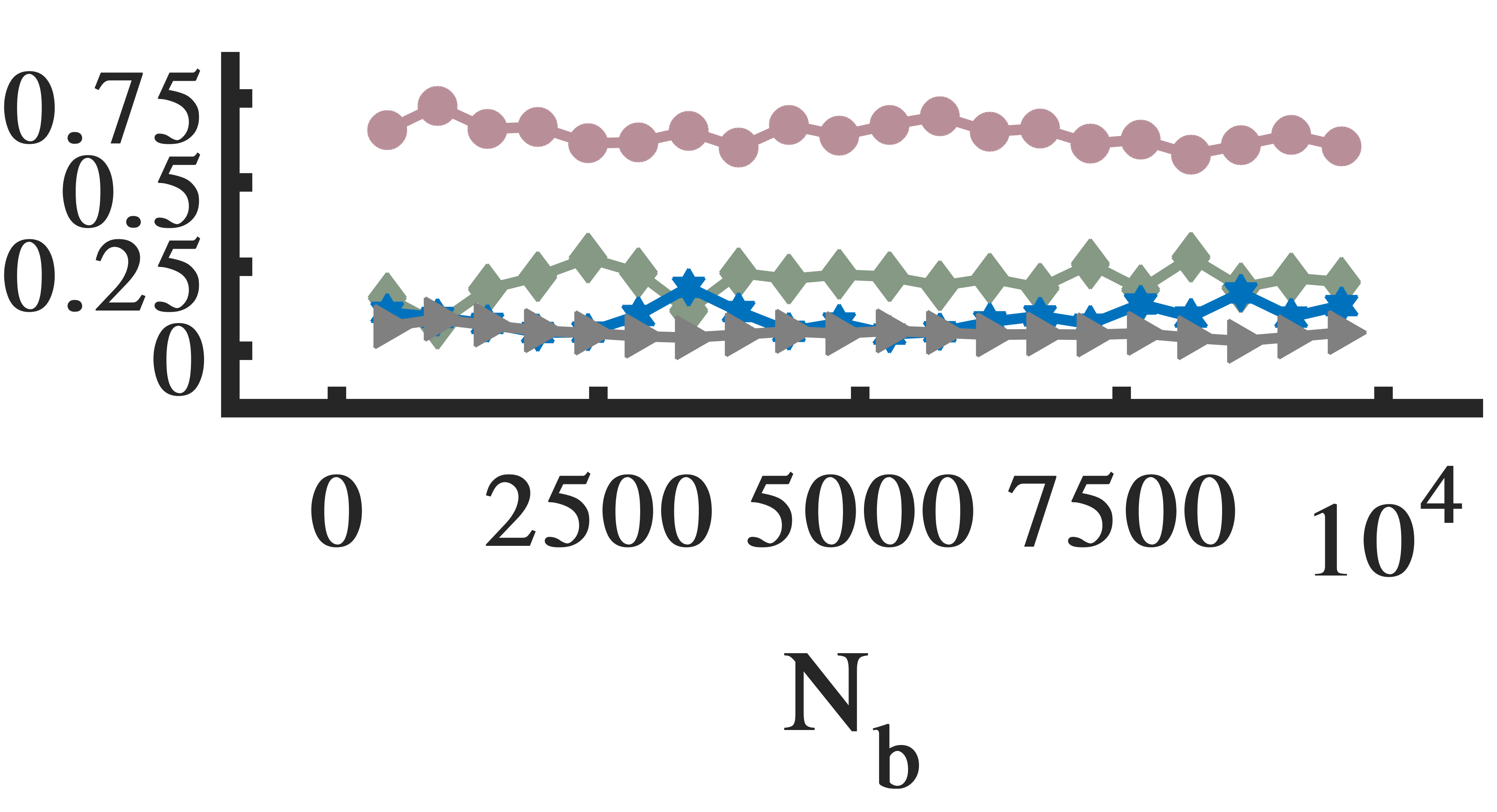}}
  \subfigure[ML1m]{\includegraphics[width=0.24\textwidth]{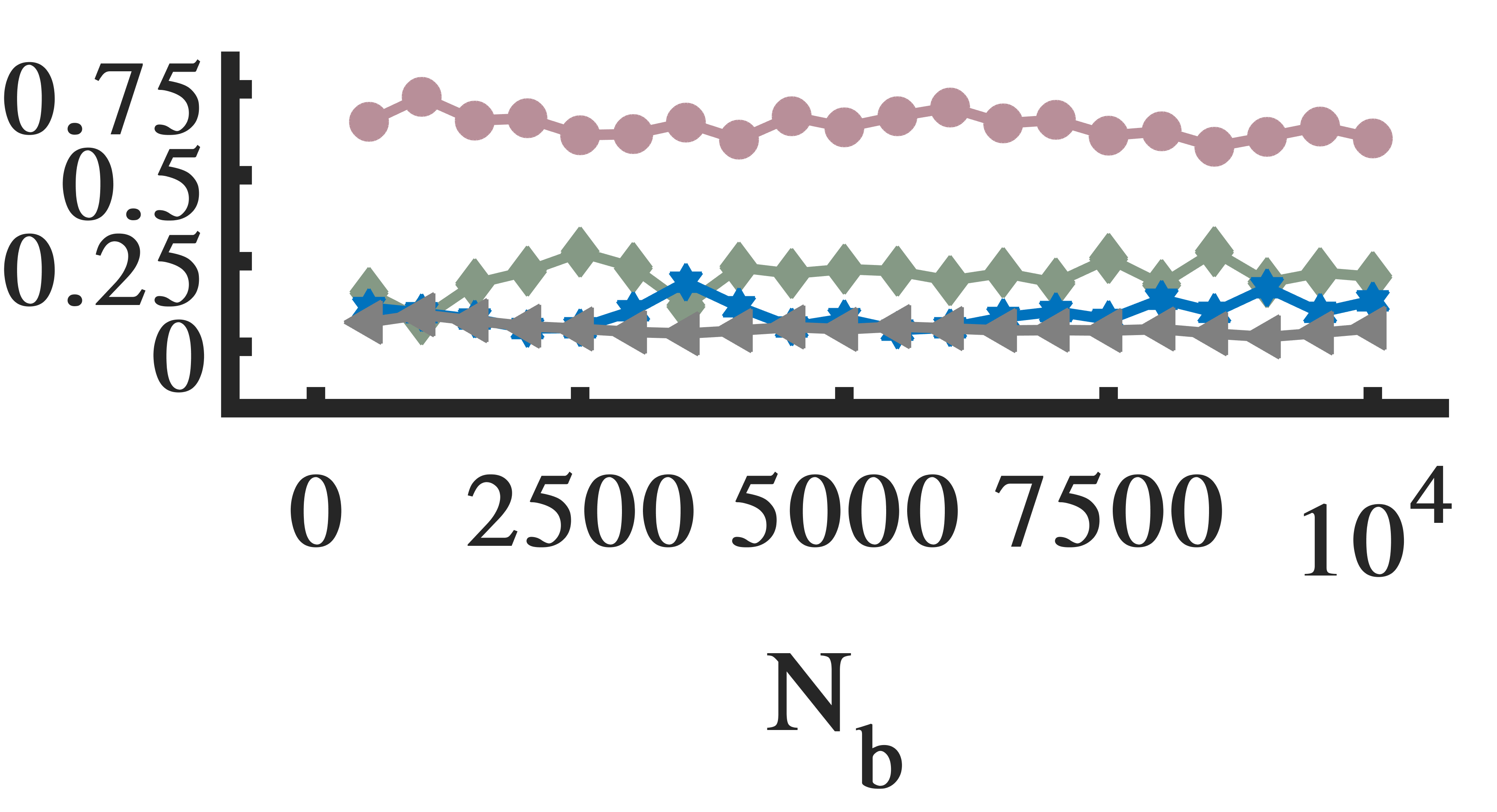}}
  \subfigure[Amazon]{\includegraphics[width=0.24\textwidth]{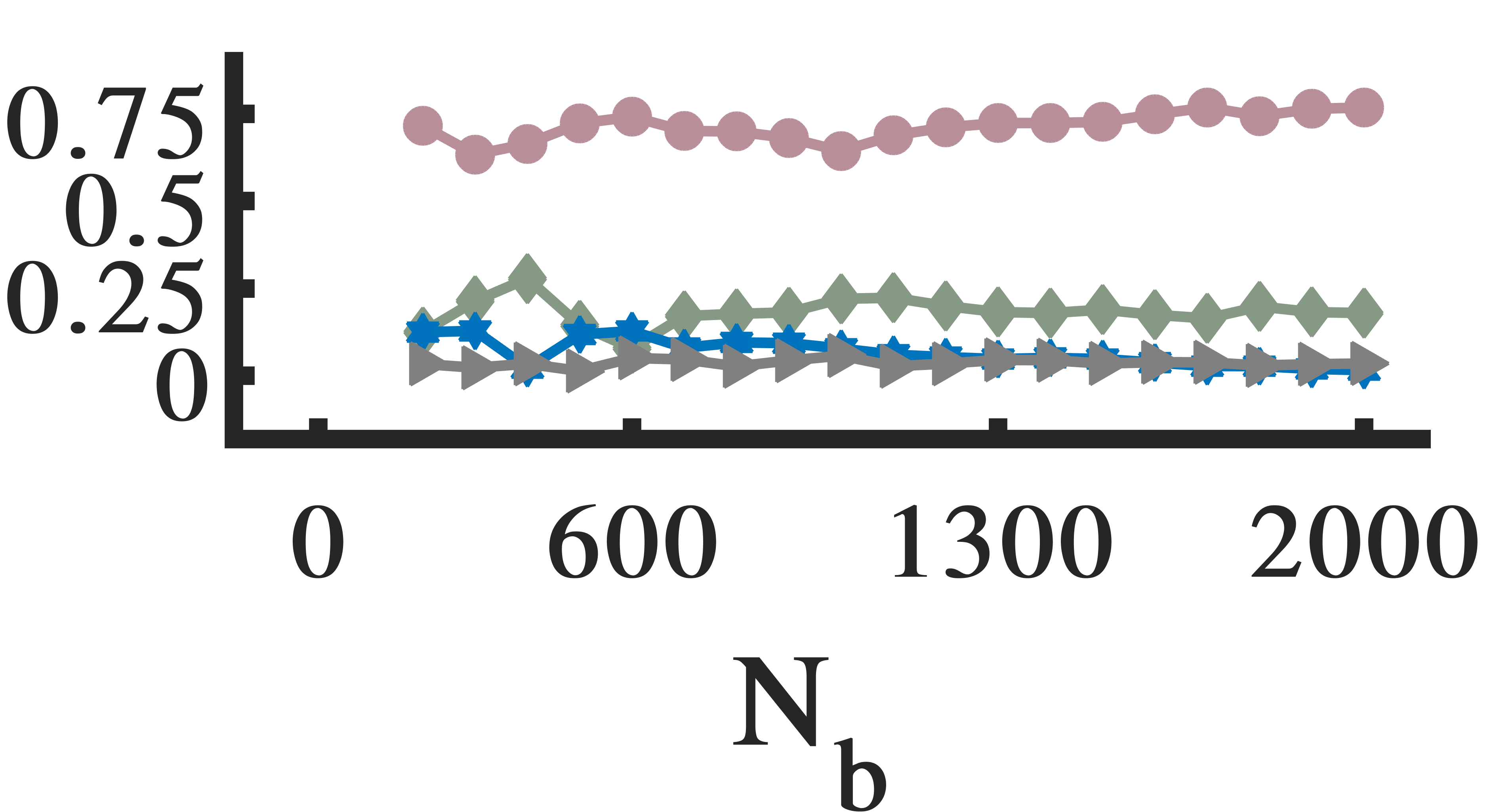}}
\subfigure[Yahoo]{\includegraphics[width=0.24\textwidth]{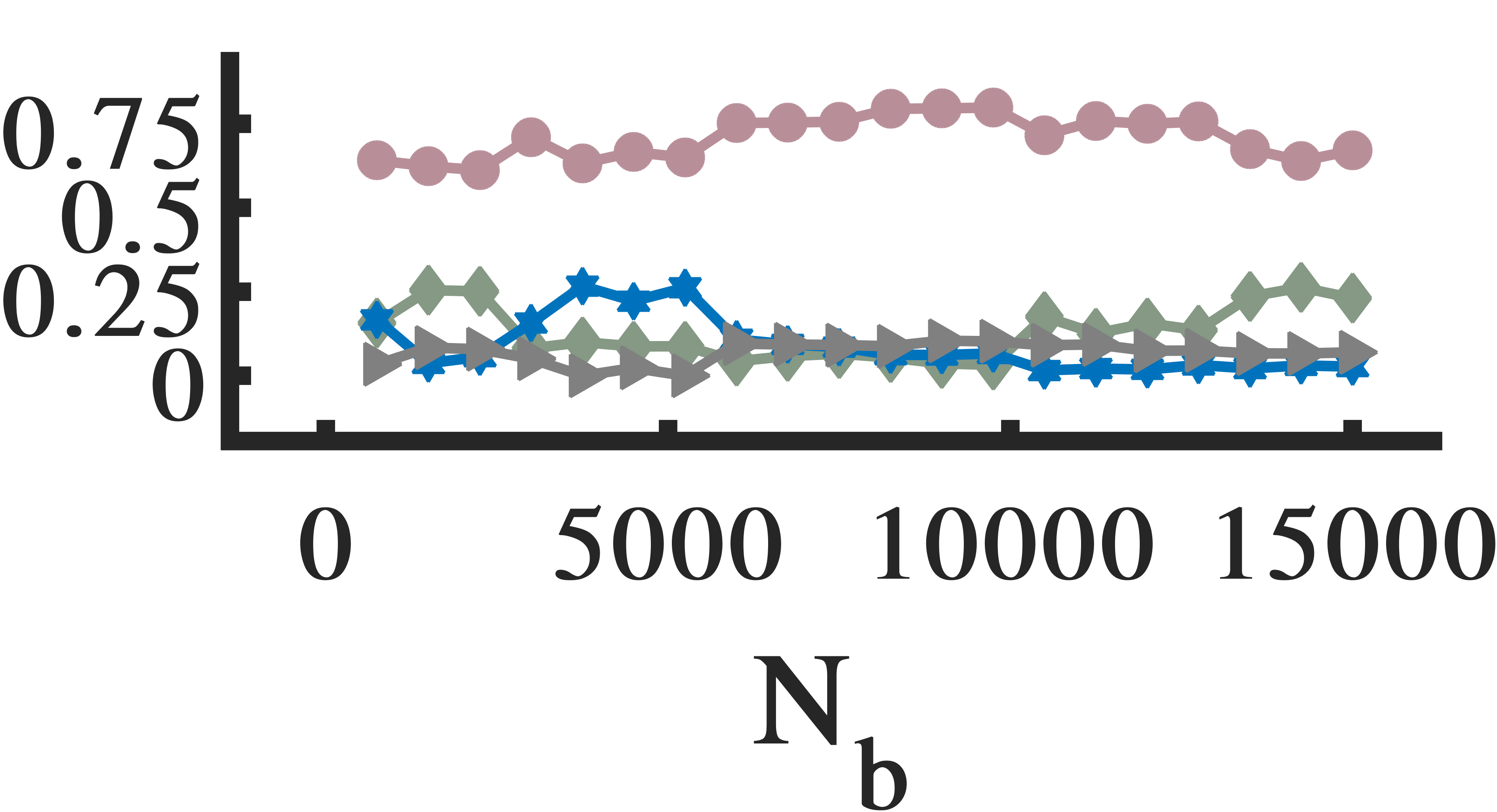}}
    \caption{F elements over the timeline of burst arrivals in real-world streams.}
    \label{fig:Frealgraphs}
\end{figure*}

\begin{figure*}[h]
    \centering
  \subfigure[Ciao]{\includegraphics[width=0.24\textwidth]{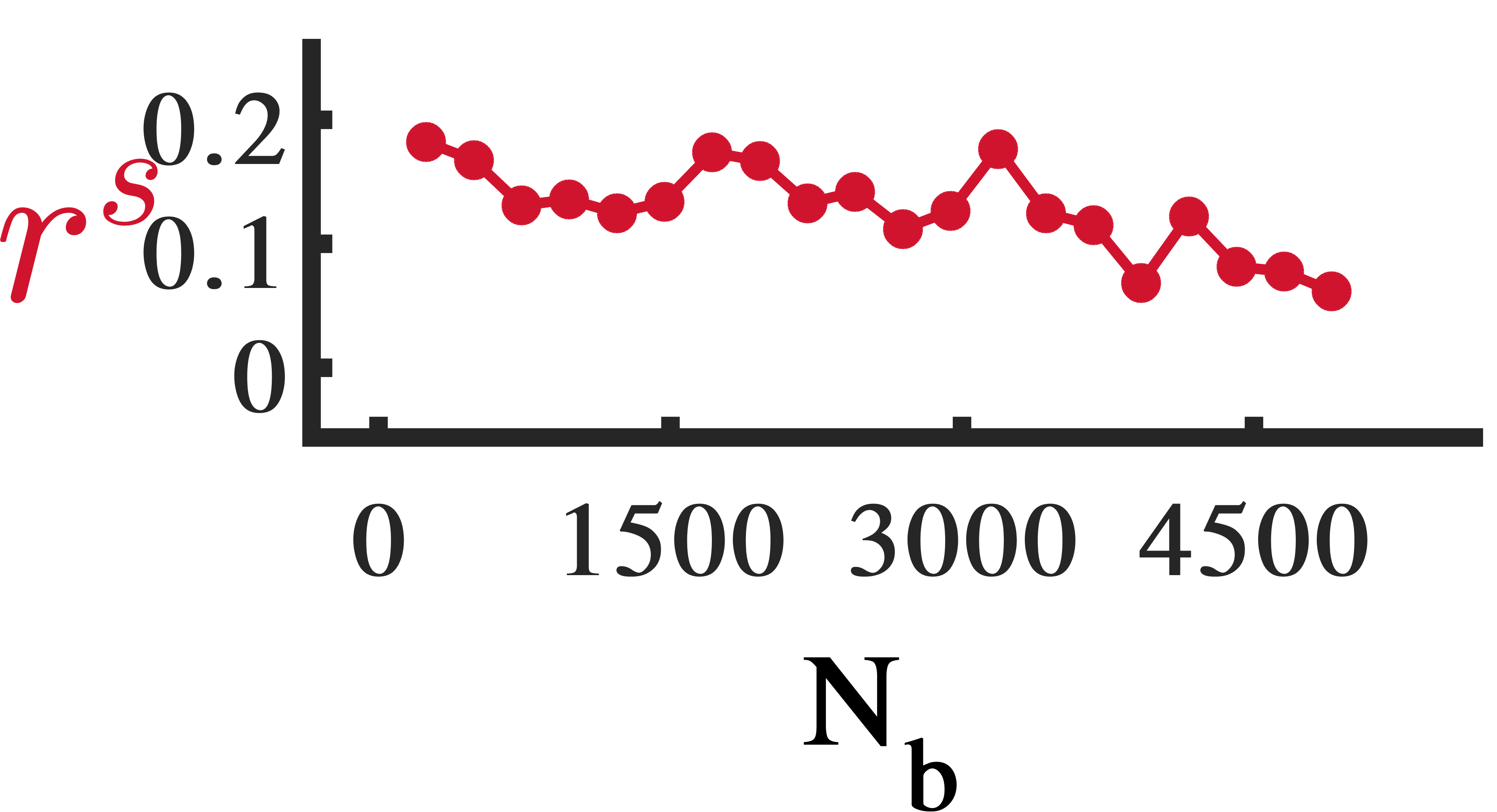}}
  \subfigure[Epinions]{\includegraphics[width=0.24\textwidth]{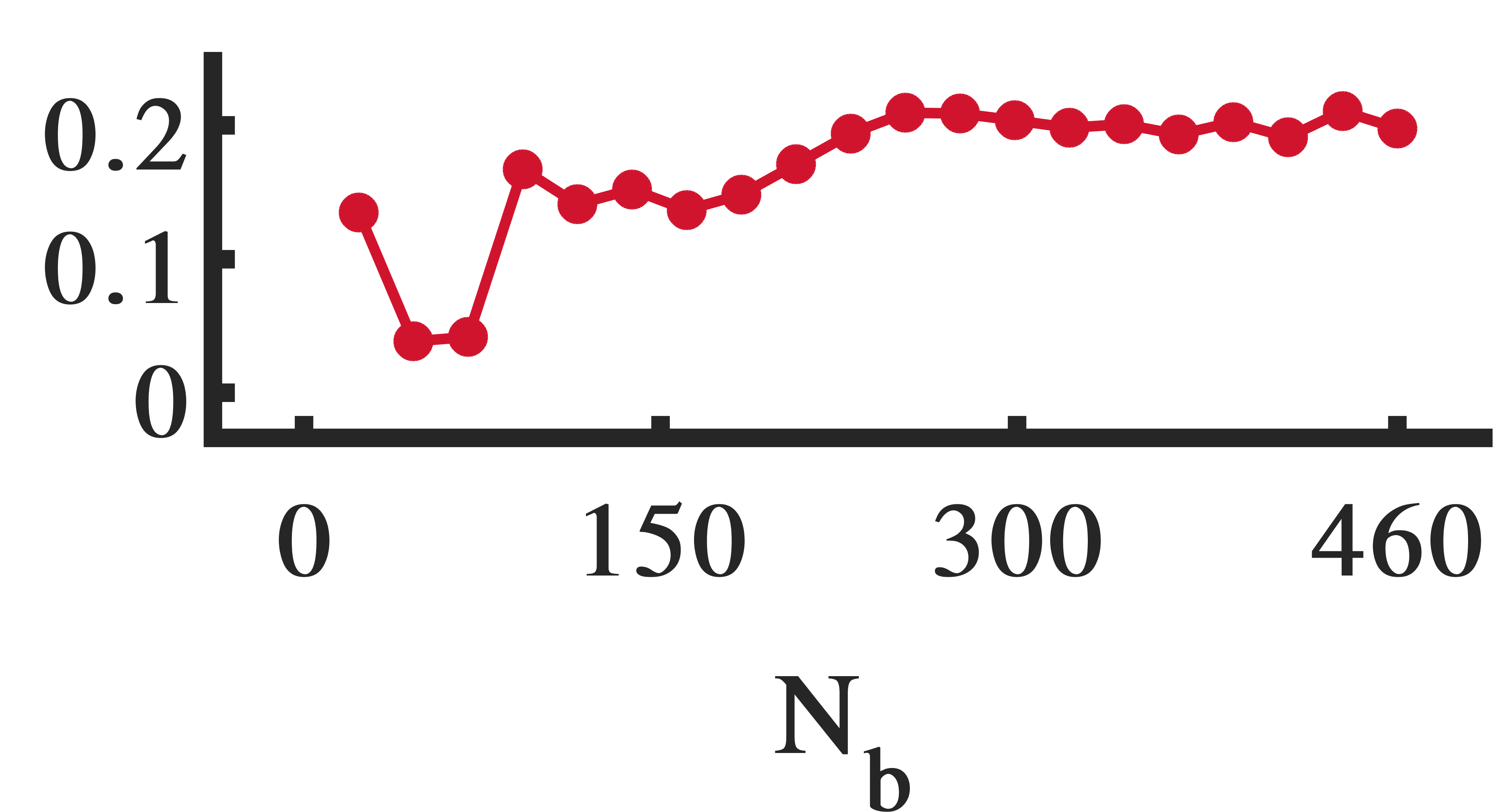}}
  \subfigure[WikiLens]{\includegraphics[width=0.24\textwidth]{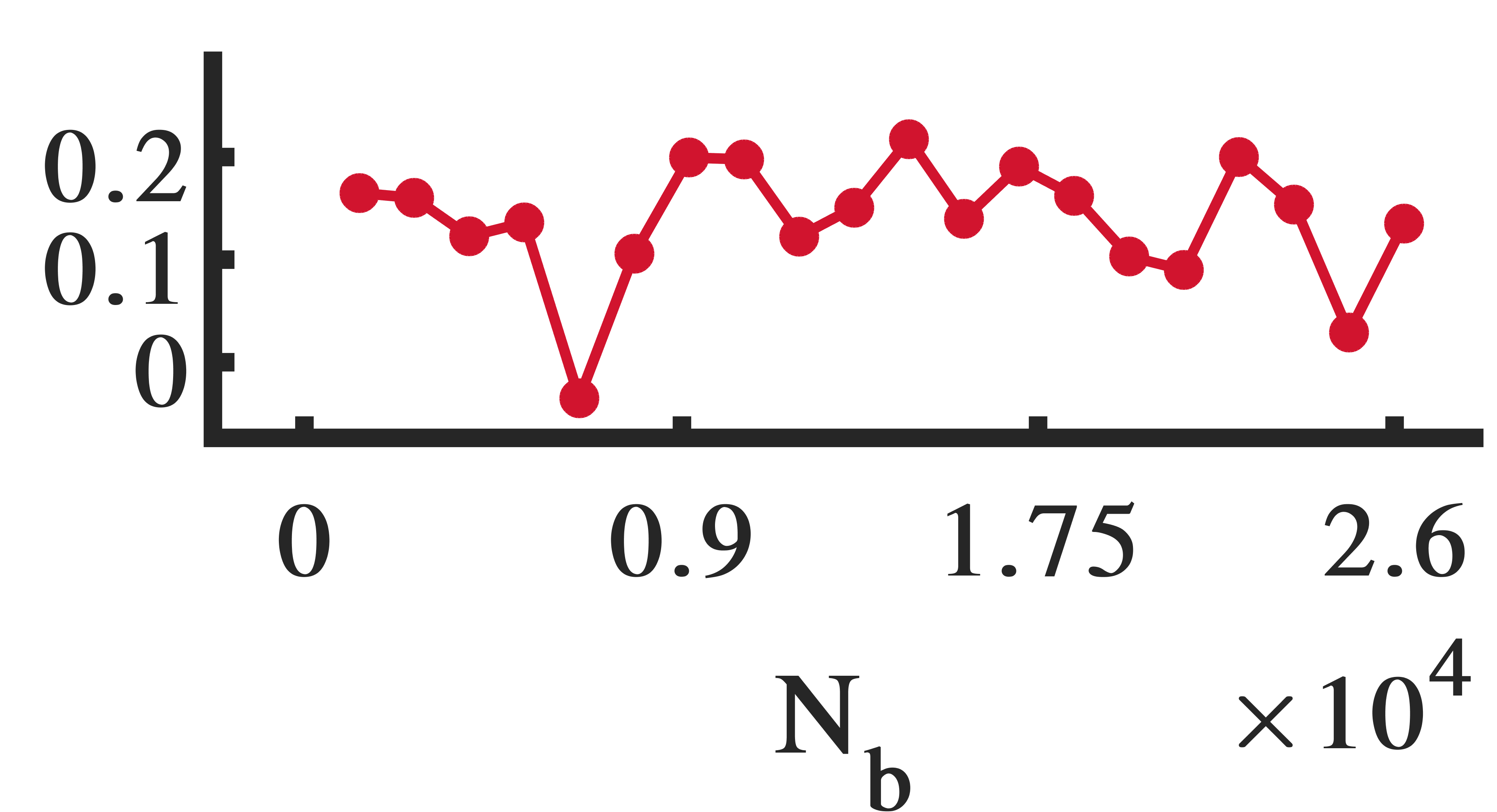}}
  \subfigure[ML100k]{\includegraphics[width=0.24\textwidth]{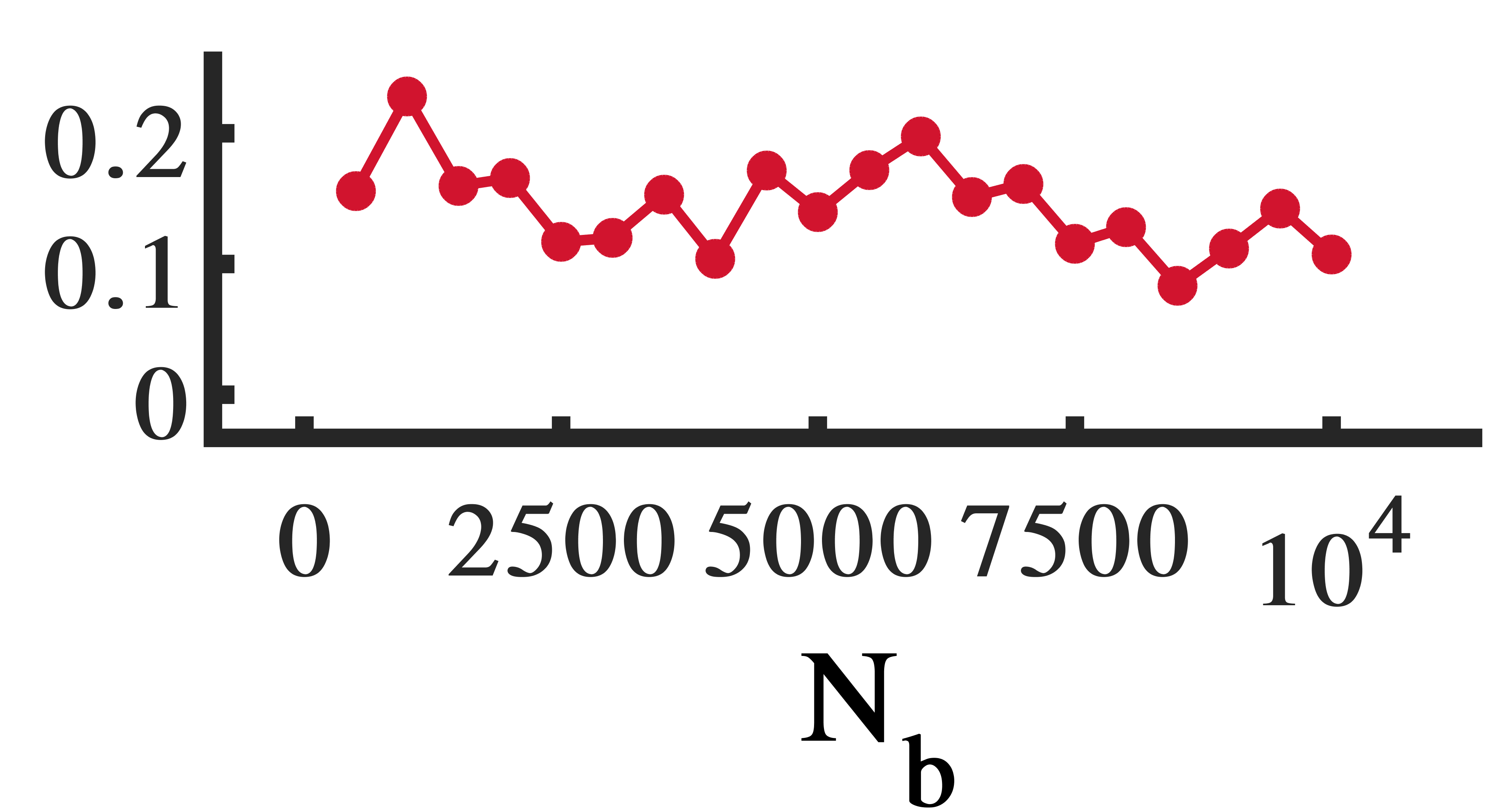}}
  \subfigure[ML1m]{\includegraphics[width=0.24\textwidth]{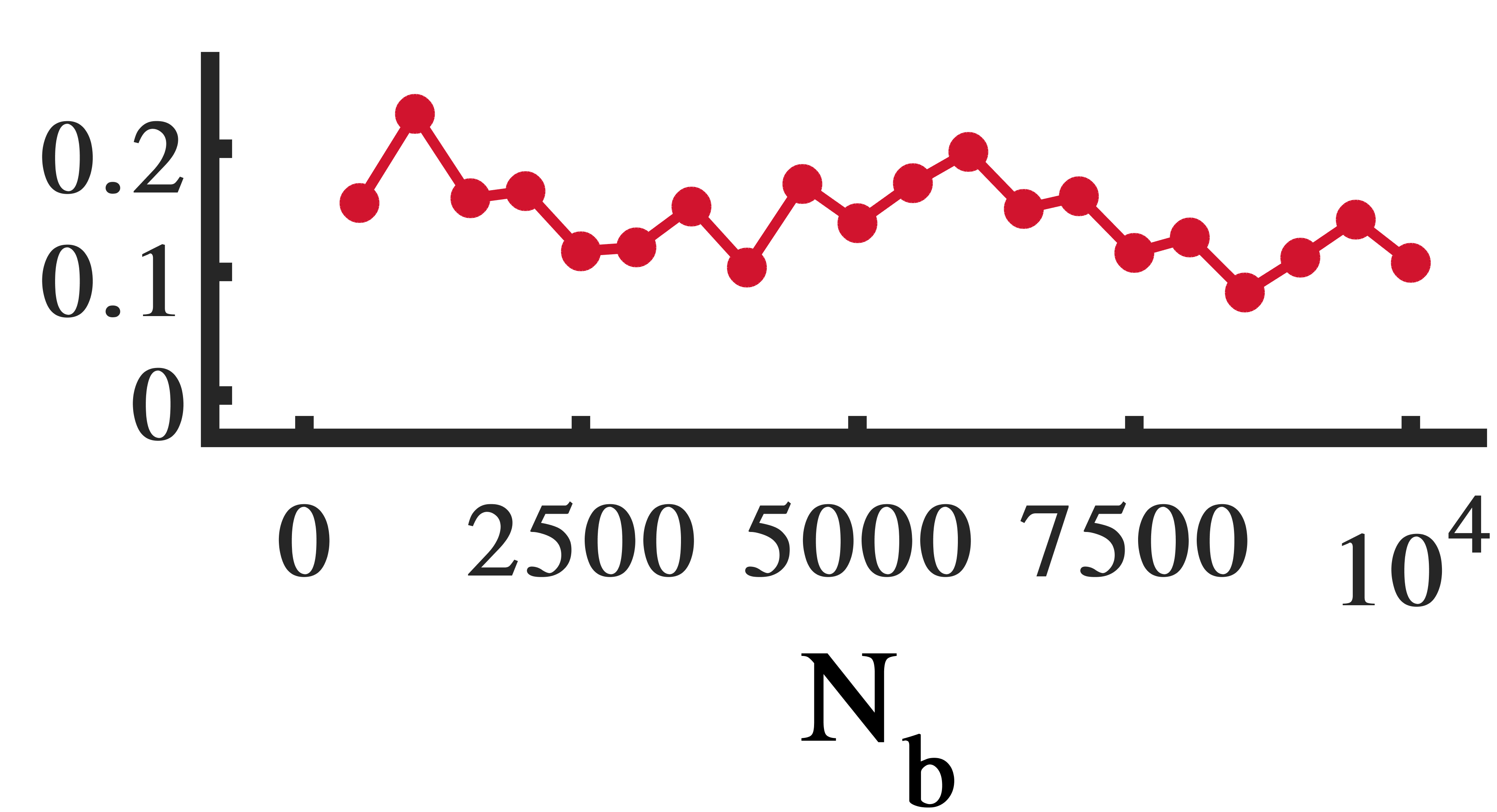}}
  \subfigure[Amazon]{\includegraphics[width=0.24\textwidth]{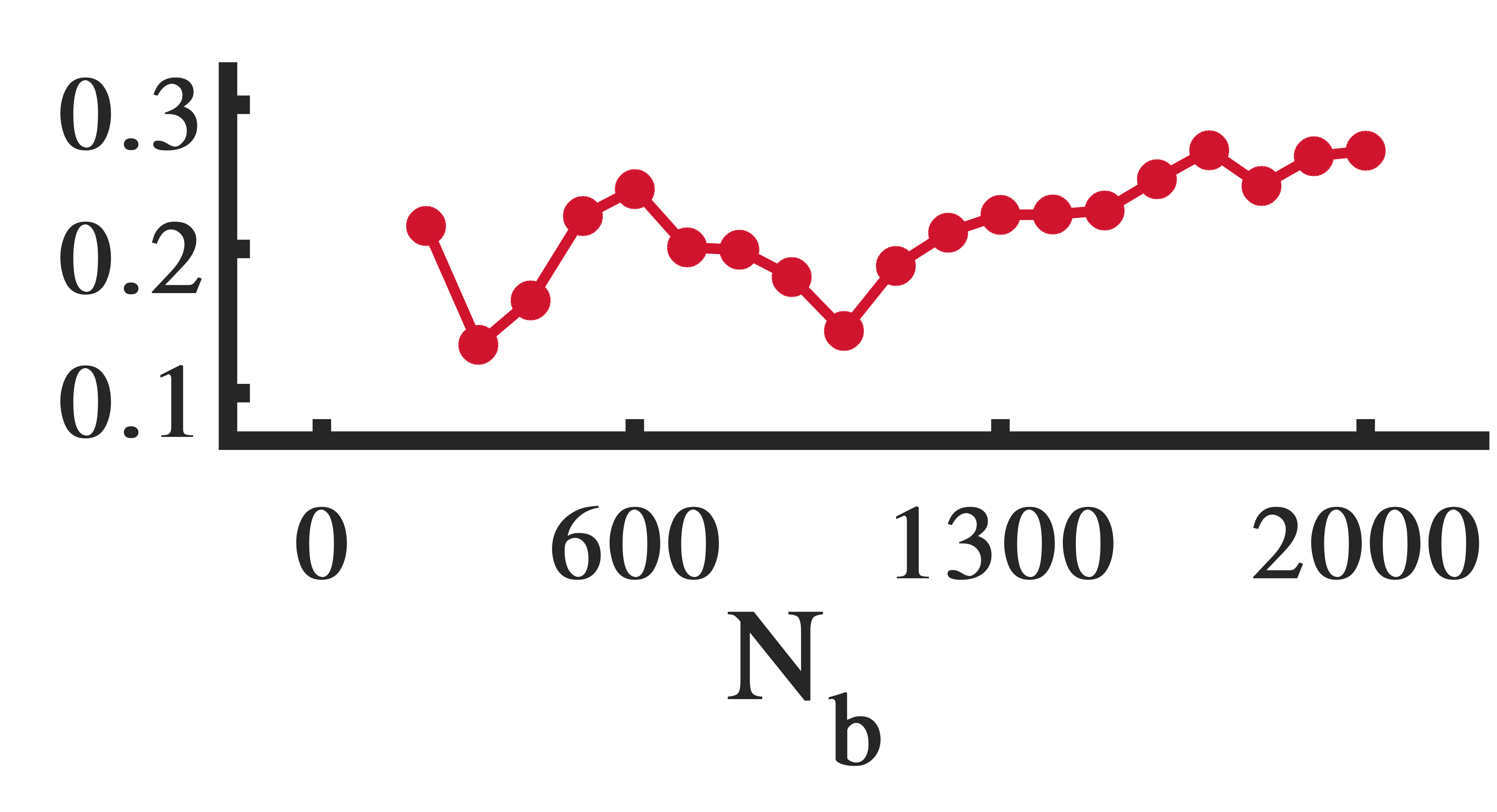}}
  \subfigure[Yahoo]{\includegraphics[width=0.24\textwidth]{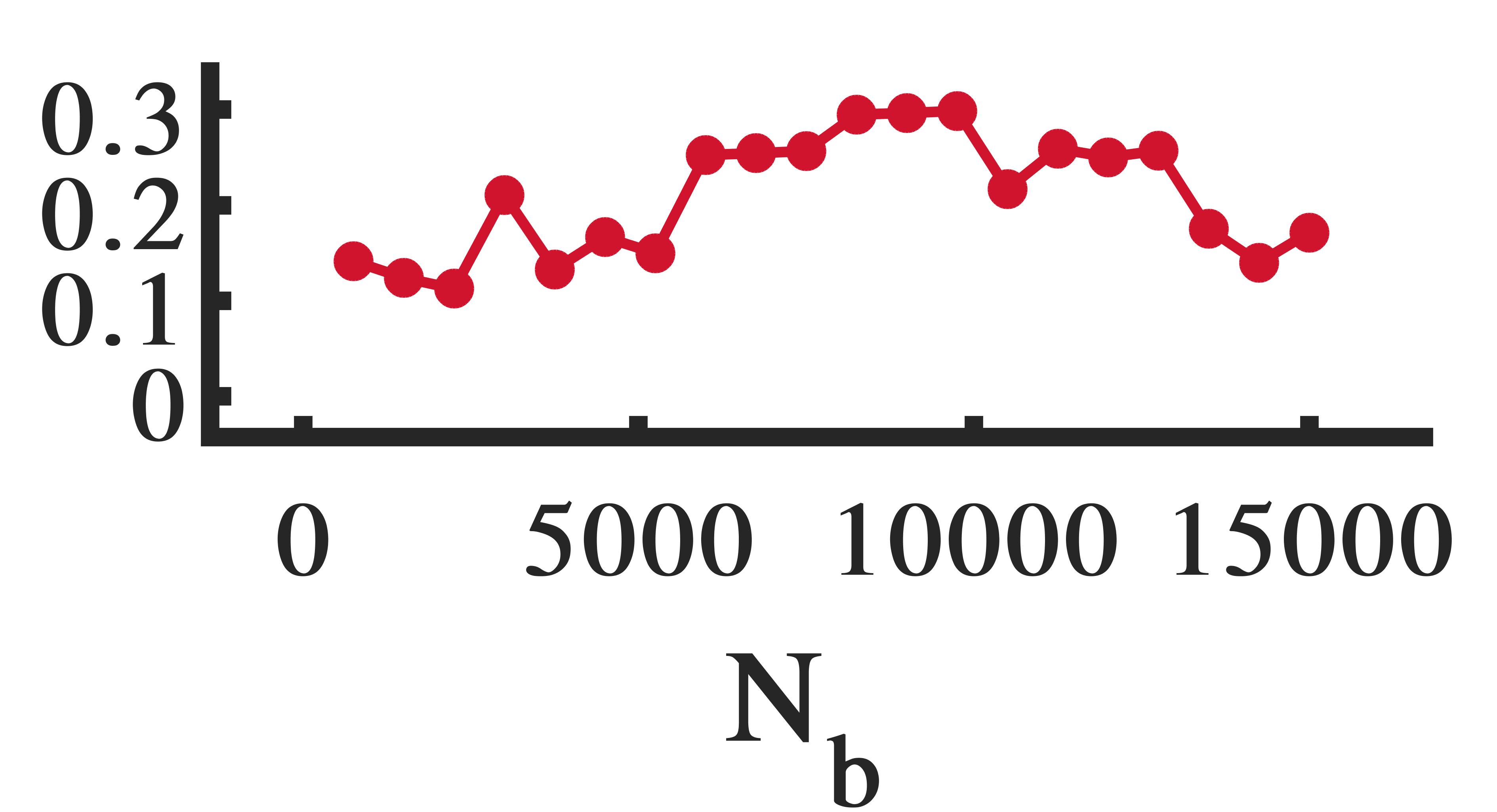}}
    \caption{Strength assortativity localization factor ($r^s$) of butterflies over the timeline of burst arrivals in real-world streams.}
    \label{fig:randrs}
\end{figure*}
\begin{figure*}[]
    \centering
  \subfigure[Ciao]{\includegraphics[width=0.24\textwidth]{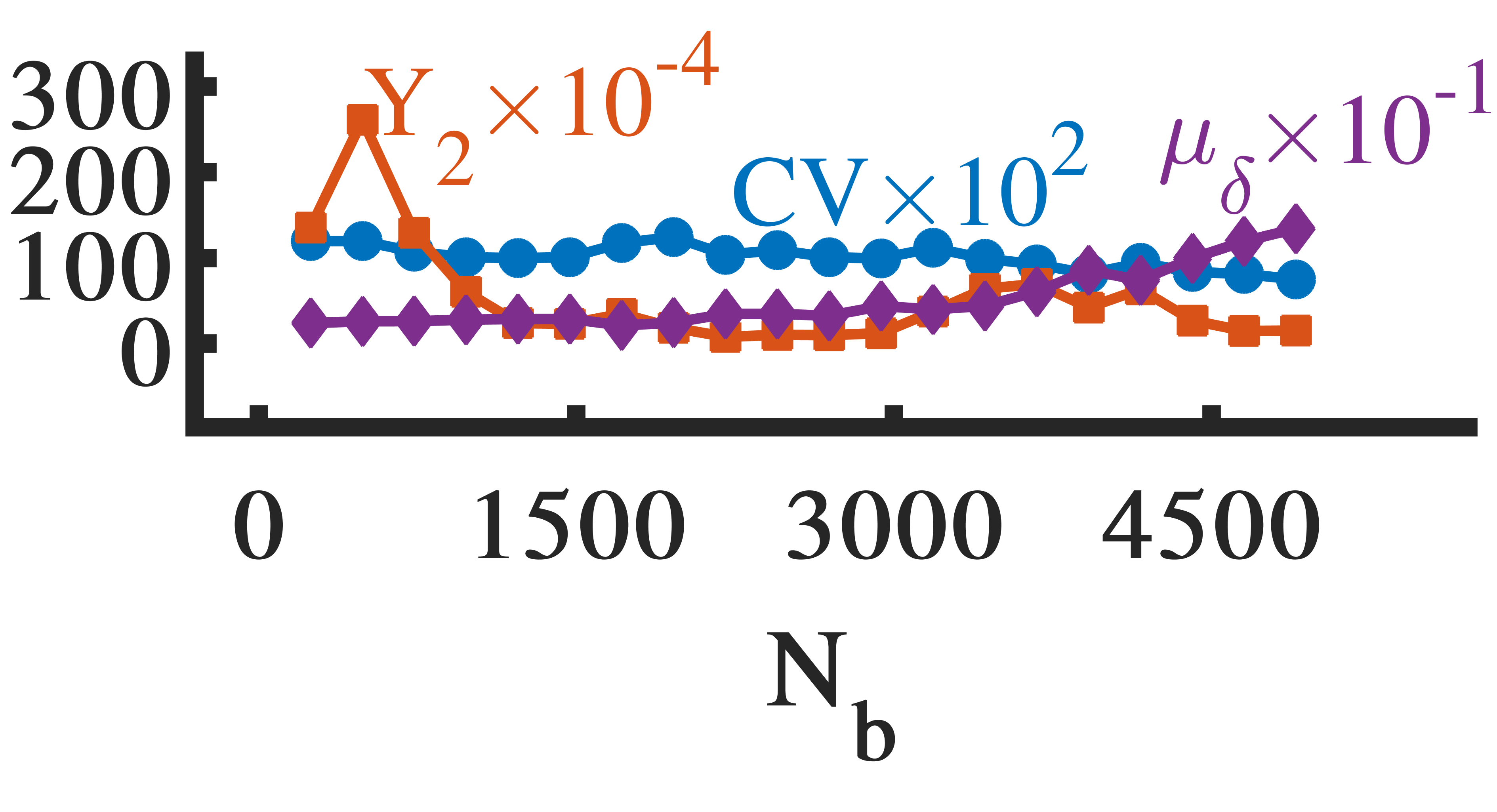}}
  \subfigure[Epinions]{\includegraphics[width=0.24\textwidth]{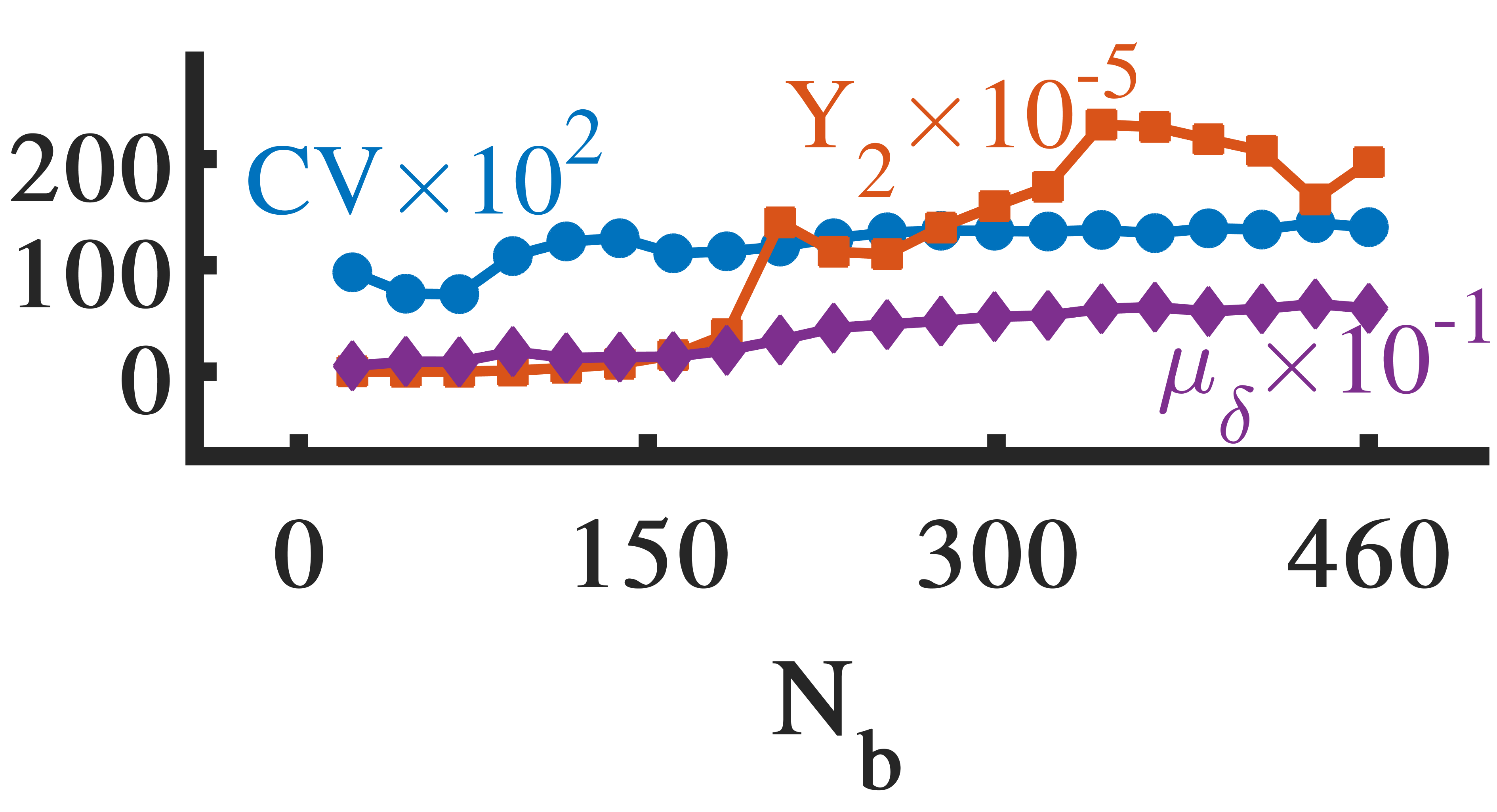}}
  \subfigure[WikiLens]{\includegraphics[width=0.24\textwidth]{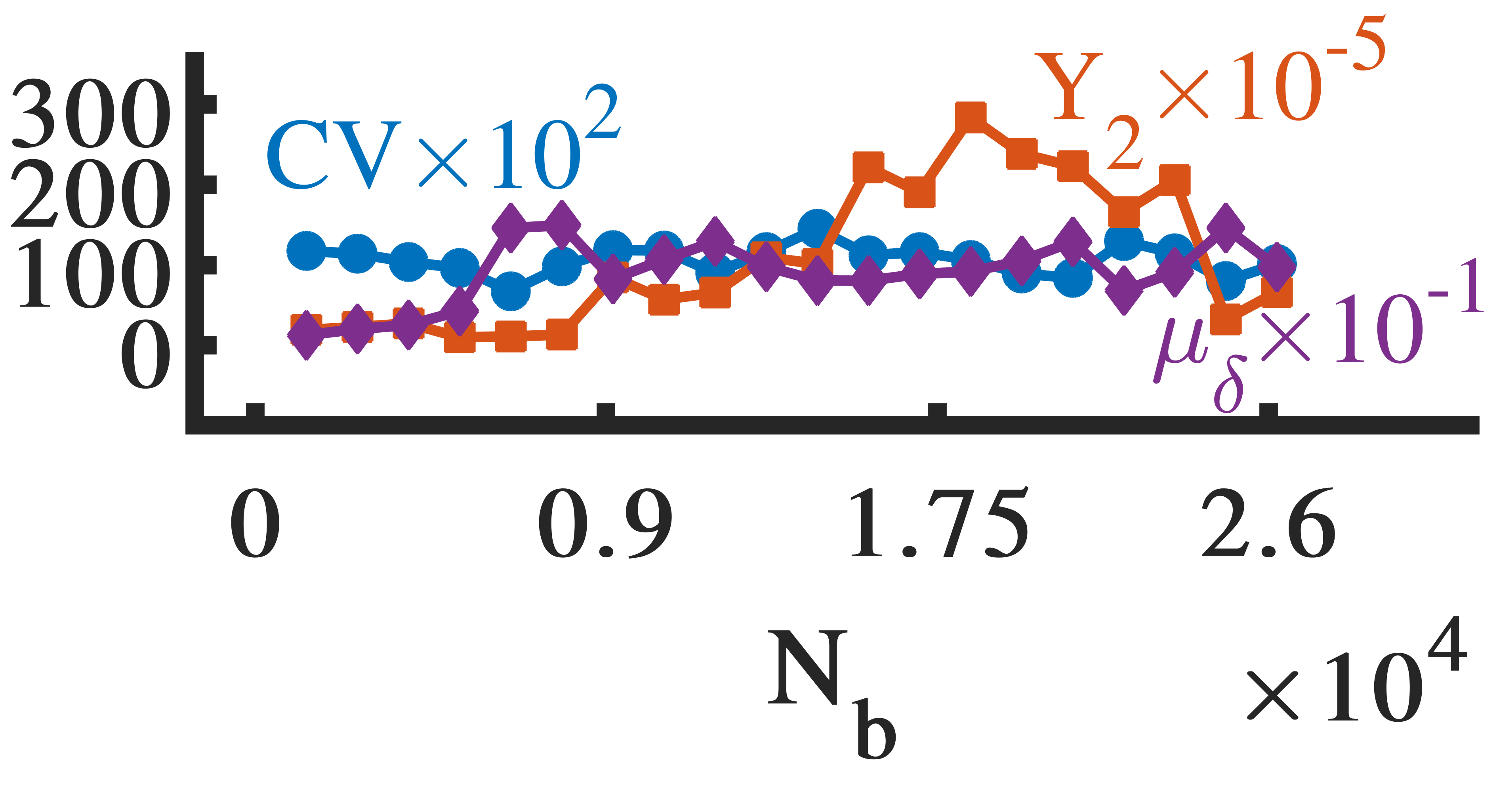}}
  \subfigure[ML100k]{\includegraphics[width=0.24\textwidth]{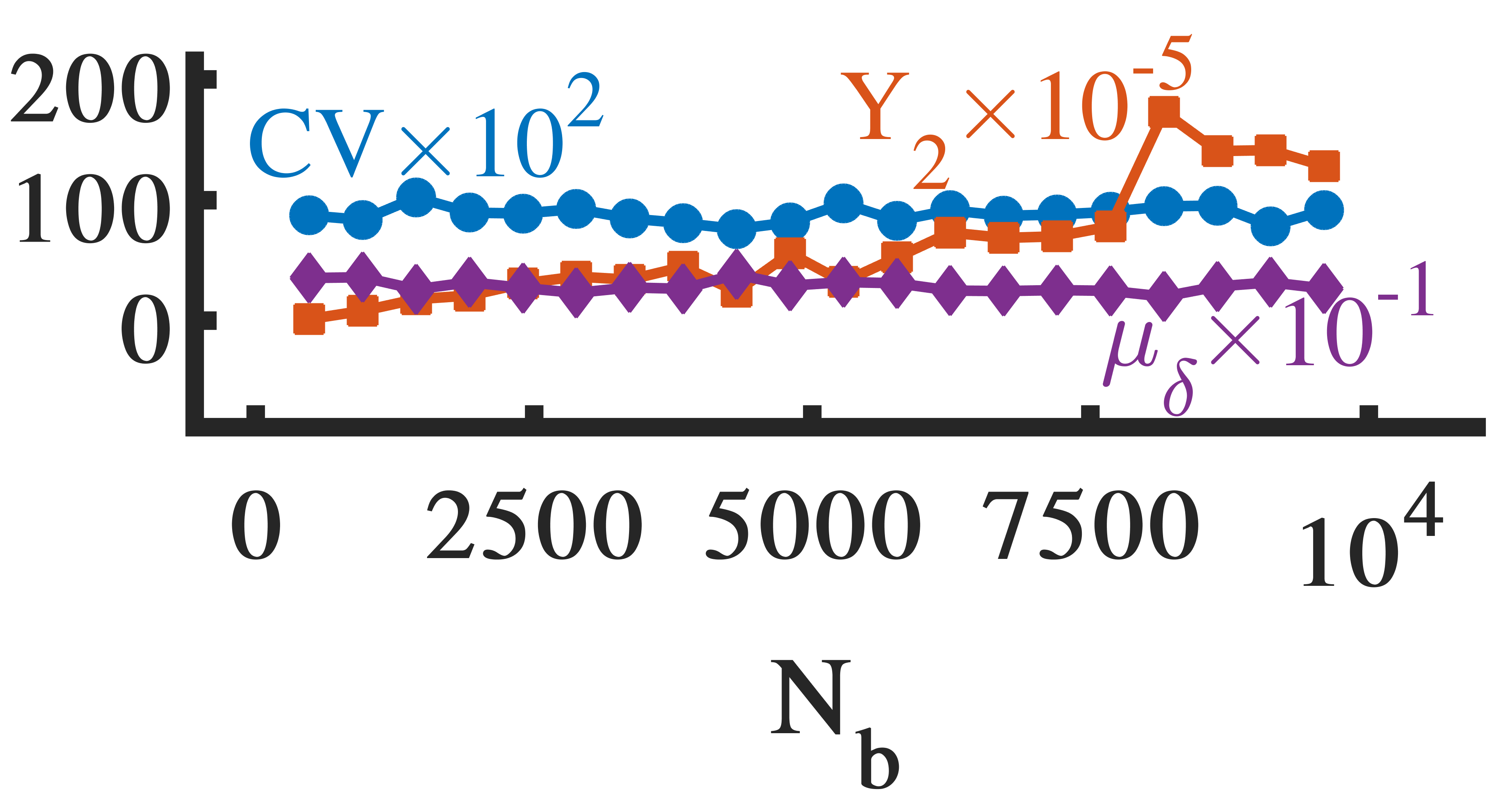}}
  \subfigure[ML1m]{\includegraphics[width=0.24\textwidth]{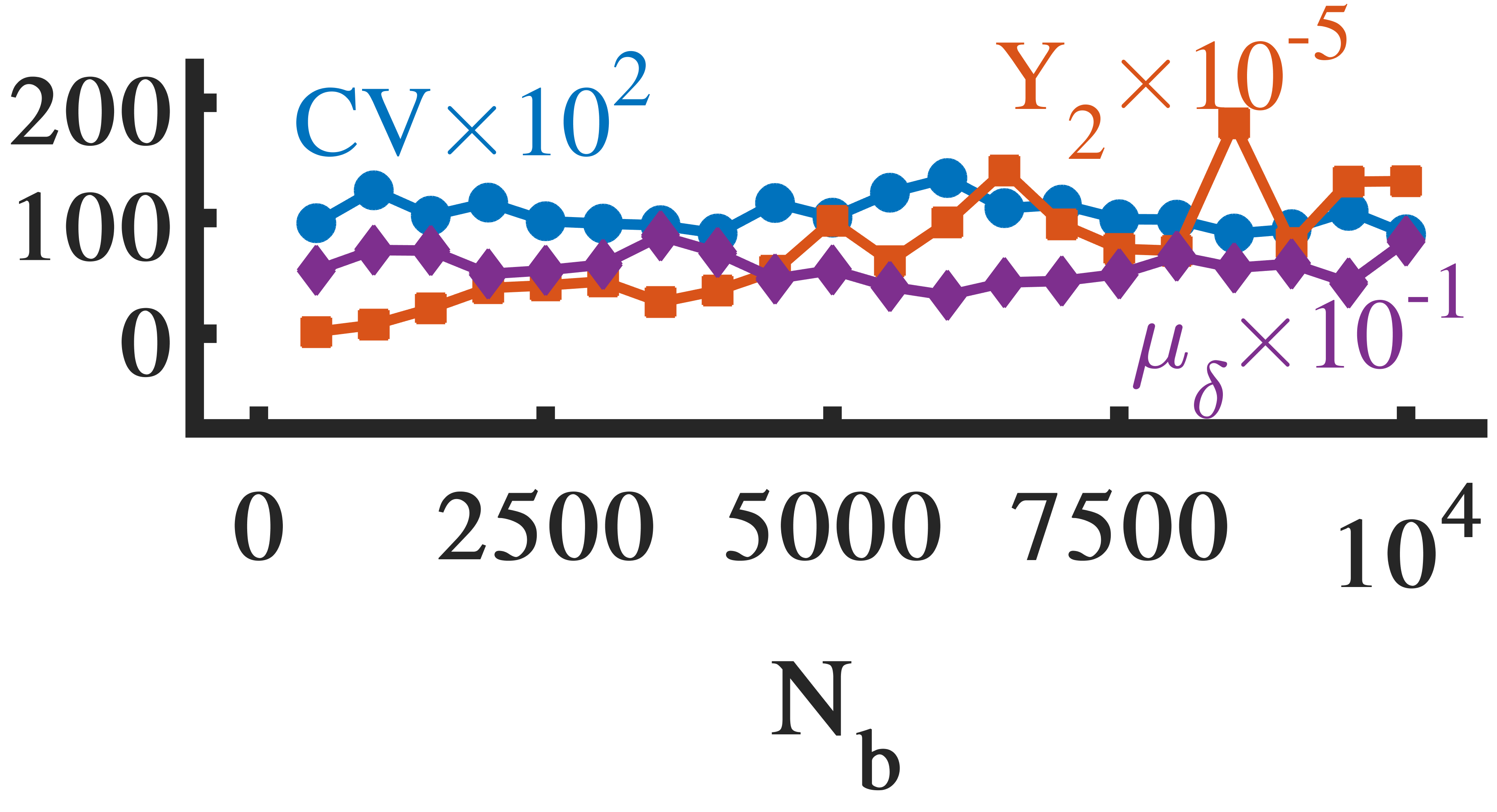}}
  \subfigure[Amazon]{\includegraphics[width=0.24\textwidth]{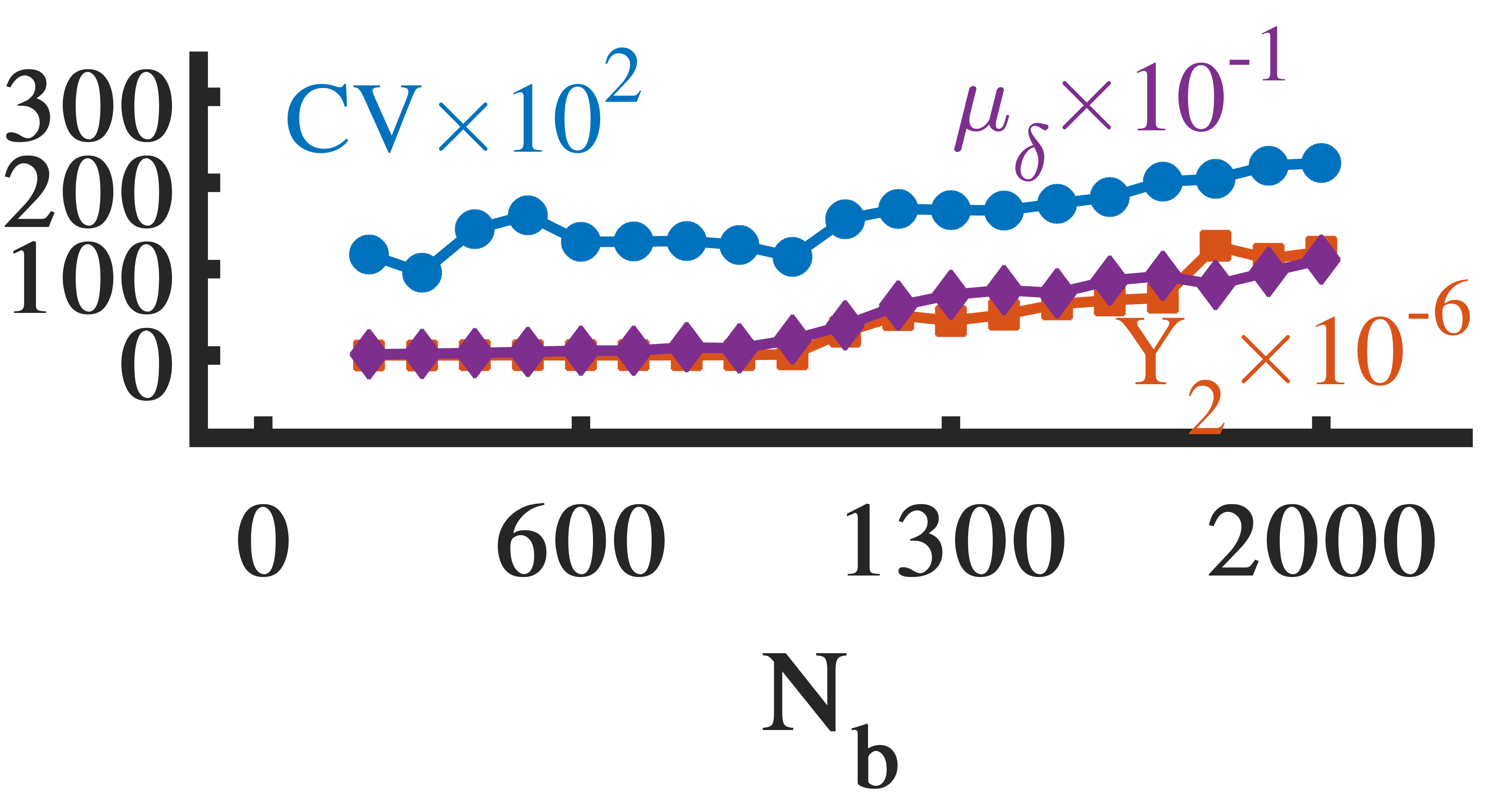}}
  \subfigure[Yahoo]{\includegraphics[width=0.24\textwidth]{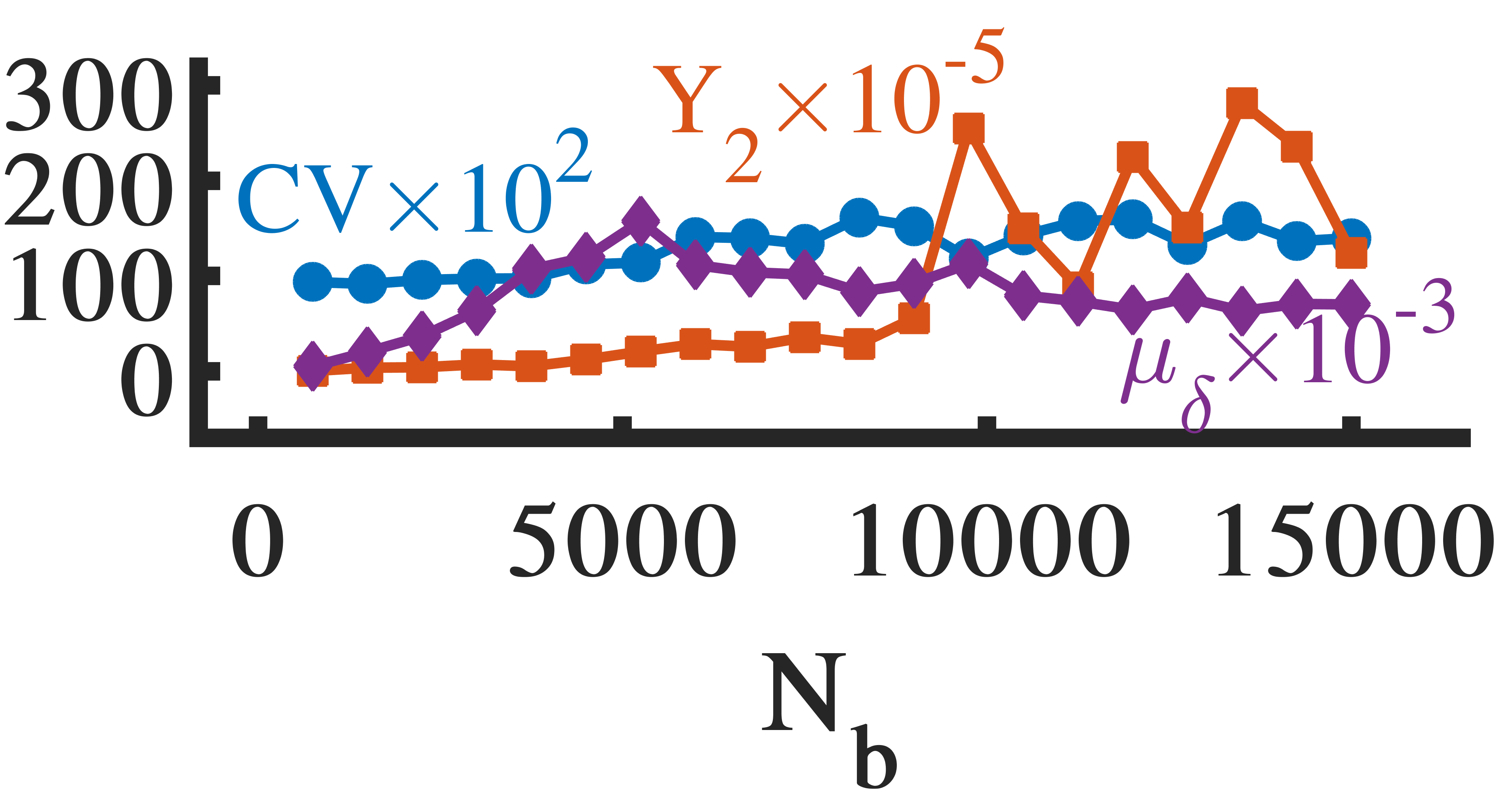}}
    \caption{Coefficient of variation (circles), excess kurtosis (squares), and mean (diamonds) of butterfly strength-differences over the timeline of burst arrivals in real-world streams.} 
    \label{fig:deltastatsReal}
\end{figure*}
\begin{figure*}[h]
    \centering
    \subfigure[$Pr(S_i)$, Ciao]{\includegraphics[width=0.24\textwidth]{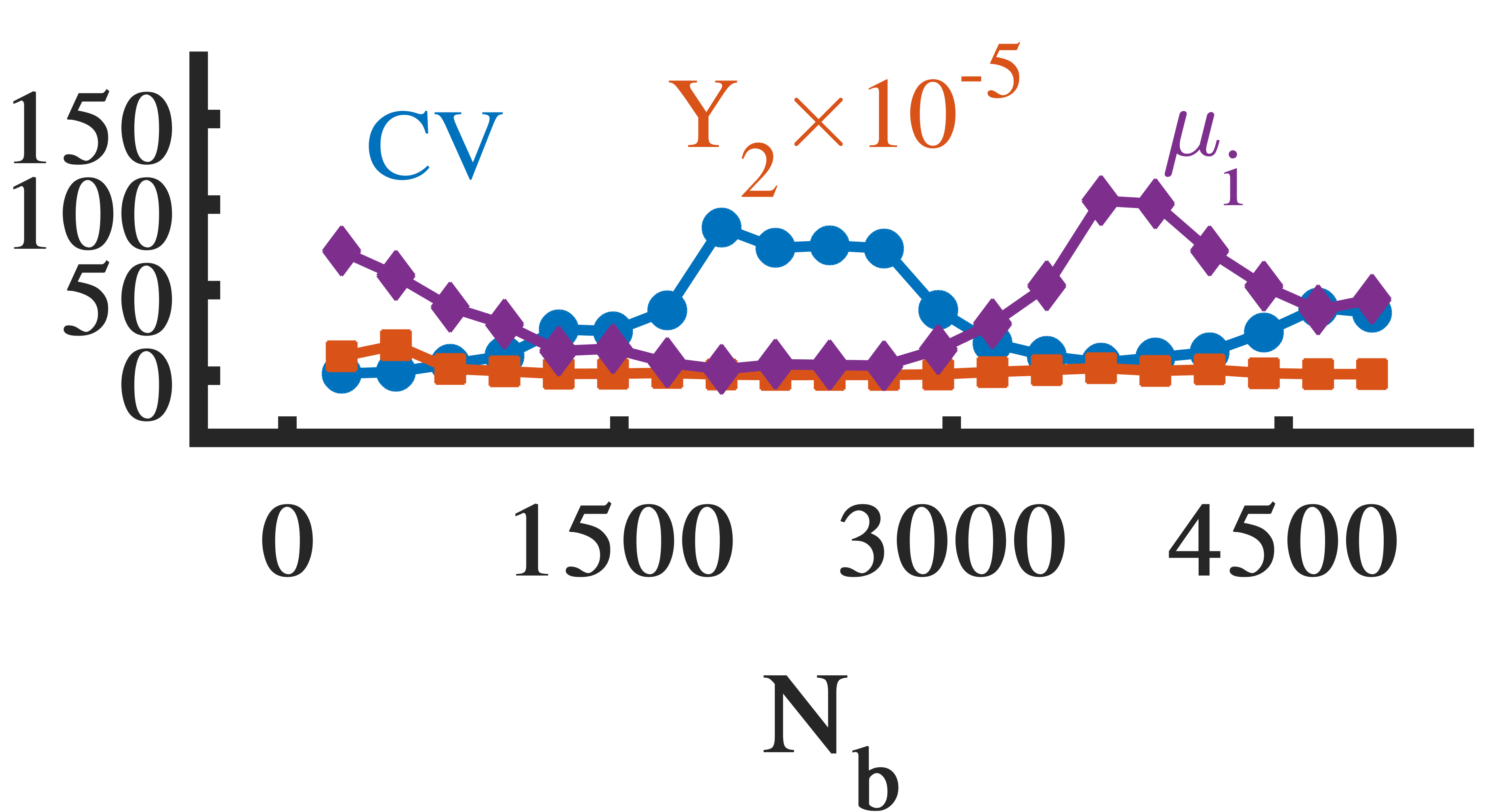}}
    \subfigure[$Pr(S_i)$, Epinions]{\includegraphics[width=0.24\textwidth]{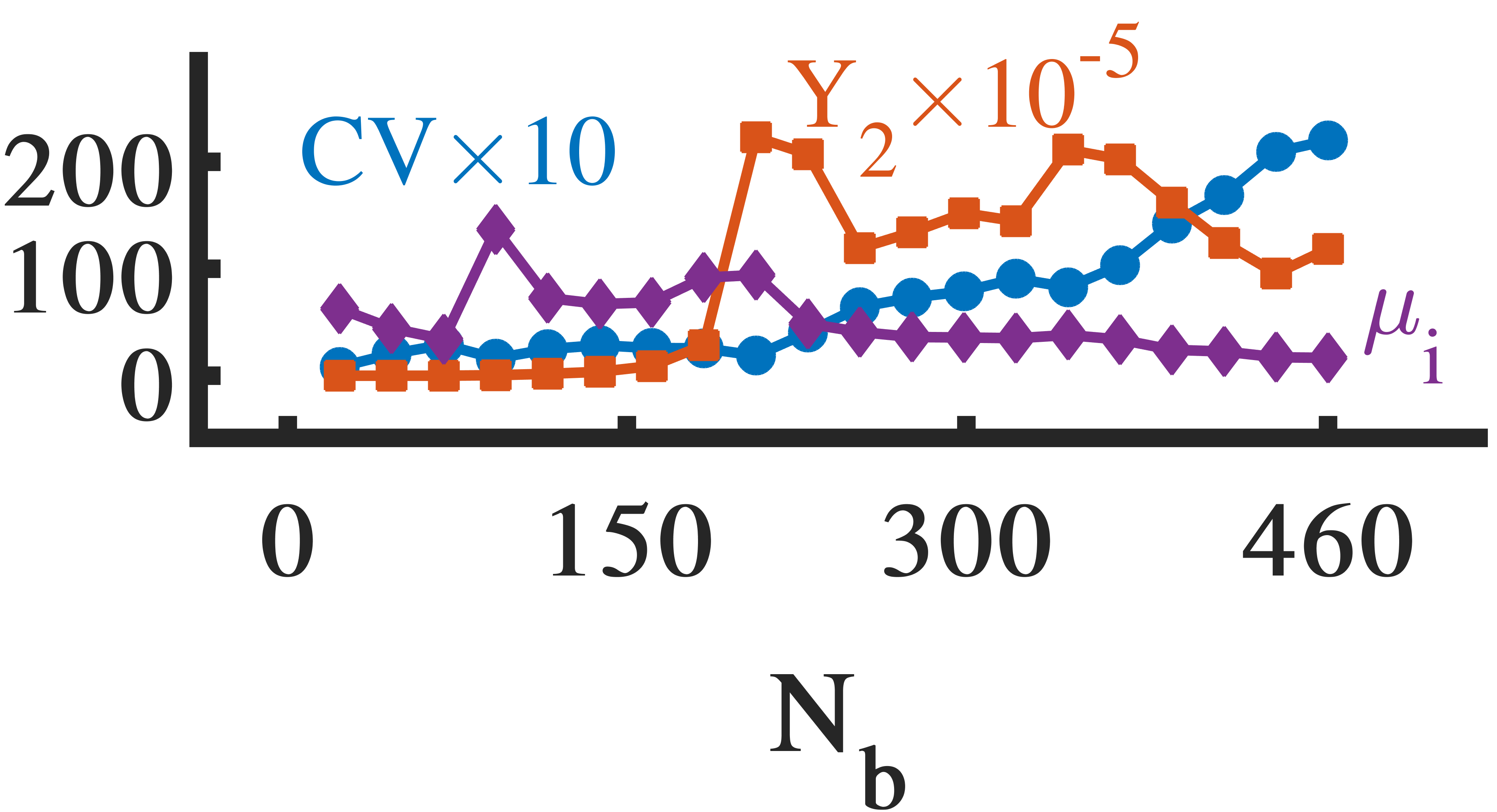}}
    \subfigure[$Pr(S_i)$, WikiLens]{\includegraphics[width=0.24\textwidth]{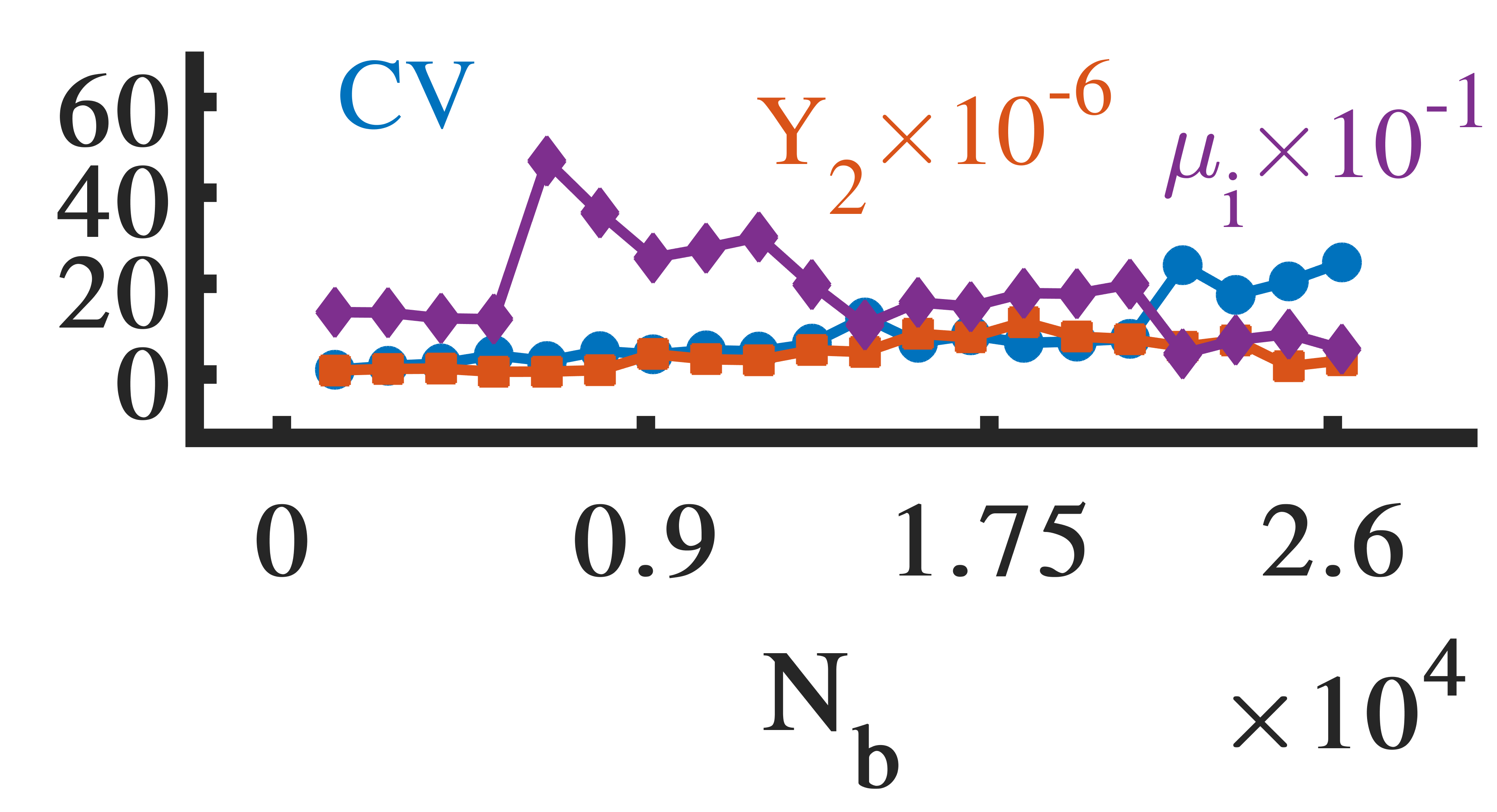}}
    \subfigure[$Pr(S_i)$, ML100k]{\includegraphics[width=0.24\textwidth]{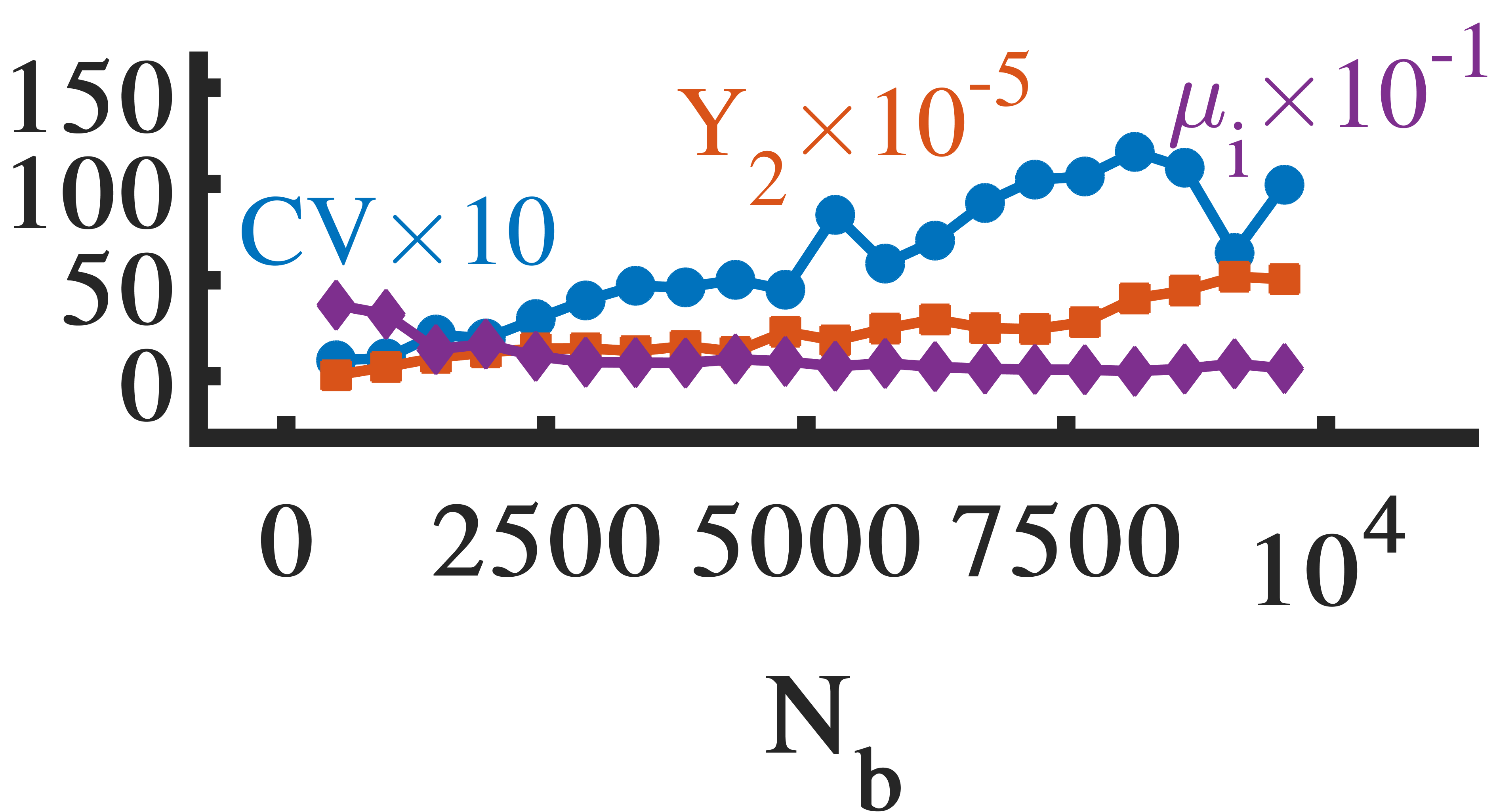}}
    \subfigure[$Pr(S_i)$, ML1m]{\includegraphics[width=0.24\textwidth]{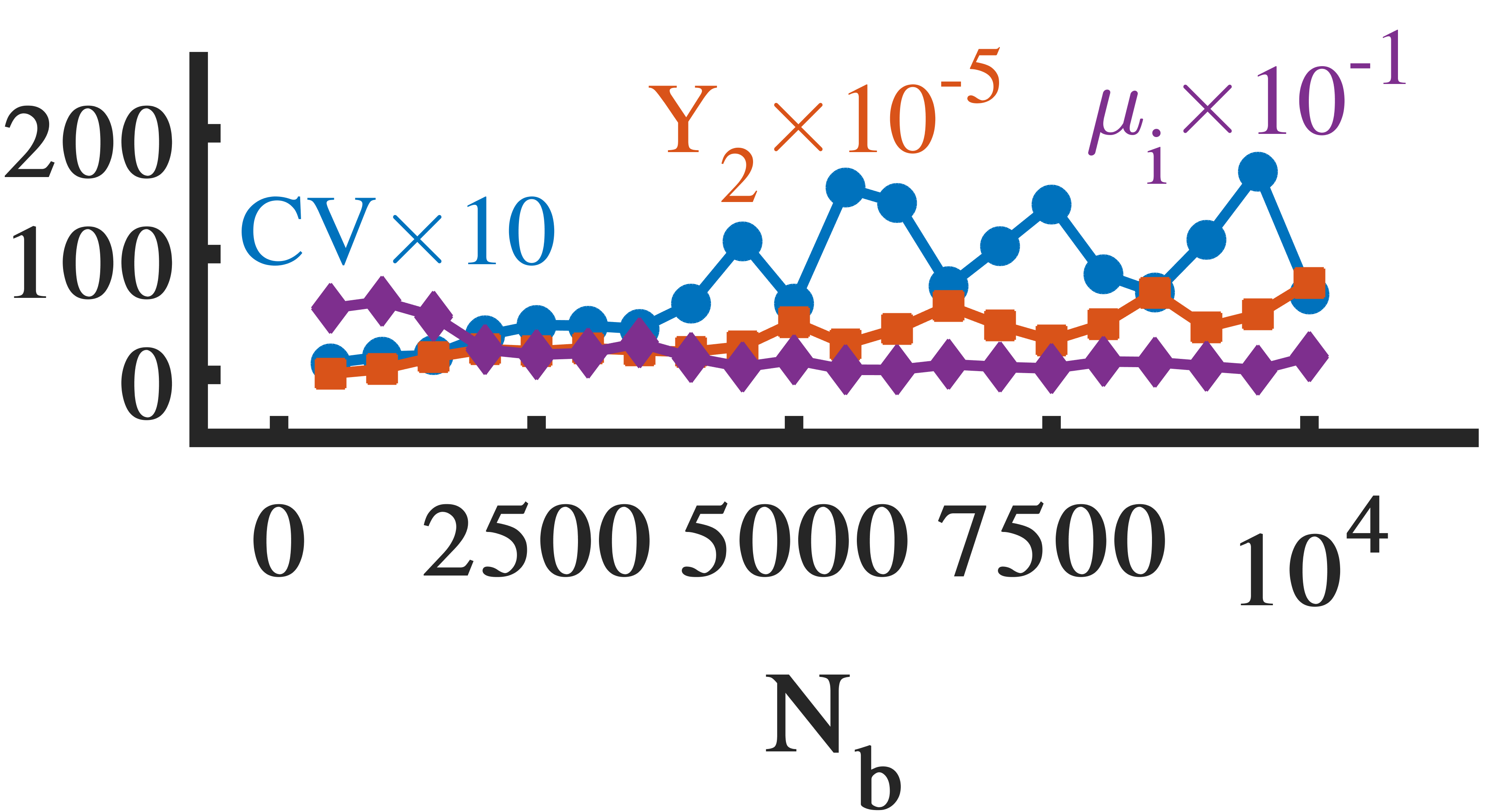}}
    \subfigure[$Pr(S_i)$, Amazon]{\includegraphics[width=0.24\textwidth]{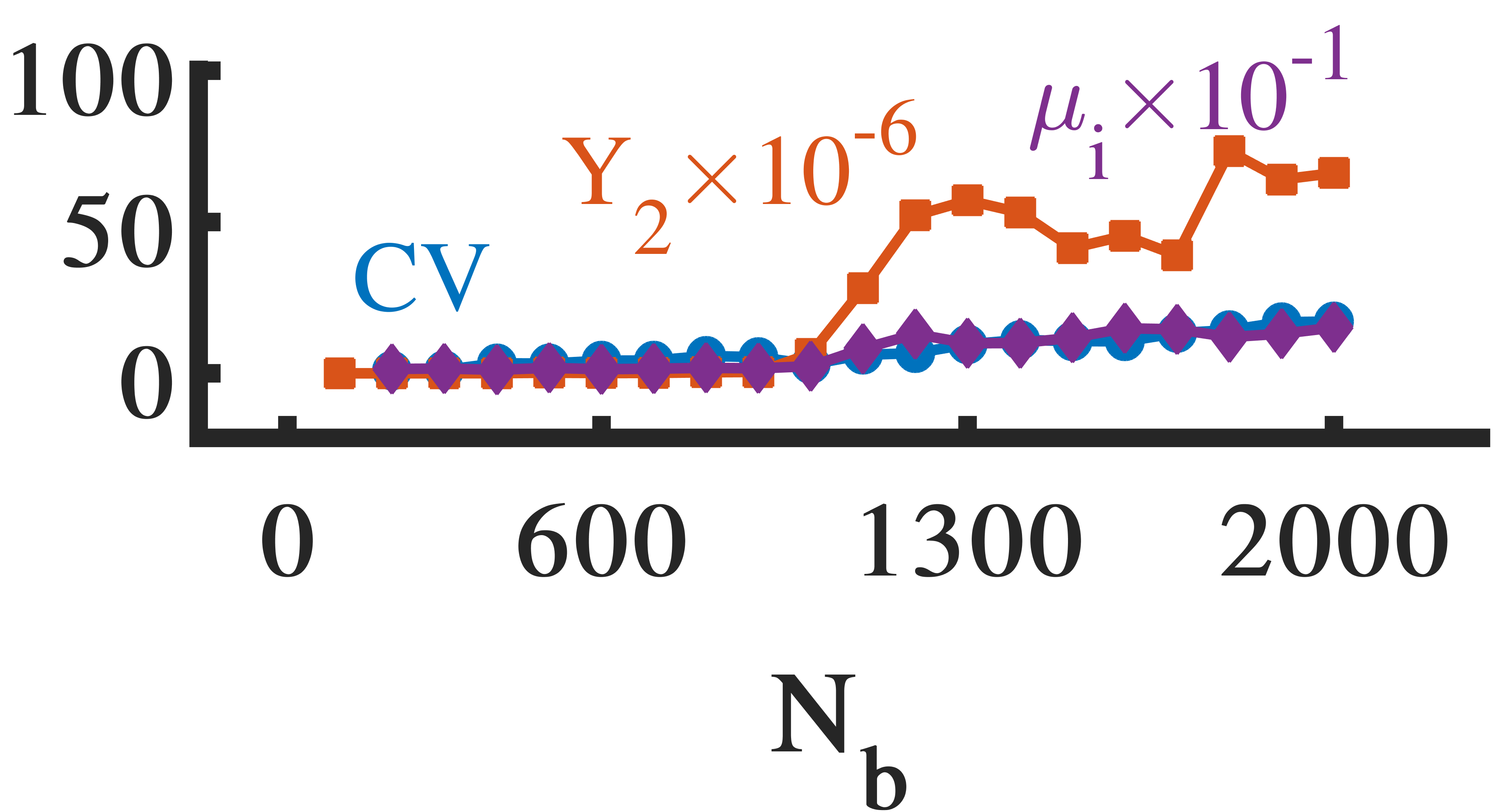}}
    \subfigure[$Pr(S_i)$, Yahoo]{\includegraphics[width=0.24\textwidth]{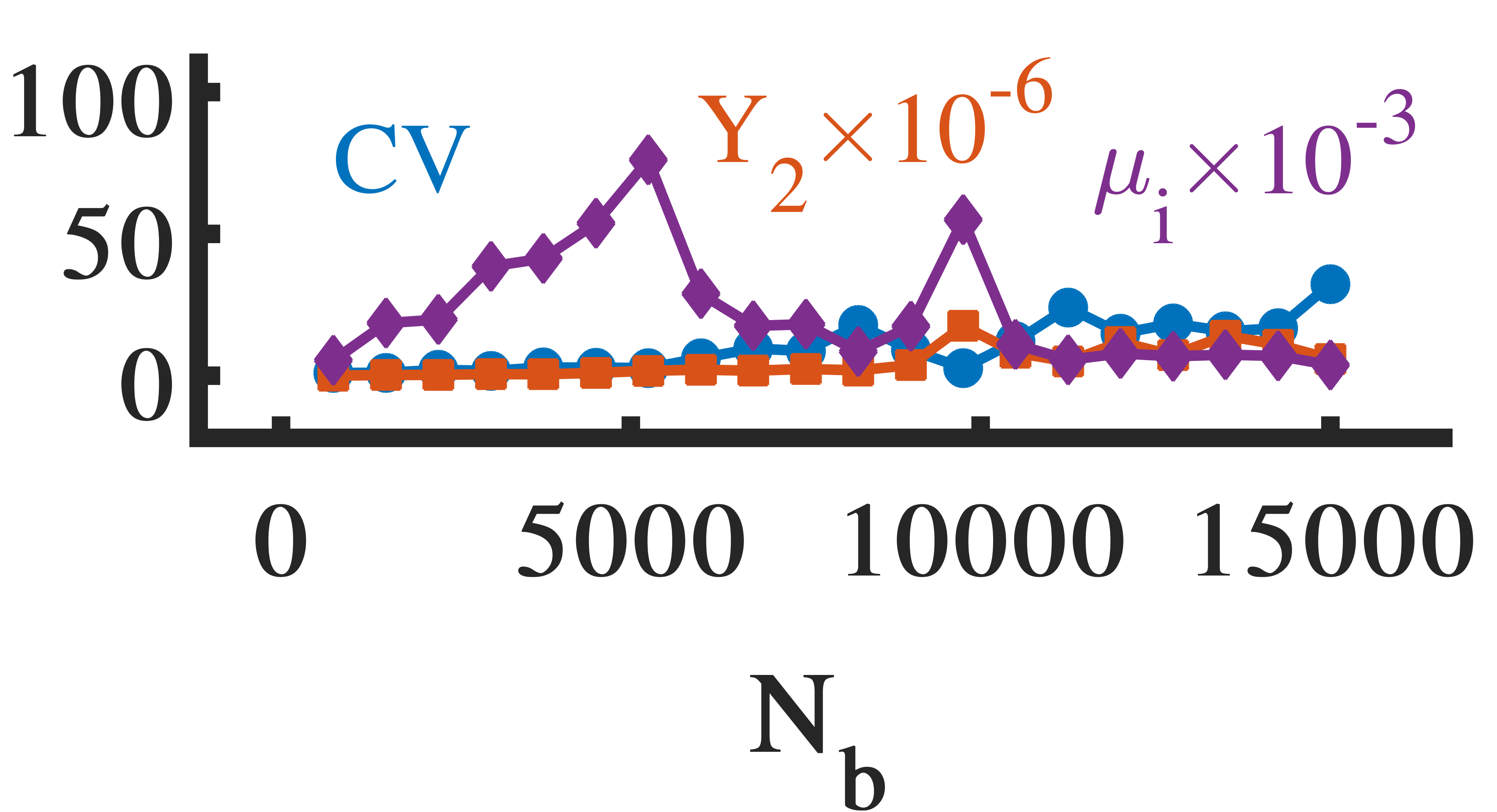}}
    
    \subfigure[$Pr(S_j)$, Ciao]{\includegraphics[width=0.24\textwidth]{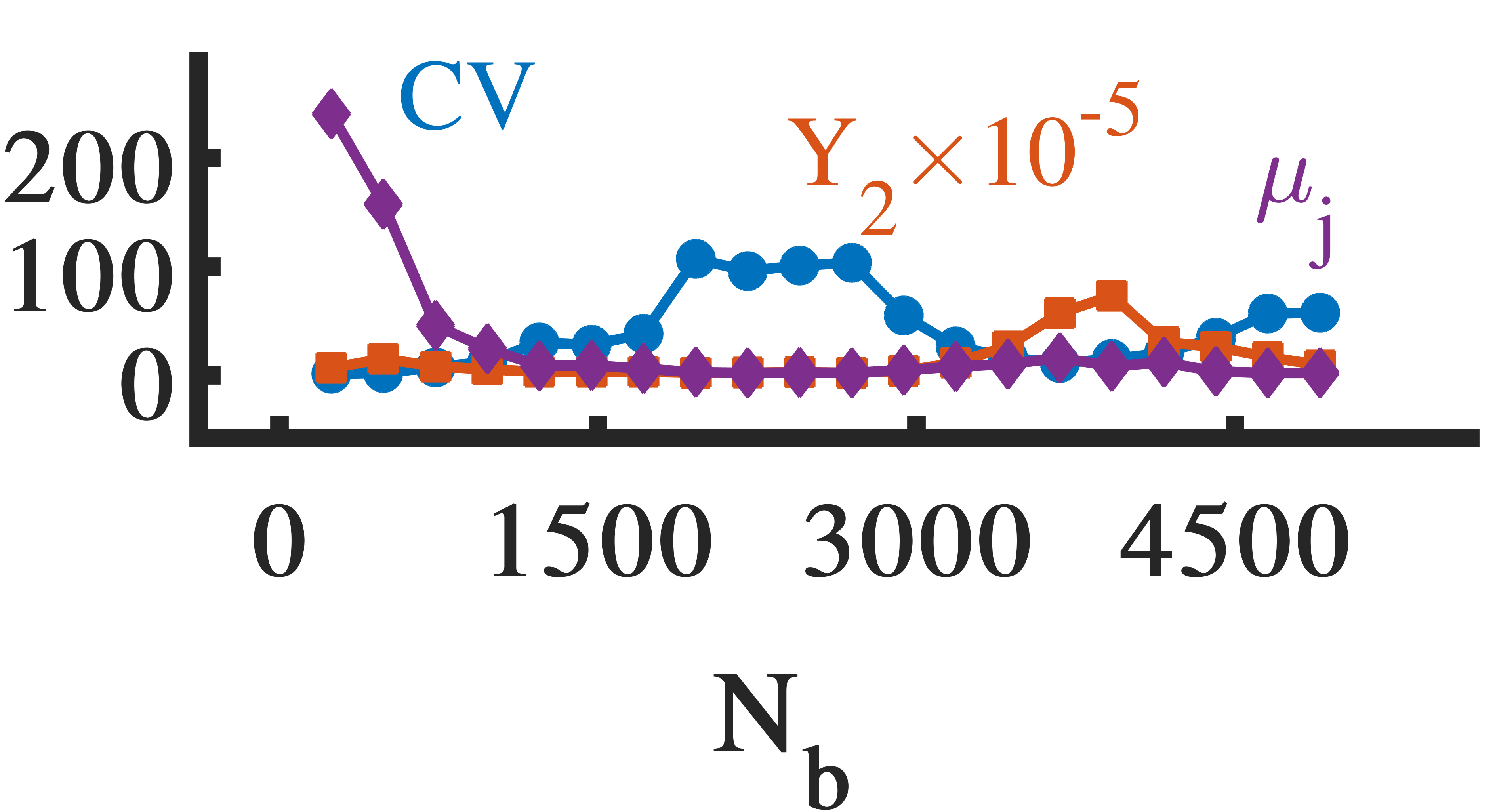}}
    \subfigure[$Pr(S_j)$, Epinions]{\includegraphics[width=0.24\textwidth]{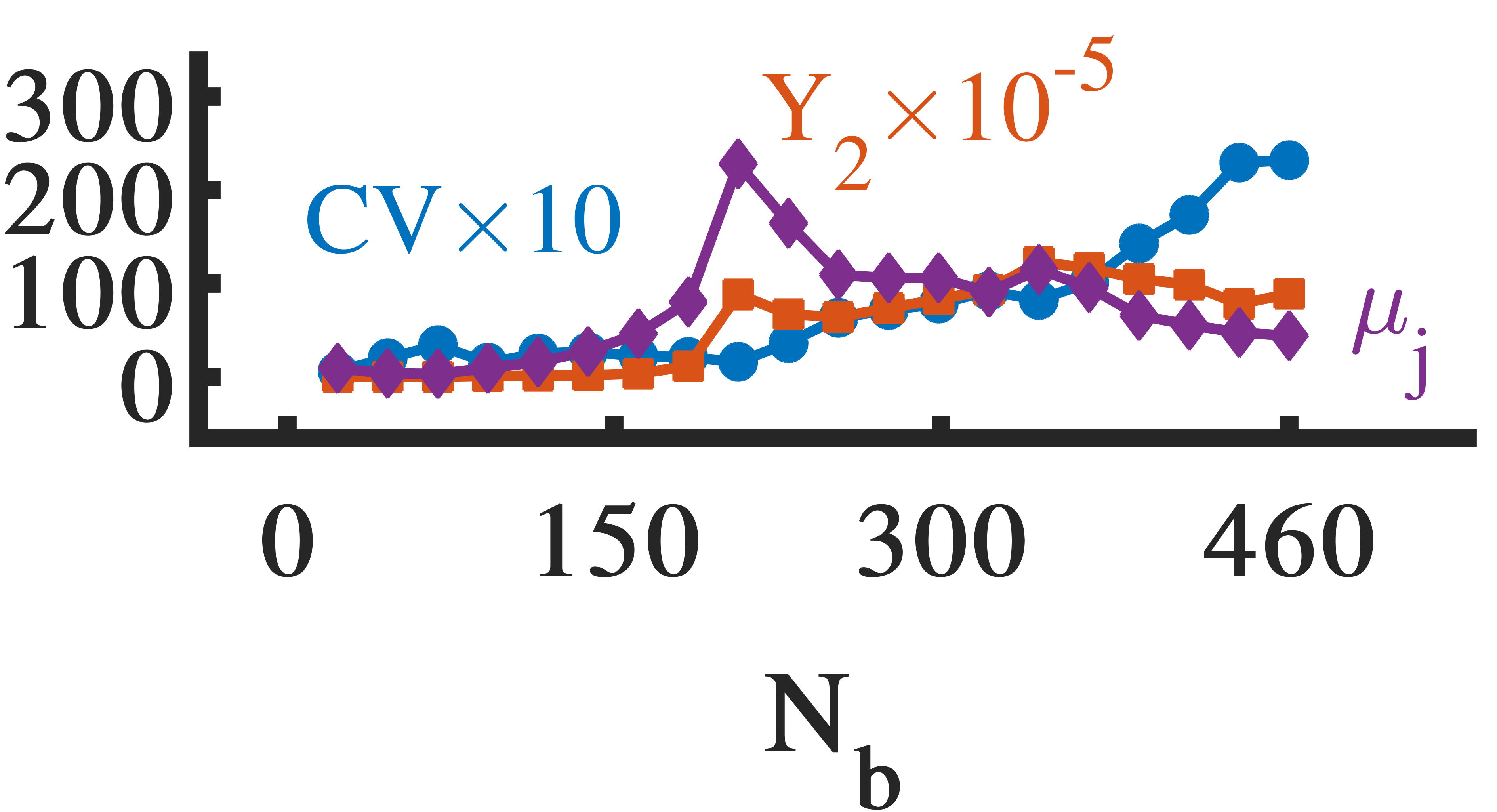}}
    \subfigure[$Pr(S_j)$, WikiLens]{\includegraphics[width=0.24\textwidth]{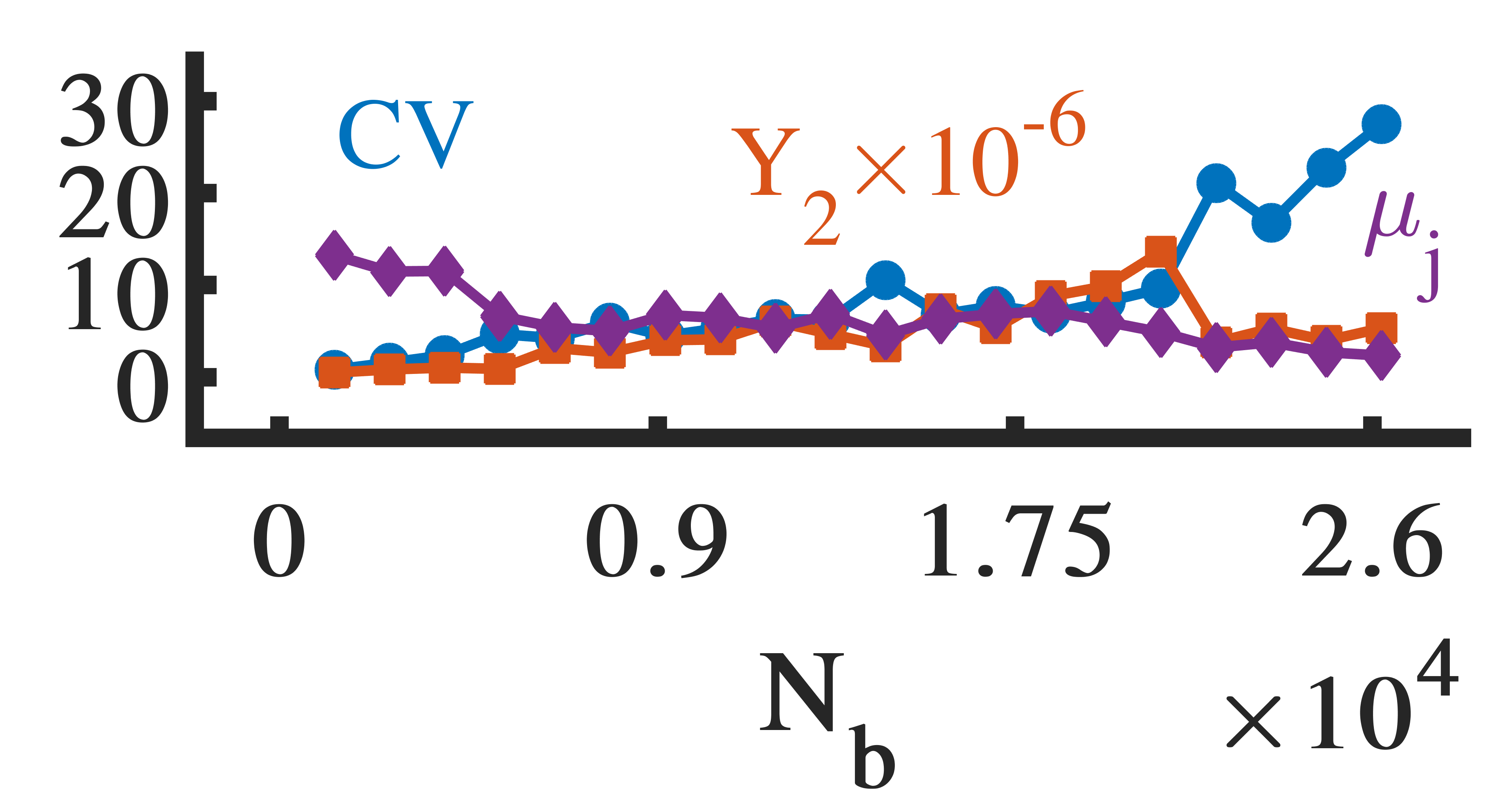}}
    \subfigure[$Pr(S_j)$, ML100k]{\includegraphics[width=0.24\textwidth]{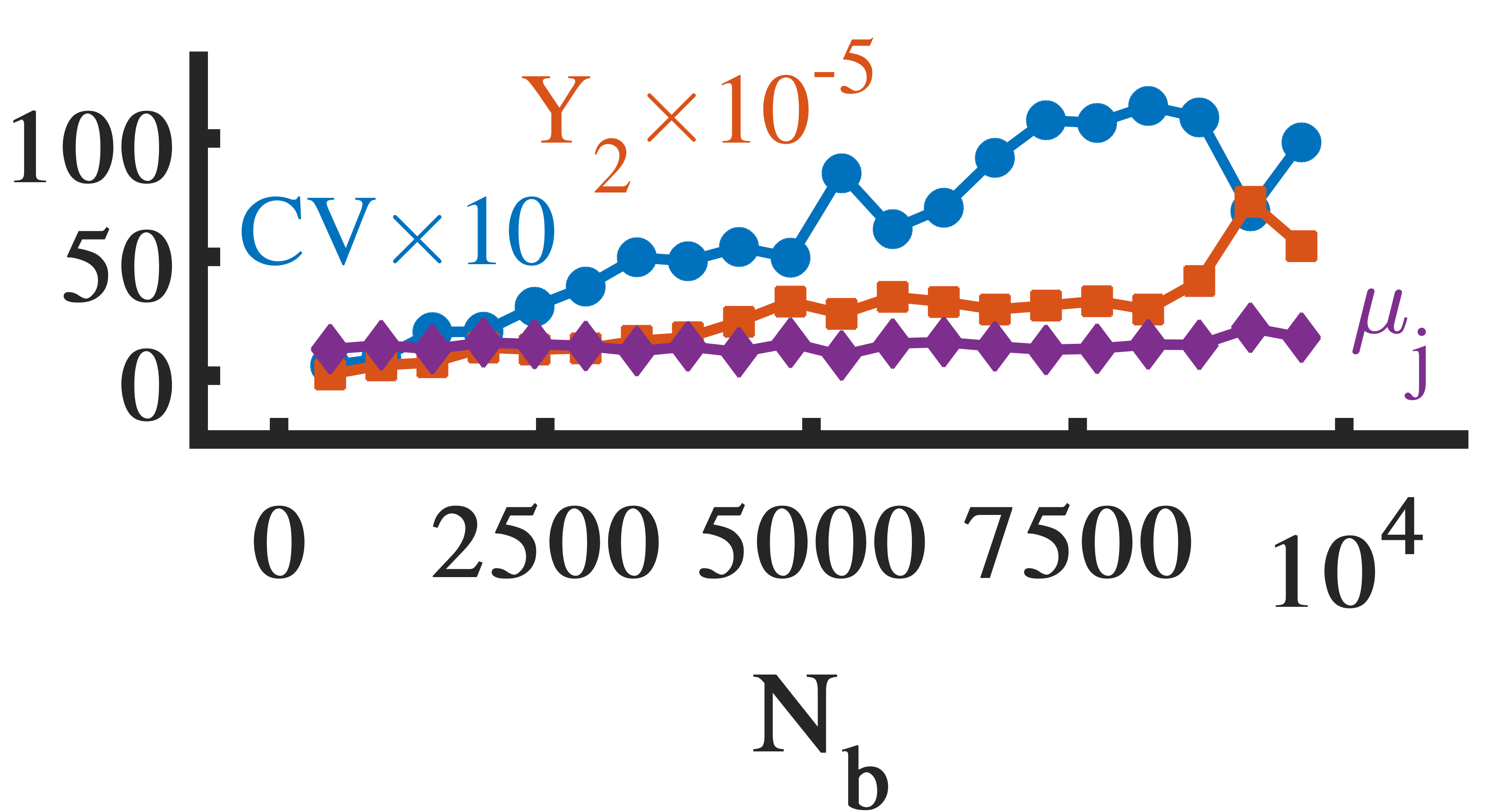}}
    \subfigure[$Pr(S_j)$, ML1m]{\includegraphics[width=0.24\textwidth]{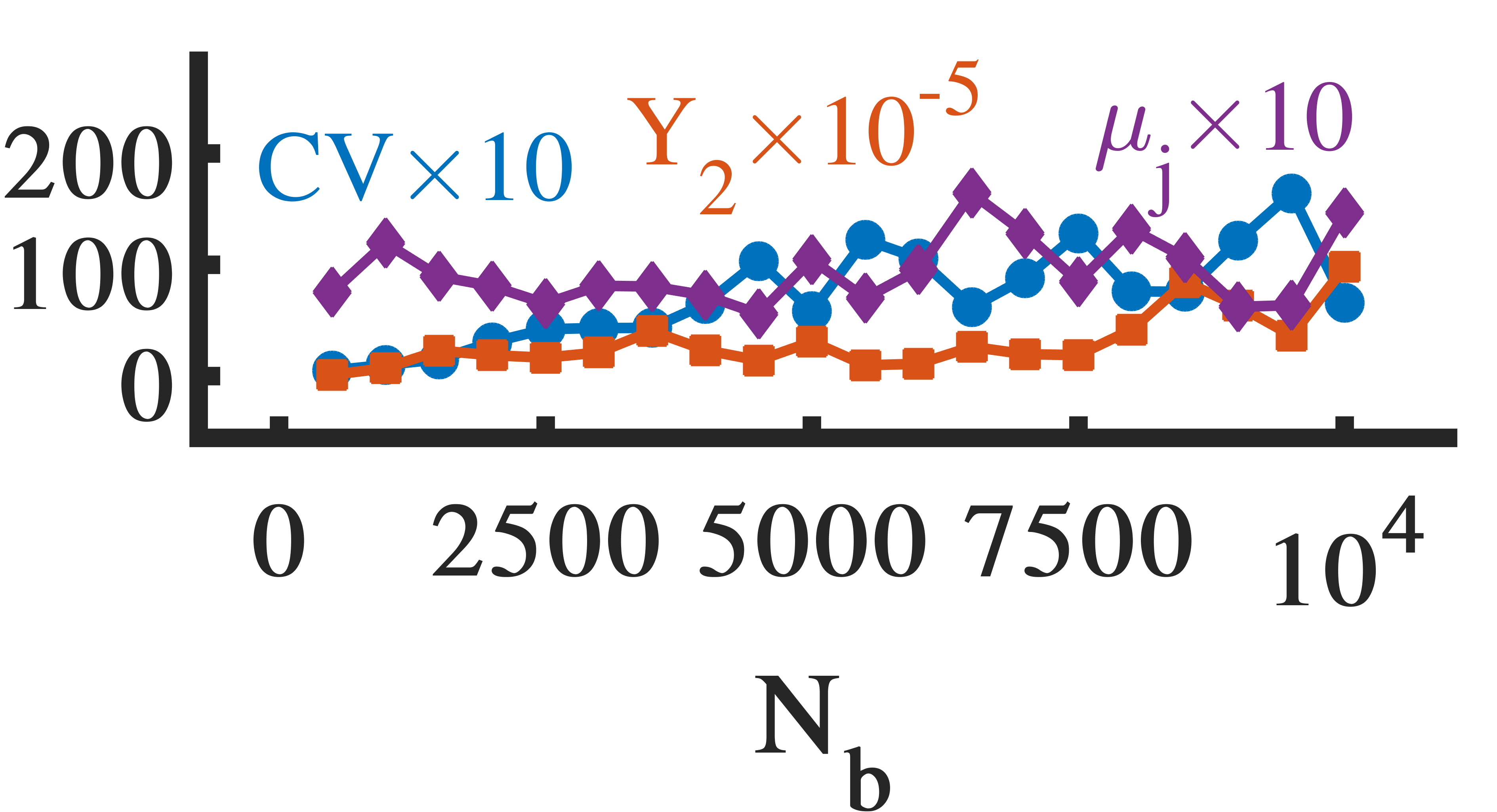}}
    \subfigure[$Pr(S_j)$, Amazon]{\includegraphics[width=0.24\textwidth]{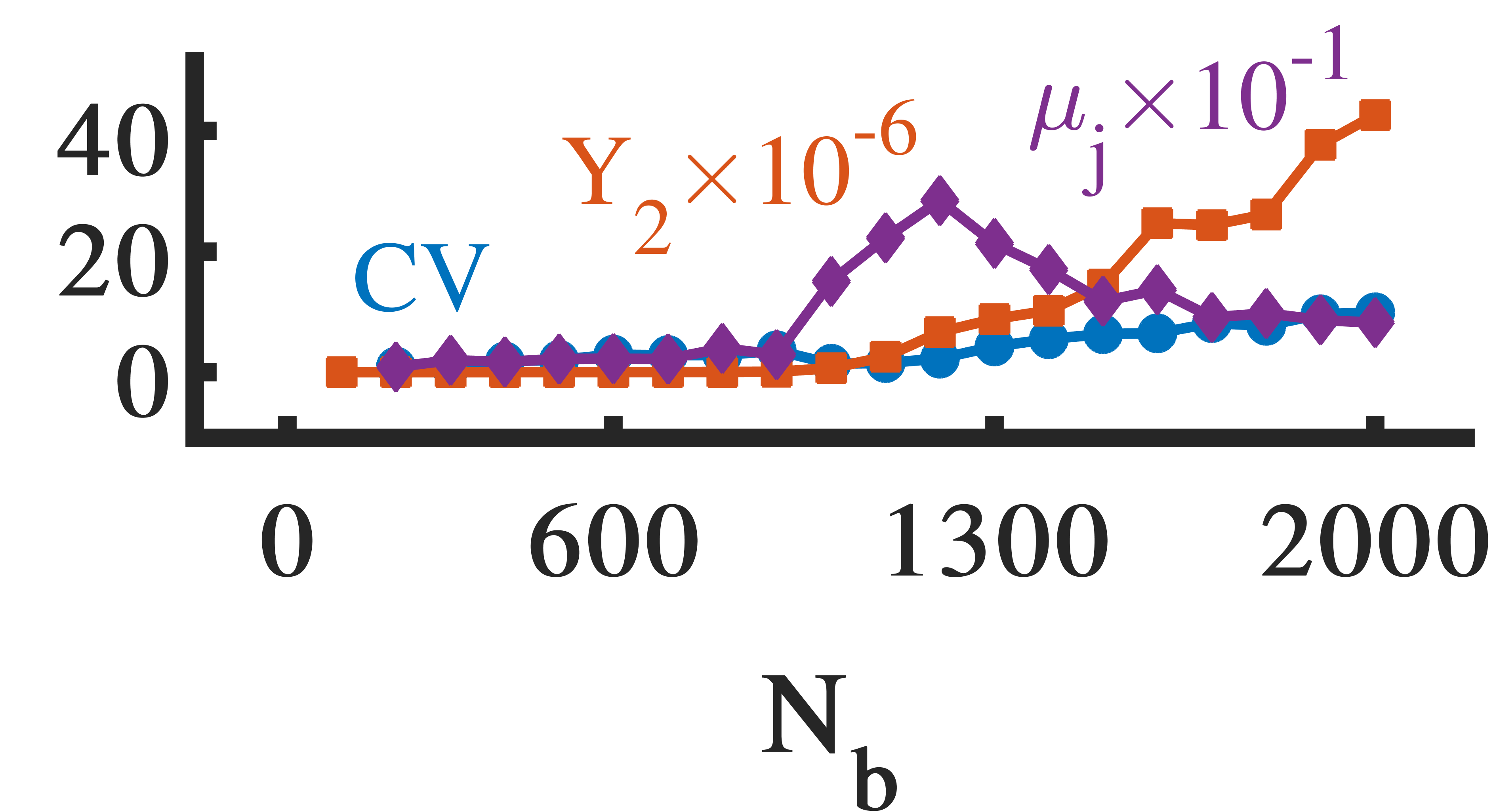}}
    \subfigure[$Pr(S_j)$, Yahoo]{\includegraphics[width=0.24\textwidth]{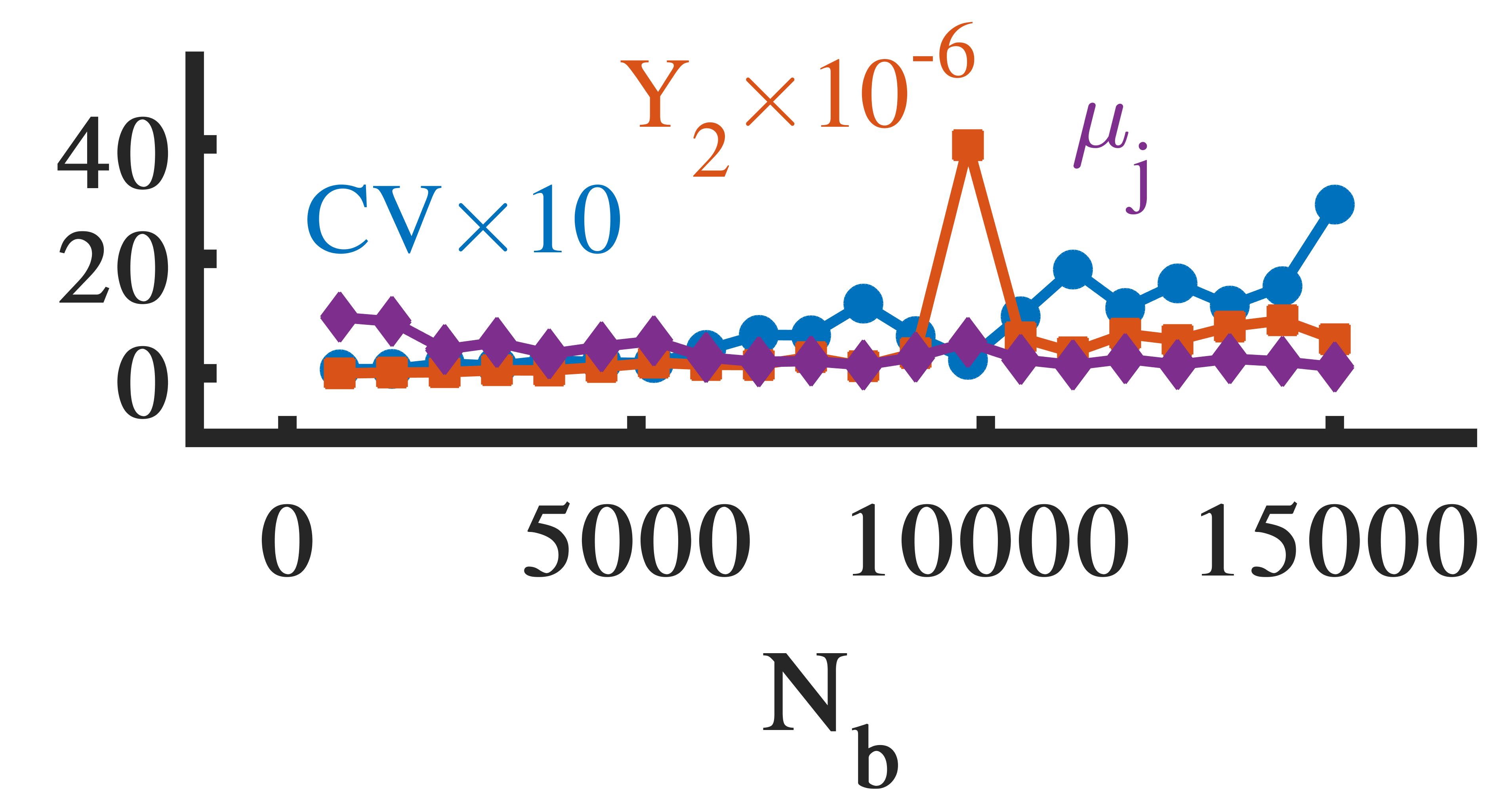}}
    \caption{Coefficient of variation (circles), excess kurtosis (squares), and mean (diamonds) of strengths of butterfly (a-g) i-vertices and (h-n) j-vertices over the timeline of burst arrivals in real-world streams.}
    \label{fig:sstatsReal}
\end{figure*}
Figure~\ref{fig:deltastatsReal} shows the evolution of three statistical quantities for $Pr(\delta)$: (1) mean $\mu_\delta$, (2) coefficient of variation $CV$$=$$\sigma_\delta/\mu_\delta$, and (3) excess kurtosis $Y_2$$=$$(N^{-1}\Sigma_{\delta_i} (\delta_i - \mu_\delta)^4/\sigma_\delta^4)$$-$$3$. Figure~\ref{fig:sstatsReal} shows the evolution of the same quantities for $Pr(S_i)$ and $Pr(S_j)$. $CV$, also known as relative standard deviation (RSD), enables measuring the degree of variation (dispersion) over distributions with different mean values. A high-variance distribution has $CV$$>$$1$ and a low-variance distribution has $CV$$<$$1$. Distributions such as exponential distribution with equal mean and standard deviation have $CV$$=$$1$. The excess kurtosis $Y_2$ enables measuring the heaviness of the tail of distribution relative to a normal distribution (which has $Y_2$$=$$3$). A heavy-tailed distribution has a positive $Y_2$ (called a \textit{leptokurtic} distribution) and a light-tailed distribution has a negative $Y_2$ (called a \textit{platykurtic} distribution). Distributions such as family of normal distributions have zero $Y_2$ (called \textit{mesokurtic}). We observe that the mean and standard deviation of strength-differences are equal to each other and evolve synchronously (Figure~\ref{fig:deltastatsReal} -- $CV$$\approx$$1$ for sequential $G_{N_b}$). On the other hand, the tail of right-skewed $Pr(\delta)$ gets heavier and the distribution gets broader (Figure~\ref{fig:deltastatsReal} -- $Y_2$ increases). In Ciao, the tail gets lighter initially and then gets heavier. Moreover, We observe that all of the graphs have right-skewed $Pr(S)$ which gets broader and more skewed over time with the tail of strength distribution becomes heavier/longer over time (Figure~\ref{fig:sstatsReal} -- $Y_2$ and $CV$$>$$1$ increase).
These observations make the steady behavior of strength assortativity more interesting; despite the fact that new high-strength vertices form butterflies and $Pr(\delta)$ gets broader, the relative standard deviation of $\delta$s does not change significantly and the strength assortativity localization factor $r^s$ remains steadily positive. This 
implies that these graphs obey non-trivial mixing patterns. Understanding these mixing patterns relies on studying the connection micro-mechanics responsible for generating the graphs. Therefore, in the following, we study the growth mechanisms using synthetic graphs and investigate their properties with respect to the observed patterns.
\subsection{Related Works on Modeling of Graph Patterns - A Brief Survey}\label{subsec:relatedworks}
The graph models providing micro-mechanics or high-order generative process of graph structure are generally deemed as the explanation for the patterns observed in real-world graphs~\cite{arnold2021likelihood, leskovec2008microscopic, drobyshevskiy2019random}. We review the graph patterns and the mechanisms used in different graph models to explain the patterns. Our goal is to identify the potential \emph{local rules}~\cite{vazquez2003growing} (attachment mechanisms based on connections to vertices and their neighborhoods) that explain the observed patterns in previous section. Therefore, we focus on growth models that lead to skewed distributions, degree correlation, and emergence of large numbers of cliques. In the next subsection, we analyze the synthetic streams based on candidate mechanisms to check whether they explain the observed realistic patterns.
Our work deviates from the graph models optimized based on downstream analytics task (e.g. link prediction and node classification) and randomizing of the structures in which vertices are associated with feature vectors~\cite{zhou2020data, wang2021bipartite} and also null (feature-driven) graph models~\cite{van2021random} generating prescribed patterns/features such as degreewise metamorphis coefficient~\cite{aksoy2017measuring}, degree distribution~\cite{aksoy2017measuring, newman2001random, fosdick2018configuring, stanton2012constructing}, subgraph distribution~\cite{wegner2011random}, and k-core sequence~\cite{van2021random}. We refer to~\cite{drobyshevskiy2019random, chakrabarti2006graph}  and references therein for surveys of these graph models.

\textbf{Graph Patterns.} Graph patterns characterize a microscopic, mesoscopic, or macroscopic property of a graph (depending on the granularity of the reporting pattern, i.e. vertices/edges, neighborhoods and motifs, or the entire topology) and can be viewed as either static or dynamic (depending on the underlying graph being a static snapshot or an evolving structure). Examples of static patterns include small diameter accompanied by high clustering coefficient~\cite{watts1998collective}, degree (anti)correlation~\cite{newman2002assortative}, community structure~\cite{girvan2002community}, and power-laws (PL) such as degree distribution PL~\cite{barabasi1999emergence}, weight PL~\cite{mcglohon2008weighted}, and snapshot/vertex strength PL~\cite{mcglohon2008weighted, barrat2004weighted}. 
Examples of dynamic patterns include gelling points~\cite{mcglohon2008weighted}, increasing average degree, shrinking/controlled diameter, edge densification~\cite{leskovec2005graphs, leskovec2007graph, fischer2014dynamic}, and bursty weight addition~\cite{mcglohon2008weighted}. Table~\ref{tab:patterns} provides instances of different patterns partitioned across dynamism and granularity. 
\begin{table}[]
\caption{Graph Patterns. CC and PL refer to clustering coefficient and power law, respectively.}\label{tab:patterns}
    \centering
    \begin{tabular}{p{5cm} p{2cm} p{1.7cm} p{3.2cm}}
        Patterns & Granularity & Dynamism & References\\ \hline
        Small diameter \& High CC & Macroscopic & Static & \cite{watts1998collective}\\
        Degree Correlation & Microscopic & Static & \cite{newman2002assortative}\\
        Community structure & Mesoscopic & Static & \cite{girvan2002community} \\
        Degree Distribution PL & Microscopic & Static & \cite{barabasi1999emergence}\\
        Weight PL & Microscopic & Static & \cite{mcglohon2008weighted} \\
        Snapshot/Vertex Strength PL & Mesoscopic & Static & \cite{mcglohon2008weighted, barrat2004weighted}\\
        Gelling points & Macroscopic & Dynamic & \cite{mcglohon2008weighted} \\
        Increasing Average Degree & Microscopic & Dynamic & \cite{leskovec2005graphs, leskovec2007graph} \\
        Shrinking/Controlled Diameter & Macroscopic & Dynamic & \cite{leskovec2005graphs, leskovec2007graph, fischer2014dynamic} \\
        Edge Densification & Microscopic & Dynamic & \cite{leskovec2005graphs, leskovec2007graph} \\
        Bursty Weight Addition & Microscopic & Dynamic & \cite{mcglohon2008weighted} \\
    \end{tabular}
\end{table}

\textbf{Growth Models.} Two well-studied network growth mechanisms, preferential attachment and copying, form the basis of many generative models ~\cite{do2020structural, cooper2003general,bianconi2011competition, akoglu2009rtg, arnold2021likelihood, guillaume2006bipartite, kumar2000stochastic, krapivsky2005network, hadian2016roll, park2021lineageba} and are widely adopted in development of graph management approaches~\cite{huang2016leopard, chang2012exact,ma2019linc, jin2010computing, liu2008towards, mondal2012managing, yang2012towards}.  Both mechanisms are commonly applied in graph models based on the conception of adding a new vertex at each time step during an iterative process. Preferential attachment leads to skewed distributions, while copying mechanism leads to degree correlation~\cite{vazquez2003growing} and emergence of large numbers of cliques when applied explicitly~\cite{kumar2000stochastic} or implicitly and among other mechanisms~\cite{leskovec2005graphs}. In the following, we review these mechanisms and their alternatives and extensions.

Barabasi-Albert model~\cite{barabasi1999emergence} starts with a small clique with $m_0$ vertices and applies the preferential attachment by connecting the new vertex to $m$$\leq$$m_0$ existing vertices selected randomly with probability proportional to their degrees. The preferential attachment rule has also been extended to strength-driven preferential attachment (SPA) where each new vertex is connected to $m$ existing vertices randomly selected with probability proportional to their strength~\cite{barrat2004modeling,barrat2004weighted,leung2007weighted}. It has been shown that preferential attachment is induced by the following microscopic mechanisms. All of these mechanisms imply that the probability that a vertex receives a new edge is proportional to its degree, therefore they amount to preferential attachment and lead to scale-free structures~\cite{albert2002statistical}. 
\begin{itemize}
    \item Copying~\cite{kleinberg1999web, kumar2000stochastic}: at every time step, a new vertex is connected to a constant number of vertices and the end point of each new edge is a randomly selected vertex with probability $p$ or a neighbor of a prototype vertex with probability $1$$-$$p$.
    \item Edge redirection~\cite{krapivsky2001degree}: at every time step, a new vertex is added and a directed edge from the new vertex to a randomly selected vertex is created with probability $1-p$, or the edge is redirected to the ancestor of the randomly selected vertex with probability $p$.
    \item Random walks~\cite{vazquez2000knowing}: at every time step, a new vertex is connected a random vertex and the vertices reachable from it through breadth-first traversal with probability $p$ until no new target is found.
    \item Attaching to edges~\cite{dorogovtsev2001scaling}: at every time step, a new vertex is connected to two connected vertices. 
\end{itemize}

The original version of copying, mentioned above, copies a neighbor of a randomly selected vertex with some probability at each time step. Other works have modified it as the following. 
\begin{itemize}
    \item Butterfly model~\cite{mcglohon2008weighted} mixes copying and random walk mechanisms: at every time step,  with probability $p_{host}$ a new vertex picks a random vertex called host and with probability $p_{link}$ forms edges with the vertices reachable from the host through a probabilistic random walk with traversal probability $p_{step}$. This model exhibits shrinking diameter, stabilized next-largest weakly connected component size, and edge densification.
    \item Growing network model with copying~\cite{krapivsky2005network} connects the new vertex to a randomly selected vertex as well as its neighbors which leads to sparse ultrasmall graphs with logarithmic growth of the average degree wrt the number of vertices while the diameter equals 2.
    \item Duplication divergence model~\cite{vazquez2003growing} removes the copied neighbors with some probability leading to power law decay of clustering coefficient as a function of degree.
    \item Nearest neighbors model~\cite{vazquez2003growing} connects the new vertex to one randomly selected vertex and copies one neighbor with some probability leading to clustering coefficient power law and correlation between average neighbor degree and vertex degree.
    \item Forest Fire model~\cite{leskovec2005graphs, leskovec2007graph} applies the copying process by recursively, connecting each new vertex to a randomly selected vertex and certain numbers of its randomly selected out- and in-neighbors with forward probability $p$ and backward probability $p_b$. This process leads to heavy-tailed in- and out-degree distributions due to an implicit preferential attachment, community structures due to neighbor copying mechanism~\cite{barrat2004weighted}, edge densification due to many internal-edge establishments, and shrinking diameter due to shortcut-edge establishments.
\end{itemize}

\textbf{Our Work.} We identify/explain mesoscopic dynamic patterns in weighted bipartite streaming graphs. To apply the local rules in which new vertices connect to target vertices and their neighbors, it is important to decide when and how many target vertices are selected, how to select the target vertices, and how to copy their neighbors. As we will discuss in the next section, in our growth model, we perform a preferential random walk (PRW) with a dynamic and randomized length bounded to a parameterized range and combined BFS and DFS traversals. We use this PRW as a backbone including the target i- and j-vertices. That is, the number of target vertices in each iteration is non-static and the target vertices are selected via strength-driven preferential method. Given a new edge between $v_i$ and $v_j$, we connect each target i-vertex $u_i$ to $v_j$ and we copy each $u_i$'s neighbor as a neighbor of $v_i$, with a parameterized probability $\rho$. We also connect each $u_i$ to a j-vertex $z_j$ selected uniformly at random. We perform this procedure for target j-vertices as well. This procedure is done during a burst addition mechanism in which a batch of new isolated edges with a randomized size $m$, bounded to a parameter value, are added to the graph. This burst addition adds burst per new vertex and per iteration. We further optimize the burst addition mechanism with realistic assignment of weights and timestamps to the edges to ensure realistic connection iterations. We also enforce a time-based filtering on the neighborhoods by using a sliding window over the computational graph.

\subsection{Analysis of Microscopic Growth Mechanisms}\label{subsec:syntheticObservations}
As we discussed above, preferential attachment leads to skewed distributions, and copying mechanism (particularly with the implementation scheme of Forest Fire model) leads to degree correlation and emergence of cliques and edge densification as well. Therefore, in an attempt to explain the origins of the observed patterns in real-world streams, we investigate the properties of synthetic graph streams generated by these local rules. We synthesize weighted bipartite streaming graphs such that the graph structure grows according to the Forest Fire (FF) and strength preferential attachment (SPA) models. To this end, we create directed graphs via the growth models and treat the source vertices as the i-vertices and destination vertices as the j-vertices. For the timestamp assignment, we use the time step at which new vertices are connected to existing vertices and for the weight assignments, we use random integers in the range $[1,5]$ (the same weight scale as in real-world streams). In Forest Fire model, when the backward-burning probability $p_b$ is fixed and the forward-burning probability $p$ increases, the graphs become denser and more clique-like with low diameter~\cite{leskovec2007graph}. Therefore, we generated graphs with fixed $p_b=0.3$ and $p=0.15$ (sparse region), $0.4$ (transition region), and $0.7$ (dense region). Our experiments show that most of the edges are burned (visited) after checking the neighbors of the ambassador vertex, therefore we do not check further edges to reduce computations and also allow addition of new external links beside the internal densification.  In the SPA model, we use $m\in\{10,50,100\}$ since the average degree of vertices in real-world streams are mostly below $100$ (Table~\ref{tab:graphs}). We use the same analytical approach as for real-world streams to investigate the emergence patterns of butterflies in the synthetic streams quantitatively (by checking the growth patterns of butterfly count) and qualitatively (by checking the assortativity patterns of butterflies and the confounding distributions). 

As shown in Figure~\ref{fig:butterflycountSynthetic}, 
the butterfly count has a slow growth in FF streams and a speedy growth in SPA streams. The average butterfly rate is less than $1$ in FF streams and higher than $500$ in SPA streams. That is, the growth of butterfly count  with respect to the number of edges in the sequential graph snapshots is sub-linear in FF streams and extremely super-linear in SPA streams.

Figure~\ref{fig:rsandFforestfire} shows the evolution of $r^s$ and the corresponding $F$ elements for the three parameter regions in FF streams. As the graph grows, in the transition region, the assortativity level fluctuates, and, in the sparse and dense regions, it changes trivially ($0.02$). We check the statistics of the corresponding  $Pr(\delta)$s in Figure~\ref{fig:sdifstatsforestfire} and that of $Pr(S_i)$s and $Pr(S_j)$s in Figure~\ref{fig:sstatsforestfire}. 
Although in FF stream butterflies emerge such that the range of $Pr(S_i)$ and $Pr(S_j)$ get broader and more skewed over time, strength-difference of butterfly edges retain the same distribution. $Pr(\delta)$ remains unchanged with a low dispersion ($CV$$<$$1$) as the graph grows in each region since the mean, and standard deviation are fixed and the tail changes slightly over  time. Therefore, it is not surprising that $r^s$ is stable. Moreover, in the denser graphs with more butterflies, assortativity patterns vanish ($r^s$$\xrightarrow{}$$0$ as $p$ increase).

Figure~\ref{fig:rsandFSPA} shows the evolution of $r^s$ and the corresponding $F$ elements in SPA streams. For small values of $m$, there is no assortativity pattern. For $m=100$, the graph snapshots display weak assortativity ($0.05$$\leq$$r^s$$<$$0.1$). We check the statistics of the corresponding  $Pr(\delta)$s in Figure~\ref{fig:sdifstatsSPA} and that of $Pr(S_i)$s and $Pr(S_j)$s in Figure~\ref{fig:sstatsSPA}. As the graph grows, for all values of $m$, diversity of the strength of i- and j-vertices does not change significantly (small change in $\mu_i$, $\mu_j$, and  corresponding $CV$ and $Y_2$). The strength-differences continuously follow a skewed distribution with a short tail as most $\delta$s remain around the mean ($F_1+F_2$$\approx$$0.85$ and $CV$$<$$1$) and the skewness does not grow to very high numbers.
\begin{figure*}[ht]
    \centering
  \subfigure[FF, $p$$=$$0.15$, $0.04$$\pm$$0$]{\includegraphics[width=0.24\textwidth]{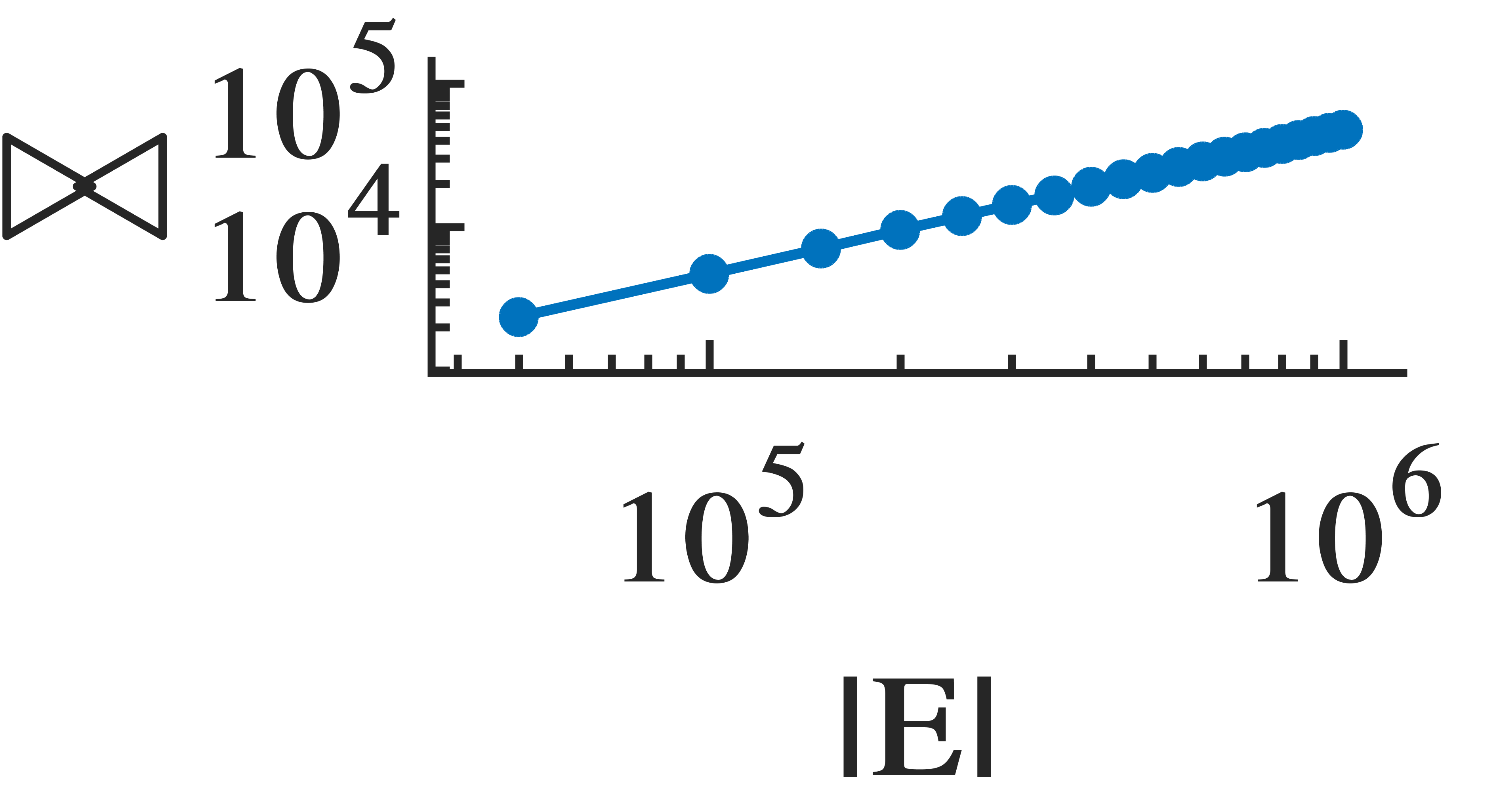}}
  \subfigure[FF, $p$$=$$0.4$, $0.08$$\pm$$0$]{\includegraphics[width=0.24\textwidth]{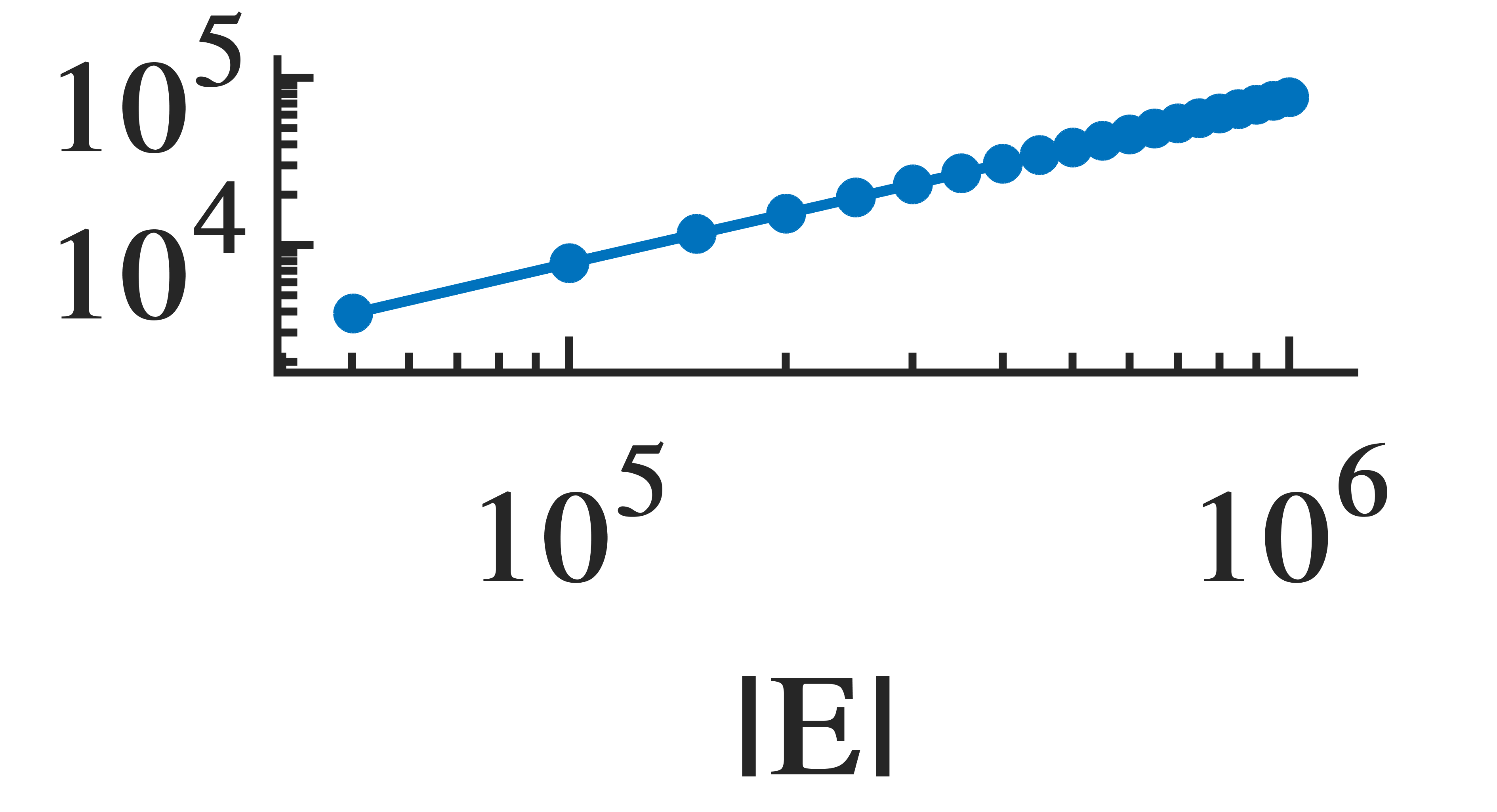}}
  \subfigure[FF, $p$$=$$0.7$, $0.09$$\pm$$0$]{\includegraphics[width=0.24\textwidth]{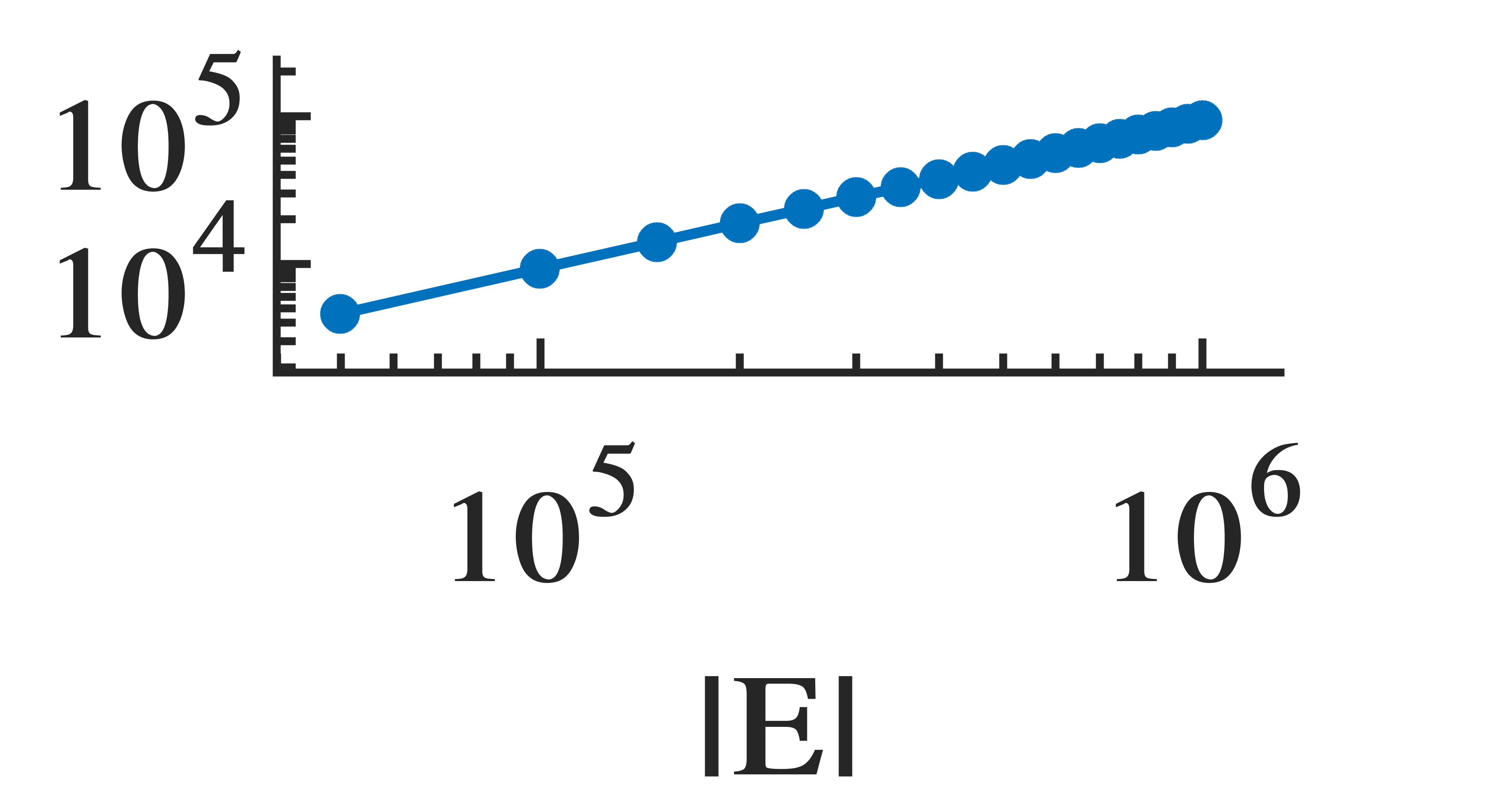}}
  
  \subfigure[SPA, $m$$=$$10$, $556.6$$\pm$$24.5$]{\includegraphics[width=0.24\textwidth]{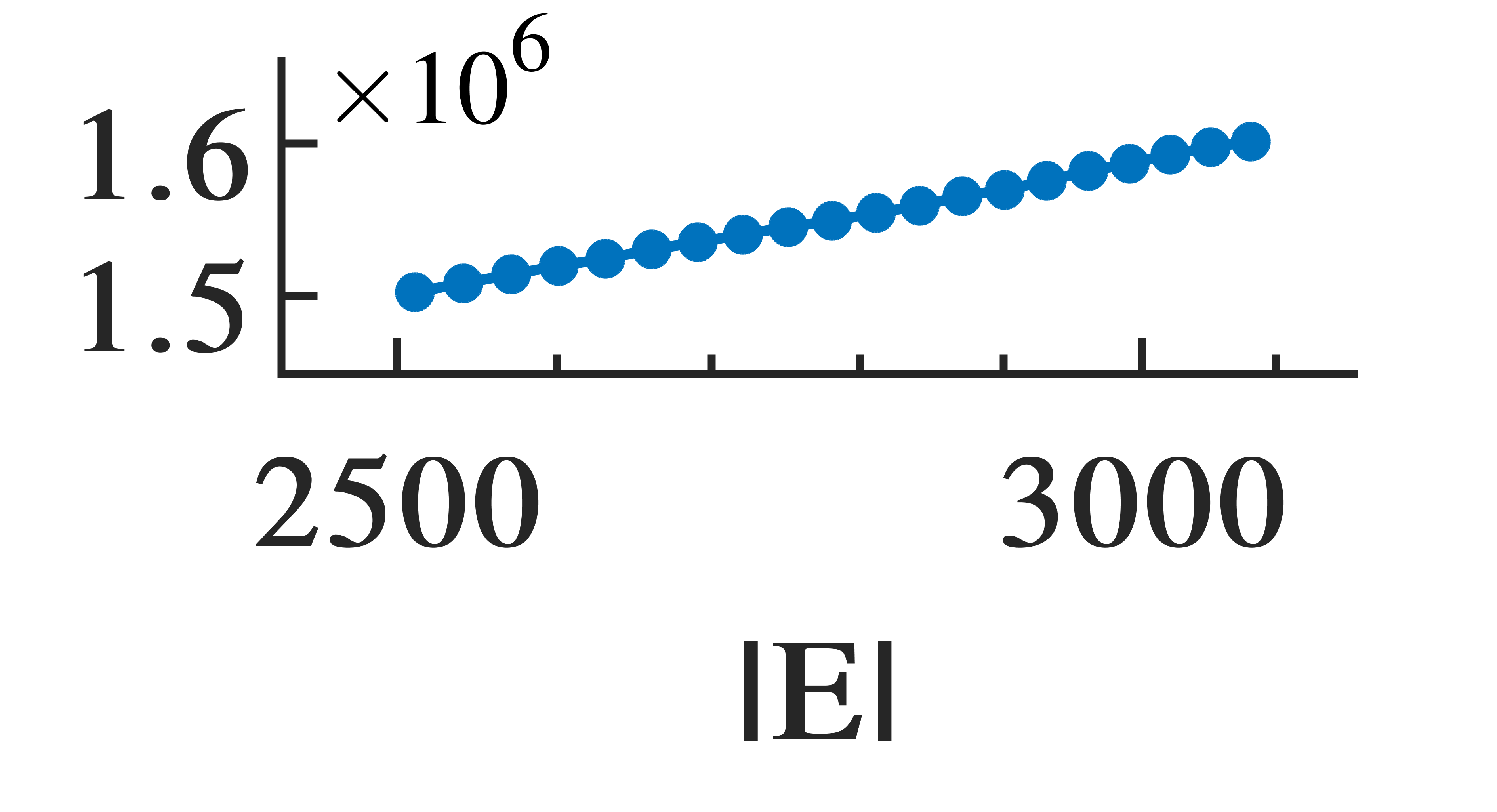}}
  \subfigure[SPA, $m$$=$$50$, $558$$\pm$$22.4$]{\includegraphics[width=0.24\textwidth]{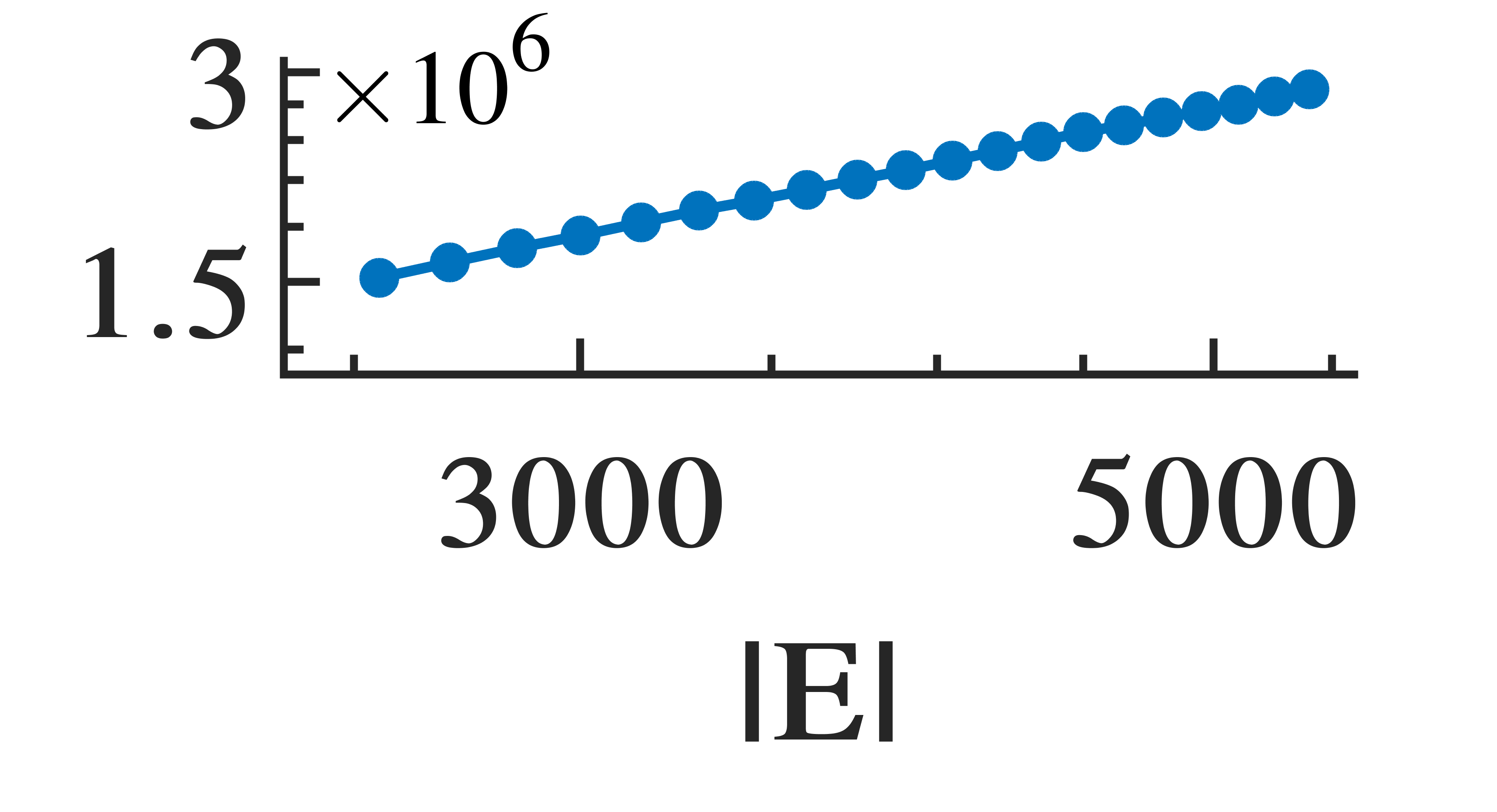}}
  \subfigure[SPA, $m$$=$$100$, $548.1$$\pm$$17$]{\includegraphics[width=0.24\textwidth]{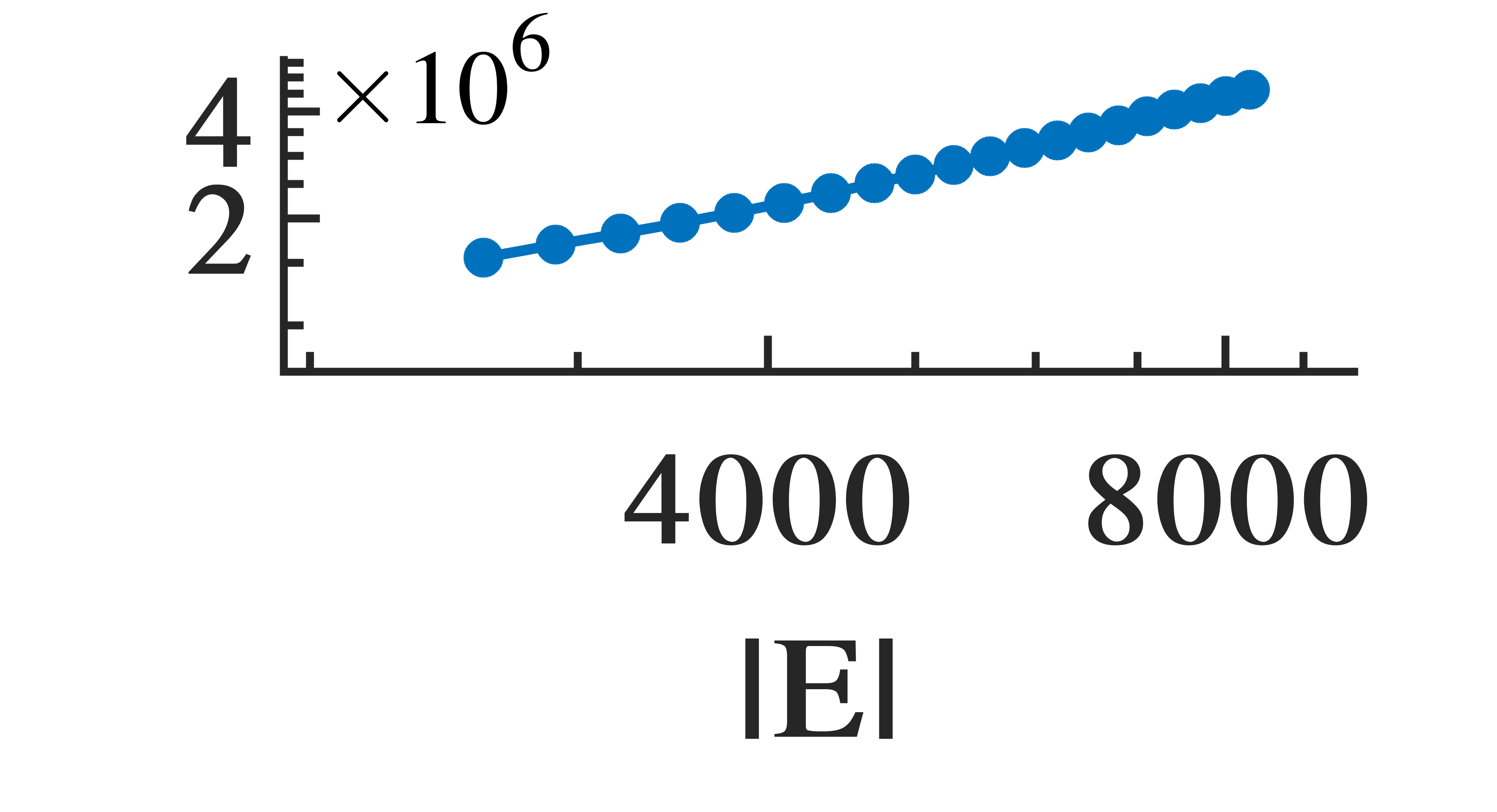}}
    \caption{Butterfly count versus edge count in FF and SPA streams with various average butterfly rates.} 
    \label{fig:butterflycountSynthetic}
\end{figure*}
\begin{figure*}[h]
    \centering
  \subfigure[$p=0.15$]{\includegraphics[width=0.24\textwidth]{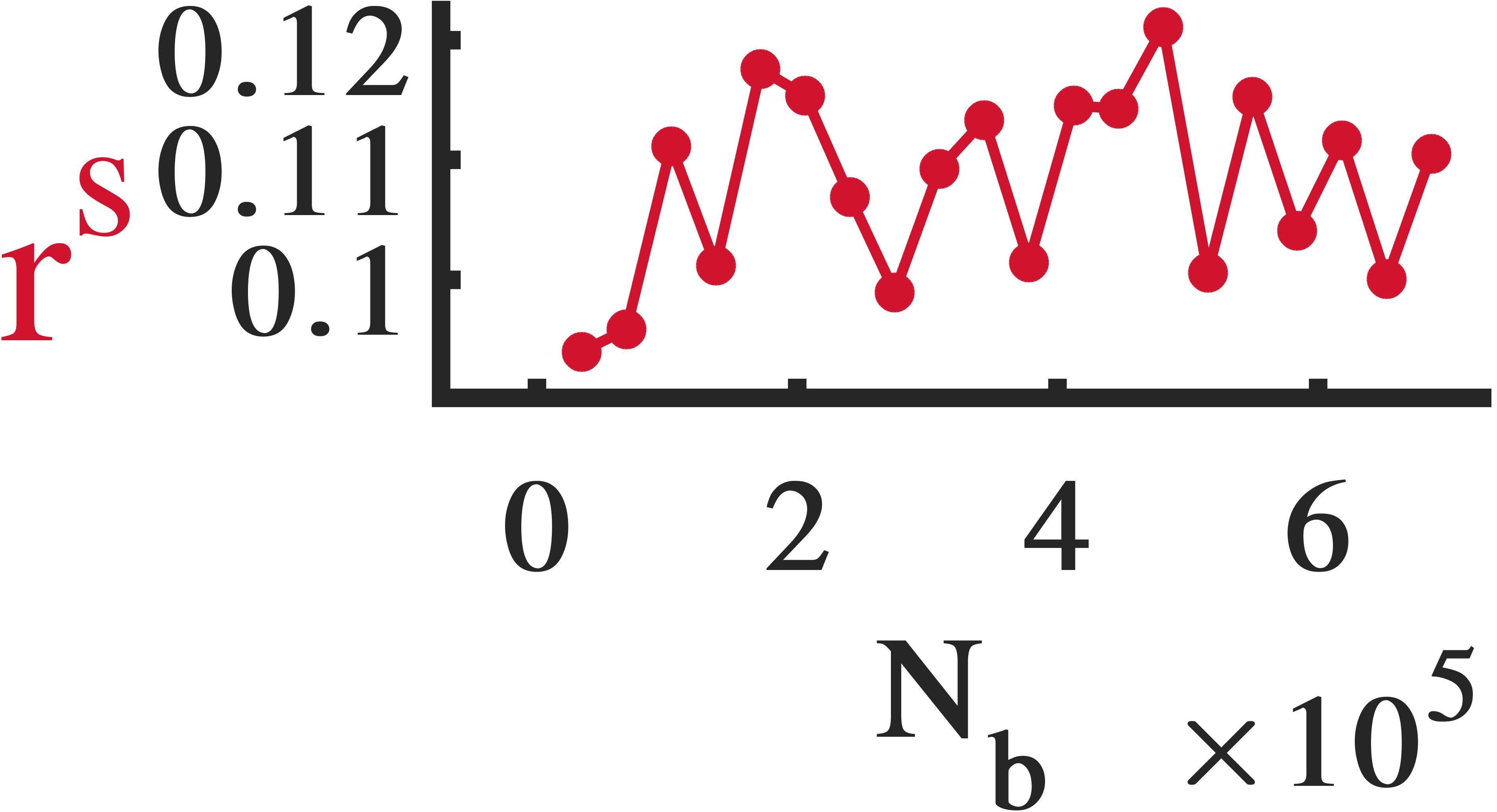}}
   \subfigure[$p=0.4$]{\includegraphics[width=0.24\textwidth]{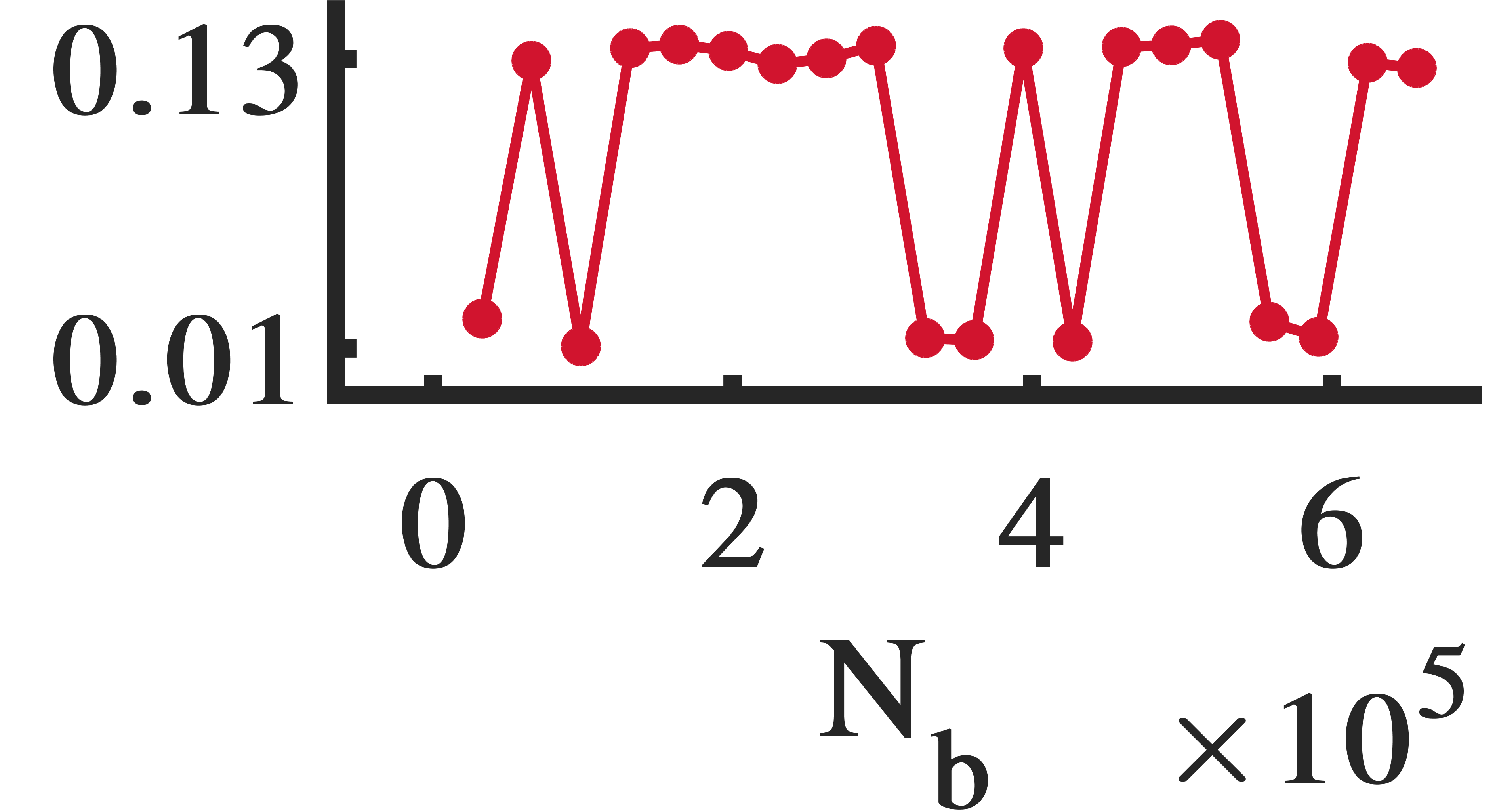}}
    \subfigure[$p=0.7$]{\includegraphics[width=0.24\textwidth]{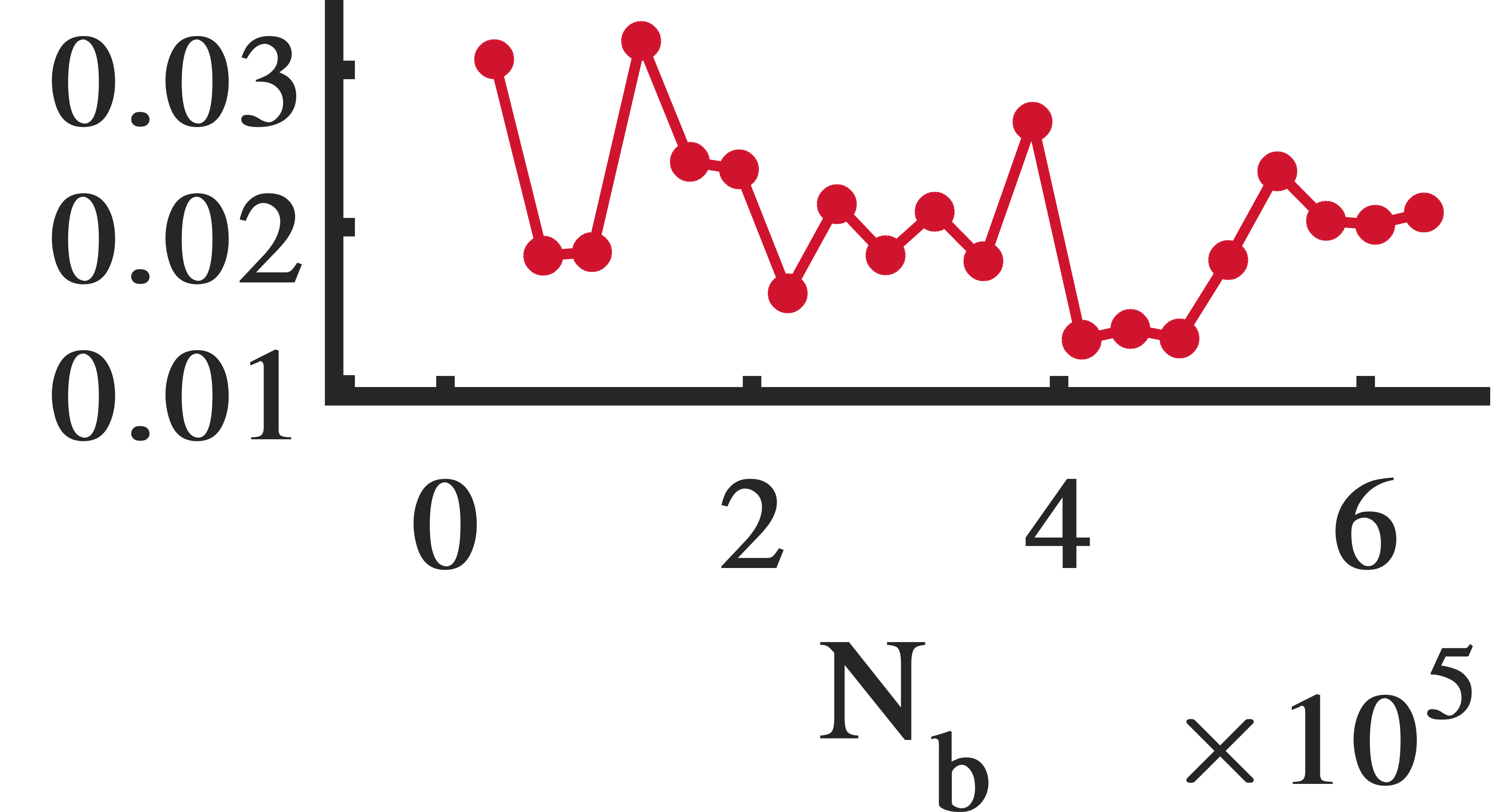}}
    
   \subfigure[$p=0.15$]{\includegraphics[width=0.24\textwidth]{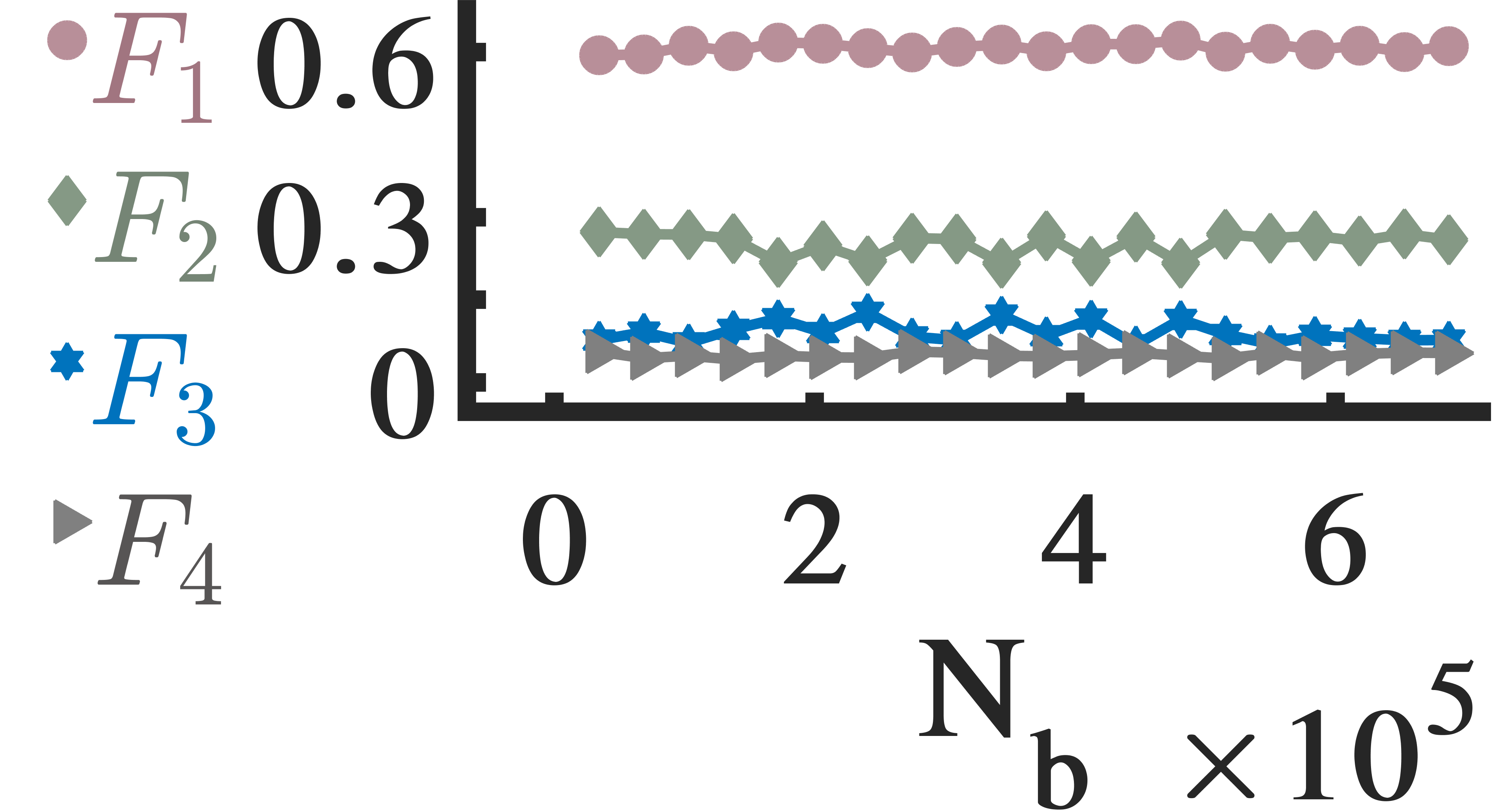}}
    \subfigure[$p=0.4$]{\includegraphics[width=0.24\textwidth]{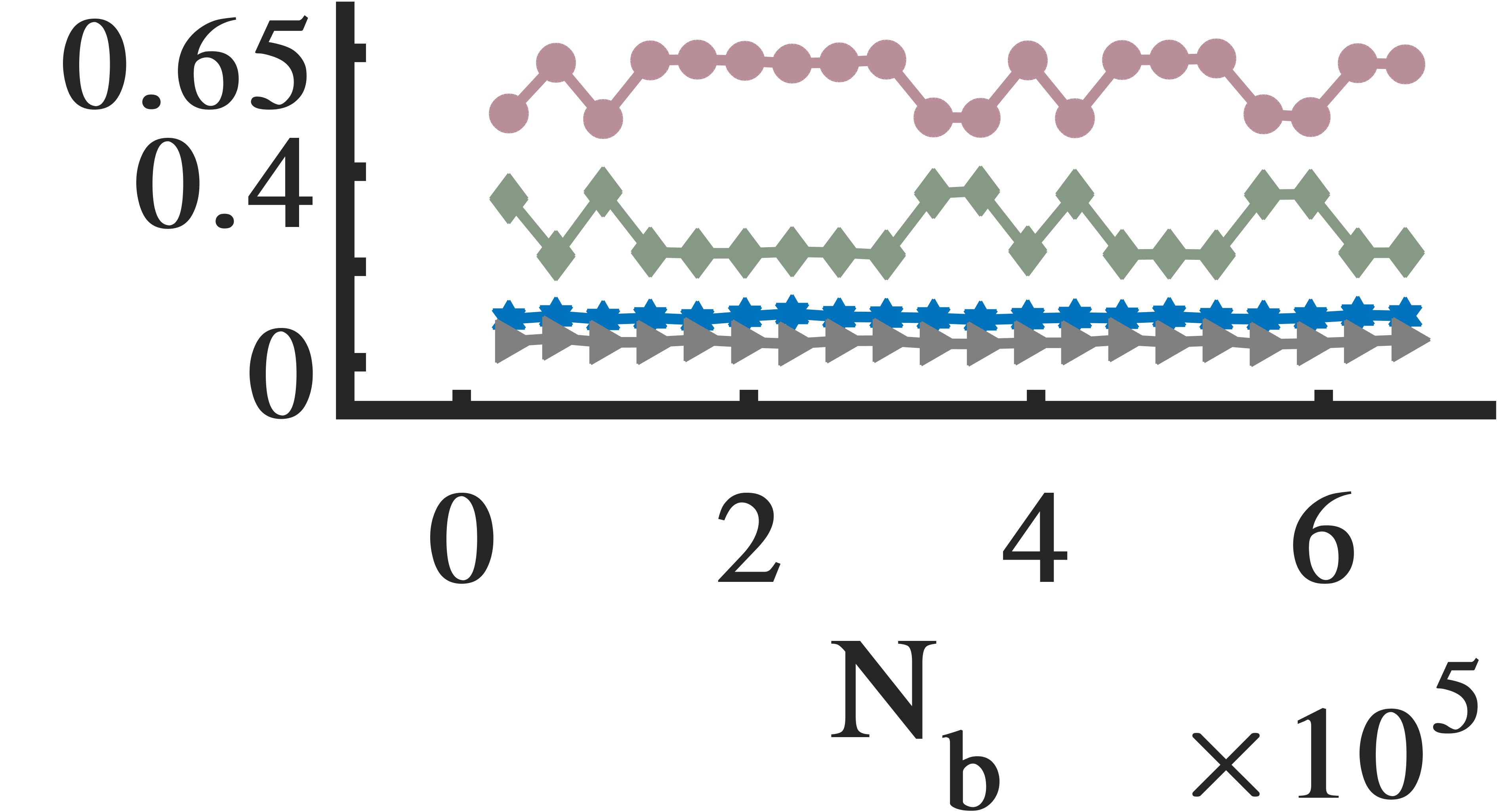}}
    \subfigure[$p=0.7$]{\includegraphics[width=0.24\textwidth]{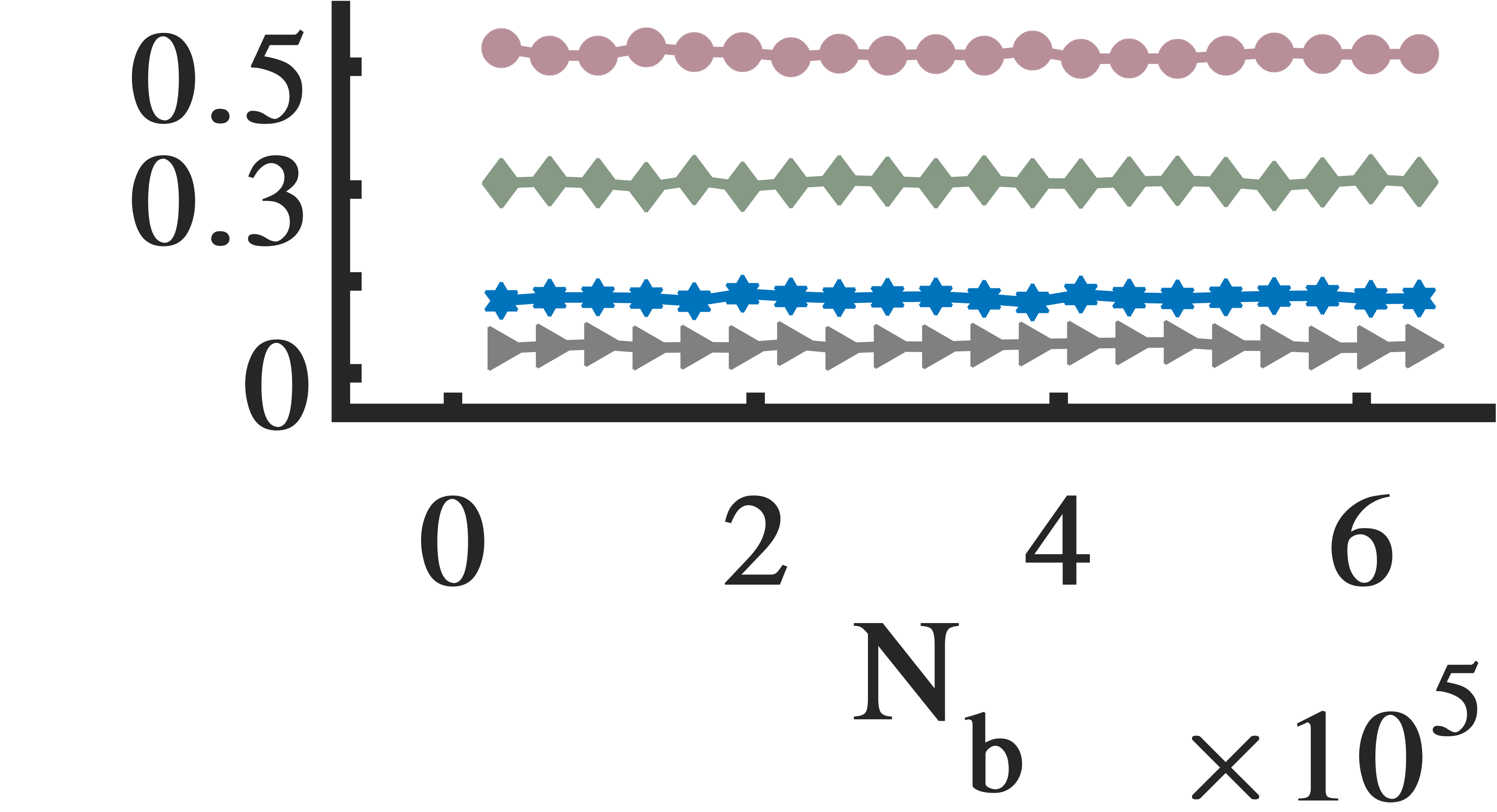}}
    \caption{(a-c) strength assortativity localization factor and  and (d-f) corresponding F elements of butterflies over the timeline of burst arrivals in FF streams with $p_b=0.3$, $p=0.15,0.4,0.7$.}
    \label{fig:rsandFforestfire}
\end{figure*}
\begin{figure*}[h]
    \centering
    \subfigure[$p=0.15$, $p_b=0.3$]{\includegraphics[width=0.24\textwidth]{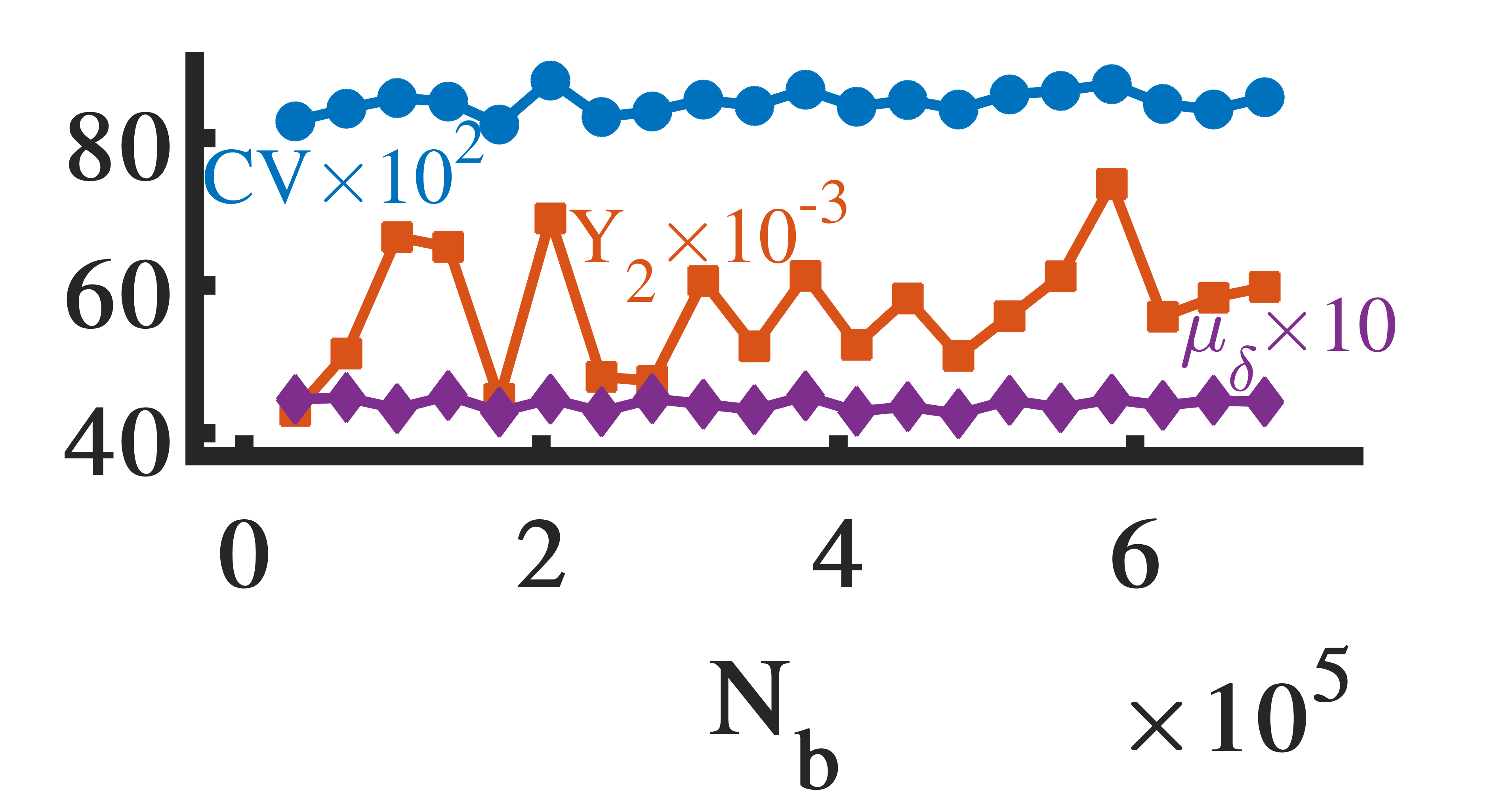}}
    \subfigure[$p=0.4$, $p_b=0.3$]{\includegraphics[width=0.24\textwidth]{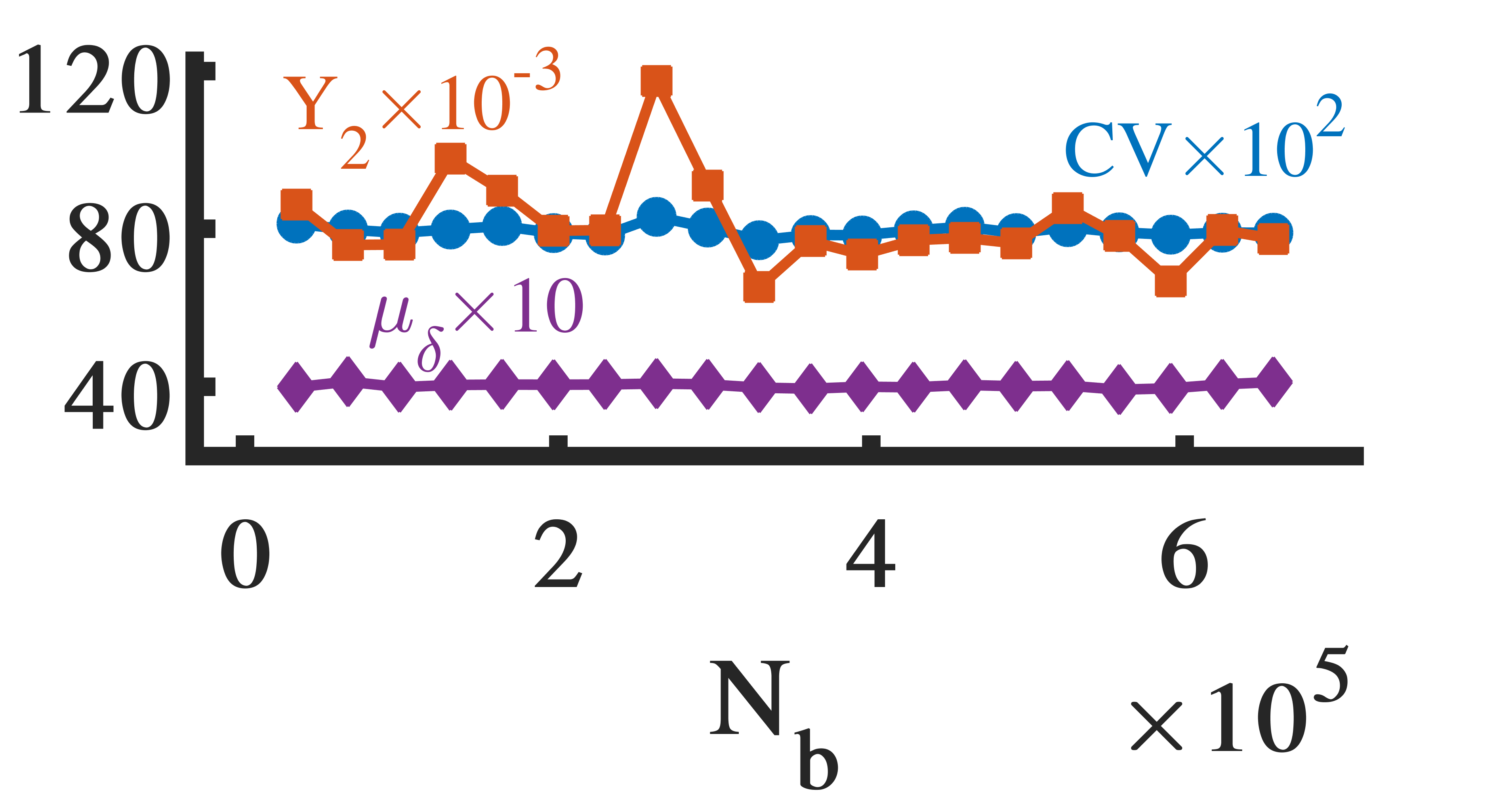}}
    \subfigure[$p=0.7$, $p_b=0.3$]{\includegraphics[width=0.24\textwidth]{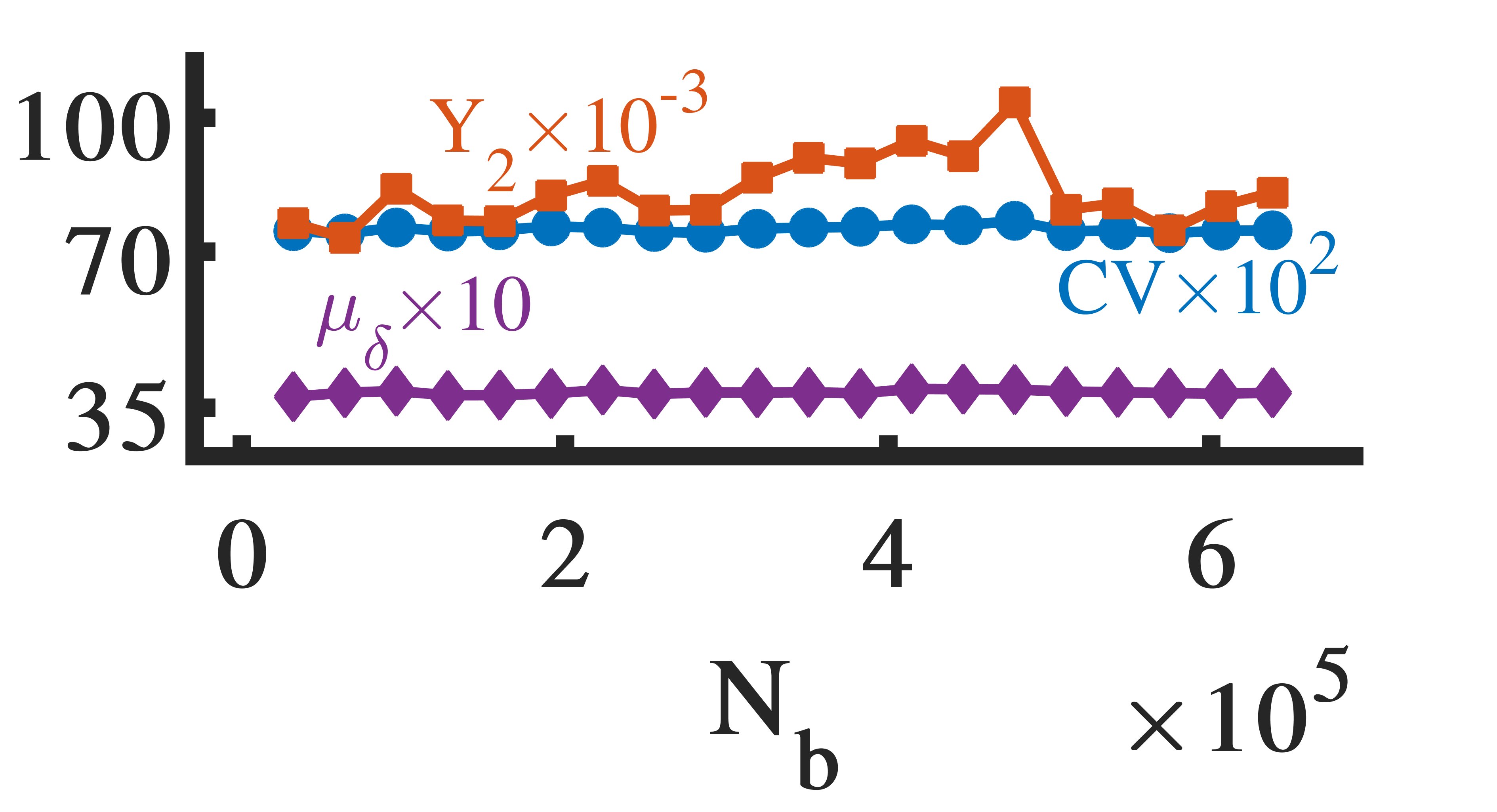}}
    \caption{Coefficient of variation (circles), excess kurtosis (squares), and mean (diamonds) of butterfly strength-differences over the timeline of burst arrivals in FF streams with $p_b=0.3$, $p=0.15,0.4,0.7$.}
    \label{fig:sdifstatsforestfire}
\end{figure*}
\begin{figure*}[h]
    \centering
    \subfigure[$Pr(S_i)$, $p$$=$$0.15$]{\includegraphics[width=0.24\textwidth]{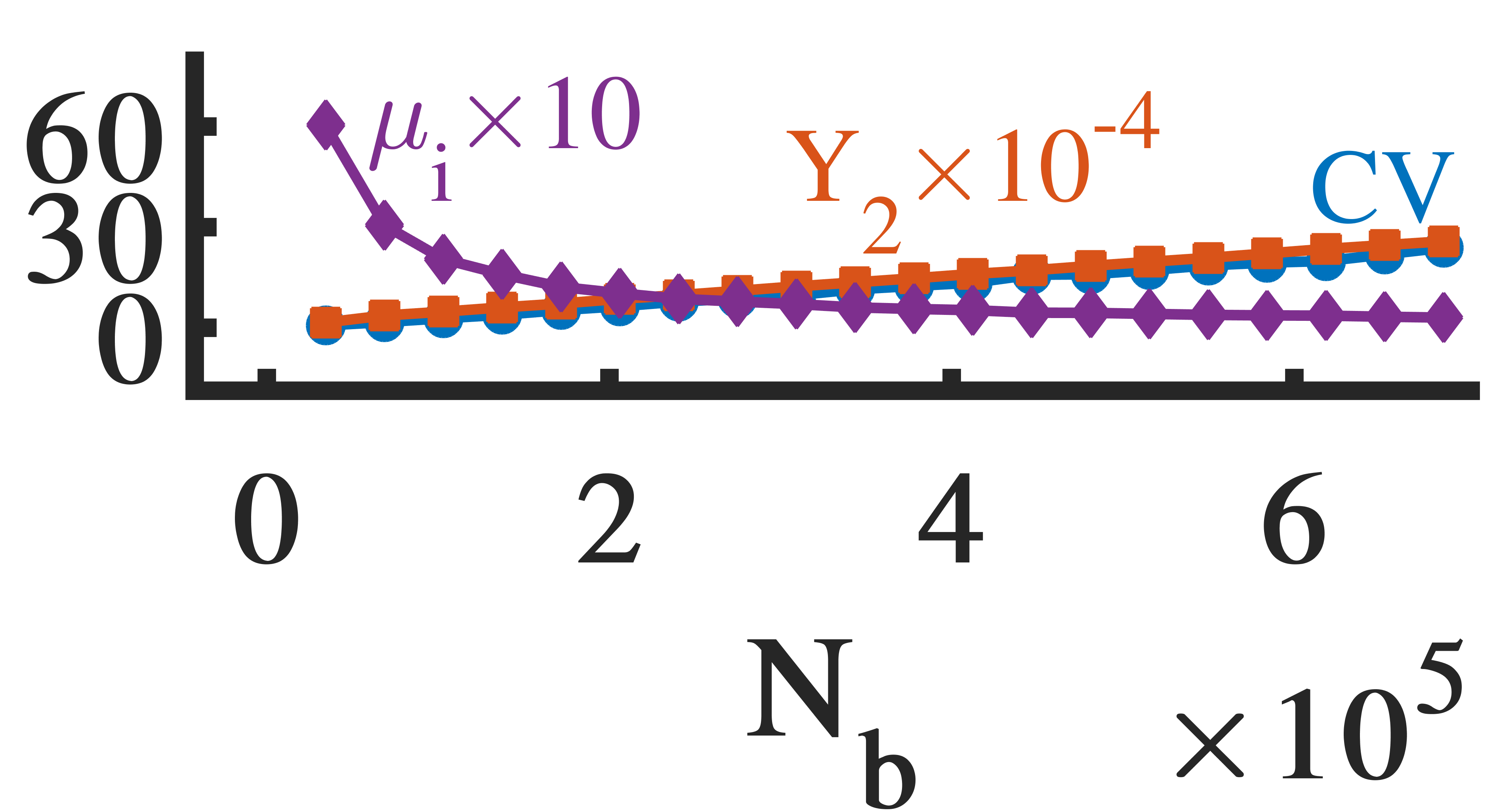}}
    \subfigure[$Pr(S_i)$, $p$$=$$0.4$]{\includegraphics[width=0.24\textwidth]{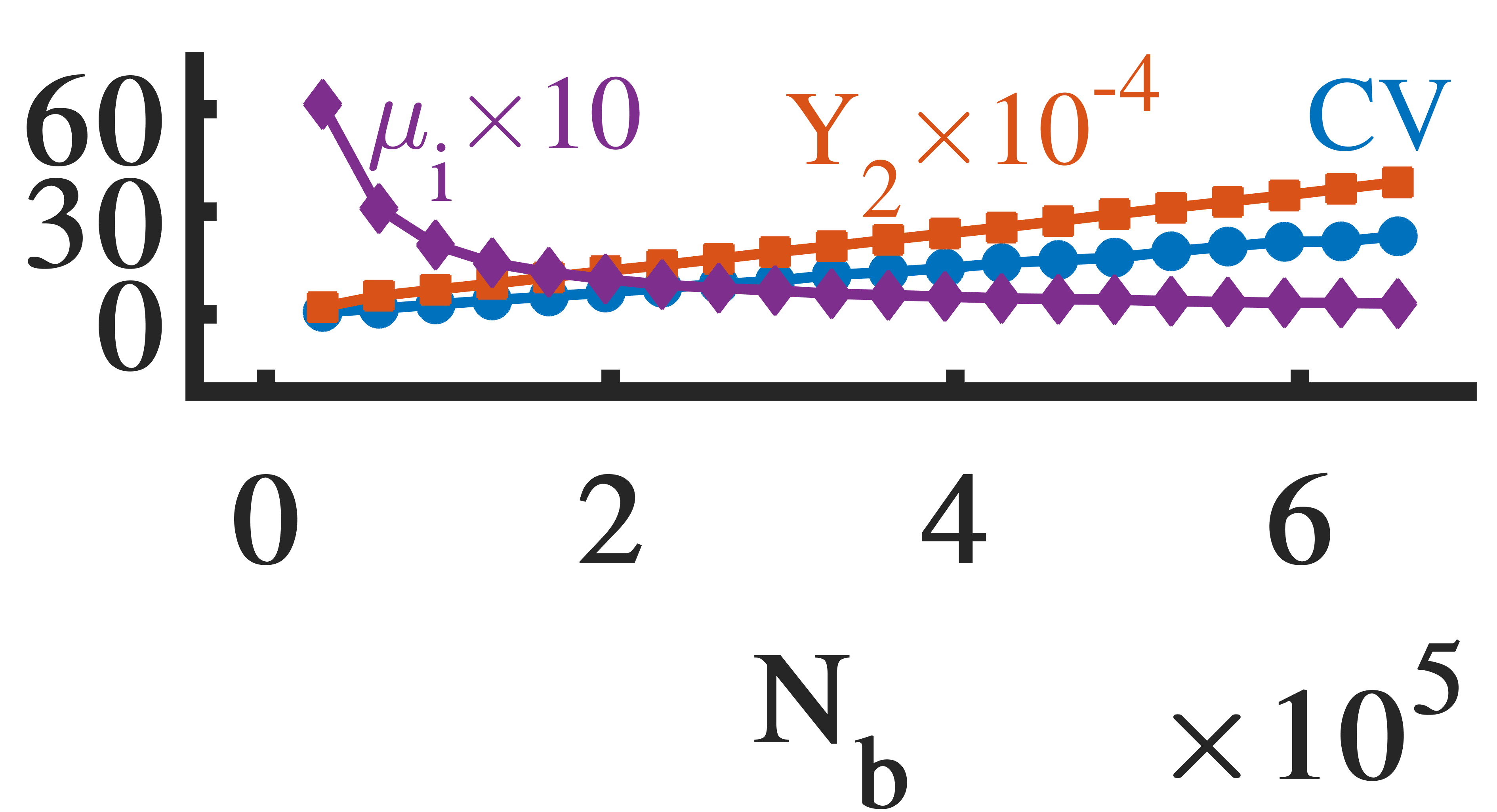}}
    \subfigure[$Pr(S_i)$, $p$$=$$0.7$]{\includegraphics[width=0.24\textwidth]{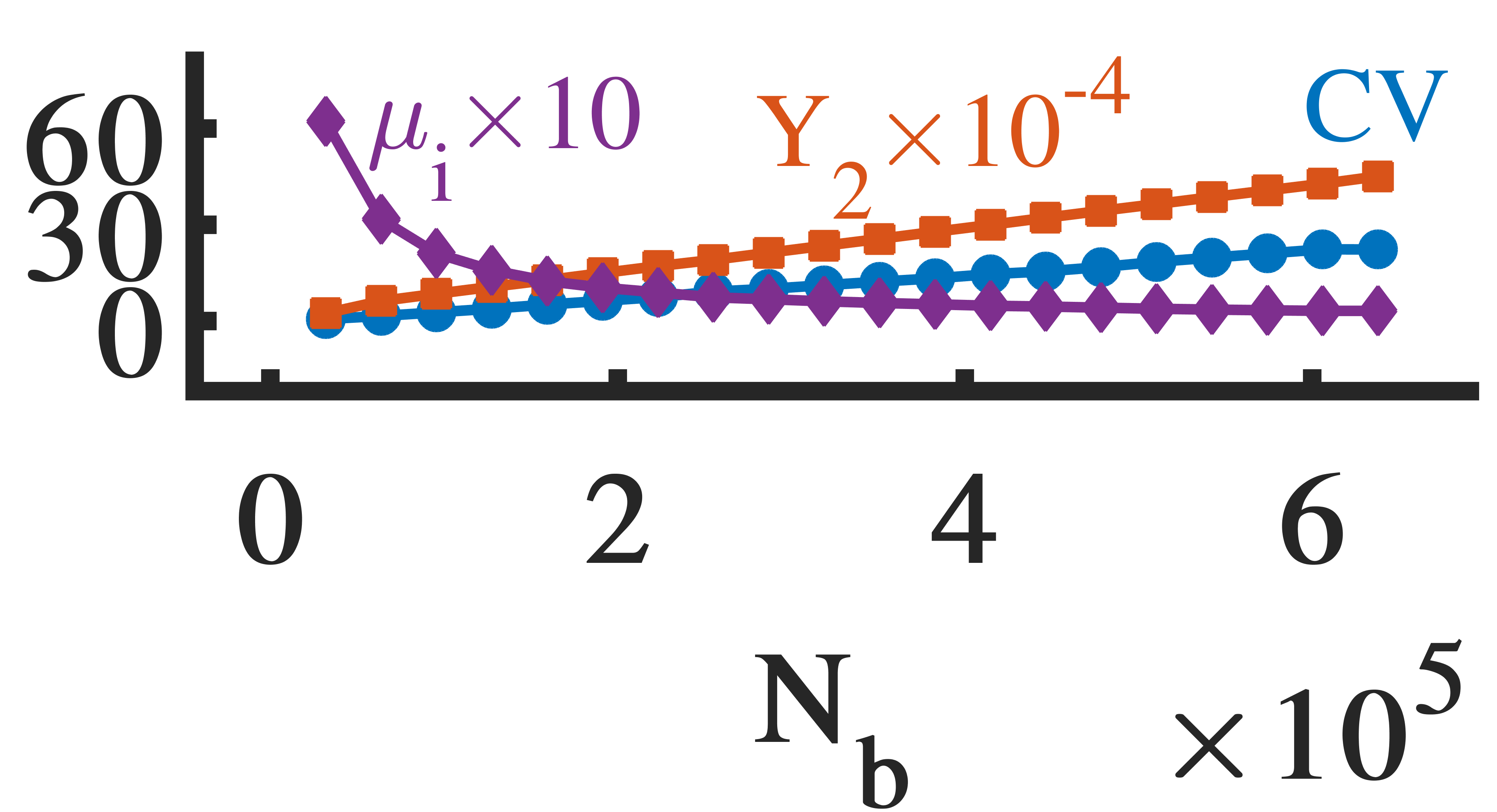}}
    
    \subfigure[$Pr(S_j)$, $p$$=$$0.15$]{\includegraphics[width=0.24\textwidth]{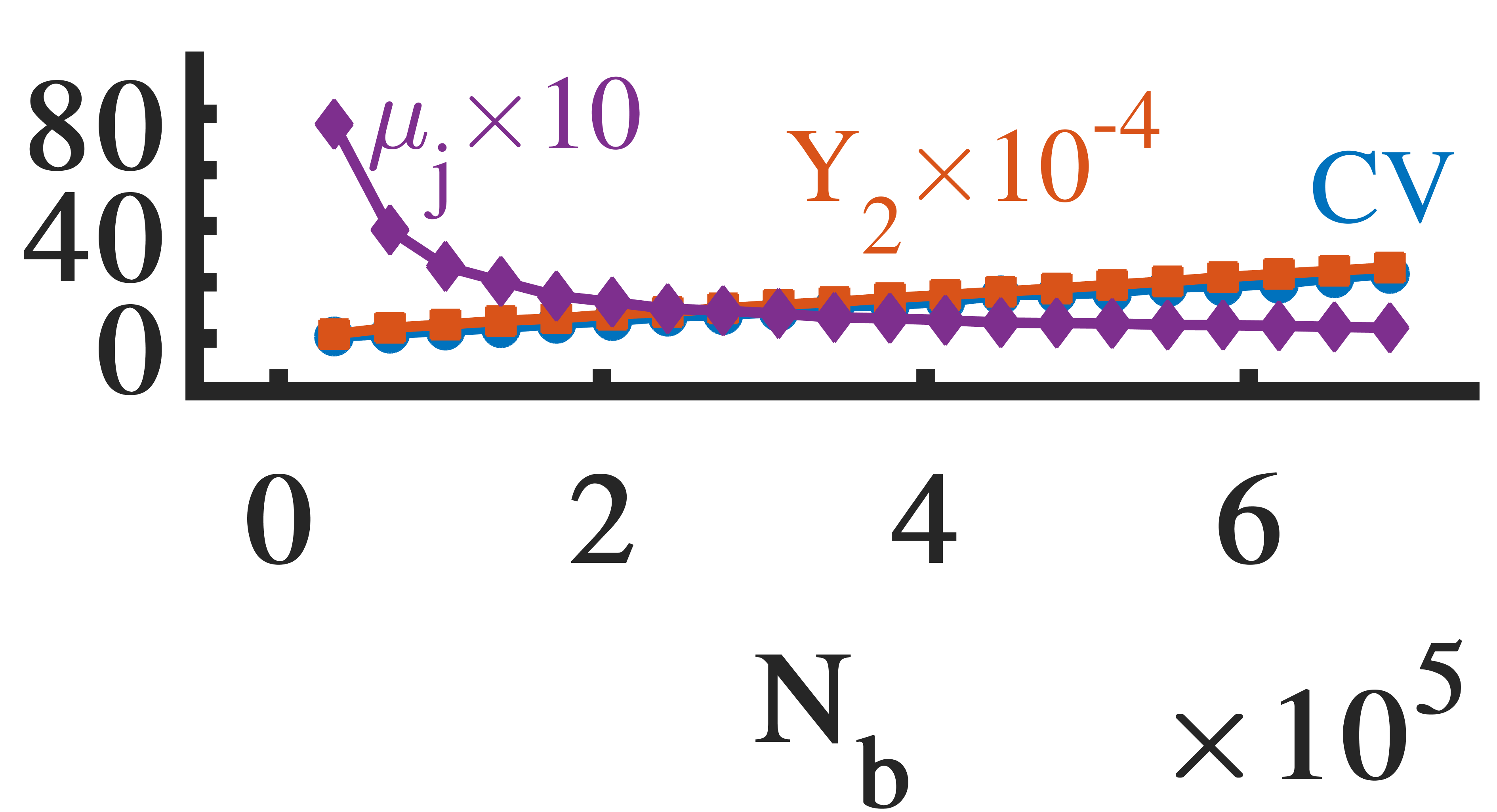}}
    \subfigure[$Pr(S_j)$, $p$$=$$0.4$]{\includegraphics[width=0.24\textwidth]{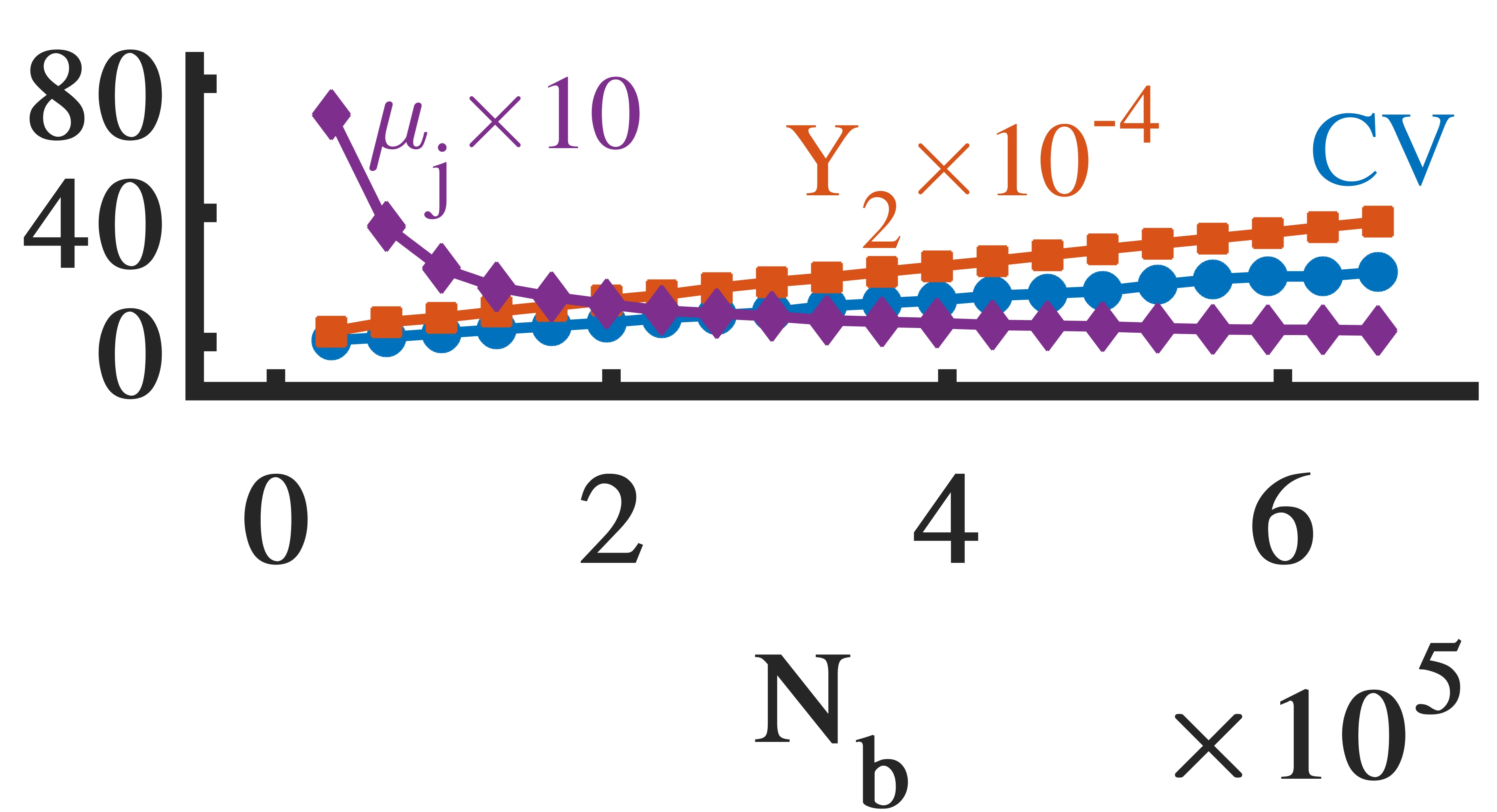}}
    \subfigure[$Pr(S_j)$, $p$$=$$0.7$]{\includegraphics[width=0.24\textwidth]{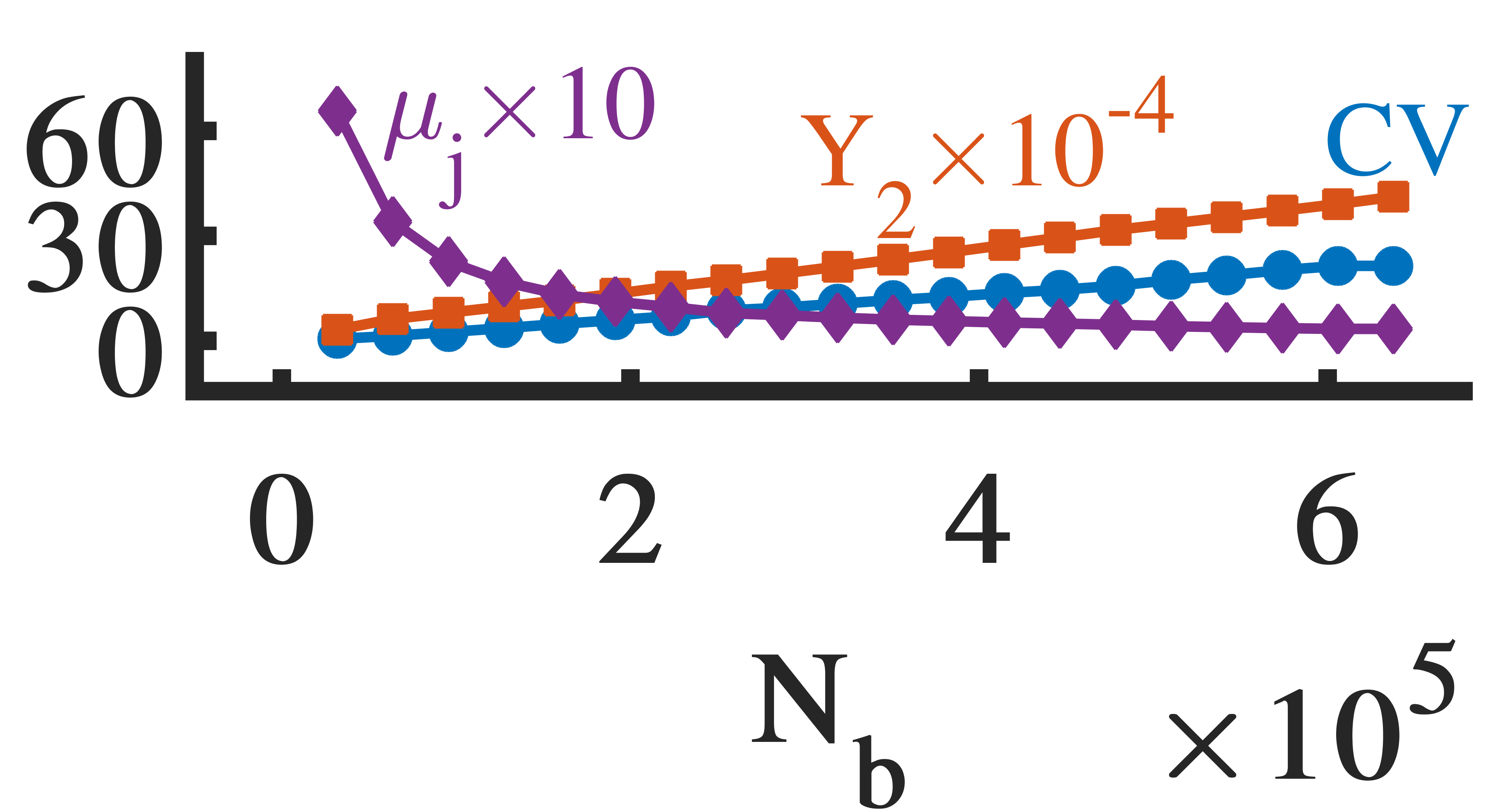}}
    \caption{Coefficient of variation (circles), excess kurtosis (squares), and mean (diamonds) of strengths of butterfly (a-c) i-vertices and (d-f) j-vertices over the timeline of burst arrivals in FF streams with $p_b=0.3$, $p=0.15,0.4,0.7$.}
    \label{fig:sstatsforestfire}
\end{figure*}
\begin{figure*}[h]
    \centering
  \subfigure[$m$$=$$10$]{\includegraphics[width=0.24\textwidth]{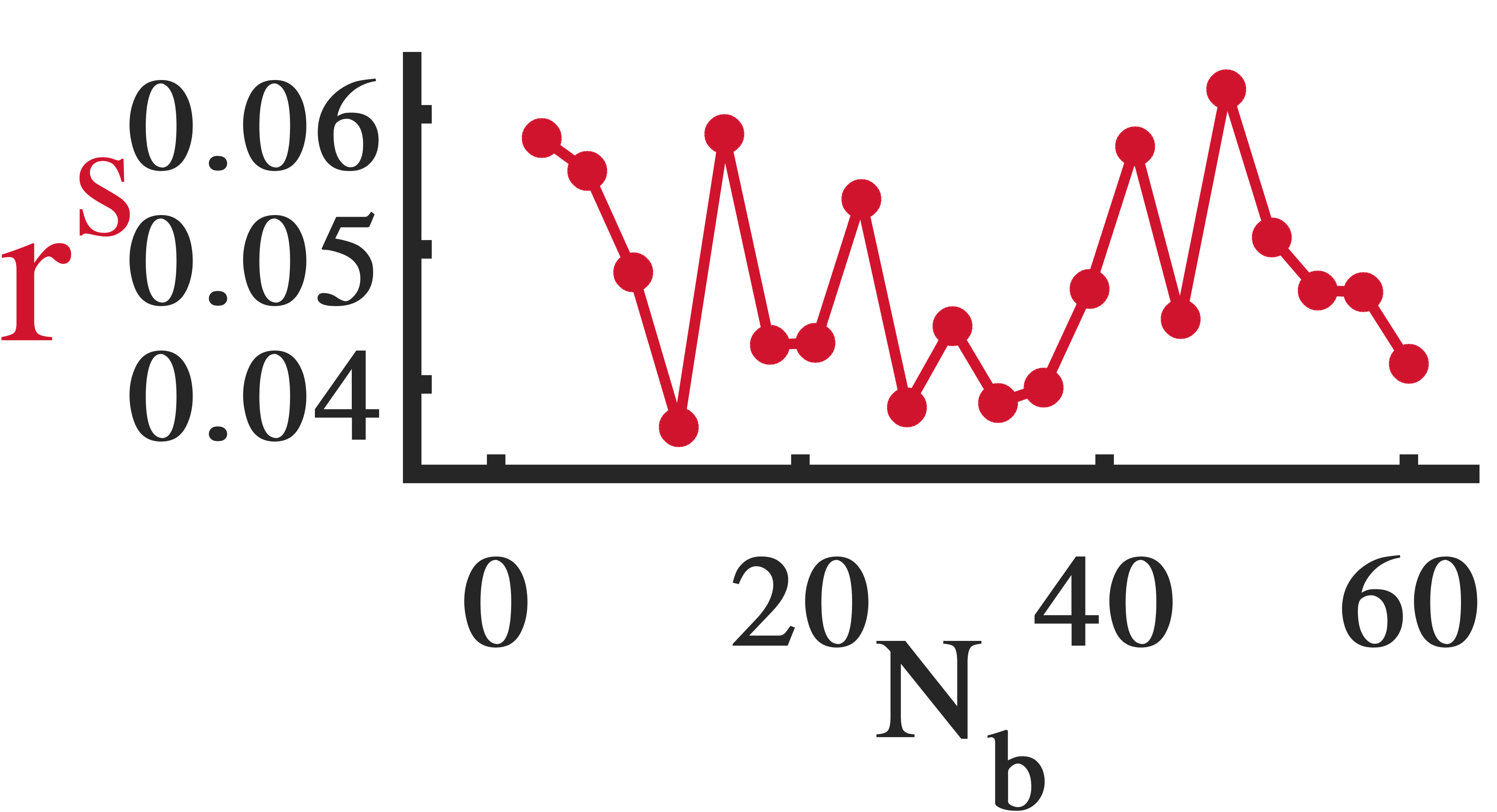}}
   \subfigure[$m$$=$$50$]{\includegraphics[width=0.24\textwidth]{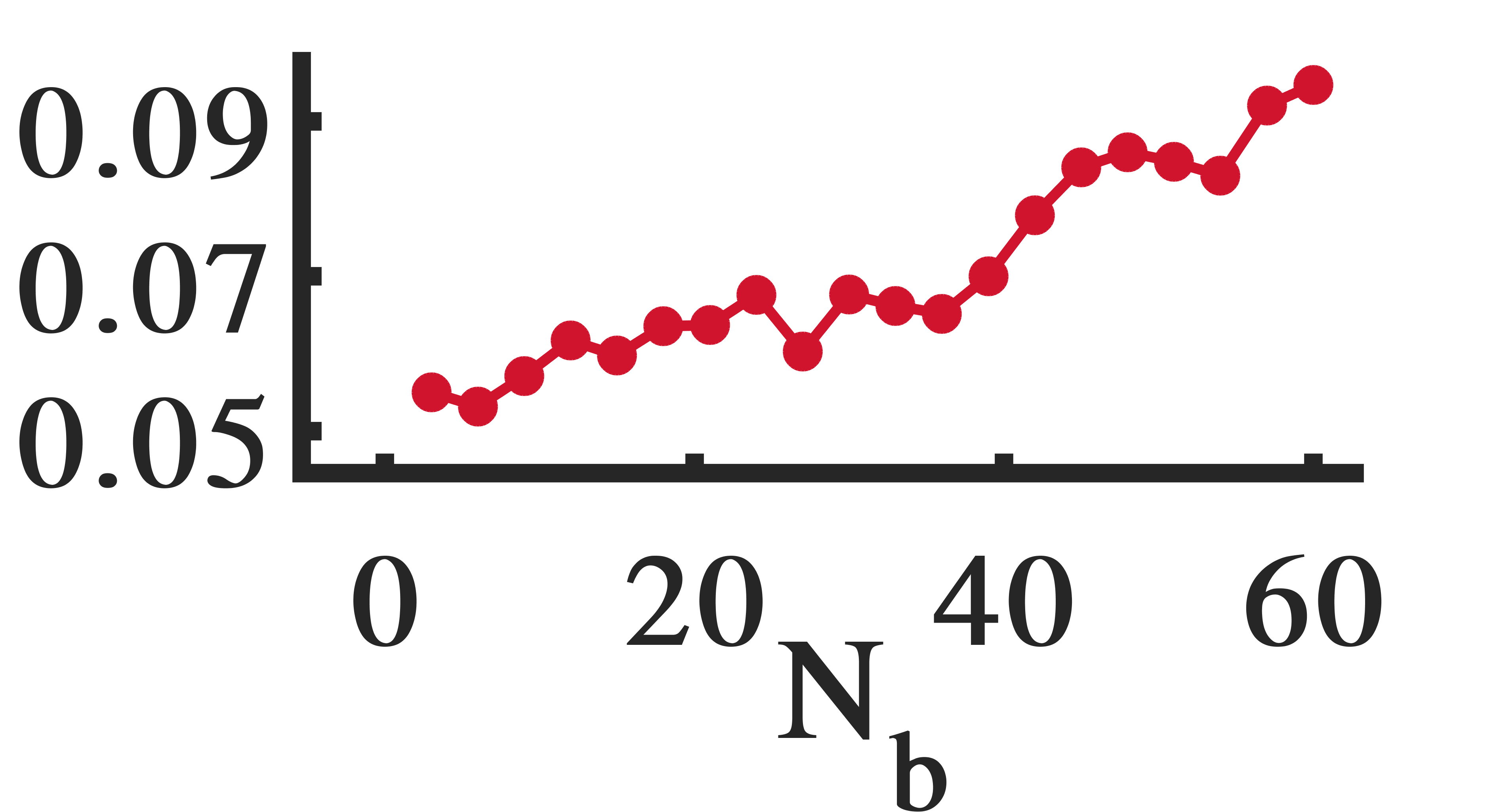}}
    \subfigure[$m$$=$$100$]{\includegraphics[width=0.24\textwidth]{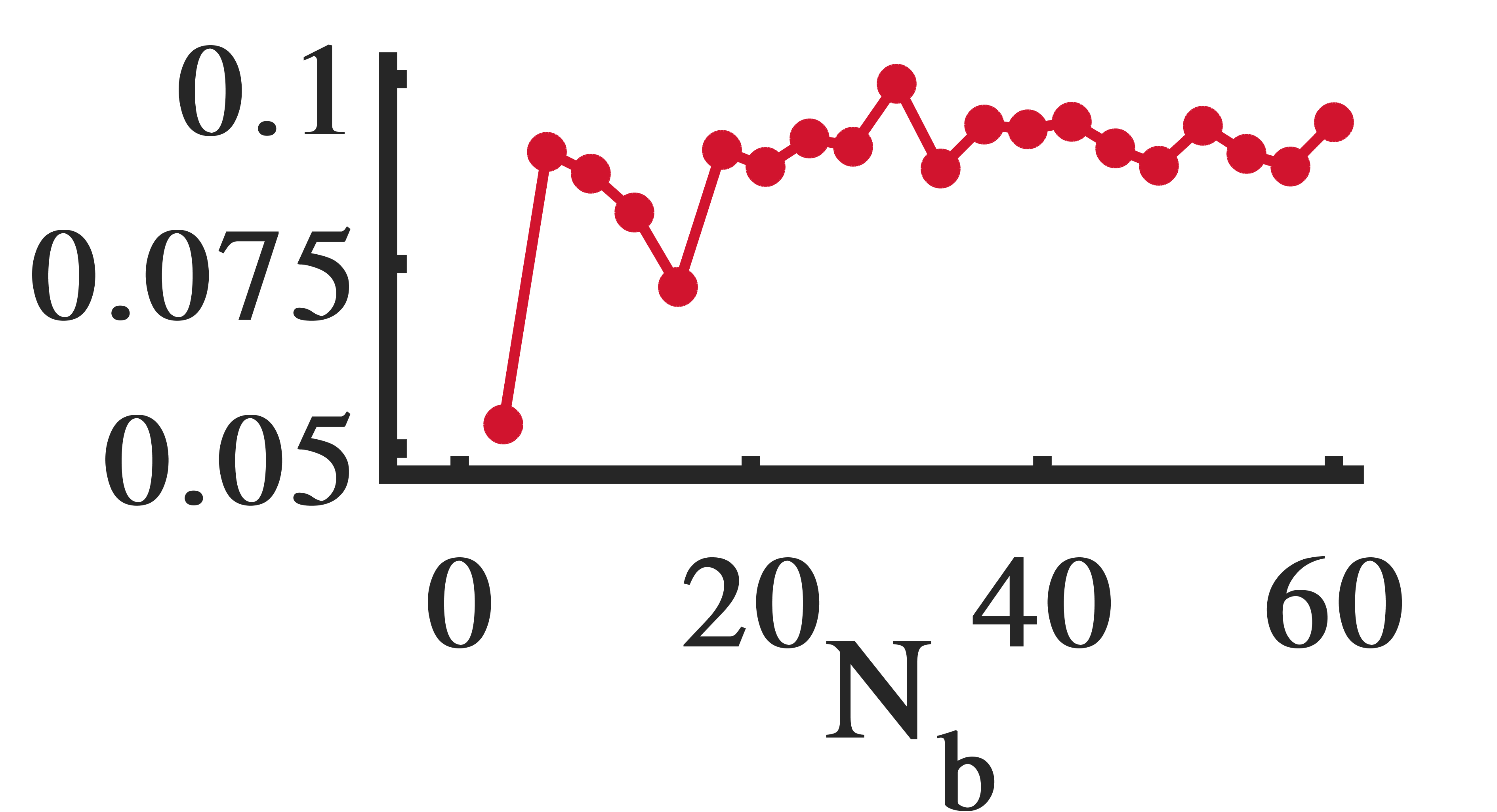}}
    
   \subfigure[$m$$=$$10$]{\includegraphics[width=0.24\textwidth]{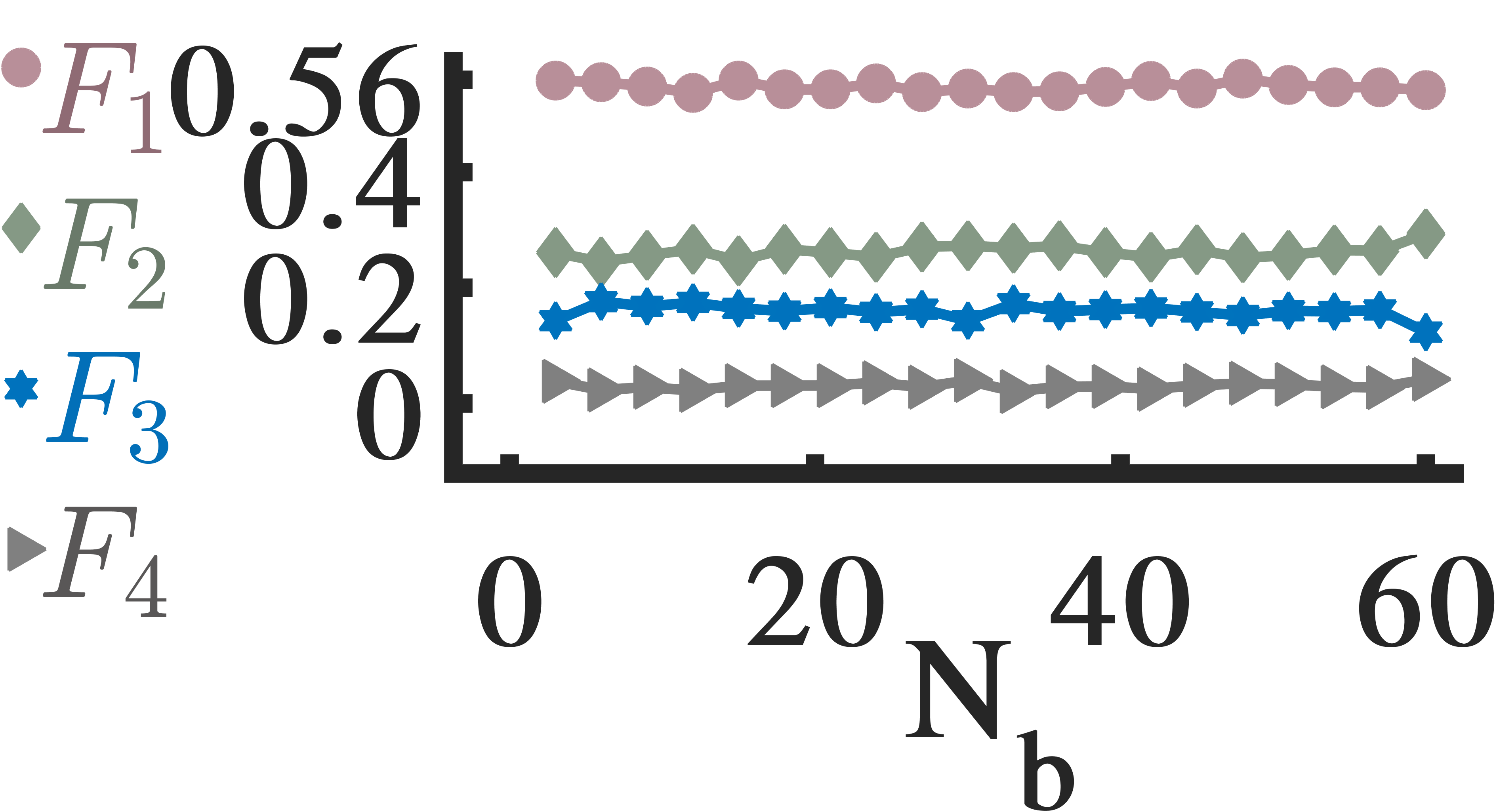}}
    \subfigure[$m$$=$$50$]{\includegraphics[width=0.24\textwidth]{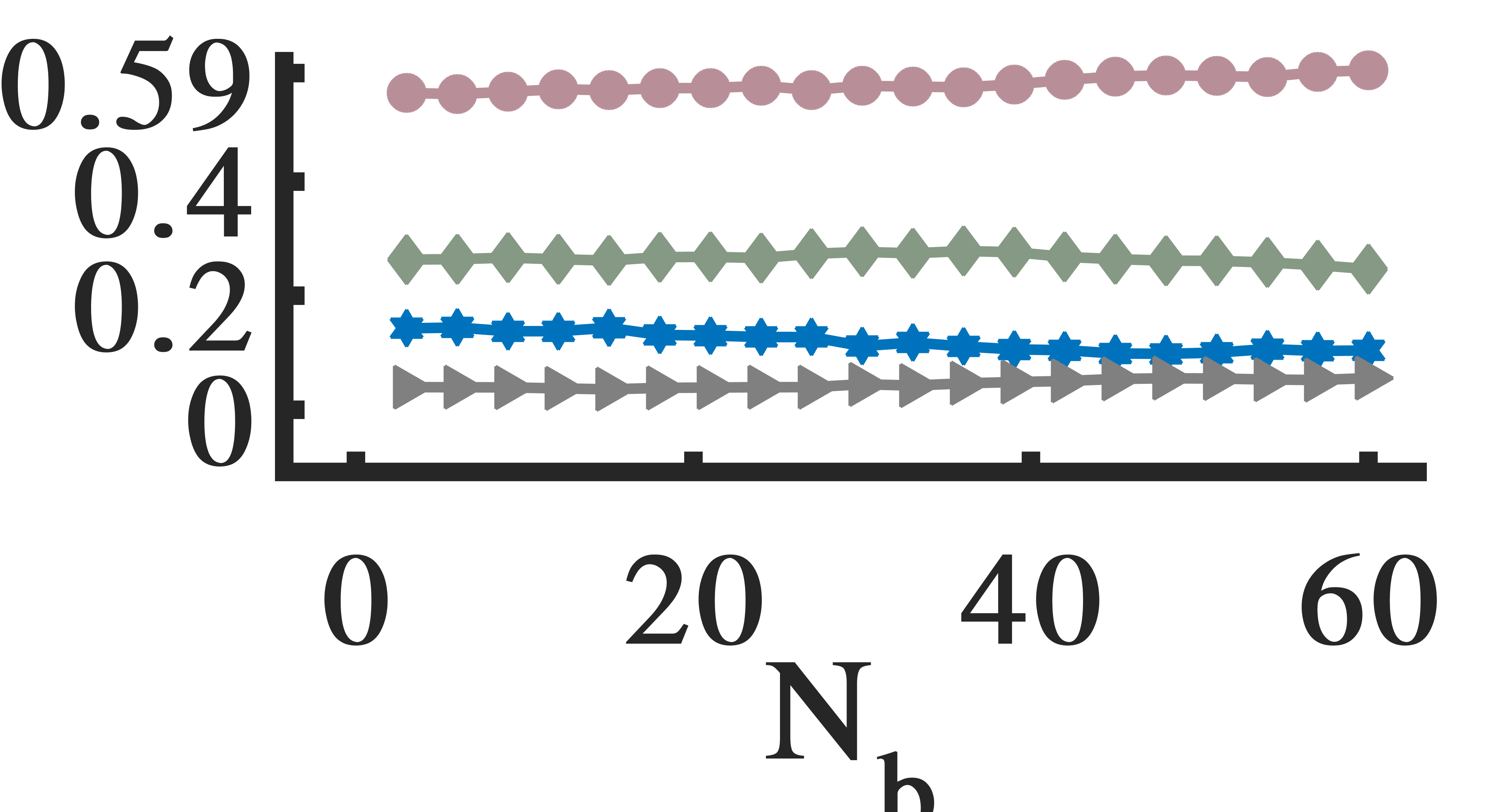}}
    \subfigure[$m$$=$$100$]{\includegraphics[width=0.24\textwidth]{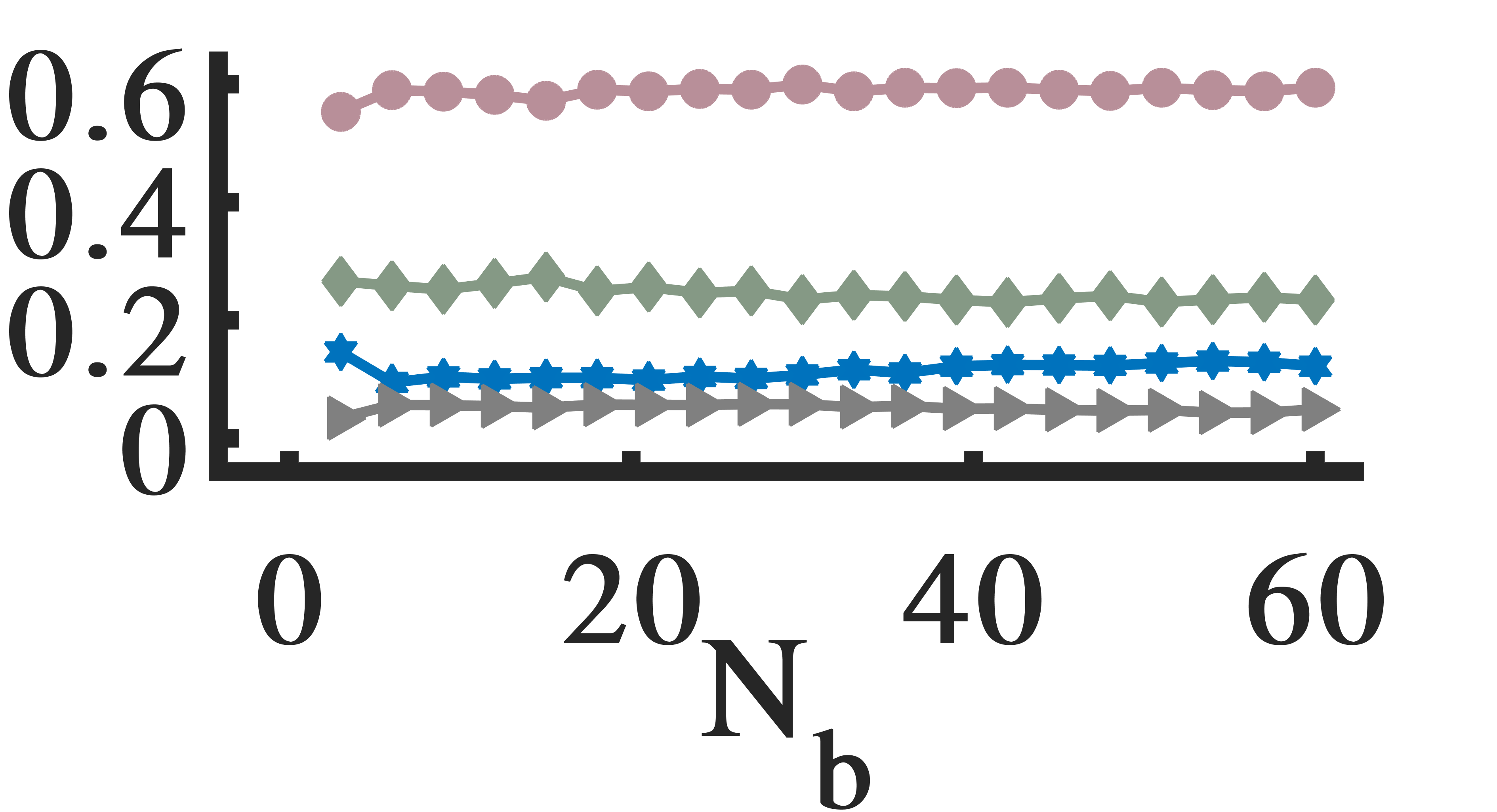}}
    \caption{(a-c) strength assortativity localization factor $r^s$  and (d-f) corresponding F elements of butterflies over the timeline of burst arrivals in SPA model with $m=10,50,100$.}
    \label{fig:rsandFSPA}
\end{figure*}
\begin{figure*}[h]
    \centering
    \subfigure[$m$$=$$10$]{\includegraphics[width=0.24\textwidth]{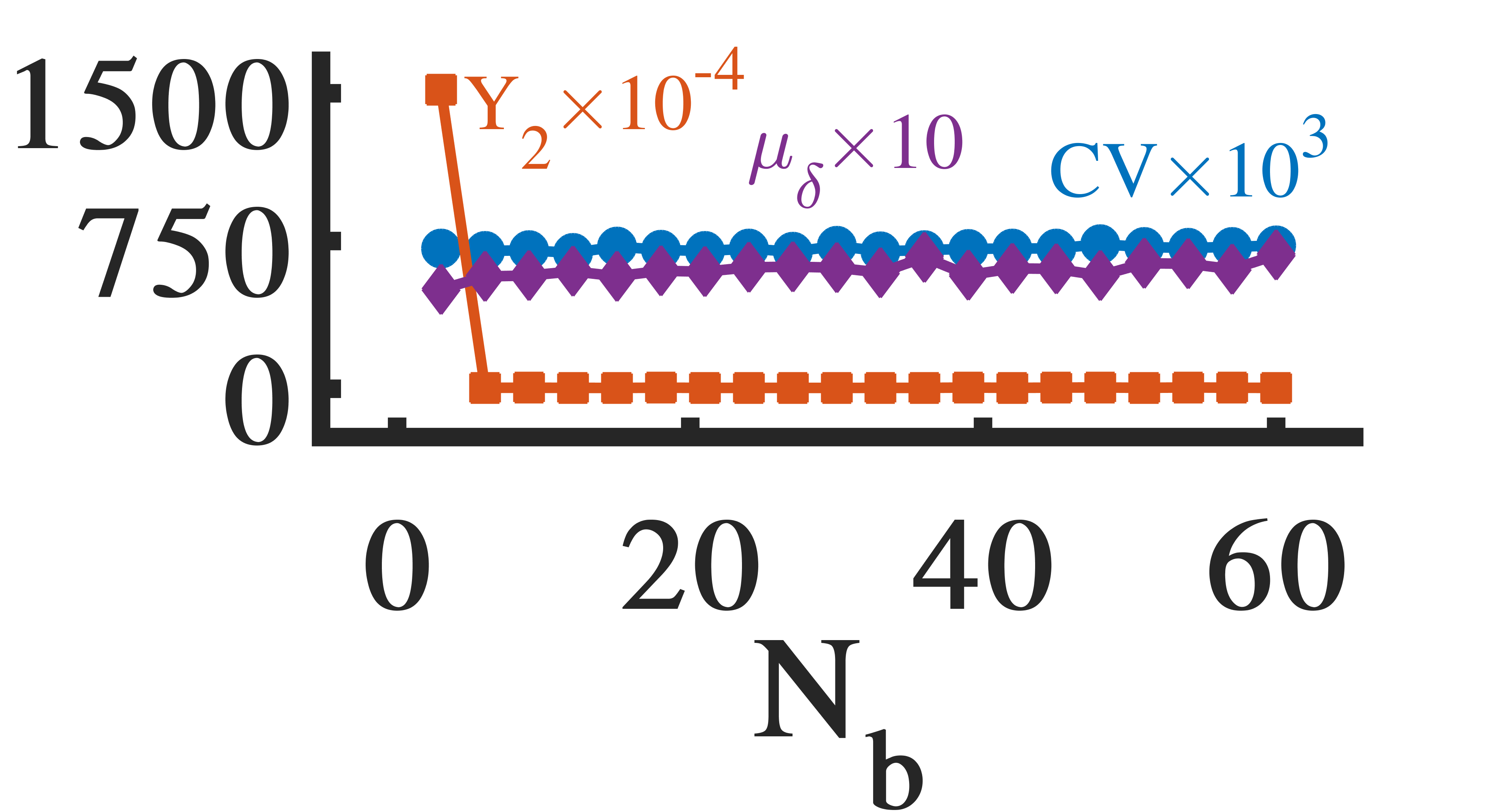}}
    \subfigure[$m$$=$$50$]{\includegraphics[width=0.24\textwidth]{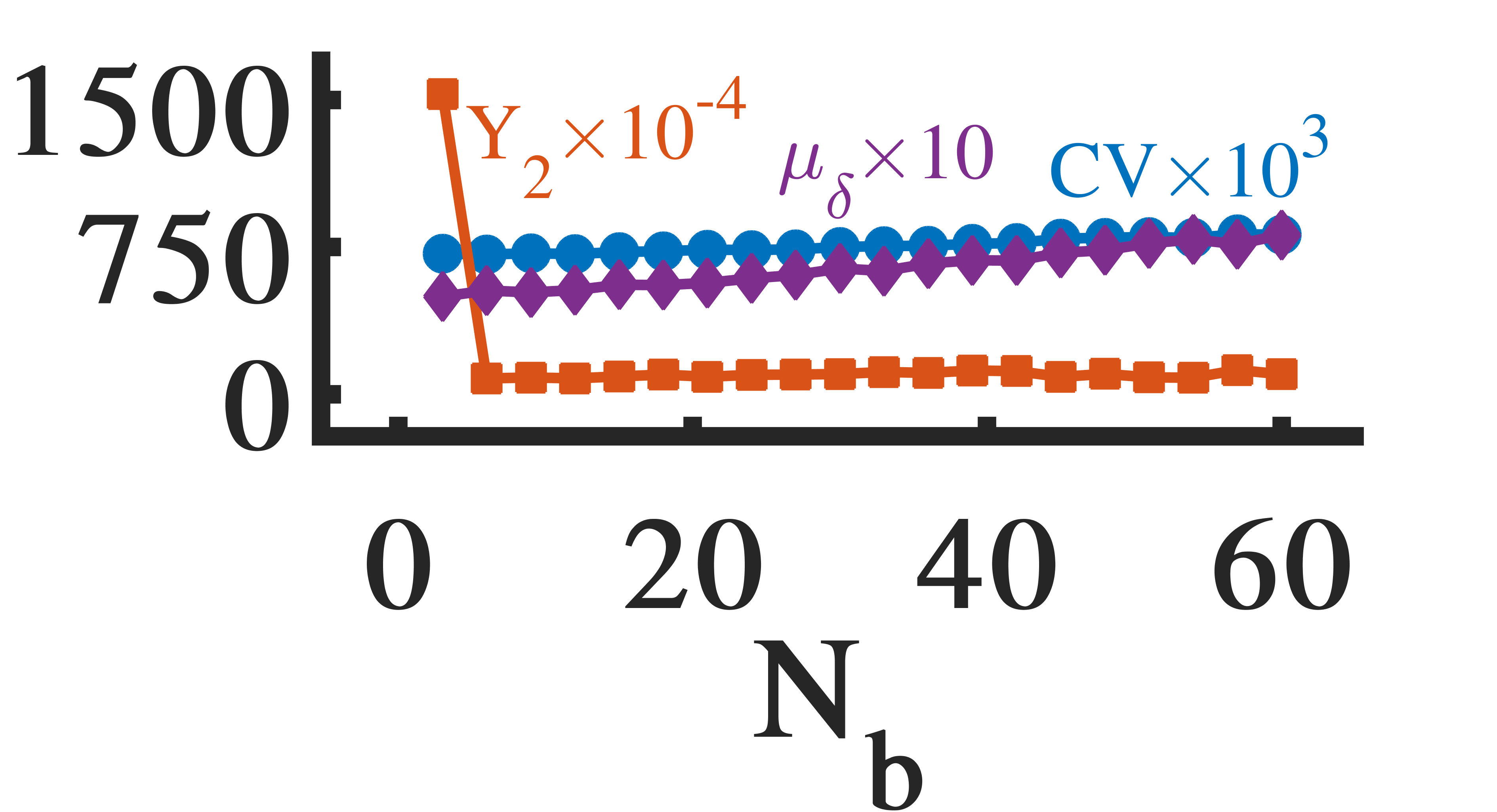}}
    \subfigure[$m$$=$$100$]{\includegraphics[width=0.24\textwidth]{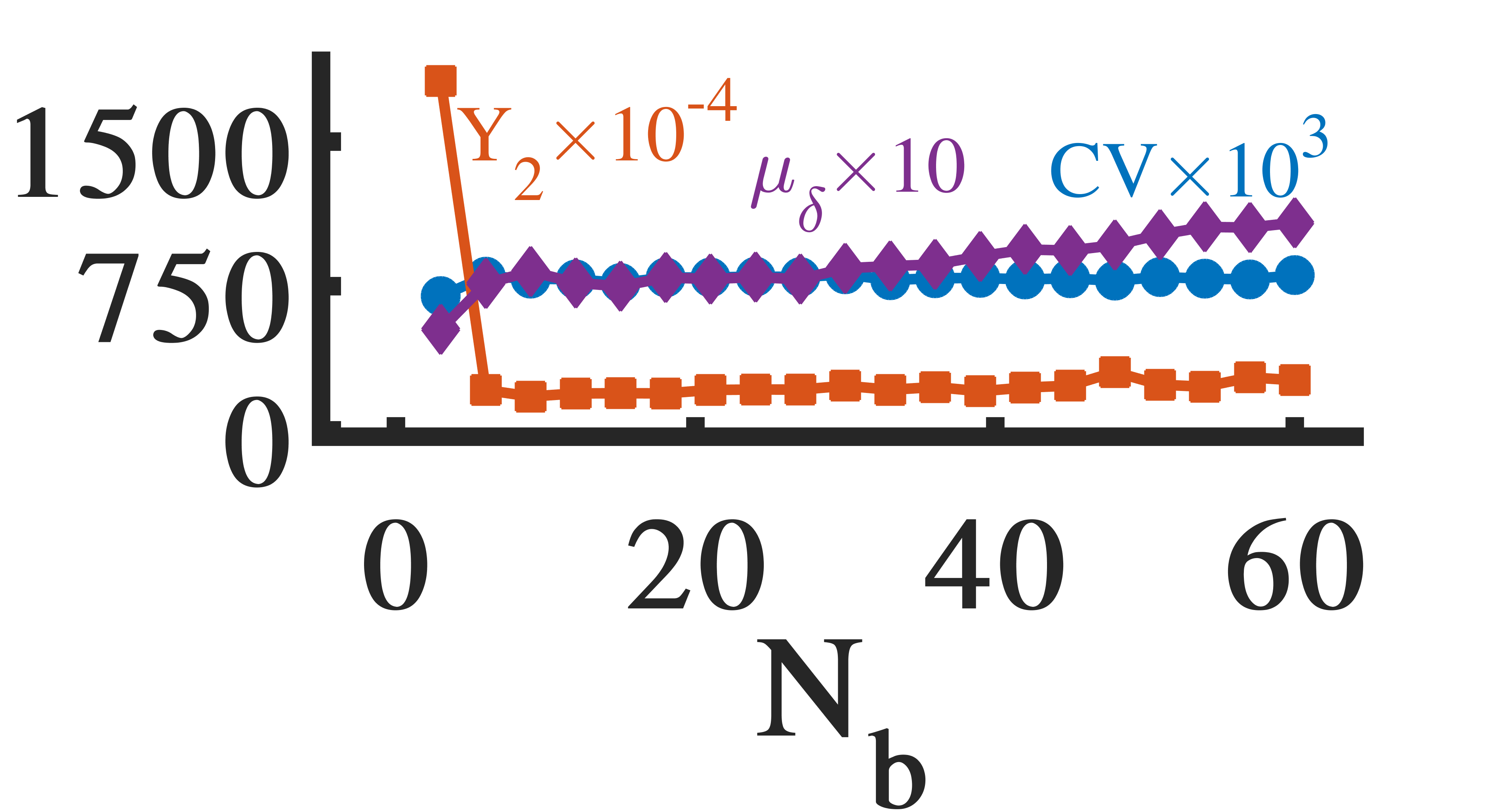}}
    \caption{Coefficient of variation (circles), excess kurtosis (squares), and mean (diamonds) of butterfly strength-differences over the timeline of burst arrivals in SPA model with $m=10,50,100$.}
    \label{fig:sdifstatsSPA}
\end{figure*}
\begin{figure*}[h]
    \centering
    \subfigure[$Pr(S_i)$, $m$$=$$10$]{\includegraphics[width=0.24\textwidth]{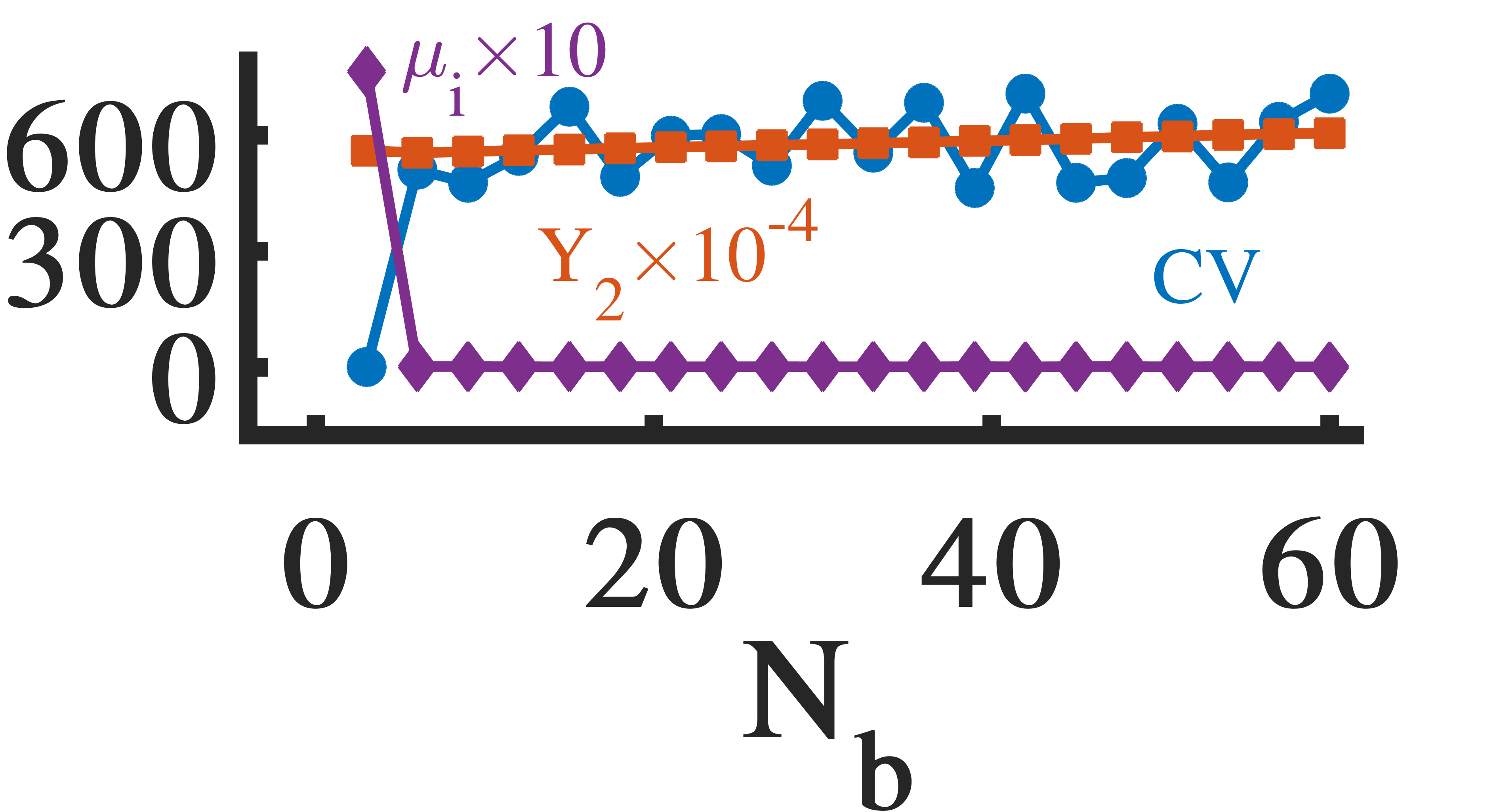}}
    \subfigure[$Pr(S_i)$, $m$$=$$50$]{\includegraphics[width=0.24\textwidth]{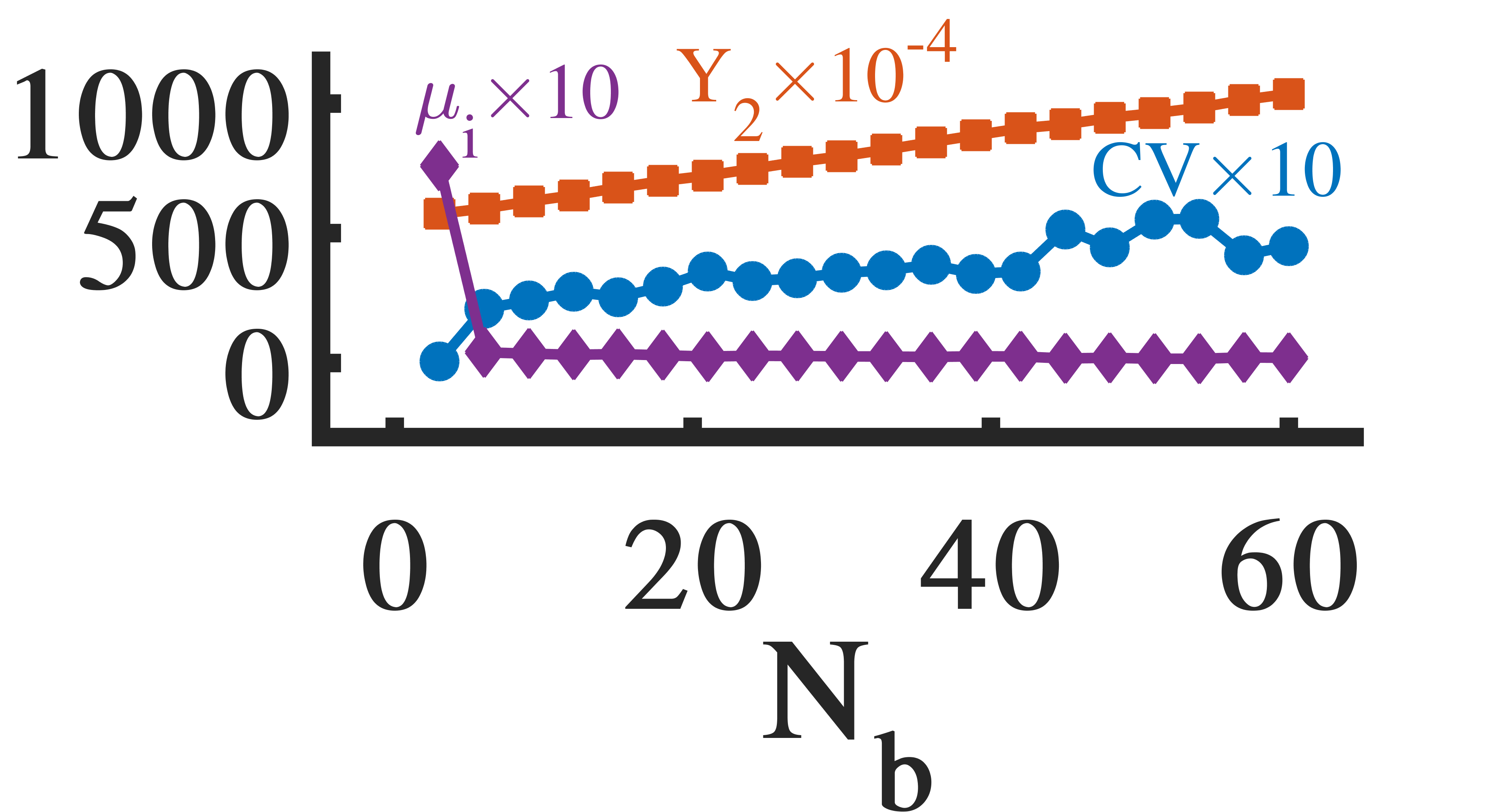}}
    \subfigure[$Pr(S_i)$, $m$$=$$100$]{\includegraphics[width=0.24\textwidth]{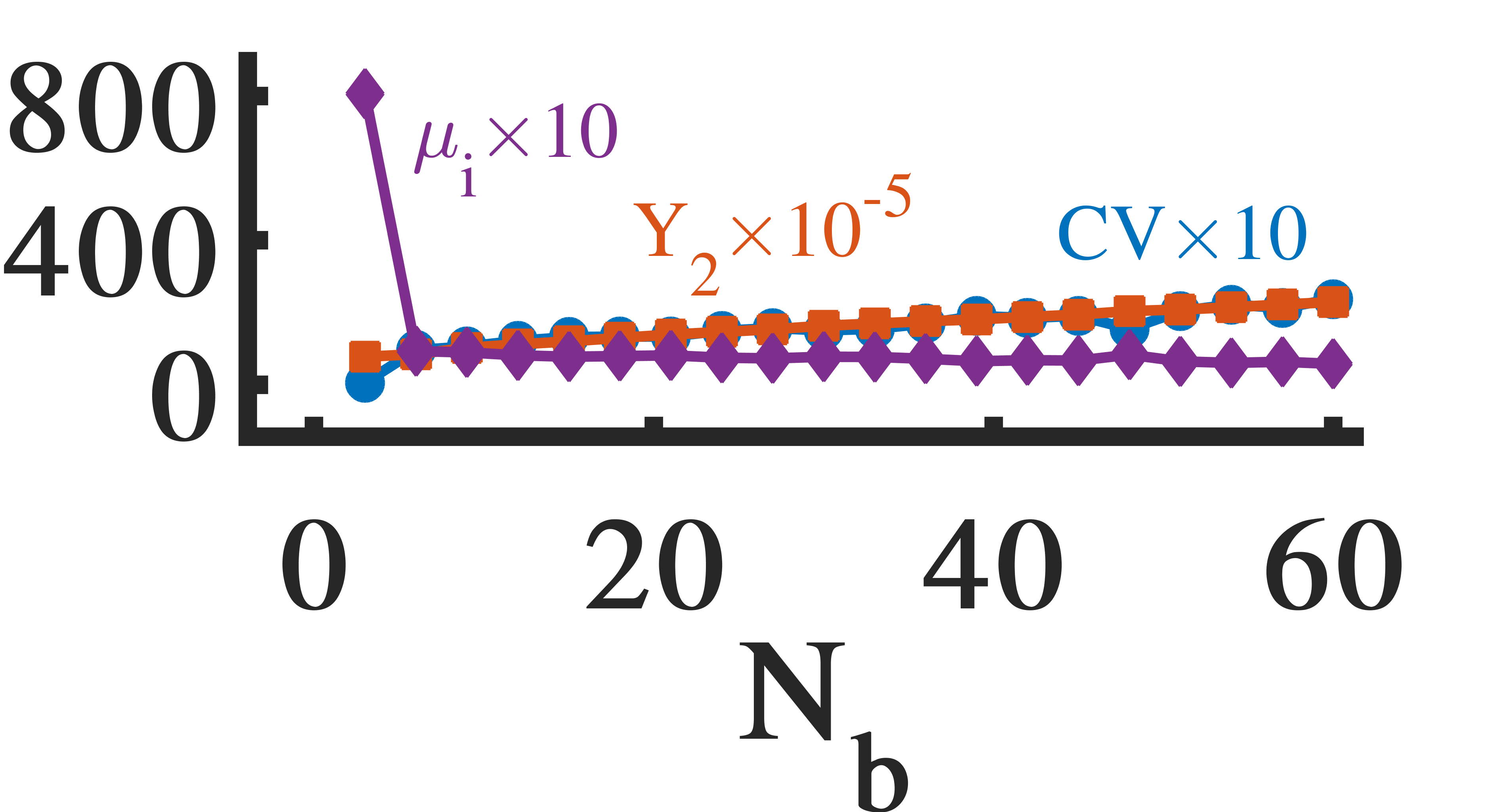}}
    
    \subfigure[$Pr(S_j)$, $m$$=$$10$]{\includegraphics[width=0.24\textwidth]{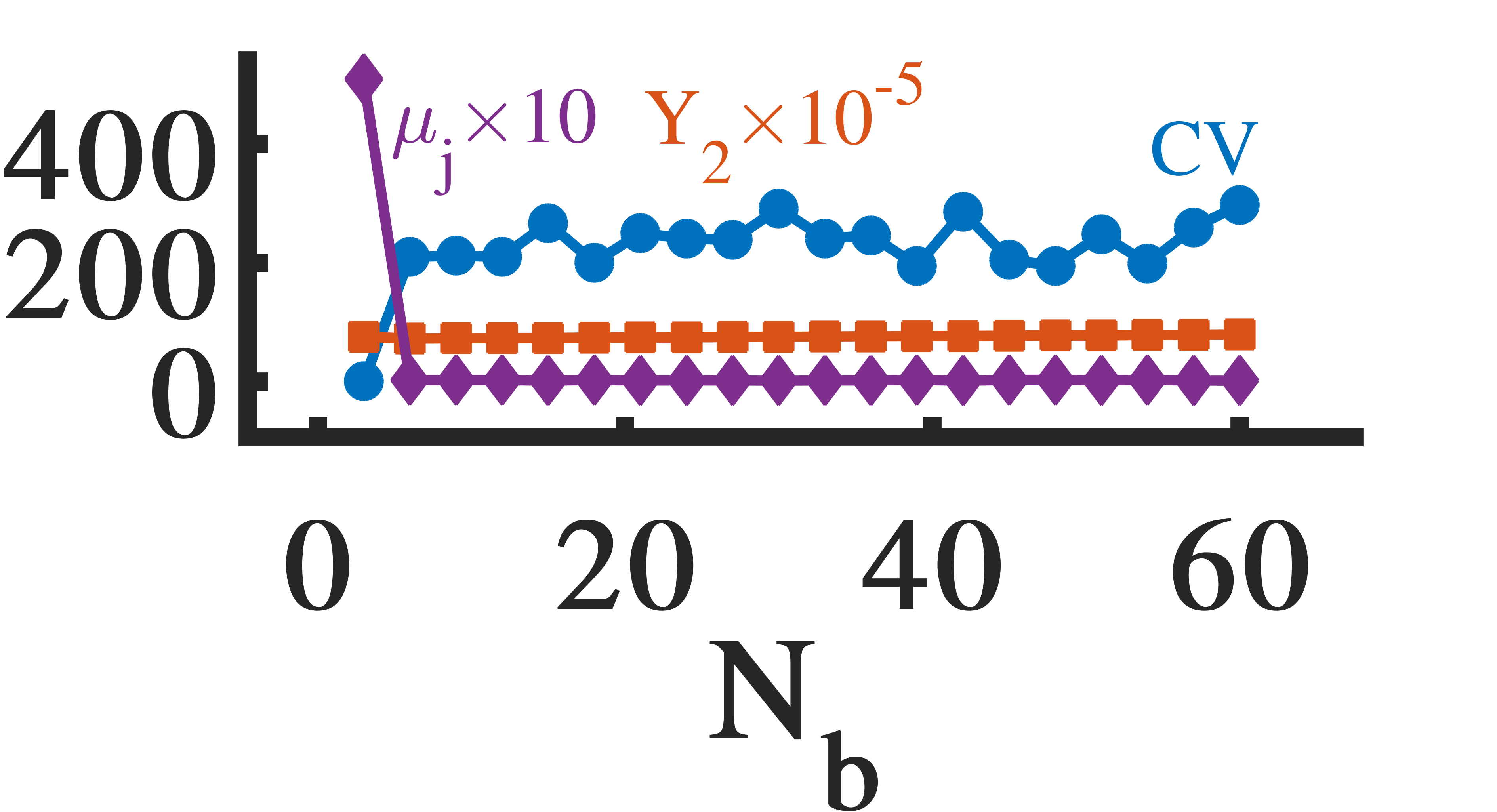}}
    \subfigure[$Pr(S_j)$, $m$$=$$50$]{\includegraphics[width=0.24\textwidth]{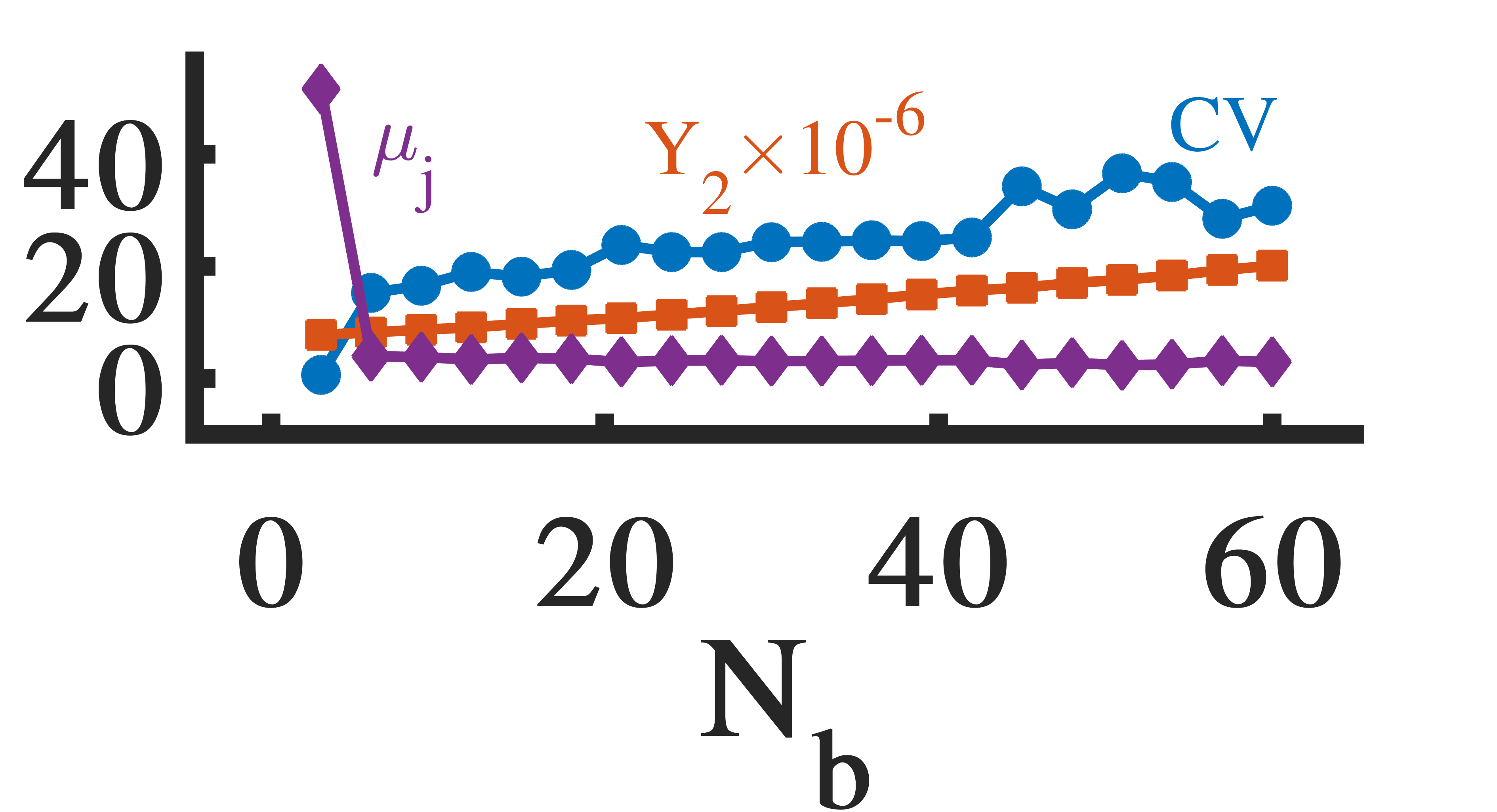}}
    \subfigure[$Pr(S_j)$, $m$$=$$100$]{\includegraphics[width=0.24\textwidth]{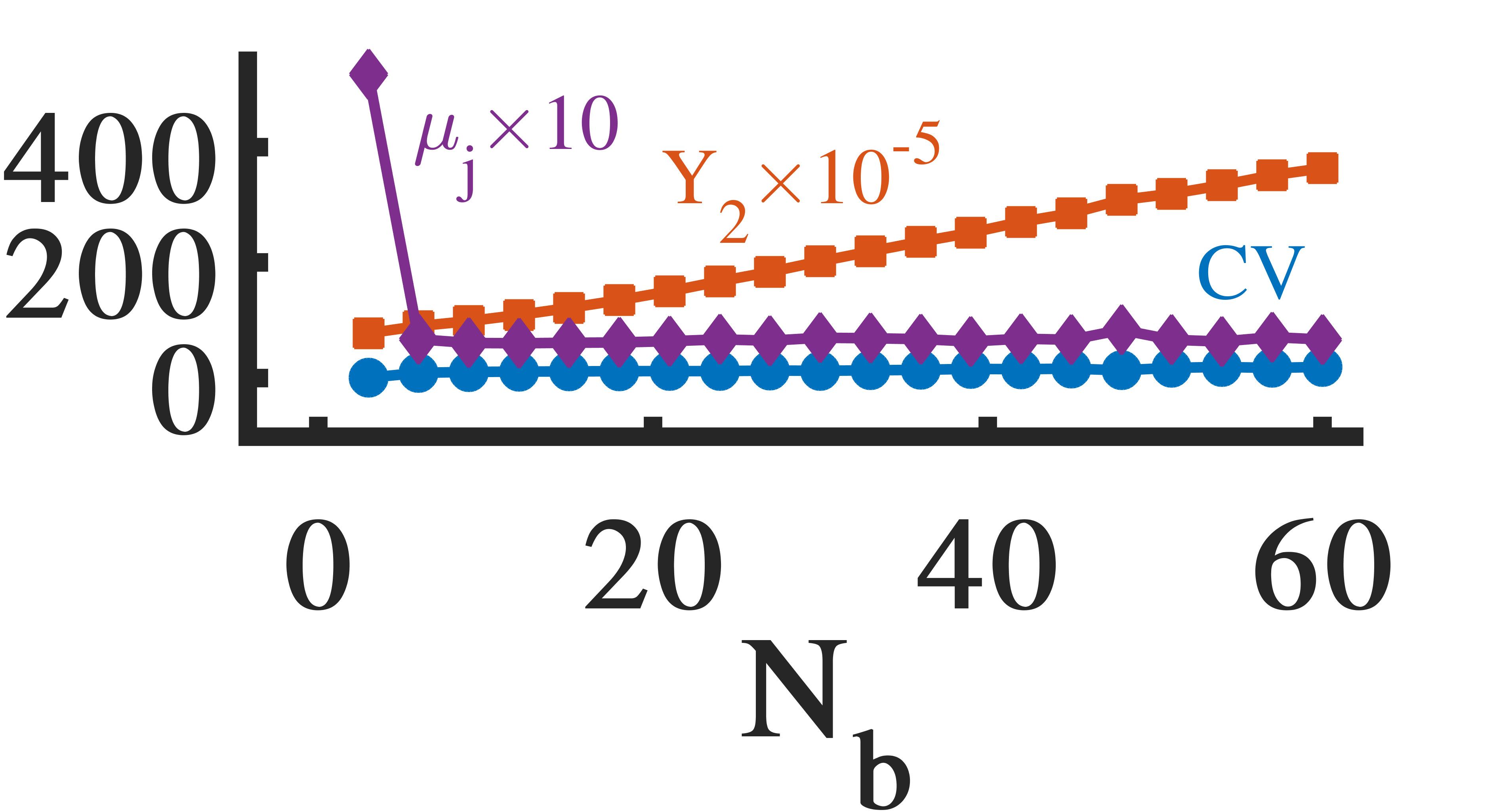}}
    \caption{Coefficient of variation (circles), excess kurtosis (squares), and mean (diamonds) of strengths of butterfly (a-c) i-vertices and (d-f) j-vertices over the timeline of burst arrivals in SPA model with $m=10,50,100$.}
    \label{fig:sstatsSPA}
\end{figure*}
\subsection{Discussion}\label{subsec:discussobservation}
We observe  the following concurrent mixing patterns held in the real-world streams as butterflies emerge over time:
\begin{enumerate}
    \item[\hypertarget{\th$_1$}{\th$_1$}] Butterfly densification -- The number of butterflies grows over time and at each time point it is a super-linear function of the number of edges. This is in line with a previous study~\cite{sheshbolouki2021sgrapp}. 
    
    \item[\hypertarget{\th$_1$}{\th$_2$}]  Strength diversification -- $Pr(S)$ of butterflies is initially mesokurtic and gets more right-skewed as the right tail grows  heavier/longer ($Y_2$ starts from $0$ and rises to extremely high values). The dispersion of strengths increases over time ($CV$$>$$1$ increases) as the standard deviation increases and the mean decreases.
    
    \item[\hypertarget{\th$_1$}{\th$_3$}] Steady strength assortativity -- The strength assortativity localization factor $r^s$ is fixed at a positive value over time due to the fixed-shaped yet growing distribution of strength-differences of butterflies. $Pr(\delta)$ is initially mesokurtic and gets more right-skewed as the right tail grows  heavier/longer ($Y_2$ starts from $0$ and rises to extremely high values). However, the dispersion of strength-differences does not change ($CV$$\approx$$1$) due to synchronous evolution of mean and standard deviation. Also, the proportion of $\delta$s in different regions of $Pr(\delta)$ is constant (stable $F$ elements). Therefore, the shape of the distribution is stable although the range expands.
\end{enumerate}
The co-occurrence of these patterns is counter-intuitive and interesting. As the stream and the number of butterflies grow rapidly, we observe that diversity of strengths for butterfly vertices increases and strong (high-strength) vertices get stronger and obtain weak neighbors with the increasing of variance of strength differences. Therefore, we expect an increasing trend of disassortativity. However, we observe that the majority of butterfly edges are formed by vertices with similar strength and this assortativity remains at a fixed level regardless of stream size or butterfly count. We refer to this phenomenon as \textit{scale-invariant strength assortativity of streaming butterflies}, which is originated by the three parallel mixing patterns of butterflies.

To explain the data-driven semantics of the observed patterns in the domain of user-item rating streams, we relate them to the following graph concepts:
\begin{itemize}
    \item Burstiness -- User-item interactions can be viewed as human-initiated events which introduce two levels of burstiness: individual-level and group-level. The former relates to the interactions of each user with several items at each time point or sequential time points with negligible differences. Such bursty interactions leads to formation of many wedges incident to each user/item. The latter relates to the concurrent interactions of several users at each time point. Such bursty interactions lead to merging the individual-level wedges and densification of butterflies. The significance of this continuous burstiness can change over time due to different circumstances leading to peak hours. For instance, Alibaba has reported that customer purchase activities during a heavy period in 2017 resulted in generation of 320 PB of log data in a six hour period~\cite{ozsu2019big}. 
    
    \item Strong-get-stronger -- Online platforms utilize filters such as trends, best sellers, mostly viewed, hot/top categories, newly added, as well as  timely promotions, point collection rewarding strategies, and (advertised) recommendations. These systematically lead to the increasing  popularity and visibility of the items with most interactions and encouraging the users to interact more and become more active. Such interaction mechanics are similar to the rich-gets-richer argument, where the richness denotes the vertex strength. The butterflies are formed incident to such highly connected and high strength users/items (strong vertices) leading to butterfly densification and diversification of strengths.
    
    \item Core-periphery -- Popular items attract the active users and in another view, active users mostly engage with trending items or make items trending/popular. This is similar to the mesoscale phenomenon core-periphery~\cite{holme2005core, csermely2013structure} also called rich club~\cite{zhou2004rich, colizza2006detecting} stating that high-degree vertices tend to connect to each other and create a core attracting the new connections. Such core sets of vertices with high degrees/strengths in user-item streams create numerous edges between strong users and items with high butterfly support leading to assortativity patterns of butterfly vertices.
\end{itemize}
 
We now discuss the properties of FF and SPA streams with respect to the observed patterns. FF streams follow \hyperlink{\th$_2$}{\th$_2$} but not \hyperlink{\th$_1$}{\th$_1$} and \hyperlink{\th$_3$}{\th$_3$}. In FF streams, as the new vertices attach to random vertices (ambassadors) and reach high-degree vertices (hubs) through copying the neighbors of ambassador, new butterflies emerge and the diversity of strength of butterfly vertices increases. When the probability of neighbor copying $p$ is low (sparse regions), the new vertex establishes fewer connections, therefore, the probability of connecting to the high-strength hubs is lower and also the strength of the new vertex remains low. As a result, many edges have low strength-difference, $Pr(\delta)$ is broader, and strength assortativity localization factor is positive. On the other hand, when $p$ is high, although the number of butterflies is higher, the connections are established between pairs of vertices with both low and high strength-difference since the ambassadors and their neighbors are selected uniformly at random. As a result, $Pr(\delta)$ has a lower variance and the strength assortativity displays randomness. Moreover, the butterfly count is a sub-linear function of the number of edges over time even when the graph displays edge densification (in dense regions). 
SPA streams follow \hyperlink{\th$_1$}{\th$_1$} but not \hyperlink{\th$_2$}{\th$_2$} and \hyperlink{\th$_3$}{\th$_3$}. In SPA streams, as new low-strength vertices attach to $m$ vertices with the highest strengths (strong vertices), many butterflies are formed around the high-strength vertices with a rate much higher than that of real-world streams. When the number of connections per new vertex $m$ is higher, the probability of attachment to low-strength vertices is higher since the number of strong vertices is limited, therefore the number of edges among low-strength vertices increases. As a result, the graphs display weak strength assortativity when average degree is high. When $m$ is low, the number of edges with high strength-difference is higher compared to the case with high $m$, although they don't exceed edges with low strength-difference. The diversity of $Pr(S)$ and $Pr(\delta)$ does not increase significantly in either cases. 

To summarize, FF streams with  implicit degree-driven preferential attachment and neighbor copying yield graphs with increasing diversity of strengths of butterfly vertices, however the quantity of butterflies and their mixing schemes do not preserve realistic patterns. SPA streams with pure strength-driven preferential attachment lead to graphs with rapidly growing butterfly density, however the mixing patterns do not match realistic patterns. This highlights the essence of a growth model which has both strength-driven preferential attachment and neighbor copying flavors to ensure a balanced butterfly densification and incremental strength diversity. Further considerations regarding the integration of these two mechanisms with other effective mechanisms are also required to create realistic streams. 
We resolve this in the next section.
\section{\texorpdfstring{The proposed Streaming Growth Model: \MakeLowercase{s}G\MakeLowercase{row}}{sGrow}}\label{sec:model}
In the previous section, we identified burstiness, strong-gets-stronger, and core-periphery as the semantic concepts explaining the butterfly emergence patterns and also the strength preferential attachment and neighbor copying as the microscopic mechanisms explaining the butterfly densification and strength diversification. Now, we integrate these concepts and growth mechanisms with further mechanisms that we introduce in the body of our streaming growth model, called \emph{sGrow}, to explain the co-occurrence of all three realistic emergence patterns of butterflies in streaming graphs such that all four-vertex graphlets emerge in the graph, the sgrs are realistic (i.e. preserve streaming data characteristics), and the stream properties are configurable. Figure~\ref{fig:mechanisms} illustrates a summary of introduced microscopic mechanisms and the techniques for implementing them. 
\tikzset{arrow/.style = {thick,black,->,>=stealth,}}
\tikzset{arrow2/.style = {thick,dotted,->,>=stealth,}}
\tikzset{nearnodes/.style={node distance=0.2cm}}
\tikzset{farnodes/.style={node distance=2cm}}
\tikzset{model1/.style = {rectangle, text width=5.5cm, minimum height=1cm,text centered, draw=black, fill=green!10,}}
\tikzset{model2/.style = {rectangle, text width=5cm, minimum height=1cm,text centered, draw=black, fill=yellow!25,}}
\begin{figure}[h]
\centering
\includegraphics[width=0.9\textwidth]{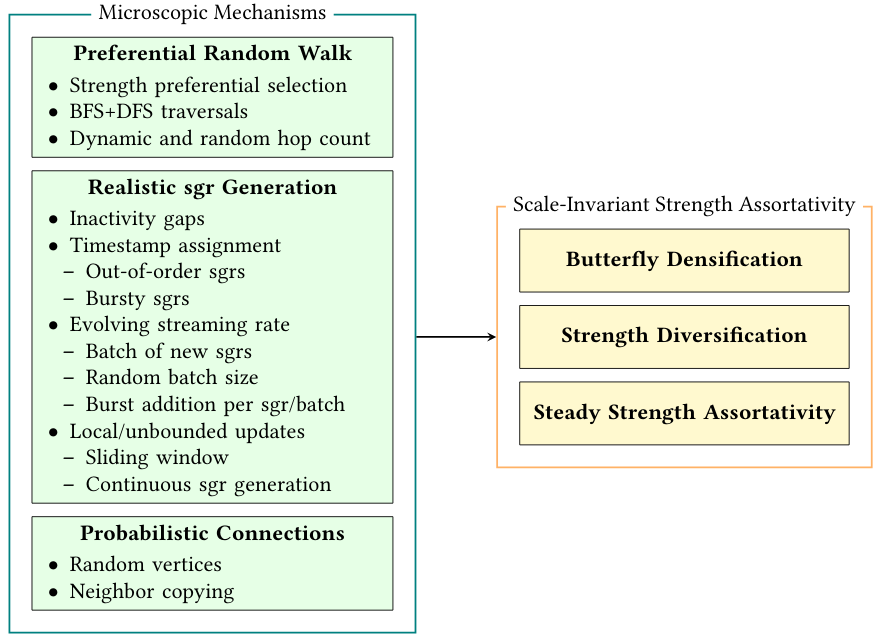}
\caption{Introduced microscopic mechanisms and techniques for explaining butterfly emergence patterns.} \label{fig:mechanisms}
\end{figure}
\subsection{Overview}\label{subsec:model}
We provide an overview of\emph{sGrow} (Algorithm~\ref{alg:model} - Figure~\ref{fig:sgrow}). We use a sliding window mechanism to generate a sequence of \hyperlink{sgr}{sgr}s that constitute the synthetic \hyperlink{wbsg}{weighted bipartite streaming graph}. In the following $G$ refers to the computational graph snapshot formed by the sgrs within the sliding window. The output stream is a sequence of sgrs denoted as  $\Re$. The time step $t$, is the computational time point used for controlling the sliding window and the timestamp $\tau$ is the sgr's time-label which follows the timestamp scale of an initial graph snapshot. We use five-point scale $[1,5]$ (similar to that of real-world streams) to generate weights. 

$G$ and $\Re$ are initiated with an initial graph snapshot $G_0 = (V_0, E_0)$. The sliding window's beginning border $W^b$ is set to the first timestamp in $G_0$ (lines~\ref{model:initstart}-\ref{model:initend} - Figure~\ref{fig:sgrow}.a). At each time step $t$, $m$ (a random number in $ [0,M)$, where $M$ is a parameter) new \hyperlink{sgr}{sgr}s $r^{l=1,..,m}$$=$$\langle v_i^l,v_j^l,\omega_{ij}^l,\tau \rangle$ with new vertices are created and added to $\Re$ and $G$ (Figure~\ref{fig:sgrow}.b,c). The shared timestamp is one plus the last timestamp in $G$ and the weights are random integers $\omega_{ij}^l$$\in$$[1,5]$ (line \ref{model:addmsgrs}). To connect these new isolated edges to the rest of sgrs, the following procedure is followed. A random integer $\omega$$\in$$[-1,5]$ is generated based on which, one of the following three operations is performed for each of the $m$ new sgrs in parallel (lines~\ref{model:w}-\ref{model:otherend}). Additions and removals happen as described in \S~\ref{subsec:addremoveedge}.

\begin{itemize}
    \item $\omega$$=$$-1$, the connection between $v_i^l$ and $v_j^l$ is removed from $\Re$ and $G$ (line~\ref{model:removeedge}). This edge removal introduces isolated vertices if the vertices do not acquire neighbors from the current or next batch ($v_i^{12}$, $v_j^{12}$, and $v_i^{13}$ in Figure~\ref{fig:sgrow}.b). Since real-world streaming graphs are dominated by edge additions, we give a lower probability to removals.

    \item $\omega$$=$$0$, nothing happens (line~\ref{model:noop}). This no-operation introduces a gap of inactivity between streaming records to form \hyperlink{burst}{burst}s and also introduces isolated edges. If the current sgr $r^l$ is not connected to the subsequent sgrs in current batch ($r^{l+1,..,m}$) or the following batches, it will remain isolated ($r^{22}$ in Figure~\ref{fig:sgrow}.c).

    \item $\omega$$>$$0$, a j-vertex $u_j^0$ in $G$ ($v_j^{01}$ in Figure~\ref{fig:sgrow}.b and $v_j^{14}$ in Figure~\ref{fig:sgrow}.c) is randomly selected  via \textit{Strength Preferential Selection} (\hyperlink{SPS}{SPS}) (\S~\ref{subsec:SPS}). Next, a \textit{Preferential Random Walk} (\hyperlink{PRW}{PRW}) starting from $u_j^0$ is performed in $G$ (\S~\ref{subsubsec:PRW}). The number of hops in PRW (i.e. walk length) is a random integer $L$ in the parameter range $[L_{min},L_{max}]$.  Using this PRW as a backbone, \hyperlink{burst}{burst}s of new sgrs between $r^l$ and the rest of sgrs in $G$ and $\Re$ are established (\S~\ref{subsec:burst}). 
    After adding the last edge  with weight $\omega'$, the timestamp $\tau$ is incremented as a function of $\omega'$: $\tau$$=$$\tau$$+$$|(\omega'-5)(\omega'-4)(\omega'-3)$$/$$2|$. This function creates a timestamp interval as soon as generation of a sgr with low weight (i.e. $\omega'\leq2$), therefore it helps characterize the burstiness of the stream (lines~\ref{model:otherstart}-\ref{model:otherend}). This is based on an observation in real-world streams that the last sgr in each burst has a low weight. It is noteworthy that there are two level of burstiness: (1) the bursts initiated by each $r^l$ as described here, and (2) the bursts created concurrently by $m$ sgrs $r^{l=1,..,m}$ which can be assumed as multiple generative sources.
\end{itemize}  
\begin{figure}[t]

\subfigure[$t$$=$$0$, $W^b=r^{01}.\tau$]{\includegraphics[width=0.38\textwidth]{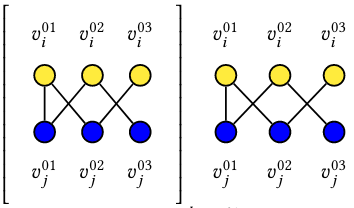}

}

\subfigure[$t$$=$$1$, $W^b=r^{01}.\tau+\beta$, $L$$=$$1$, $m$$=$$4$ ($r^{11}$  incurs no-op, $r^{12}$ \& $r^{13}$ incur removal, $r^{14}$ adds burst.)]{\includegraphics[width=0.78\textwidth]{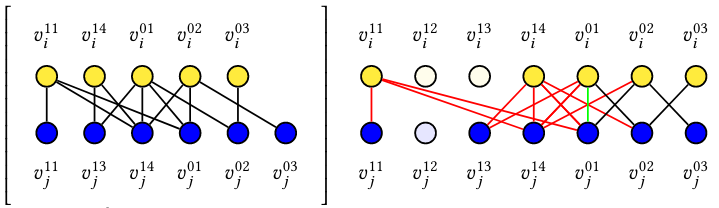}}

\subfigure[$t$$=$$2$, $W^b=r^{01}.\tau+2\beta$, $L$$=$$2$,  $m$$=$$2$ ($r^{21}$  adds burst, $r^{22}$ incurs no-op.)]{\includegraphics[width=0.9\textwidth]{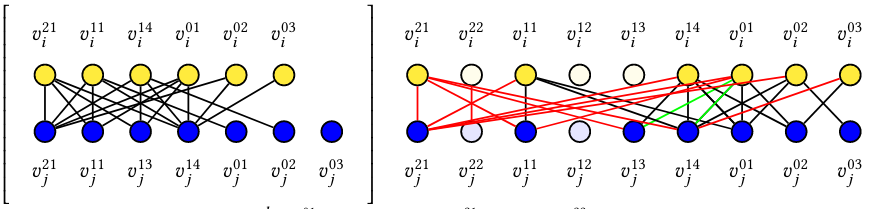}}

\caption{(left) The computational graph $G$ (right) and the stream $\Re$ at the end of time steps $t$$=$$0,1,2$ with $\beta$$=$$2$, $\rho$$=$$0.4$, $M$$=$$5$, and $L$$\in$$[1,2]$. New edges are in red and PRW edges are in green. (a) At $t$$=$$0$, the stream and the computational graph are initiated with $G_0$. (b) At $t$$=$$1$, a batch of 4 new sgrs ($r^{l=11,12,13,14}$) with same timestamps is created which incur no-op, removal, removal, and burst addition, respectively. $v_j^{13}$  acquires two neighbors after the removal of its connection to $v_i^{13}$. (c) At $t$$=$$2$, a batch of 2 new sgrs ($r^{l=21,22}$) with same timestamps is created which incur burst addition and no-op, respectively. $r^{22}$ does not acquire any neighbor and consequently is removed from $G$.  For simple illustration, we do not depict timestamps and weights in this figure and assume that the edges in $G_0$ expire from the window as their timestamps are below the $W^b$.}\label{fig:sgrow}
\end{figure}
The aforementioned procedure takes place for all  $m$ new sgrs in parallel\footnote{We implemented this algorithm in a single machine shared-memory architecture; a distributed version is doable but not considered in this paper.}.  After that, any newly added vertex with less than two neighbors is removed from  $G$ (line~\ref{model:removeisolated} -  $v_i^{22}$ and $v_j^{22}$ in Figure~\ref{fig:sgrow}.c). This removal of new isolated vertices is done to retain a connected computational graph, yet the old vertices whose adjacent edges are discarded by the window (as described below) become isolated in $G$ ($v_j^{03}$ in Figure~\ref{fig:sgrow}.c). Also, the stream may hold isolated vertices/edges ($v_i^{22}$-$v_j^{22}$, $v_i^{12}$, $v_j^{12}$, and $v_i^{13}$ in Figure~\ref{fig:sgrow}.c). Next, the timestamp is incremented by one (line~\ref{model:stamp}). The window slides as $W^b$ is incremented by $\beta$; the edges with timestamps out of the sliding window are removed from the graph after each $\beta$ time steps; and the time step is reset to zero (lines~\ref{model:slidestart}-\ref{model:slideend}). This sliding window mechanism is used to avoid pure preferential attachment to old vertices in global scale and create time-sensitive and local connections leading to emergence of young hubs (high degree vertices) to support the observations of real-world stream analysis in~\cite{sheshbolouki2021sgrapp}. The generation process happens continuously and $\Re$ streams-out as the sgrs are generated. This process can be restricted to continue until a desired number of sgrs $S$ are generated ($|\Re|=S$) and then return the stream $\Re$.
\begin{algorithm}[h]\caption{sGrow}\label{alg:model}
  \DontPrintSemicolon
    \KwData{$G_0$: an initial graph}
    \KwInput{$\rho$:  connection probability, $M$: maximum number of new edges, 
    $\beta$: slide parameter,  
    $[L_{min},L_{max}]$: range of PRW's length
    }
    \KwOutput{$\Re$,  sequence of streaming graph records}
    
     $G \gets G_0=(V_0,E_0)$ \label{model:initstart} \tcp*{computational graph}
    $\Re\gets E_0$   \tcp*{sequence of sgrs}
    $\tau \gets 1+$ last timestamp in $G_0$  \tcp*{timestamp}
    $t\gets0$           \tcp*{time step}
    $W^b\gets$  \label{model:initend} first timestamp in $G_0$  \tcp*{sliding window's beginning border} 
    
    \While{true}{
    $t \gets t+1$
   \\ Add $m\in[0,M)$ new sgrs $r^{l=1,..,m}$ to $\Re$ and $G$\label{model:addmsgrs}\\
    
    \For{\textbf{each} $r^{l=1,..,m}=\langle v_i^l, v_j^l, \omega_{ij}^l, \tau \rangle$}{
        $\omega\gets$ a random integer in $[-1,5]$\label{model:w}
        
        \Switch{$\omega$}{
            \Case{-1}{
                Remove $v_i^l-v_j^l$ from $\Re$ and $G$.\label{model:removeedge}
            }
            \Case{0}{
                No operation\label{model:noop}
            }
            \Other{
                $u_j^0$$\gets$\hyperlink{SPS}{SPS($V_j$)} \label{model:otherstart}
                \\$L\gets $ a random integer in $[L_{min},L_{max}]$
                \\$(PRW_i,PRW_j) \gets$ \hyperlink{PRW}{PRW($u_j^0$, $false$, $G$, $L$)}
                \\ \hyperlink{addBurst}{$addBurst(v_i^l,v_j^l,PRW_i,G,\Re,\rho)$}
                \\ \hyperlink{addBurst}{$addBurst(v_i^l,v_j^l,PRW_j,G,\Re,\rho)$}
                
                $\tau\gets\tau +|\frac{(\omega'-5)(\omega'-4)(\omega'-3)}{2}|$\label{model:otherend}
            }
            
        }
    }
    Remove any newly added vertex $v_i^l$ and $v_j^l$ with less than 2 neighbors from $G$\label{model:removeisolated}
    \\$\tau \gets \tau+1$ \label{model:stamp}
    \\$W^b\gets W^b+\beta$ \label{model:slidestart}
    
    \If{$t=\beta$}{
        Remove any edge with timestamp less than $W^b$ from $G$
        \\$t\gets 0$\label{model:slideend}
    }
    }
\end{algorithm}

\subsection{Data Structures}\label{subsec:addremoveedge}
We describe the object-oriented data structures and basic graph/stream operators used in the algorithms. A vertex is an object with three attributes: ID, Strength, and timestamp $\tau$. A new i(j)-vertex $v_i$($v_j$) is assigned an integer ID equal to the current number of i(j)-vertices, a strength initialized to zero, and a timestamp equal to that of the edge by which this vertex is added. We use dot notation to refer to attributes of an object, e.g. $v_i.ID$ denotes the ID of the vertex $v_i$. An edge/\hyperlink{sgr}{sgr} between vertices $v_i$ and $v_j$ is an object with four attributes: 
i-vertex (object $v_i$), j-vertex (object $v_j$), timestamp (integer $\tau$), and weight (integer $\omega$).  The connections of graph $G$ are stored by two hash-map data structures to map each vertex ID to the hash-set of its immediate neighbors: \textit{iNeighbors}$=\{ (v_j.ID:N_i(v_j))\}$ and \textit{jNeighbors}$=\{ (v_i.ID:N_j(v_i))\}$. We use hash-sets since we don't store multiple edges between two vertices in the computational graph and we use hash-map for fast access to the neighborhoods. The sgrs in $\Re$ are stored in a vector that retains the edges in the order of their additions which include out-of-order sgrs wrt timestamps as we explain in \S~\ref{subsec:burst}. 
When a new edge is added/removed to/from the graph or stream these data structures are updated accordingly and also the strengths of the vertices at the either ends of the edge are incremented/decremented by the weight of the edge.
\subsection{Strength Preferential Selection}\label{subsec:SPS}
We describe  function $SPS(V)$ given in Algorithm~\ref{alg:sps}, which is invoked in Algorithms~\ref{alg:model} and \ref{alg:prw}. This function  selects a random vertex in the set $V$ according to \textit{strength preferential probability} $\Lambda_v=\frac{v.strength}{\Sigma_{v'\in V}v'.strength}$~\cite{barrat2004modeling,barrat2004weighted}. Vertices in $V$ are concurrently added to a list with multiplicity equal to their strength (lines~\ref{sps:addvstart}-\ref{sps:addvend}). Next, the list is shuffled (line~\ref{sps:shuffle}) and a random element $v_0$ in the list is selected as the output vertex (lines~\ref{sps:selectvstart}-\ref{sps:selectvend}).
\subsection{Preferential Random Walk}\label{subsubsec:PRW}
We describe function $PRW(starter,isI,G,L)$ given in Algorithm~\ref{alg:prw}. This function performs a random walk  with $L$ hops on a graph $G$. It starts from a $starter$ vertex whose type determines a boolean flag $isI$ (true when $starter$ is an i-vertex and false otherwise). At each hop, a neighbor of the $starter$ vertex  ($u_i\in N_i(starter)$, $u_j\in N_j(starter)$) is selected via strength preferential selection (invoking $SPS(N_i(starter))$, $SPS(N_j(starter))$ -- Algorithm~\ref{alg:sps}). The selected neighbor ($u_i$, $u_j$) is added to a hash set of unique vertices ($PRW_i$, $PRW_j$) and is set as the $starter$ vertex. The starter flag is accordingly set and the hop counter is incremented by one (lines~\ref{prw:ihopstart}-\ref{prw:ihopend} and \ref{prw:jhopstart}-\ref{prw:jhopend}). The next hop starts with the last added vertex. When the current selected neighbor is already in the hash set, if it is the last element, the walk continues to another neighbor of that vertex (in depth traversal) and if it is one of the previously selected vertices other than the last element, the walk continues in breadth traversal. Therefore, the walk is a combination of BFS and DFS with random preferential selection. 

\subsection{Burst Addition}\label{subsec:burst}
We describe function $addBurst(v_i^l, v_j^l, PRW_i, G,\Re, \rho)$ given in Algorithm~\ref{alg:addburst}. This function adds \hyperlink{burst}{burst}s of sgrs to $G$ and $\Re$ based on the i-vertices in the $PRW_i$ and new vertices $v_i^l$ and $v_j^l$ given a probability parameter $\rho$ (Figure~\ref{fig:addburst}.a). We follow the same procedure to add bursts with respect to the j-vertices in the $PRW_j$. As illustrated in Figure~\ref{fig:addburst}, considering each i-vertex $u_i$ in the $PRW_i$, the following connections are established:
    \begin{itemize}
        \item[Step 1] An edge between $u_i$ and the newly added $v_j^l$ is formed with the timestamp of $v_j^l$ and a weight $\omega'\in [1,5]$ (lines~\ref{burst:addsgrstart}-\ref{burst:addsgrend} -- Figure~\ref{fig:addburst}.b). This edge connects the edge $v_i^l$-$v_j^l$ to the graph and also leads to emergence of $N_1+N_2$ caterpillars (solid 3-paths in Figure~\ref{fig:addburst}.e,f ), where $N_1$ and $N_2$ are the number of 1-hop (immediate) and 2-hop neighbors of $u_i$, respectively.  
    
        \item[Step 2] With probability $\rho$, an edge between $u_i$ and an existing j-vertex $z_j$, selected uniformly at random, is formed with timestamp $Min(u_i.\tau,z_j.\tau)$ and a weight $\omega'\in [1,5]$ (lines~\ref{burst:randomstart}-\ref{burst:randomend} -- Figure~\ref{fig:addburst}.c). 
        Using a timestamp other than the current timestamp ($v_j^l.\tau$) introduces out-of-order sgrs (late arrival) since this sgr has a timestamp less than ($v_j^l.\tau$). It also helps balance the burst sizes since the current timestamp is not assigned to all edges in the current time step. This probabilistic edge leads to converting the caterpillars between $u_i$ and $z_j$ into butterflies at the generation time of either vertices (closed 4-path in Figure~\ref{fig:addburst}.g).
    
        \item[Step 3] With probability $\rho$, an edge between the newly added $v_i^l$ and each $u_i$'s immediate j-vertex neighbor $n_j$ is concurrently formed with $n_j$'s timestamp $n_j.\tau$ and a  weight $\omega'\in [1,5]$ (lines~\ref{burst:copystart}-\ref{burst:copyend} -- Figure~\ref{fig:addburst}.d). In other words, each of the adjacent links of $u_i$ is copied with probability $\rho$. Since $n_j.\tau <v_i^l.\tau$, out-of-order sgrs join previous burst of sgrs with same timestamp (including the sgr incident to $n_j$s). These probabilistic edges lead to converting the $N_1$ caterpillars emerged in Step 1 into butterflies (closed 4-path in Figure~\ref{fig:addburst}.e). We run this step for generating streams with high number of sgrs per burst.
    \end{itemize}
    
    
\begin{figure}[t]\centering
\subfigure[]{\includegraphics[width=0.28\textwidth]{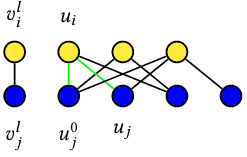}}%
\hspace{5mm}\subfigure[Step 1]{\includegraphics[width=0.28\textwidth]{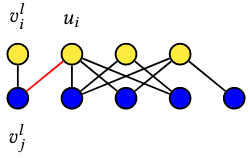}}
\hspace{5mm}\subfigure[Step 2]{\includegraphics[width=0.28\textwidth]{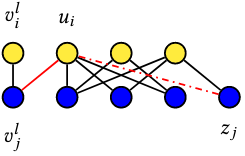}}

\subfigure[Step 3]{\includegraphics[width=0.28\textwidth]{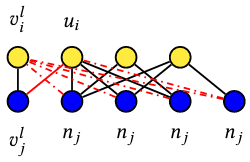}}
\hspace{5mm}\subfigure[]{\includegraphics[width=0.099\textwidth]{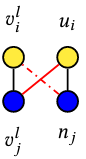}}
\hspace{5mm}\subfigure[]{\includegraphics[width=0.09\textwidth]{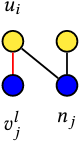}}
\hspace{5mm}\subfigure[]{\includegraphics[width=0.094\textwidth]{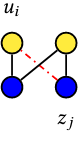}}
\caption{Four-vertex graphlets (e, f, g) and schematic burst addition steps (b, c, d) based on an i-vertex $u_i$ in the PRW starting from $u_j^0$ designated by green lines (a). New edges are colored in red and dashed lines denote probabilistic connections.}\label{fig:addburst}
\end{figure}
\begin{algorithm}[h]\caption{Strength Preferential Selection}
\label{alg:sps}
  \DontPrintSemicolon
   \SetKwProg{Fn}{Function}{}{}
   \Fn{\hypertarget{SPS}{SPS}($V$)}{
    list$\gets \emptyset$
    
    \For{\textbf{each} $v\in V$}{ 
        $S\gets v.strength$\label{sps:addvstart}
        
        \For{k=1 to S}{
            list.add($v$)\label{sps:addvend}
        }
    }
    shuffle(list)\label{sps:shuffle}
    \\index$\gets$ a random integer in $[0,size(list))$\label{sps:selectvstart}
    \\$v_o\gets list[index]$\label{sps:selectvend}
    \\Return $v_o$ 
}
\end{algorithm}
\begin{algorithm}[h]\caption{Preferential Random Walk}
\label{alg:prw}
  \DontPrintSemicolon
   \SetKwProg{Fn}{Function}{}{}
   \Fn{\hypertarget{PRW}{PRW}($starter, isI, G, L$)}{
    
    $h\gets 0$   \tcp*{hop counter}
    $(PRW_i,PRW_j) \gets \emptyset$  \tcp*{two hash-sets of unique i-vertices and j-vertices of the walk }
    
    \While{$h<L$}{
        \If{$isI$}{
            $u_j\gets$ \hyperlink{SPS}{$SPS(N_j(starter))$}\label{prw:ihopstart} 
            \\Add $u_j$ to $PRW_j$
            \\$starter\gets u_j$
            \\$isI\gets false$
            \\$h\gets h+1$\label{prw:ihopend}
        }
        \Else{
            $u_i\gets$ \hyperlink{SPS}{$SPS(N_i(starter))$}\label{prw:jhopstart} 
            \\Add $u_i$ to $PRW_i$
            \\$starter\gets u_i$
            \\$isI\gets true$
            \\$h\gets h+1$\label{prw:jhopend}
        }
    }
    Return $(PRW_i,PRW_j)$
}
\end{algorithm}
\begin{algorithm}[h]\caption{Add Burst}
\label{alg:addburst}
  \DontPrintSemicolon
   \SetKwProg{Fn}{Function}{}{}
   \Fn{\hypertarget{addBurst}{addBurst}$(v_i^l, v_j^l, PRW_i, G, \Re, \rho$)}{
   \For{\textbf{each} $u_i\in PRW_i$}{
                    $\omega' \gets$ a random integer in $[1,5]$\label{burst:addsgrstart}
                    \\Add a new sgr $\langle u_i, v_j^l, \omega', v_j^l.\tau  \rangle$ to $\Re$ and $G$\label{burst:addsgrend}\\
                    \If{coin($\rho$) is Head}{\label{burst:randomstart}
                        $z_j\gets$ Select a random j-vertex 
                        \\$\omega' \gets$ a random integer in $[1,5]$
                        \\Add a new sgr $\langle u_i, z_j, \omega', Min(u_i.\tau,z_j.\tau)  \rangle$ to $\Re$ and $G$ \label{burst:randomend}\\
                    }
                    \For(\tcp*[f]{in highly bursty streams} ){\textbf{each} $n_j\in N_j(u_i)$}{ \label{burst:copystart}
                        \If{coin($\rho$) is Head}{
                            $\omega' \gets$ a random integer in $[1,5]$
                            \\Add a new sgr $\langle v_i^l, n_j, \omega', n_j.\tau  \rangle$ to $\Re$ and $G$\label{burst:copyend}
                        }
                     
                    }
                }
}
\end{algorithm}
\section{Evaluations}\label{sec:evaluations}
In this section, we analyze the computational complexity of \emph{sGrow} and evaluate its performance from three perspectives:
\begin{itemize}
    \item Computational Complexity (\S~\ref{subsec:complexity}) -- We theoretically analyze the computational complexity of \emph{sGrow}.
    \item Pattern Reproducing (\S~\ref{subsec:patternreproducing}) --  We examine the ability of \emph{sGrow} to reproduce the realistic patterns under different levels of burstiness, initial graph snapshots $G_0$, and butterfly counts. We create streaming graphs with a prefix of $1000$ edges from real-world streams ($G_0$) and the rest of the stream is synthesized via \emph{sGrow} with various parameter configurations and different number of butterflies. We refer to the generated streams as S-\{$G_0$-name\}.
    \item Stress Testing (\S~\ref{subsec:stresstesting}) -- We examine the impact of introduced parameterized techniques on the effectiveness, efficiency, and burstiness of the generated stream. We also provide a reference guide for setting the parameters. We use a prefix of $1000$ edges from Amazon stream and synthesize the rest of the stream via \emph{sGrow} with different parameter configurations.
\end{itemize}  
Our experiments as well as the analysis in previous sections are conducted on a machine with $15.6$ GB native memory and Intel Core $i7-6770HQ CPU @ 2.60GHz * 8$ processor. We have implemented all algorithms in Java (OpenJDK version $11.0.11$). 

\subsection{Computational Complexity}\label{subsec:complexity}
In the following, we show that high degree/strength vertices and PRW hop count determine the computational expenses of \emph{sGrow}.
 \begin{theorem}
The worst case computational complexity of \emph{sGrow} in each window with graph $G$$=$$(V_i\cup V_j,E)$ and PRW parameter $L$ is  $\mathcal{O}(S_{max}+L(N_{max}^j+N_{max}^i))$, where $S_{max}$ is the maximum strength in $G$ and $N_{max}^i$ and $N_{max}^j$ are the maximum number of i-neighbors and j-neighbors for vertices in $V_i$ and $V_j$.
 \end{theorem}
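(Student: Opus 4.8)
The plan is to bound the work of one iteration of the main \emph{while} loop of Algorithm~\ref{alg:model}, expressed in terms of the graph $G=(V_i\cup V_j,E)$ currently held in the window. The loop body (i) creates $m<M$ new sgrs and runs, for each, one of three branches keyed on a random integer $\omega$; (ii) performs an $O(m)$ sweep deleting newly created degree-$<2$ vertices; and (iii) advances the timestamp and slides $W^b$. Since $M$ and the five-point weight scale are fixed constants, and the stale-edge purge fired every $\beta$ steps is window maintenance that I would account for separately, it suffices to bound the cost of a single per-sgr branch and multiply by the constant $M$. The branches $\omega=-1$ (one edge deletion) and $\omega=0$ (no-op) are $O(1)$ with the hash-map/hash-set structures of \S~\ref{subsec:addremoveedge}, so everything reduces to the $\omega>0$ branch: one call to SPS on $V_j$, one call to PRW with walk length $L$, and two calls to addBurst.

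I would bound these three pieces in turn. \emph{SPS} (Algorithm~\ref{alg:sps}): each $v\in V$ inserts itself $v.strength$ times, and --- reading ``concurrently added'' literally --- these insertions run in parallel across vertices, so the multiplicity list is built in time $O(\max_{v\in V}v.strength)\le O(S_{max})$; if one further treats the shuffle-plus-random-index step as a single weighted draw (no genuine permutation needed), selecting the output is also within $O(S_{max})$. Hence the standalone call $SPS(V_j)$ costs $O(S_{max})$. \emph{PRW} (Algorithm~\ref{alg:prw}): it does exactly $L$ hops, and hop $h$ runs SPS on one neighborhood, $N_j(starter)$ or $N_i(starter)$, of size at most $N_{max}^j$ or $N_{max}^i$; since every weight lies in $[1,5]$, each neighbor's strength is within a constant factor of its own neighbor count, so by the same per-vertex-parallel argument each hop costs $O(N_{max}^i+N_{max}^j)$, and the breadth/depth fallback on an already-visited vertex only multiplies this by a constant; thus PRW costs $O(L(N_{max}^i+N_{max}^j))$. \emph{addBurst} (Algorithm~\ref{alg:addburst}): the outer loop ranges over $PRW_i$, which holds at most $L$ vertices; for each $u_i$, Steps~1--2 insert $O(1)$ sgrs, while Step~3 copies the $\le N_{max}^j$ incident edges of $u_i$; so the call on $PRW_i$ costs $O(L\,N_{max}^j)$ and, symmetrically, the call on $PRW_j$ costs $O(L\,N_{max}^i)$.

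Summing the three gives $O(S_{max}+L(N_{max}^i+N_{max}^j))$ for one $\omega>0$ per-sgr task; multiplying by the constant $M$ and adding the $O(M)$ cleanup sweep does not change the order, which is the stated bound.

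The step I expect to be the main obstacle is precisely the accounting for SPS. A literal sequential reading of Algorithm~\ref{alg:sps} materializes a list of size $\sum_{v\in V}v.strength$ --- for $V=V_j$ this is the total edge weight, vastly more than $S_{max}$ --- so the $O(S_{max})$ figure hinges on two modelling choices: that the per-vertex insertions are genuinely concurrent (so only the largest single strength, $\le S_{max}$, lies on the critical path) and that ``shuffle then take a random index'' is realized as an $O(1)$ weighted draw rather than an $O(\text{list size})$ permutation. The same two choices are what let the per-hop SPS inside PRW collapse onto the $L(N_{max}^i+N_{max}^j)$ term after converting strengths to neighbor counts via the bounded weight scale. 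A minor secondary point to handle cleanly is to declare that the every-$\beta$-steps stale-edge removal (an $O(|E|)$ deletion) is treated as amortized window maintenance and hence excluded from ``the complexity in each window,'' so it does not surface in the bound.
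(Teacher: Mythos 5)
Your proposal is correct and follows essentially the same route as the paper: the per-window cost is reduced to $\mathcal{O}(\mathrm{SPS})+\mathcal{O}(\mathrm{PRW})+\mathcal{O}(2\,\mathrm{addBurst})$ with all other operations (initializations, sgr addition/removal, window sliding) taken as unit cost, the parallel outer loop of SPS yielding the $S_{max}$ term and the $L$-hop walk yielding the $L(N_{max}^i+N_{max}^j)$ term. The only divergence is bookkeeping: the paper charges addBurst as $\mathcal{O}(1)$ by invoking parallel loops (and simply declares the shuffle a unit cost), whereas you count it sequentially as $\mathcal{O}(L(N_{max}^i+N_{max}^j))$ and justify the per-hop SPS cost via the bounded weight scale --- both accountings land on the stated bound.
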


 \begin{proof}
\emph{sGrow}'s computations at each window are dominated by burst additions as the initializations, sgr addition/removals, and window sliding take one unit of computation. The worst case computational complexity of burst additions is the following.

 \begin{equation}
     \mathcal{O}(\hyperlink{SPS}{SPS(.)})+\mathcal{O}(\hyperlink{PRW}{PRW(.)})+\mathcal{O}(2\hyperlink{addBurst}{addburst(.)})
 \end{equation}

 Let us assume that the maximum strength in $G$$=$$(V_i\cup V_j,E)$ is $S_{max}$, $L$ is the parameter for the PRW hop count, and the maximum number of i-neighbors and j-neighbors for vertices in $V_i$ and $V_j$ are $N_{max}^i$ and $N_{max}^j$. Accordingly, we would have the following complexities: $\mathcal{O}(\hyperlink{SPS}{SPS(V_j)})=S_{max}$ since the value assignments and corresponding operators, and the list shuffling take one unit of computations, and the outer for loop is parallel and the inner loop sequentially performs $S_{max}$ computational units. We have  $\mathcal{O}(\hyperlink{PRW}{PRW(u_j^0,false,G,L)})=(L/2+1)N_{max}^j+(L/2)N_{max}^i$ since $|PRW_j|=L/2+1$ and $|PRW_i|=L/2$. Also, $\mathcal{O}(\hyperlink{addBurst}{addburst(v_i^l,v_j^l,PRW_j,G,\Re,\rho)})= \mathcal{O}(\hyperlink{addBurst}{addburst(v_i^l,v_j^l,PRW_i,G,\Re,\rho)})=\mathcal{O}(1)$ since the value assignments, and probabilistic connections are done in $\mathcal{O}(1)$ and Step 3 is performed via a parallel loop. Therefore, the total complexity of burst additions would be $S_{max}+(L/2+1)N_{max}^j+(L/2)N_{max}^i+\mathcal{O}(1)$, i.e. high degree/strength vertices and PRW hop count determine the computational cost.
 \end{proof}

\subsection{Pattern Reproducing}\label{subsec:patternreproducing}
As provided in Table~\ref{tab:graphs}, Epinions, Amazon, and Ciao are bursty streams with average burst sizes of $b=27282$, $1753.7$, and $14.8$; Yahoo and ML1m are also bursty but with lower values $b=2.4$ and $2.2$;  ML100k and WikiLens have the lowest burstiness with $b=2$ and $1$, respectively. According to these burstiness profiles, we set the parameters to control the temporal distribution of sgrs. 
That is, to simulate a stream with high burstiness, we set $M$ and $[L_{min},L_{max}]$ to high values to increase the probability of creating high number of new edges at each timestep ($M$) and increase the burst size by generating backbone walks with more vertices ($[L_{min},L_{max}]$). 
To simulate a stream with low burstiness (S-ML100k and S-WikiLens), we do not perform the neighborhood copying (lines~\ref{burst:copystart}-\ref{burst:copyend} in Algorithm~\ref{alg:addburst} -- step 3 in \S~\ref{subsec:burst}). The default value of $\rho$ is $0.3$ and we further adjust it by decreasing (increasing) to push the burst size towards lower (higher) values. 
We set $\beta=5$ in all streams. The exact value of parameters are given in Table~\ref{tab:parameters}. All of the reported results in Figures~\ref{fig:butterflycountmodel}-\ref{fig:rssynthetic} are based on the same stream.
\begin{table*}[h]\caption{ Parameters}
\small \centering
    \begin{tabular}{p{1.5cm}p{0.5cm}p{0.5cm}p{0.5cm}p{1cm}}
          & $\rho$ & $M$ & $\beta$ & $[L_{min},L_{max}]$\\\hline
         S-Ciao & $0.3$ & $100$ & $5$ & $[2,3]$\\
         S-Epinions & $0.2$ & $300$ & $5$ & $[3,6]$ \\
         S-WikiLens & $0.4$ & $100$  & $5$ & $[1,3]$\\
         S-ML100k & $0.3$ & $100$ & $5$ & $[1,3]$\\
         S-ML1m & $0.3$ & $10$  & $5$ & $[1,2]$\\
         S-Amazon & $0.3$ & $50$ & $5$ & $[1,2]$\\
         S-yahoo & $0.3$ & $50$ & $5$ & $[2,3]$
    \end{tabular}\label{tab:parameters}
\end{table*}

In the following, we compare the synthetic streams generated by \emph{sGrow} and real-world streams with respect to the patterns observed in real-world streams qualitatively by checking whether the patterns hold and quantitatively by checking the error of $r^s$ and $F$. We observe that \emph{sGrow} streams obey the scale-invariant strength assortativity of butterflies since all the synthetic streams preserve the realistic mixing patterns \hyperlink{\th$_1$}{\th$_1$}, \hyperlink{\th$_2$}{\th$_2$}, and \hyperlink{\th$_3$}{\th$_3$} regardless of the initial graph snapshot, butterfly count, and burstiness level:
\begin{itemize}
    \item[\hyperlink{\th$_1$}{\th$_1$}] The number of butterflies grows over time (Figure~\ref{fig:butterflycountmodel}) with an average butterfly rate higher than $1$ in all the streams (Table~\ref{tab:butterflyrateModel}) indicating that the butterfly count grows super-linearly with respect to the number of edges .
    \item[\hyperlink{\th$_2$}{\th$_2$}] The mean, relative standard deviation, and tail skewness/heaviness of $Pr(S)$ increases over time (Figure~\ref{fig:sstatssynthetci}) showing that butterfly vertex strengths diversify over time.
    \item[\hyperlink{\th$_3$}{\th$_3$}] The synchronous evolution of mean and standard deviation accompanied by increasing skewness/heaviness of the tail of $Pr(\delta)$ (Figure~\ref{fig:statssynthetic}) plus the stable values of $F$ elements over time (Figure~\ref{fig:Fsyntheticgraphs}) demonstrate that $Pr(\delta)$ is fixed-shaped yet growing. Moreover, the strength assortativity localization factor changes trivially over time and is positive (Figure~\ref{fig:rssynthetic}) with $F_1$ values between $0.5$ and $0.7$ indicating the steady strength-assortativity of butterflies over time.
\end{itemize}
Table~\ref{tab:Ferrors} presents the mean absolute error of $r^s$ and $F$ elements in \emph{sGrow} streams with respect to that of real-world streams over the sequential burst-based graph snapshots (comparing Figures ~\ref{fig:randrs} and \ref{fig:rssynthetic}). We observe that in all synthetic streams the error is between $0.01$ and $0.1$. This indicates that \emph{sGrow} reproduces the similar strength difference distribution and strength assortativity of butterfly edges as in the real-world streams.
\begin{figure*}[h]
    \centering
  \subfigure[S-Ciao]{\includegraphics[width=0.24\textwidth]{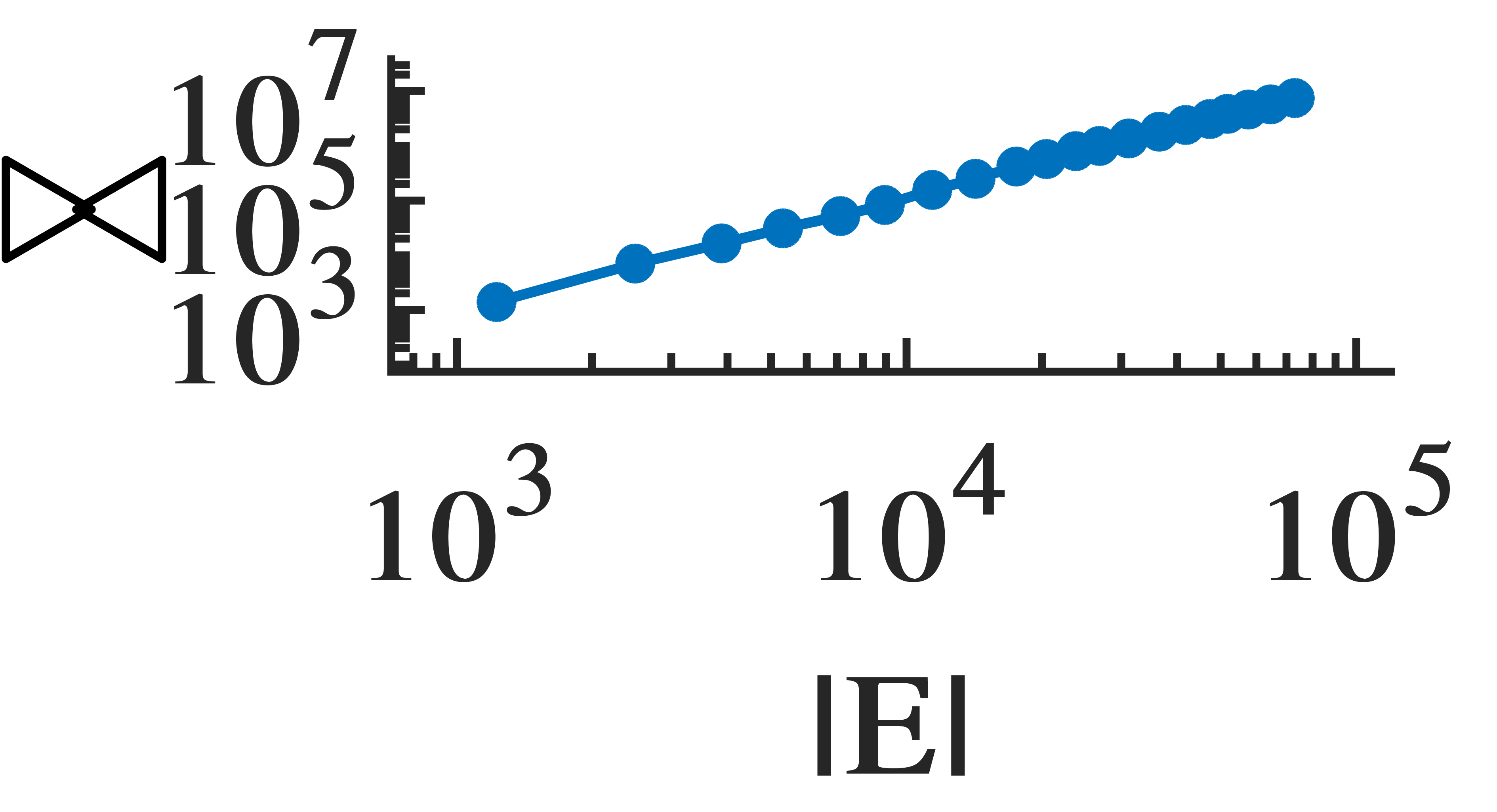}}
  \subfigure[S-Epinions]{\includegraphics[width=0.24\textwidth]{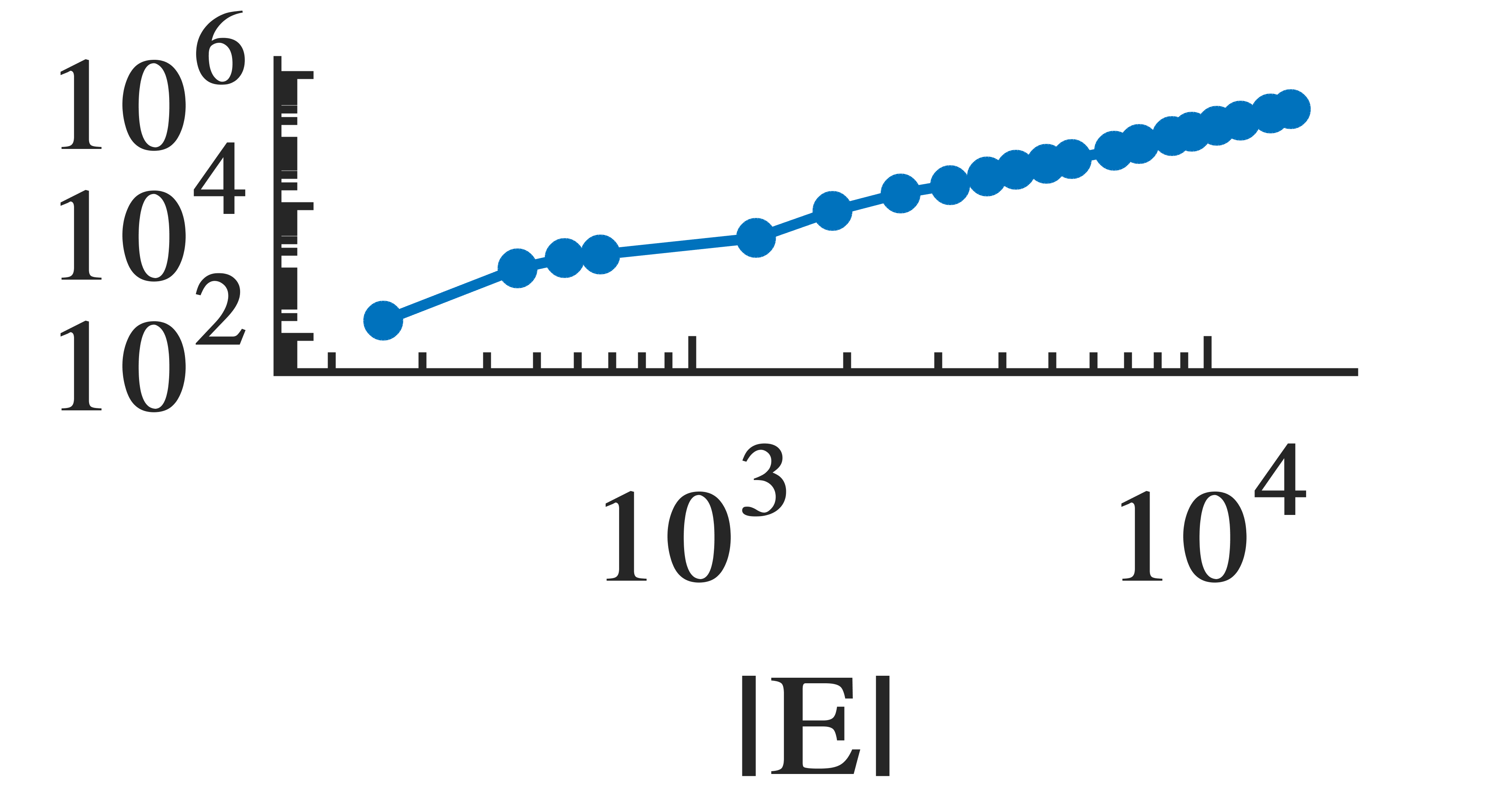}}
  \subfigure[S-WikiLens]{\includegraphics[width=0.24\textwidth]{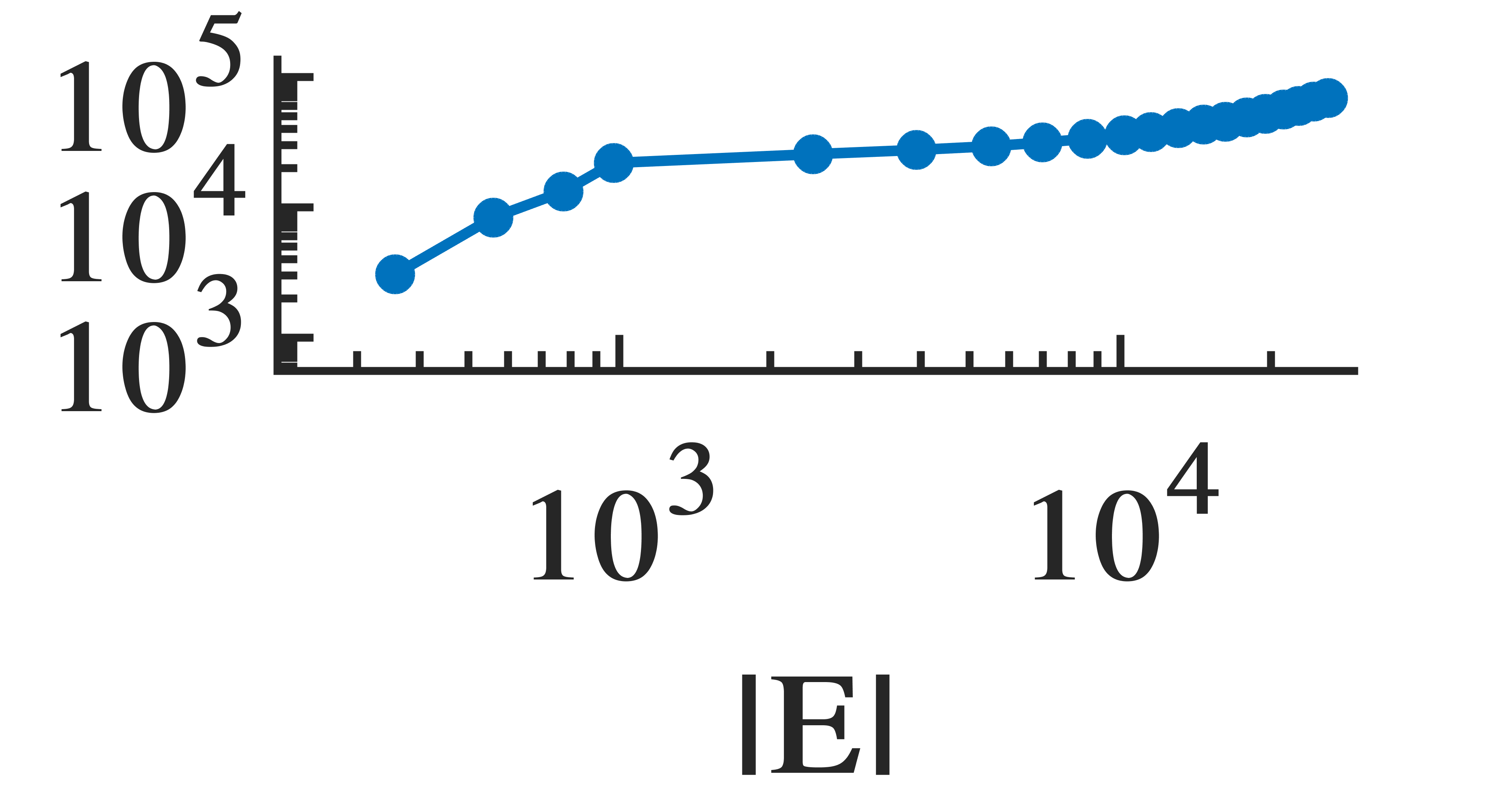}}
  \subfigure[S-ML100k]{\includegraphics[width=0.24\textwidth]{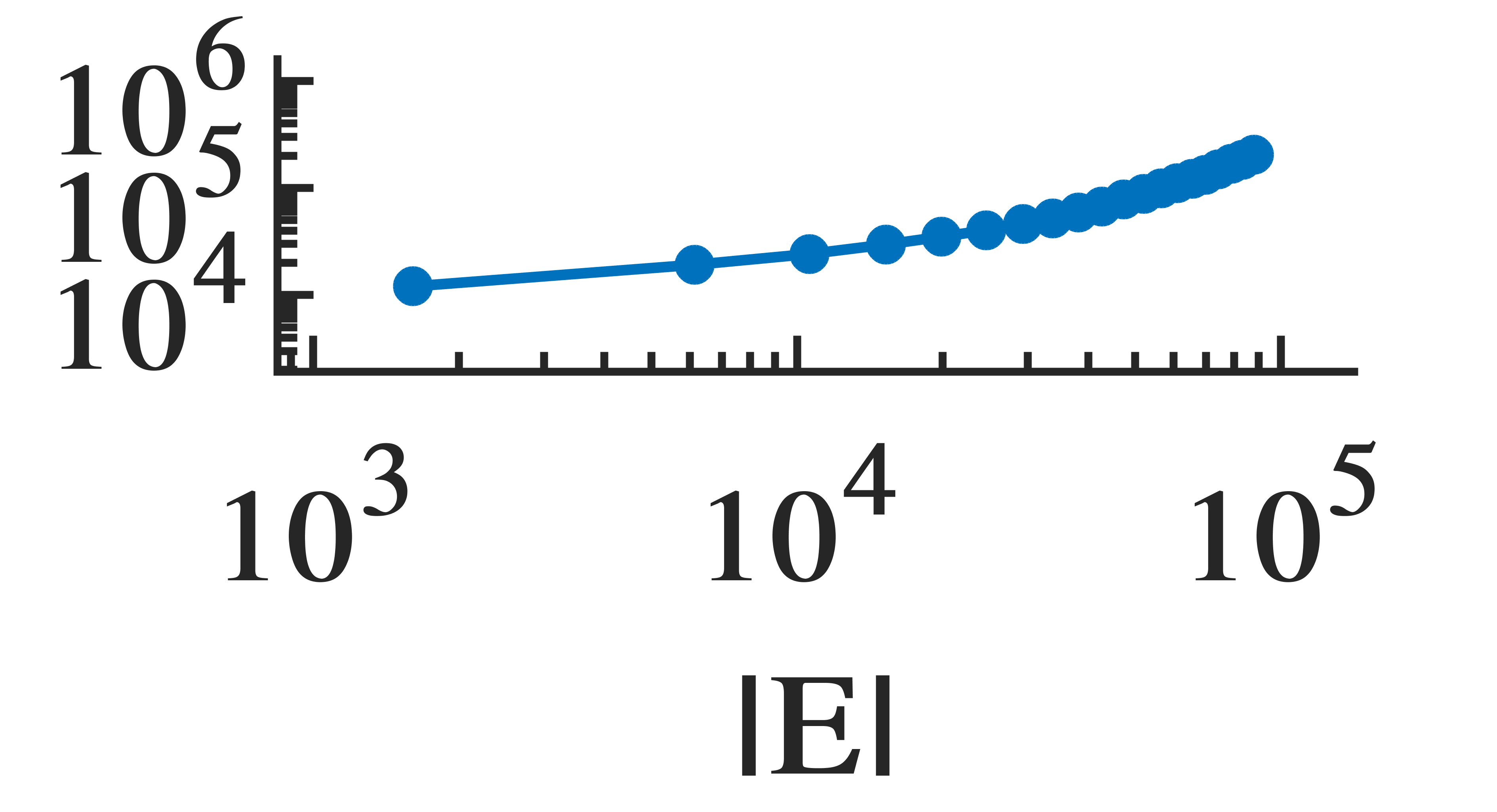}}
  \subfigure[S-ML1m]{\includegraphics[width=0.24\textwidth]{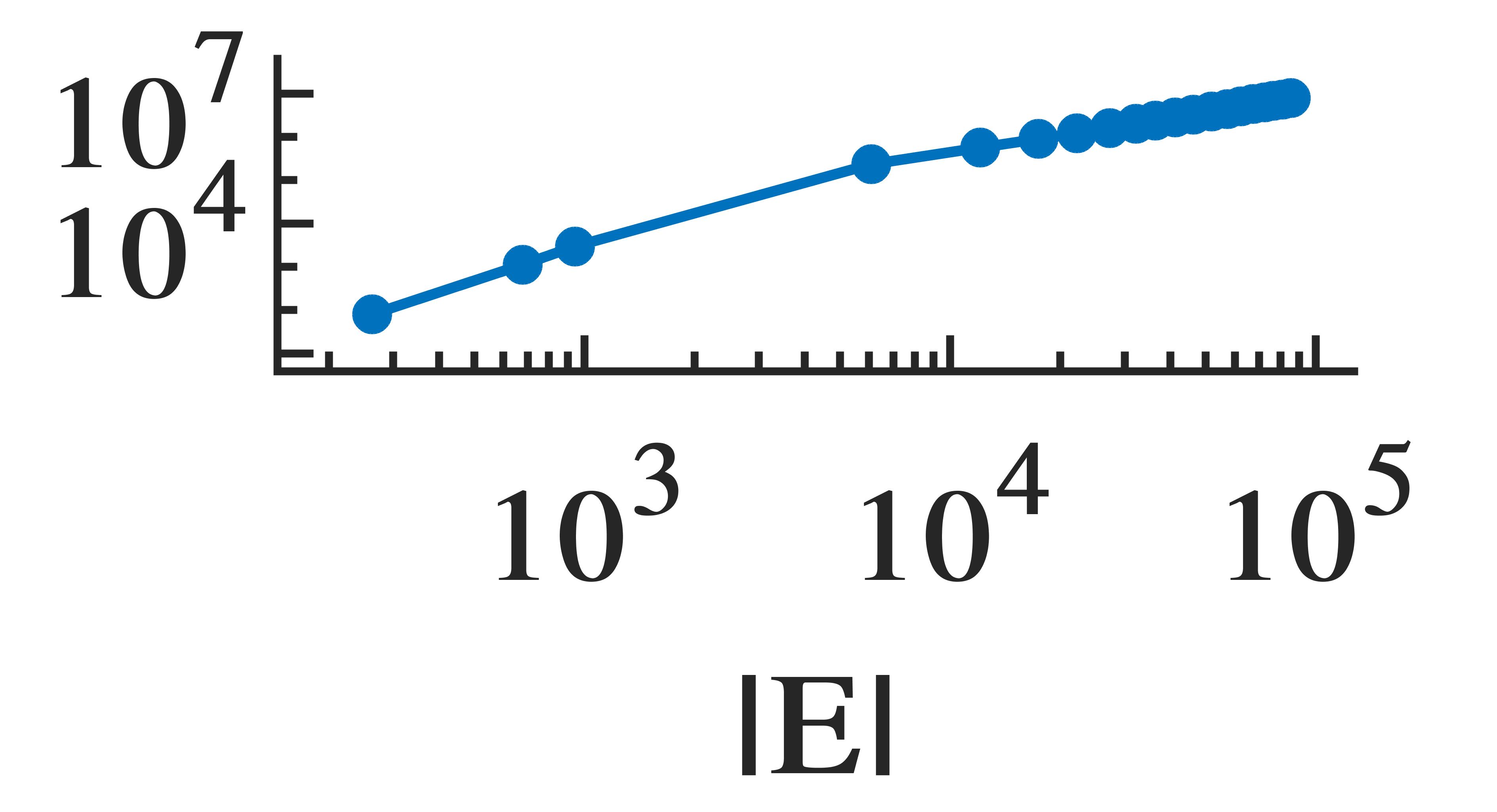}}
  \subfigure[S-Amazon]{\includegraphics[width=0.24\textwidth]{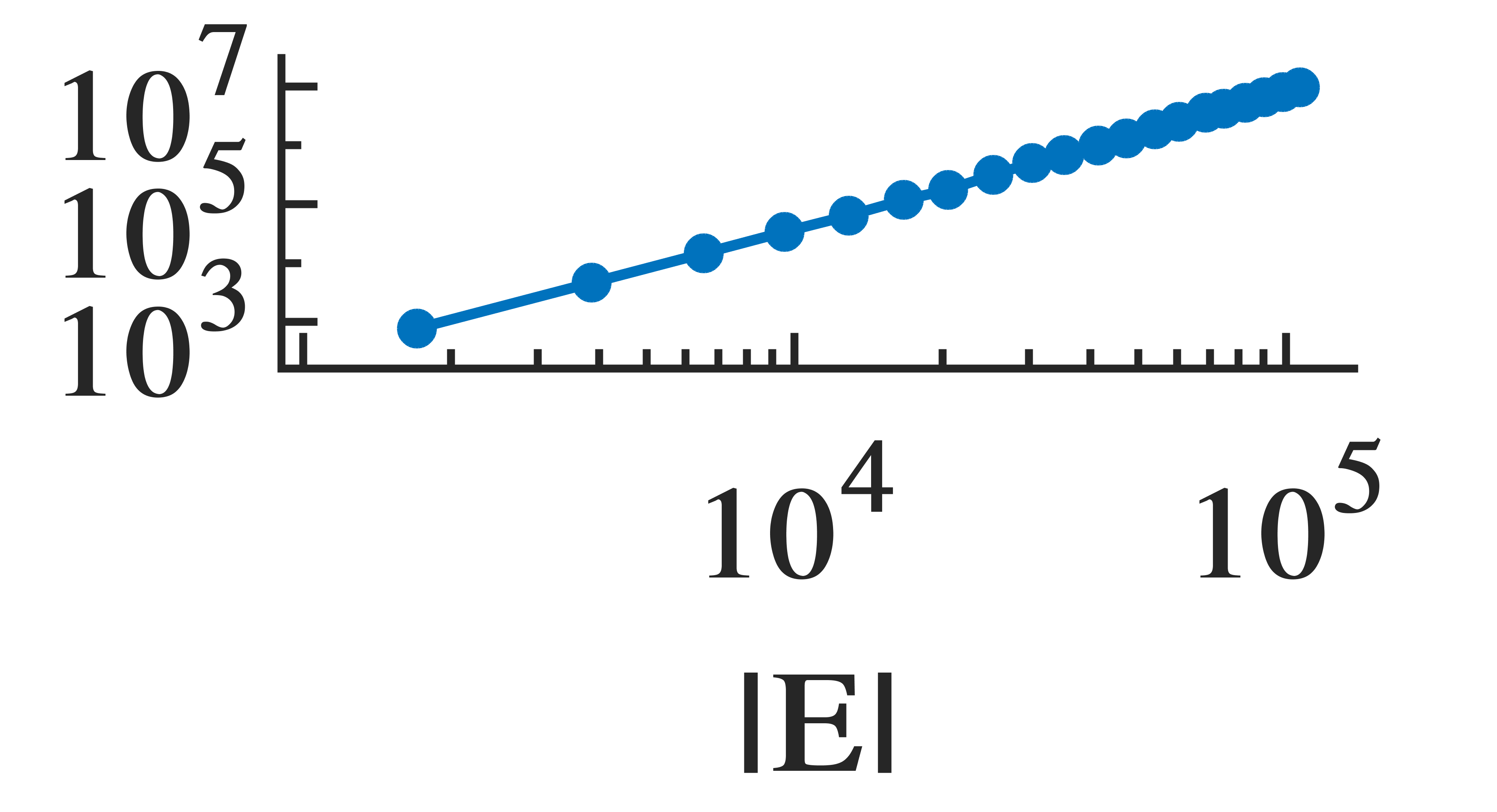}}
  \subfigure[S-Yahoo]{\includegraphics[width=0.24\textwidth]{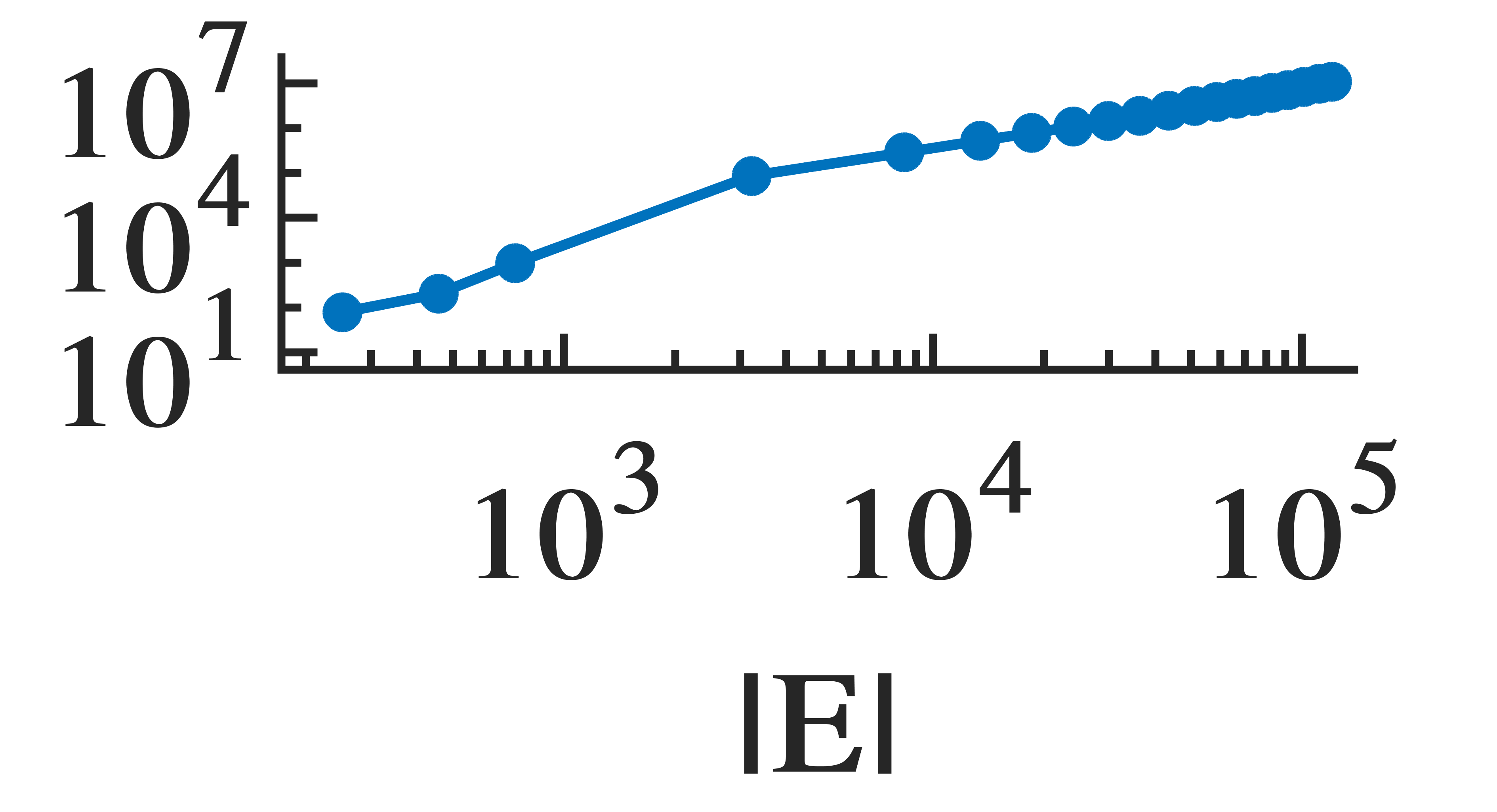}}
    \caption{The number of butterflies $\bowtie$ versus the number of edges $|E|$.} 
    \label{fig:butterflycountmodel}
\end{figure*}
\begin{figure*}[]
    \centering
    \subfigure[$S_i$, S-Ciao]{\includegraphics[width=0.24\textwidth]{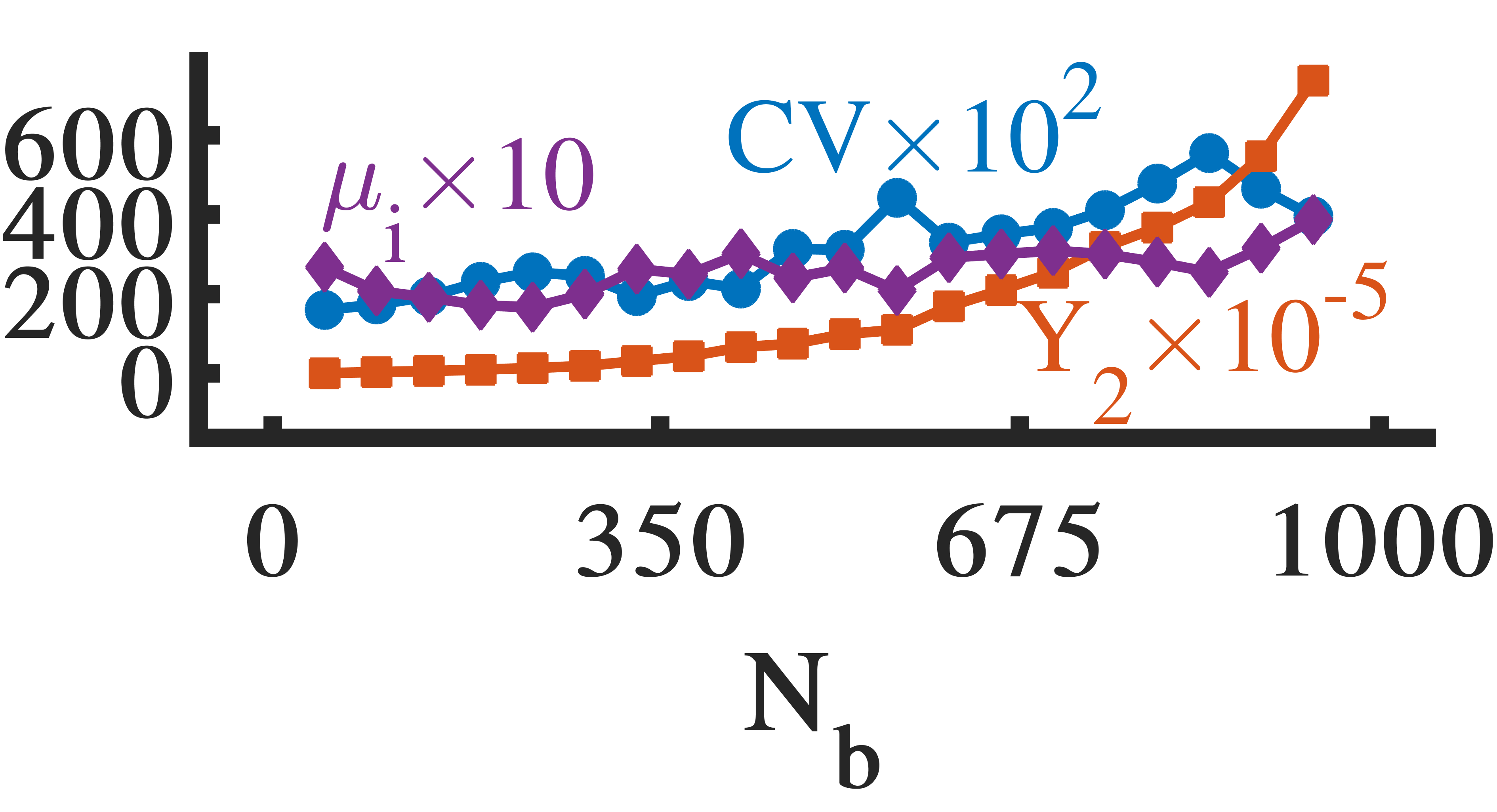}}
    \subfigure[$S_i$, S-Epinions]{\includegraphics[width=0.24\textwidth]{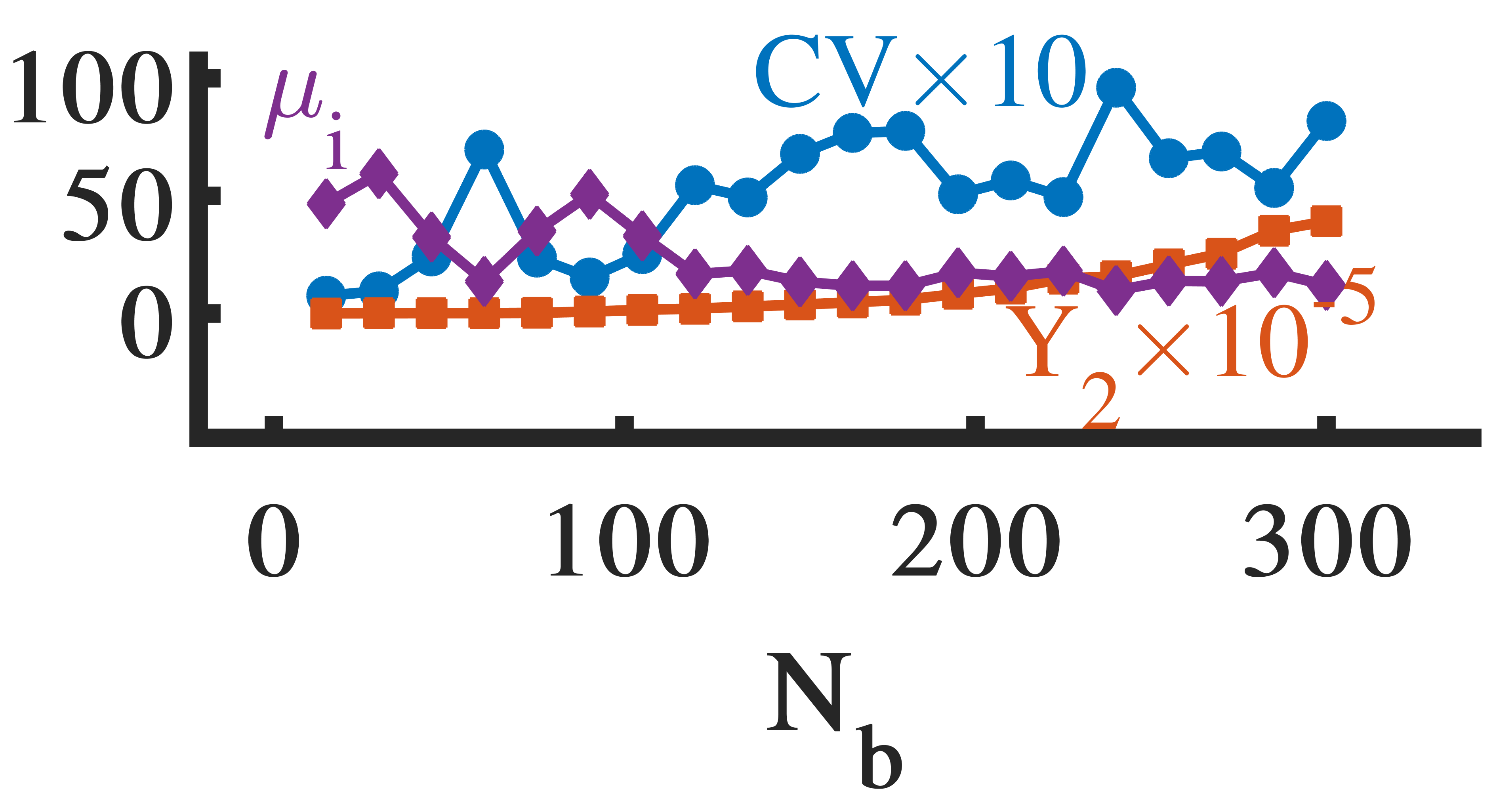}}
    \subfigure[$S_i$, S-WikiLens]{\includegraphics[width=0.24\textwidth]{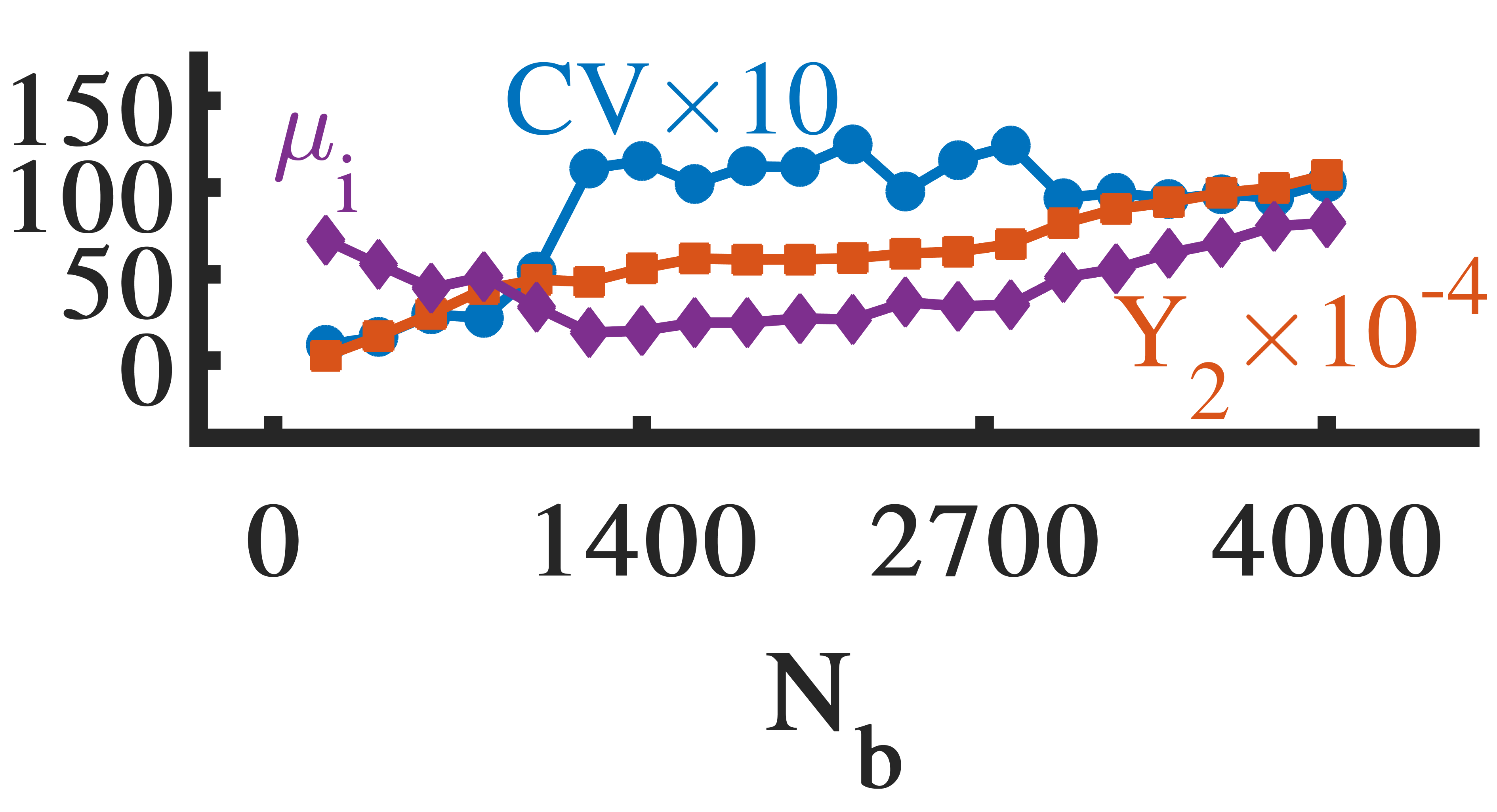}}
    \subfigure[$S_i$, S-ML100k]{\includegraphics[width=0.24\textwidth]{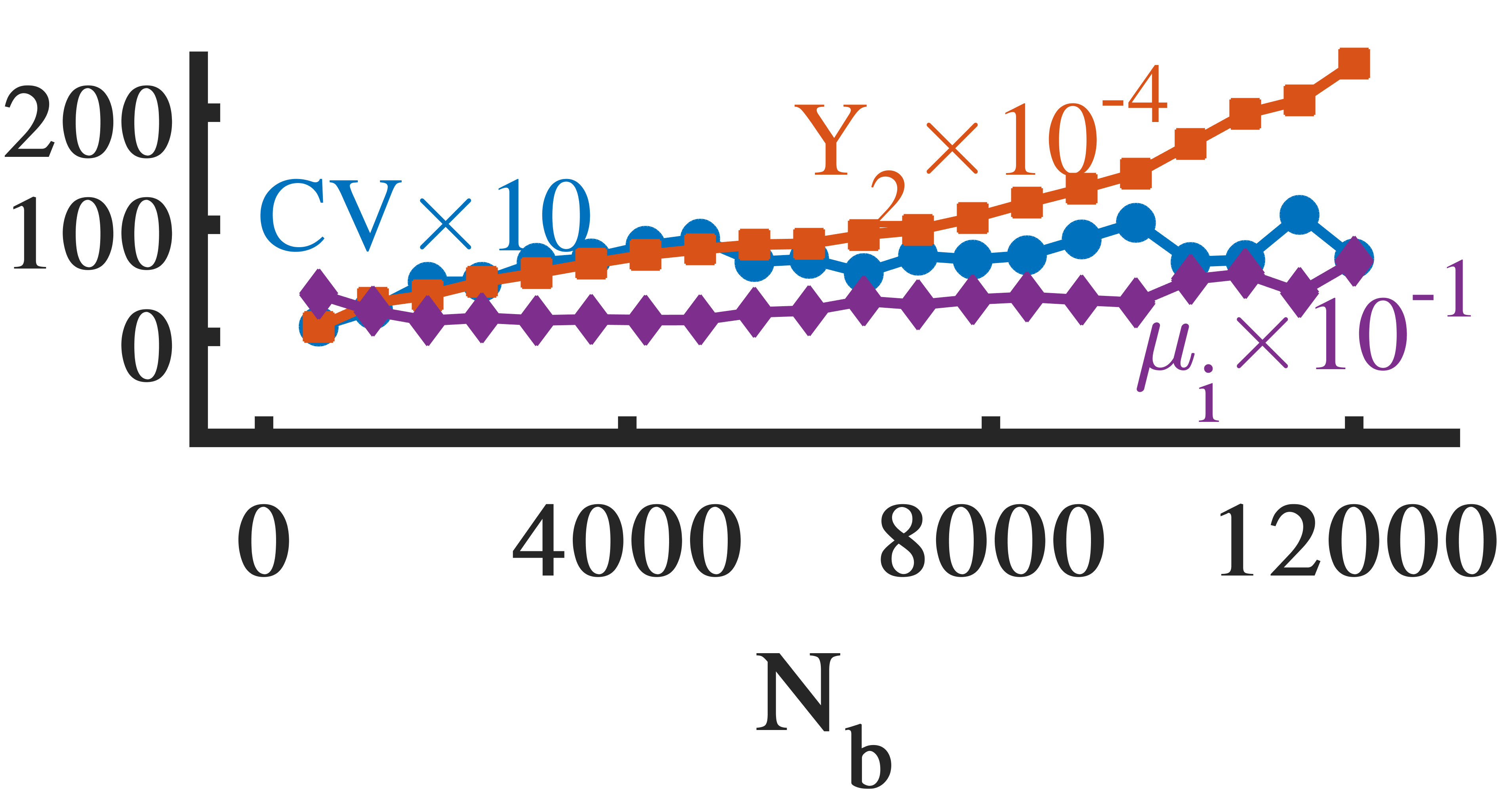}}
    \subfigure[$S_i$, S-ML1m]{\includegraphics[width=0.24\textwidth]{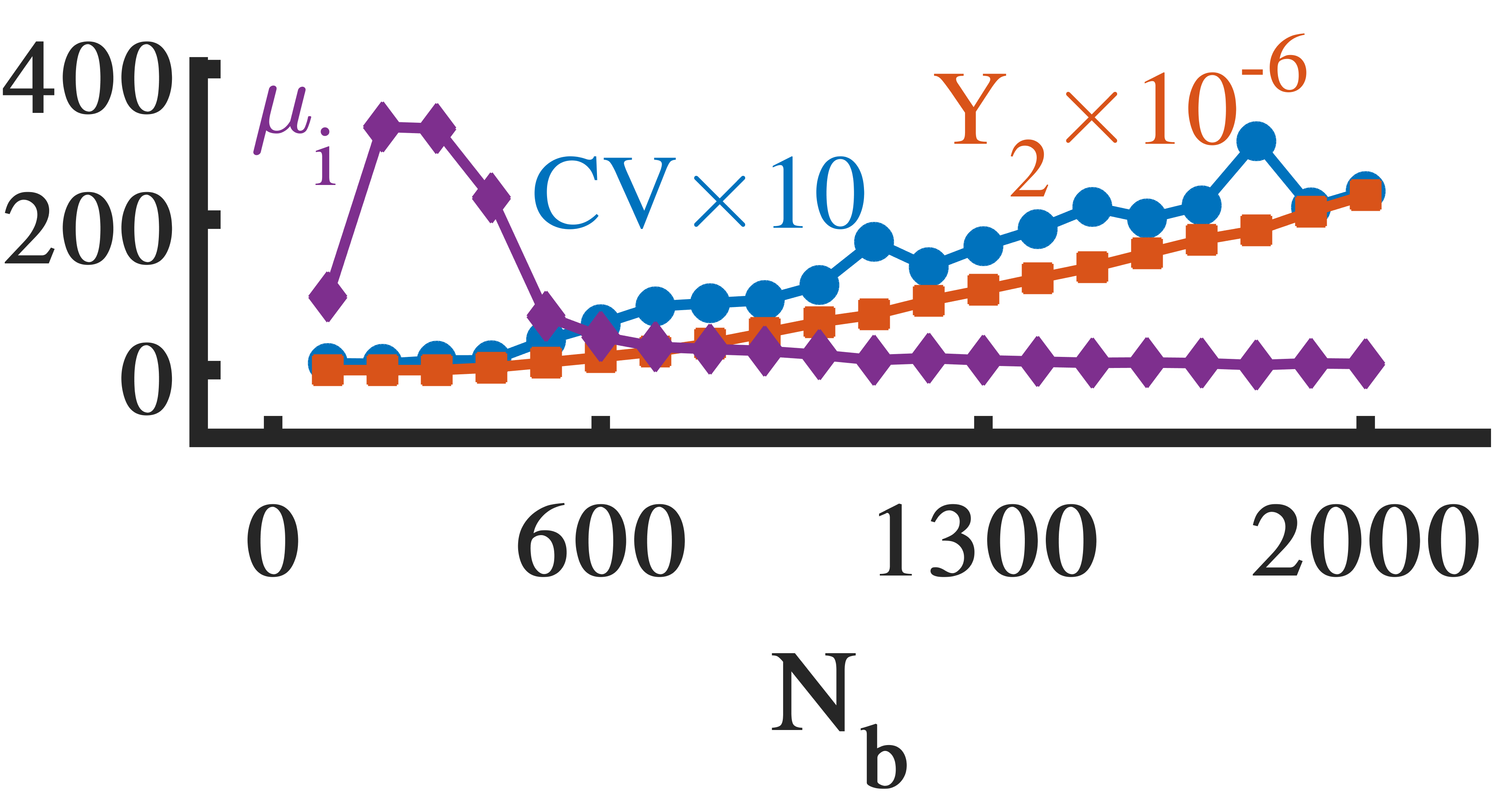}}
    \subfigure[$S_i$, S-Amazon]{\includegraphics[width=0.24\textwidth]{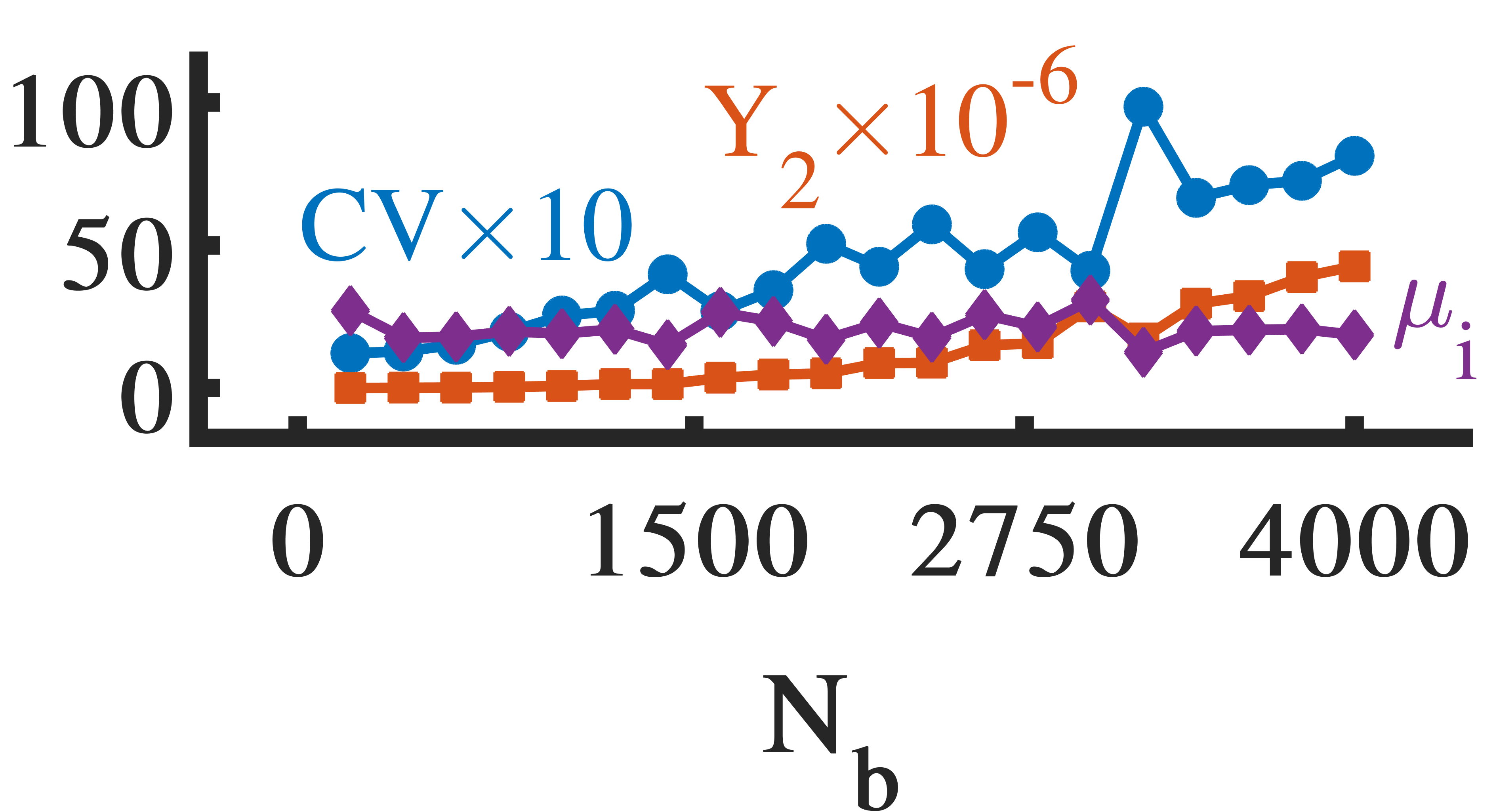}}
    \subfigure[$S_i$, S-Yahoo]{\includegraphics[width=0.24\textwidth]{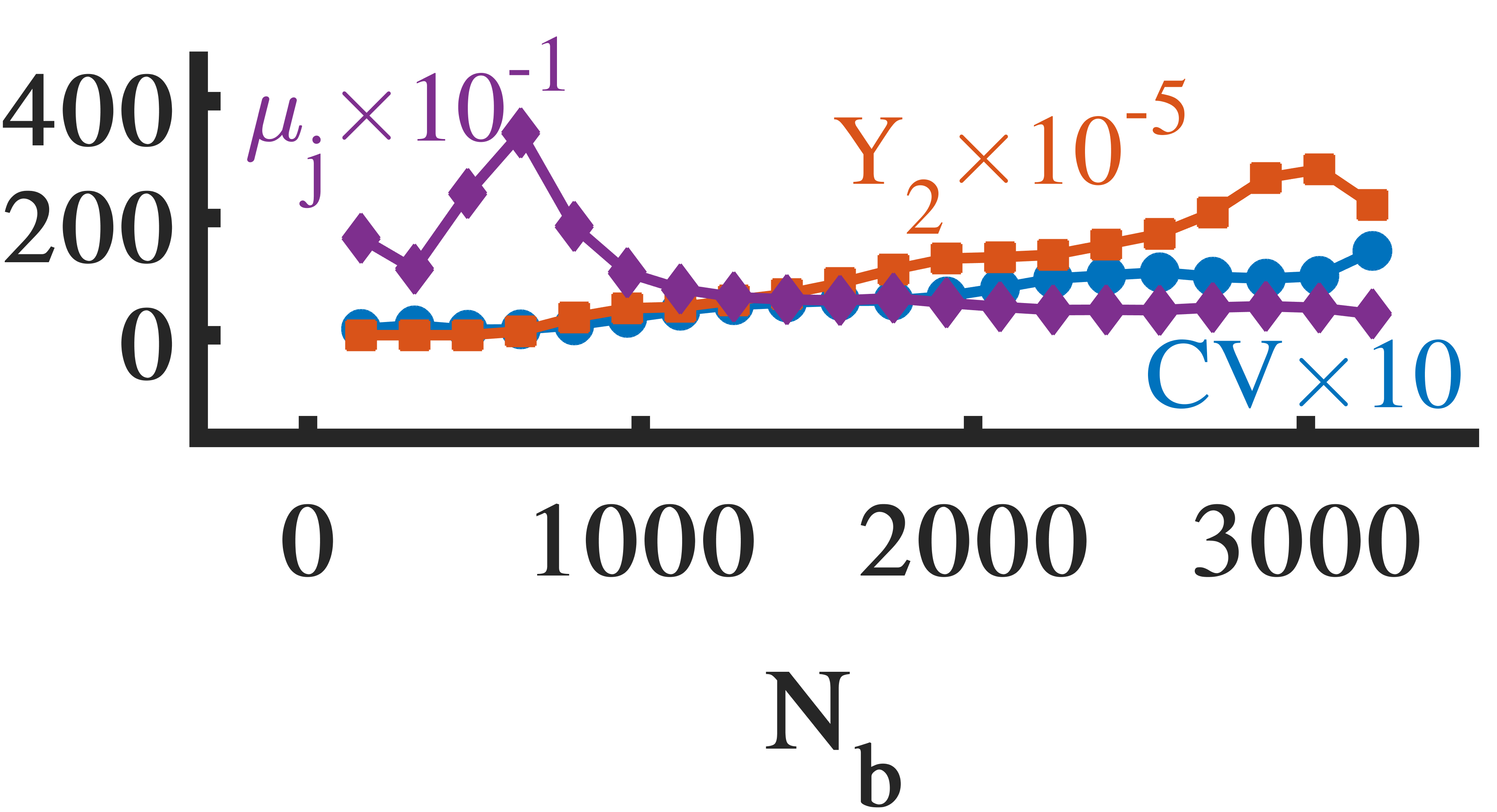}}
    
    \subfigure[$S_j$, S-Ciao]{\includegraphics[width=0.24\textwidth]{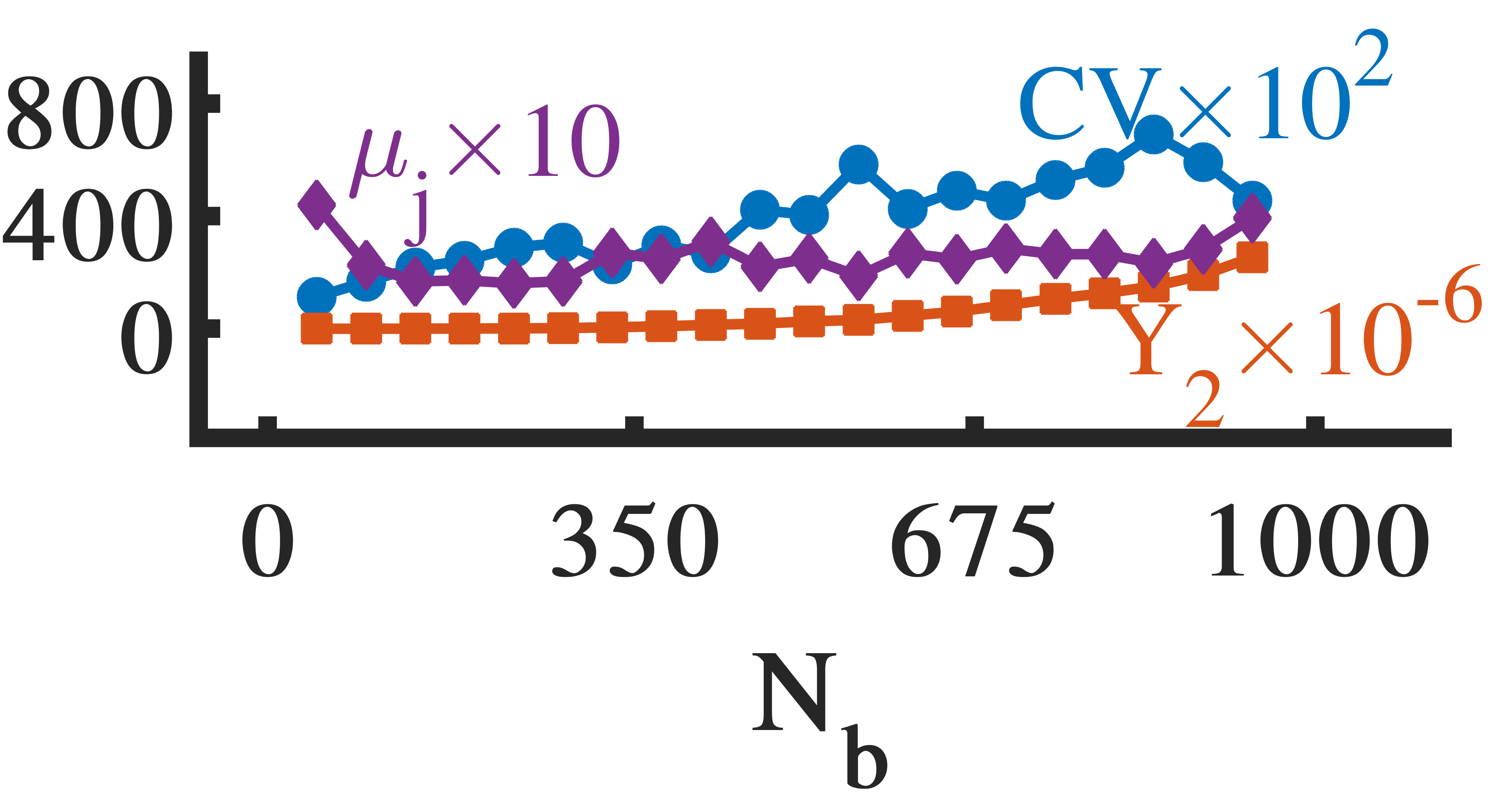}}
    \subfigure[$S_j$, S-Epinions]{\includegraphics[width=0.24\textwidth]{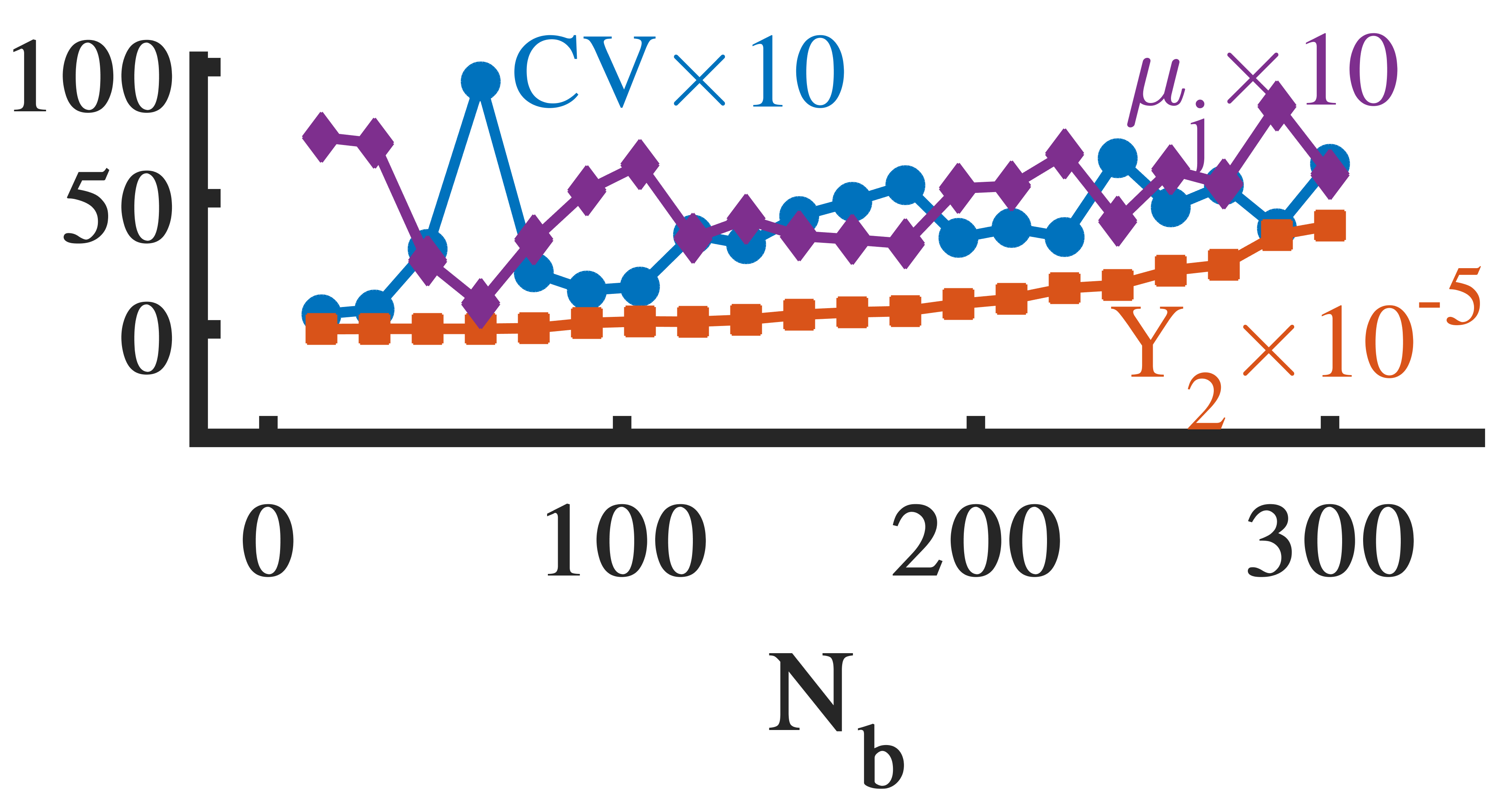}}
    \subfigure[$S_j$, S-WikiLens]{\includegraphics[width=0.24\textwidth]{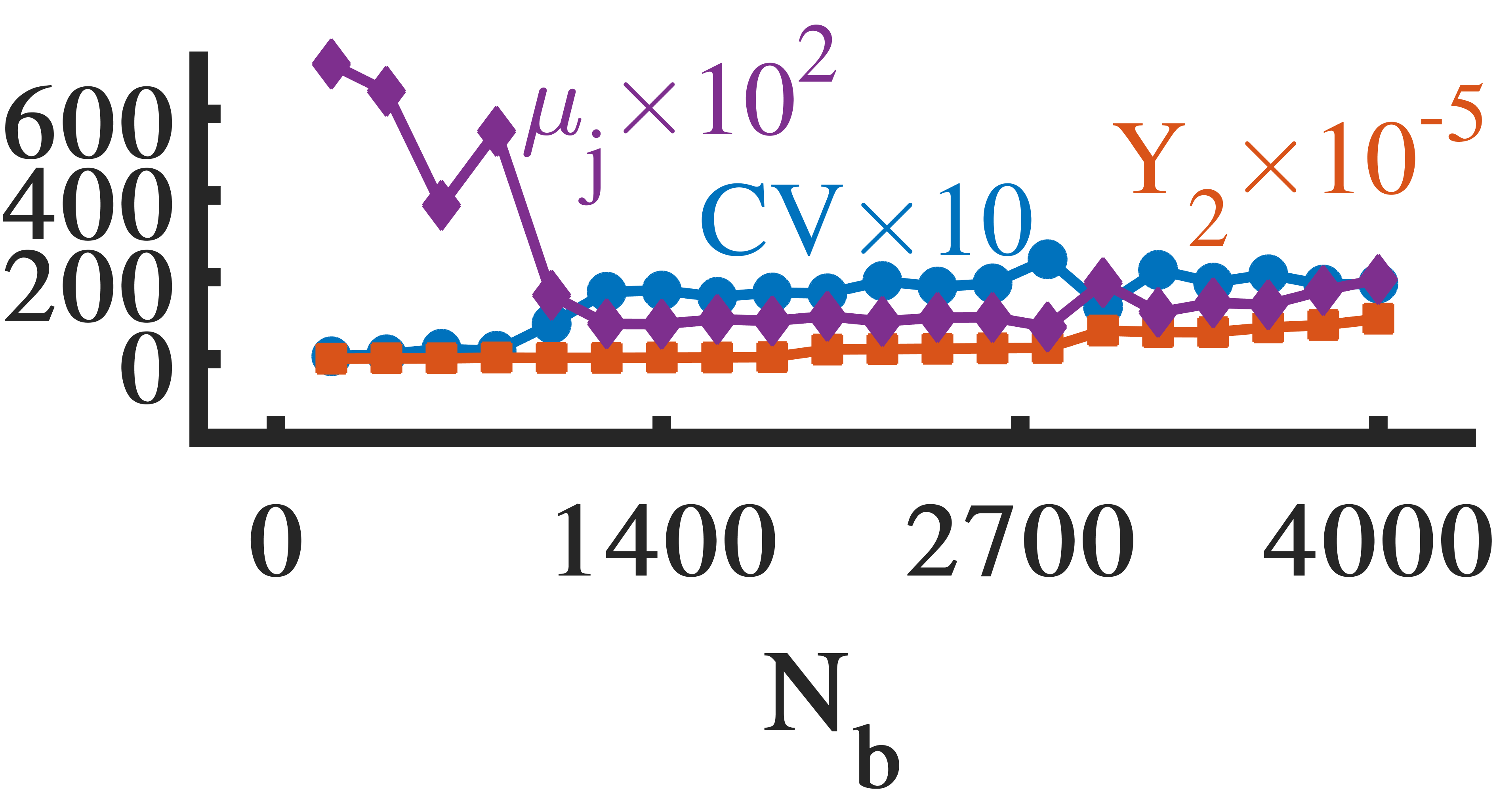}}
    \subfigure[$S_j$, S-ML100k]{\includegraphics[width=0.24\textwidth]{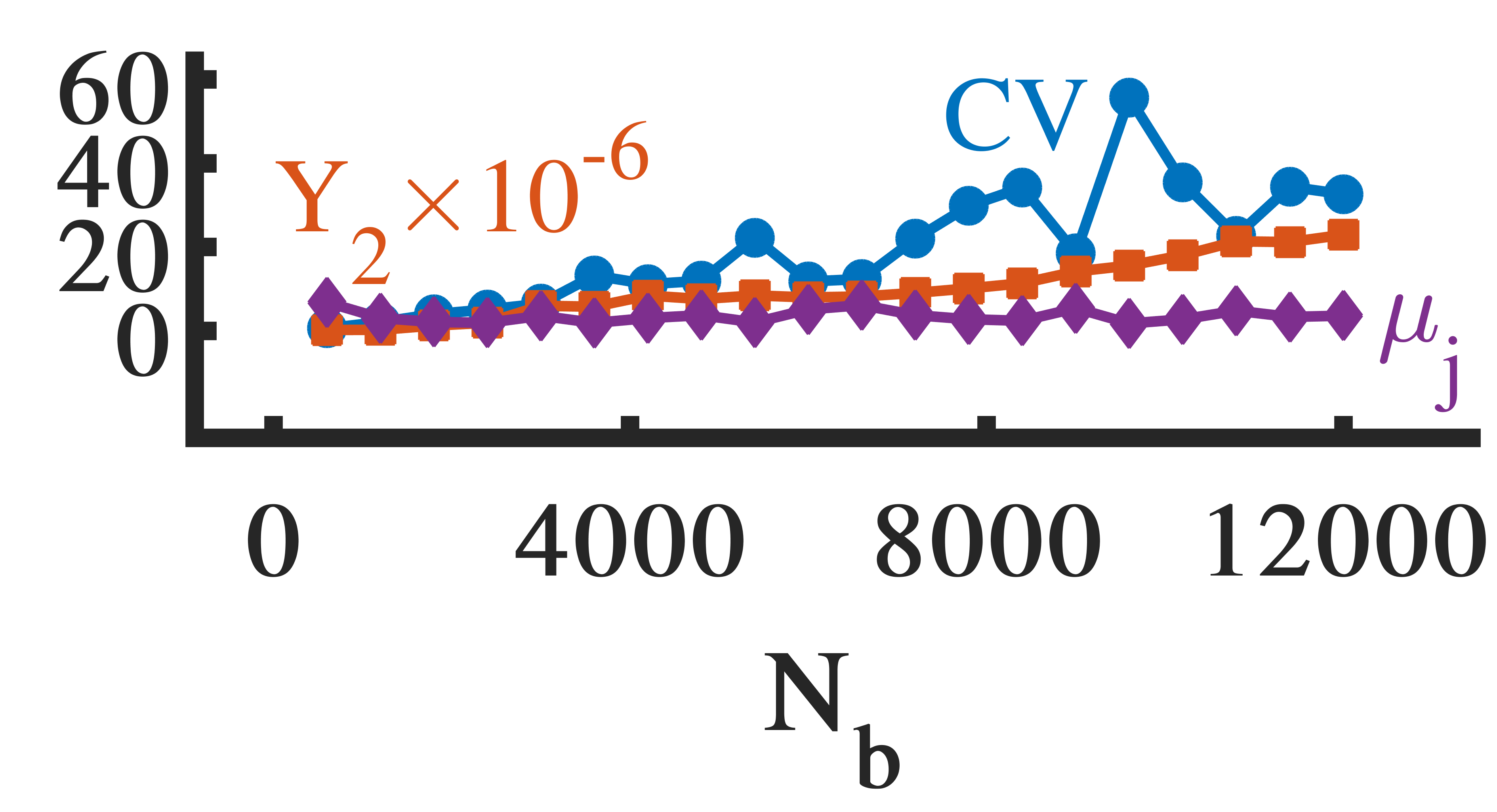}}
    \subfigure[$S_j$, S-ML1m]{\includegraphics[width=0.24\textwidth]{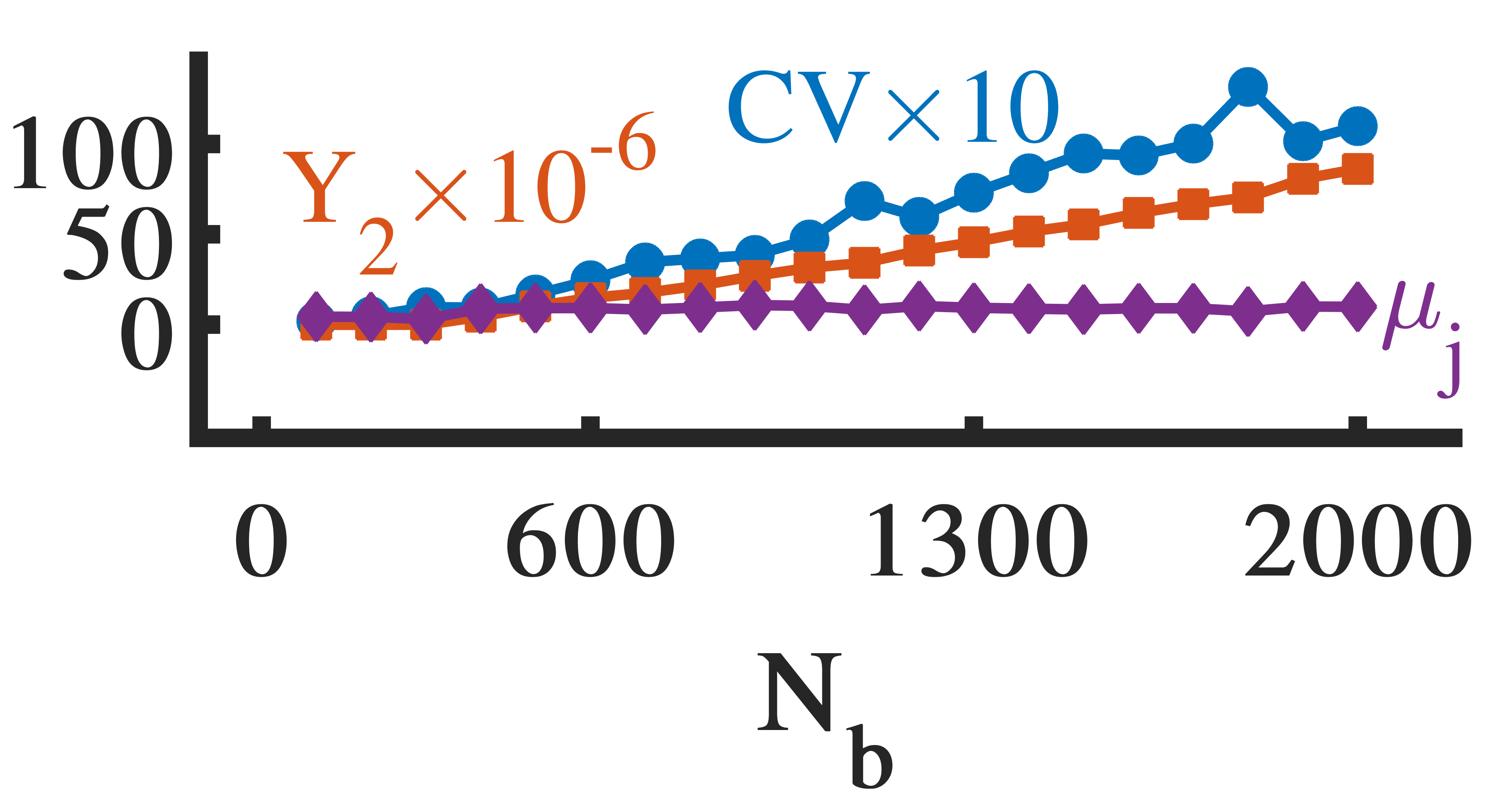}}
    \subfigure[$S_j$, S-Amazon]{\includegraphics[width=0.24\textwidth]{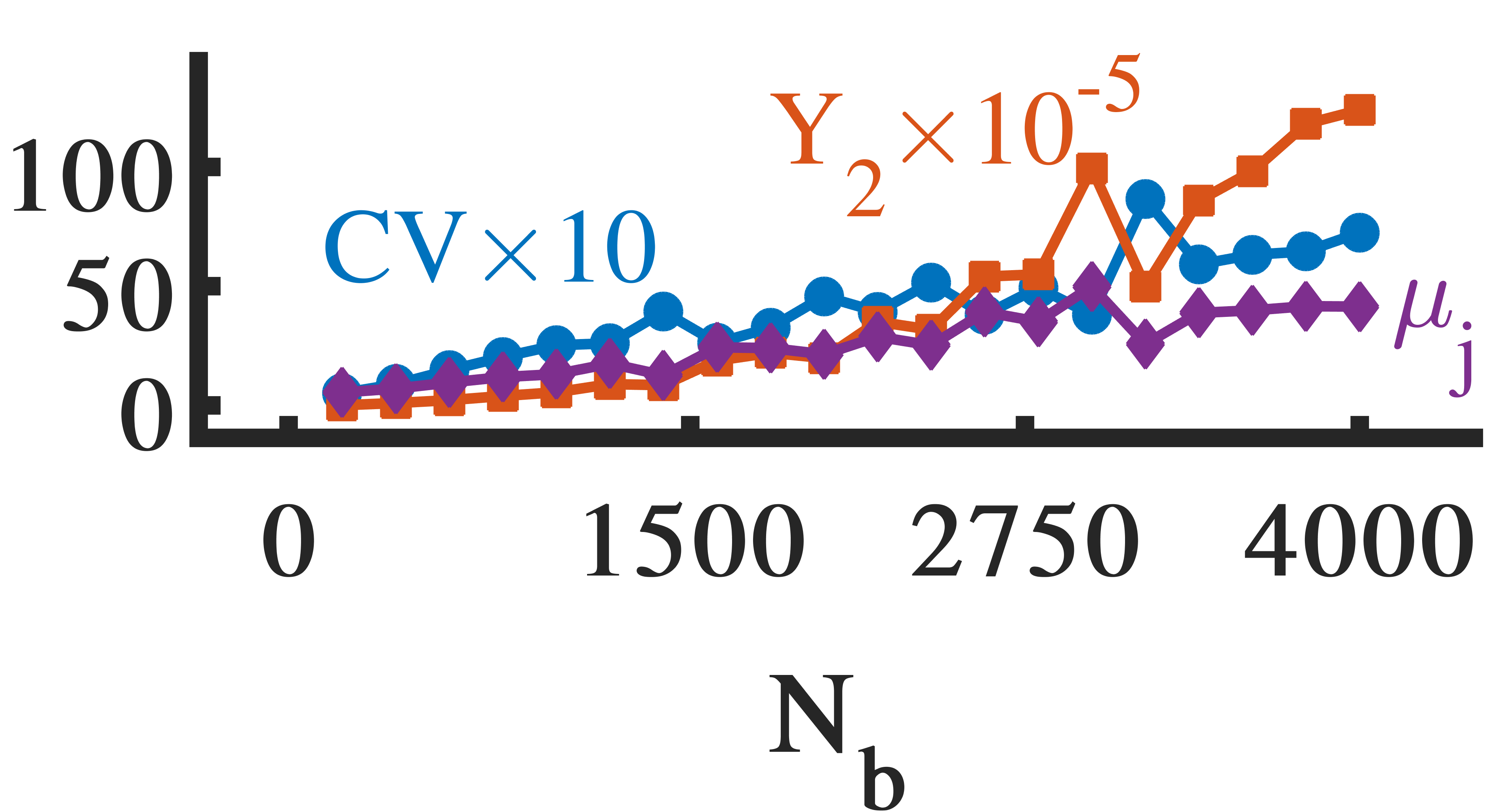}}
    \subfigure[$S_j$, S-Yahoo]{\includegraphics[width=0.24\textwidth]{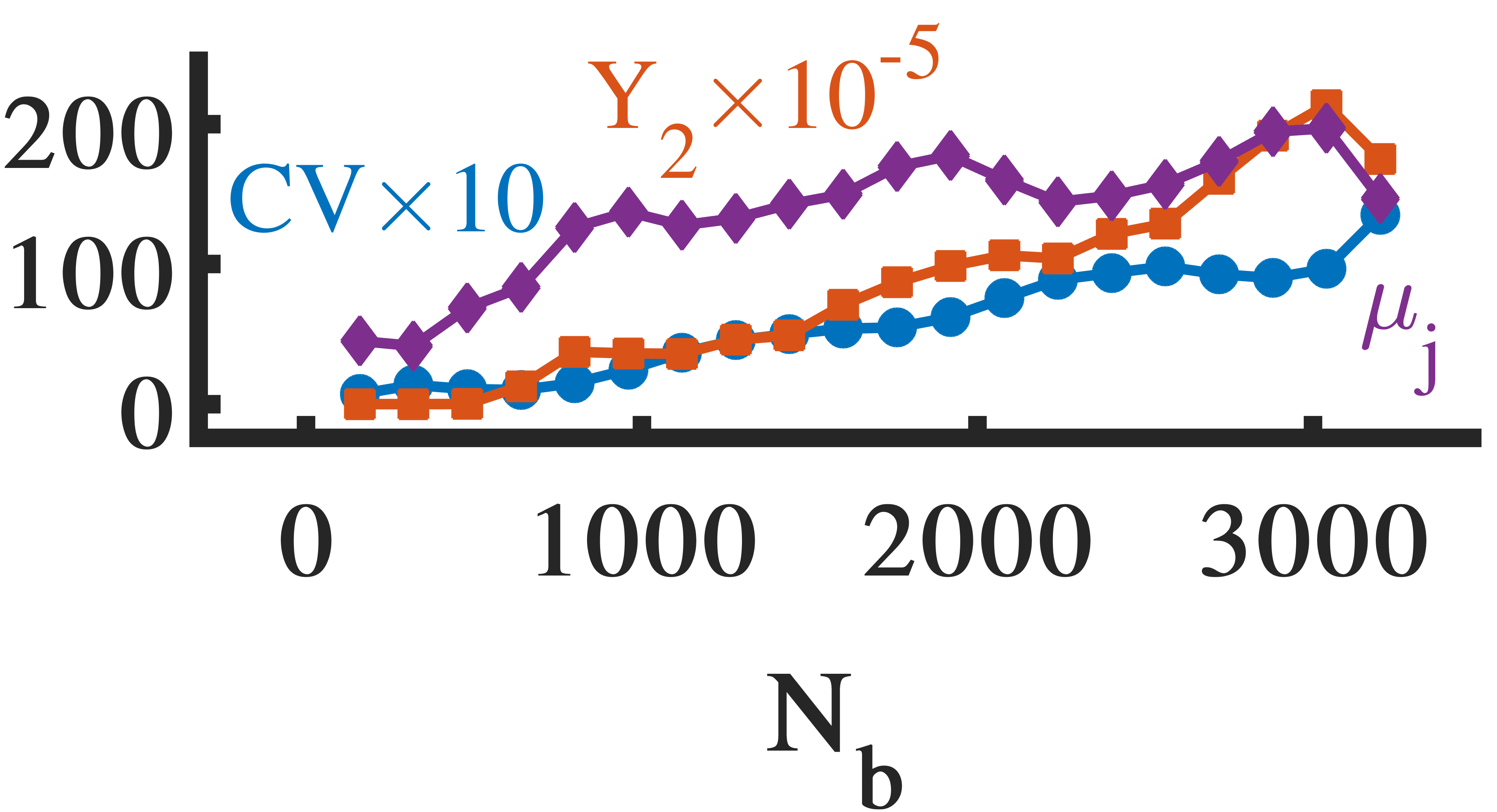}}

    \caption{Coefficient of variation (circles), excess kurtosis (squares), and mean (diamonds) of strengths of butterfly (a-g) i-vertices and (h-n) j-vertices over the timeline of burst arrivals.}
    \label{fig:sstatssynthetci}
\end{figure*}
\begin{figure*}[]
    \centering
  \subfigure[S-Ciao]{\includegraphics[width=0.24\textwidth]{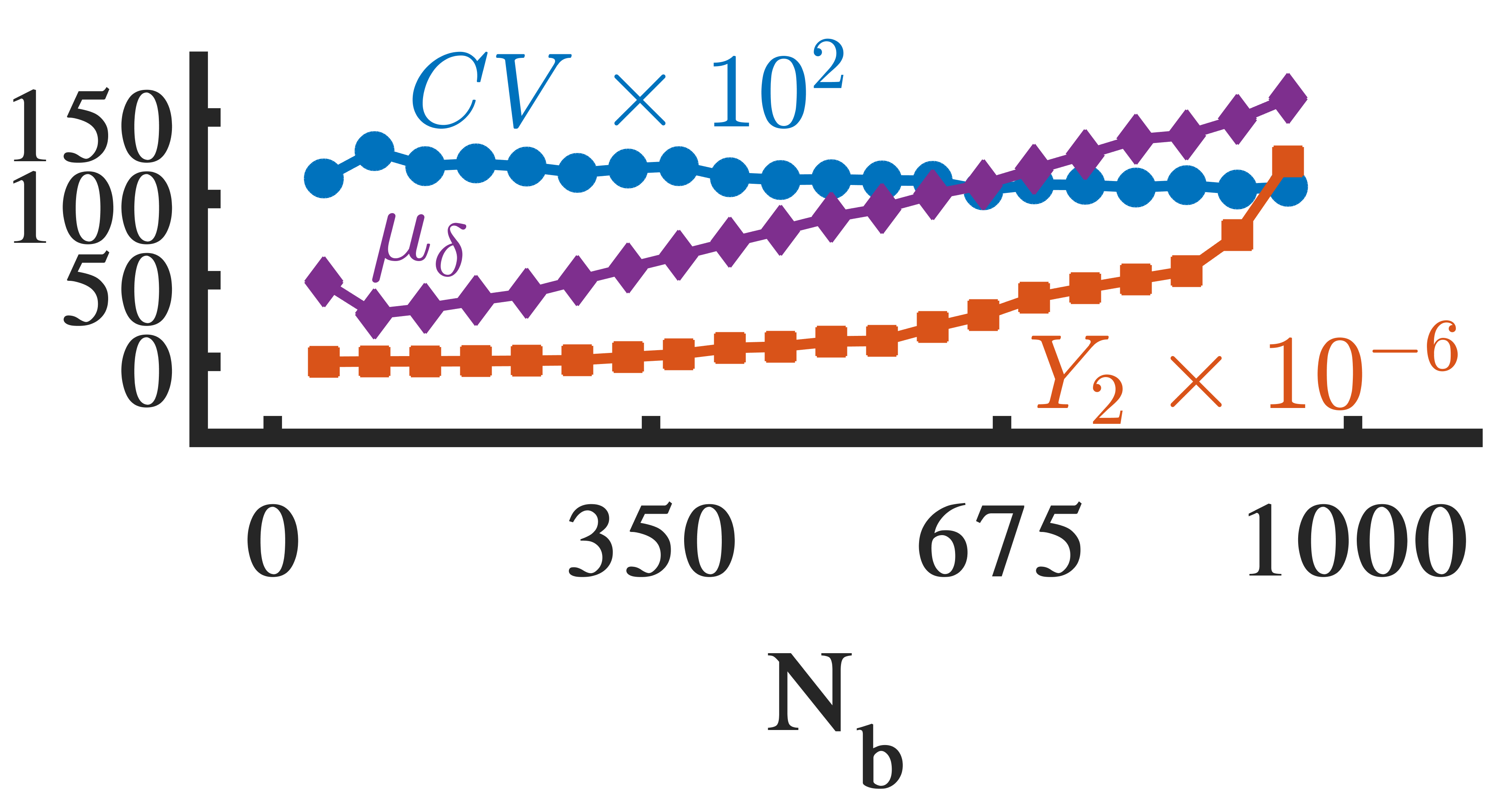}}
  \subfigure[S-Epinions]{\includegraphics[width=0.24\textwidth]{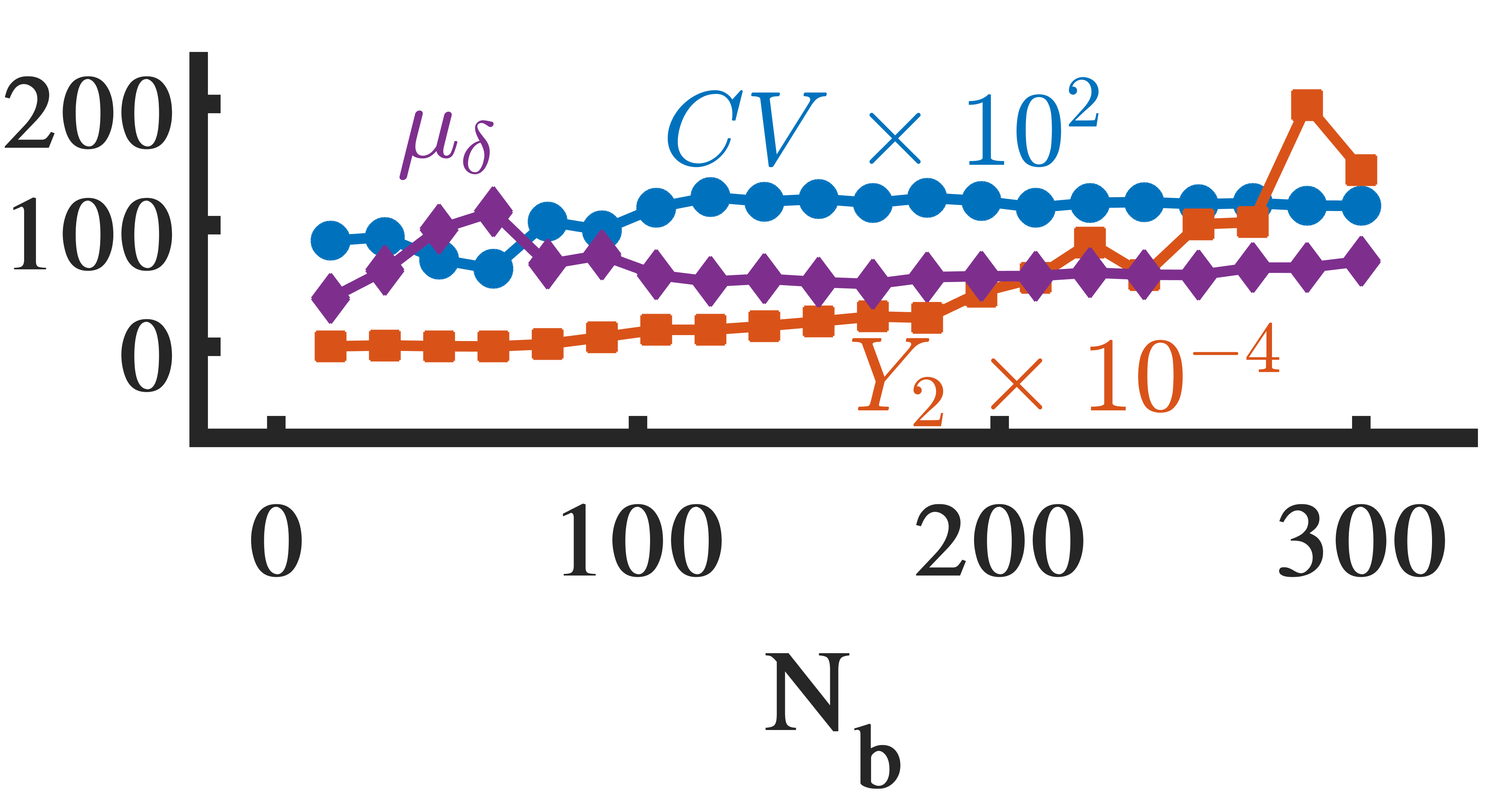}}
  \subfigure[S-WikiLens]{\includegraphics[width=0.24\textwidth]{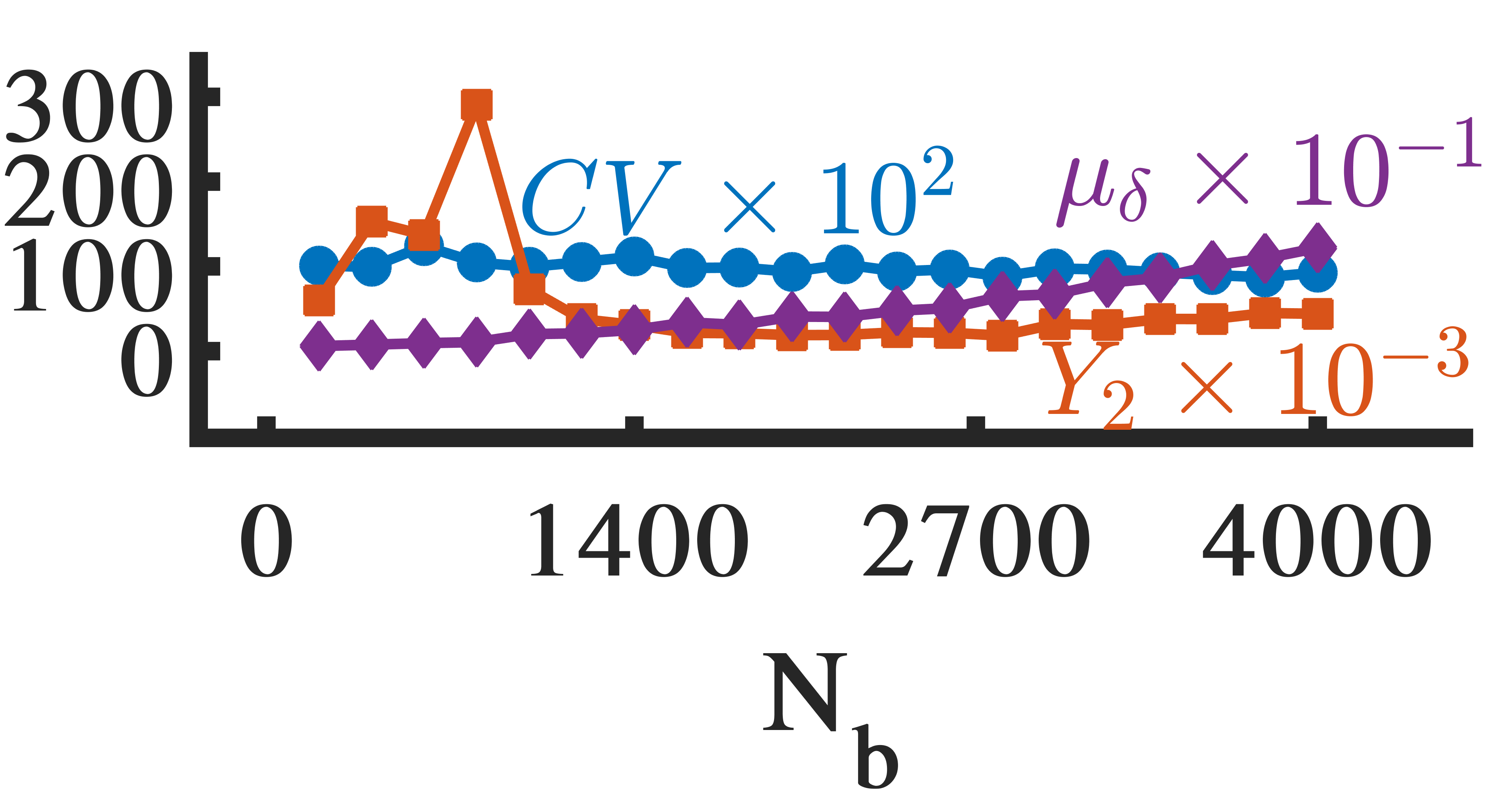}}
  \subfigure[S-ML100k]{\includegraphics[width=0.24\textwidth]{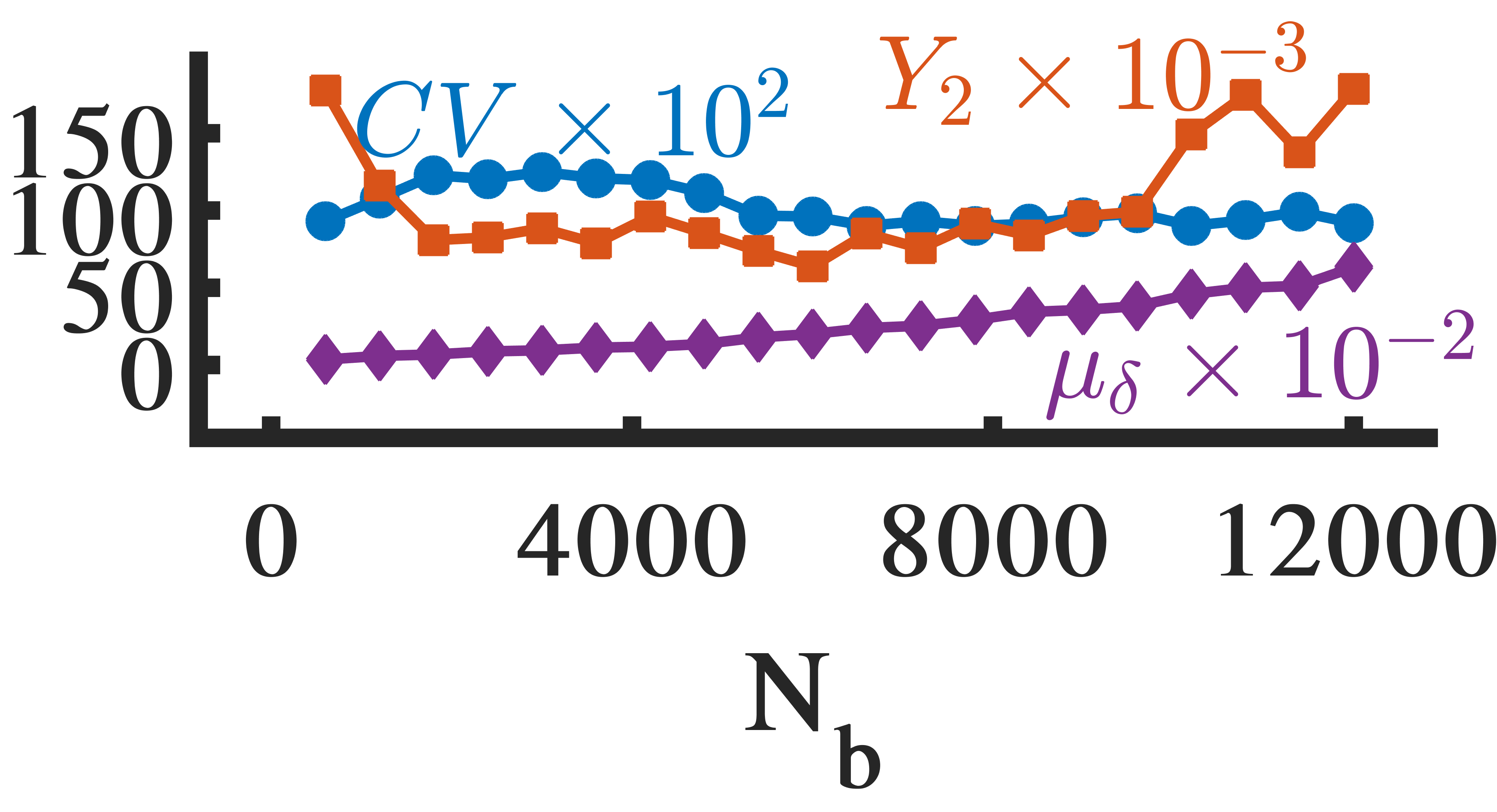}}
  \subfigure[S-ML1m]{\includegraphics[width=0.24\textwidth]{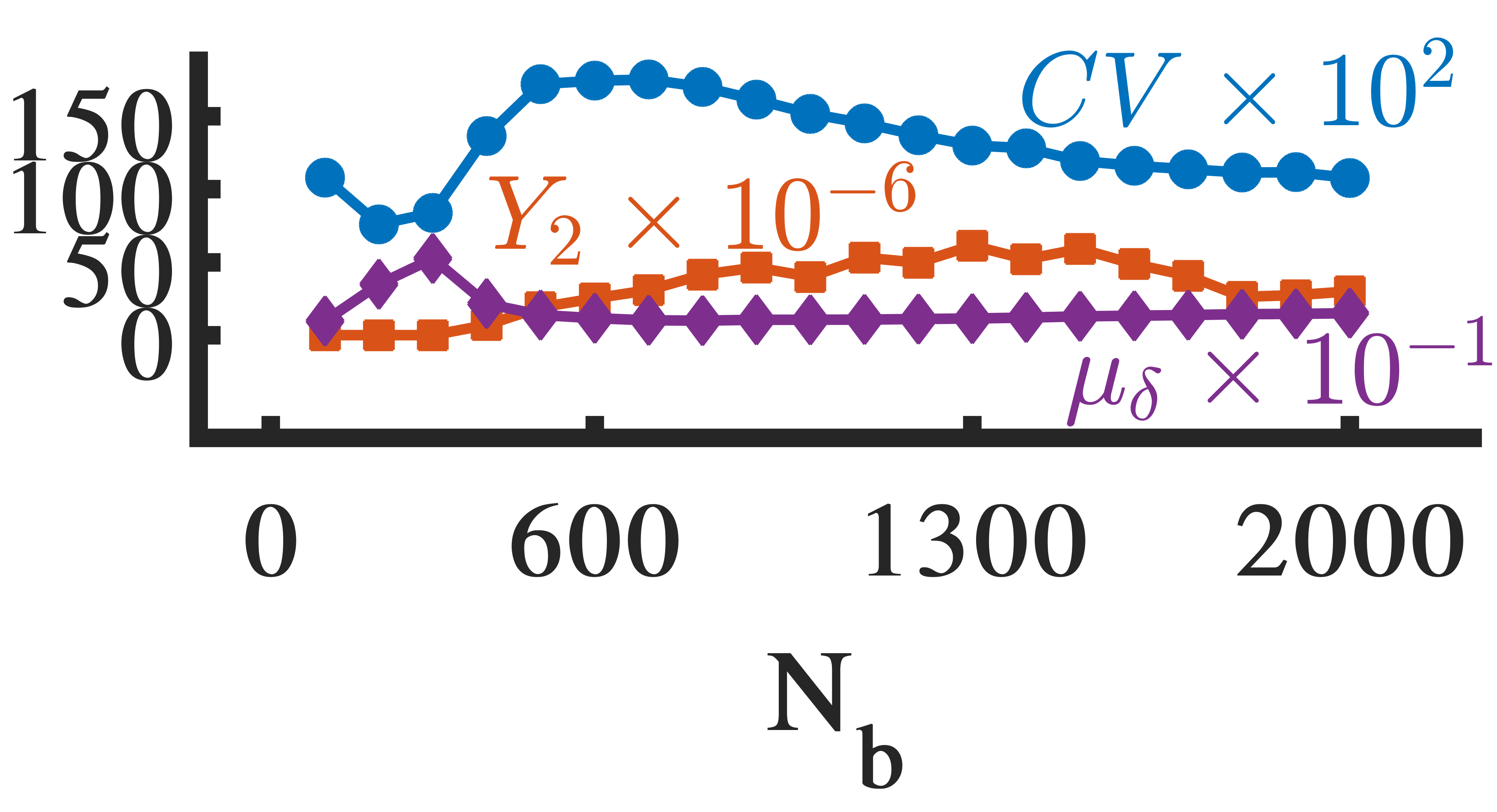}}
  \subfigure[S-Amazon]{\includegraphics[width=0.24\textwidth]{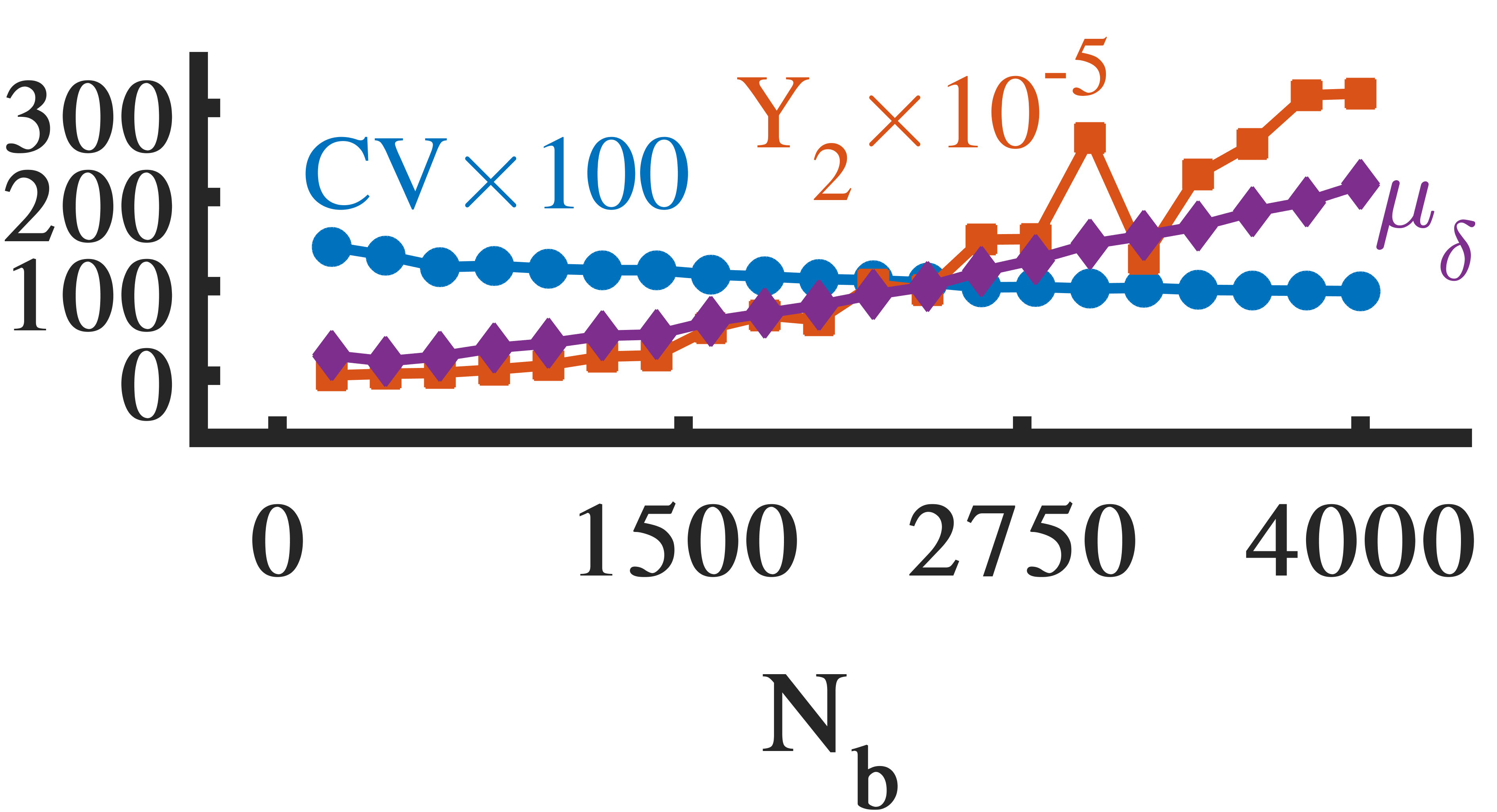}}
  \subfigure[S-Yahoo]{\includegraphics[width=0.24\textwidth]{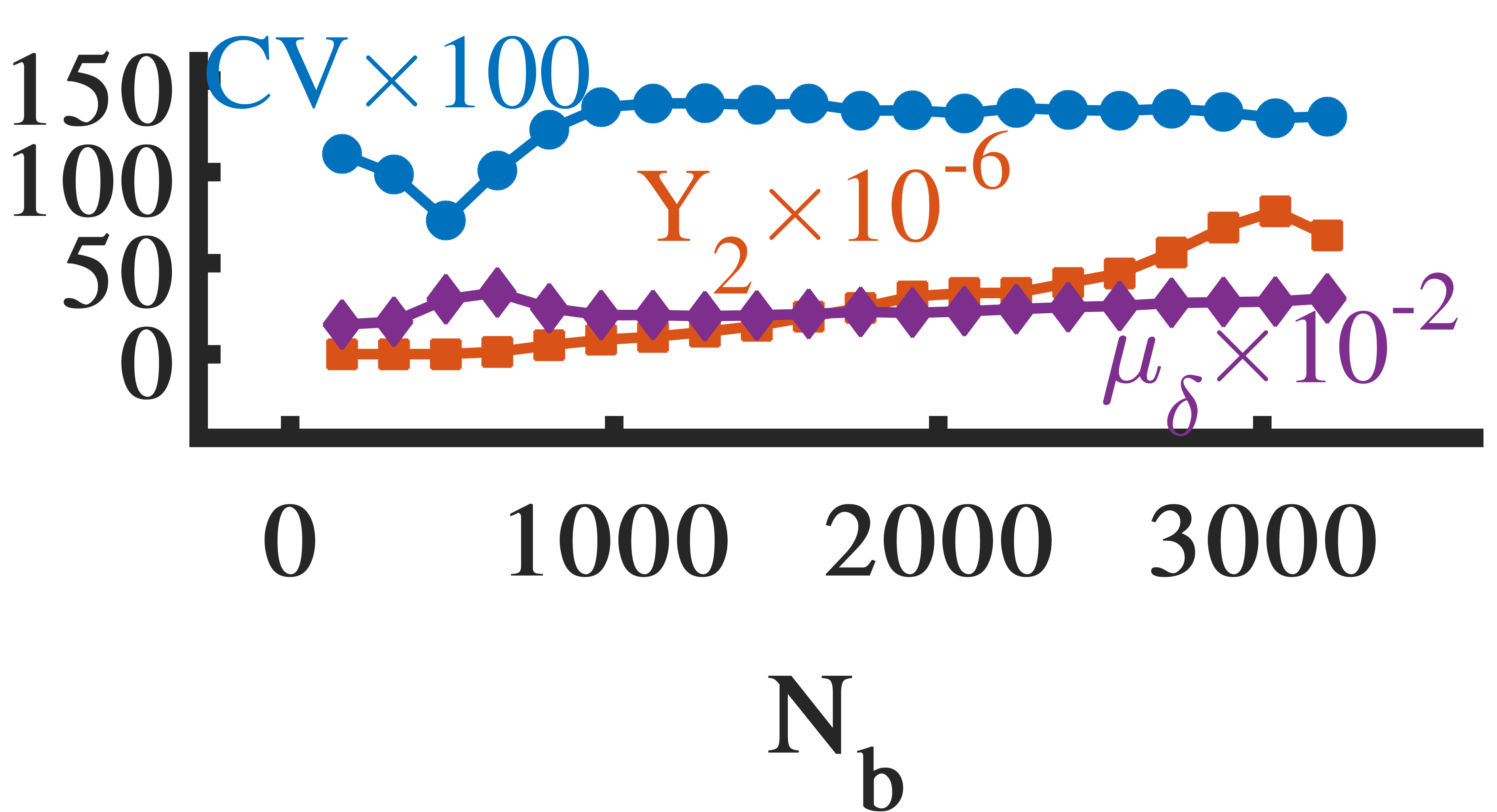}}
    \caption{Coefficient of variation $CV$ (circles), excess kurtosis $Y_2$ (squares), and mean $\mu_\delta$ (diamonds) of butterfly strength differences over the timeline of burst arrivals.} 
    \label{fig:statssynthetic}
\end{figure*}
\begin{figure*}[]
    \centering
    \subfigure[S-Ciao]{\includegraphics[width=0.24\textwidth]{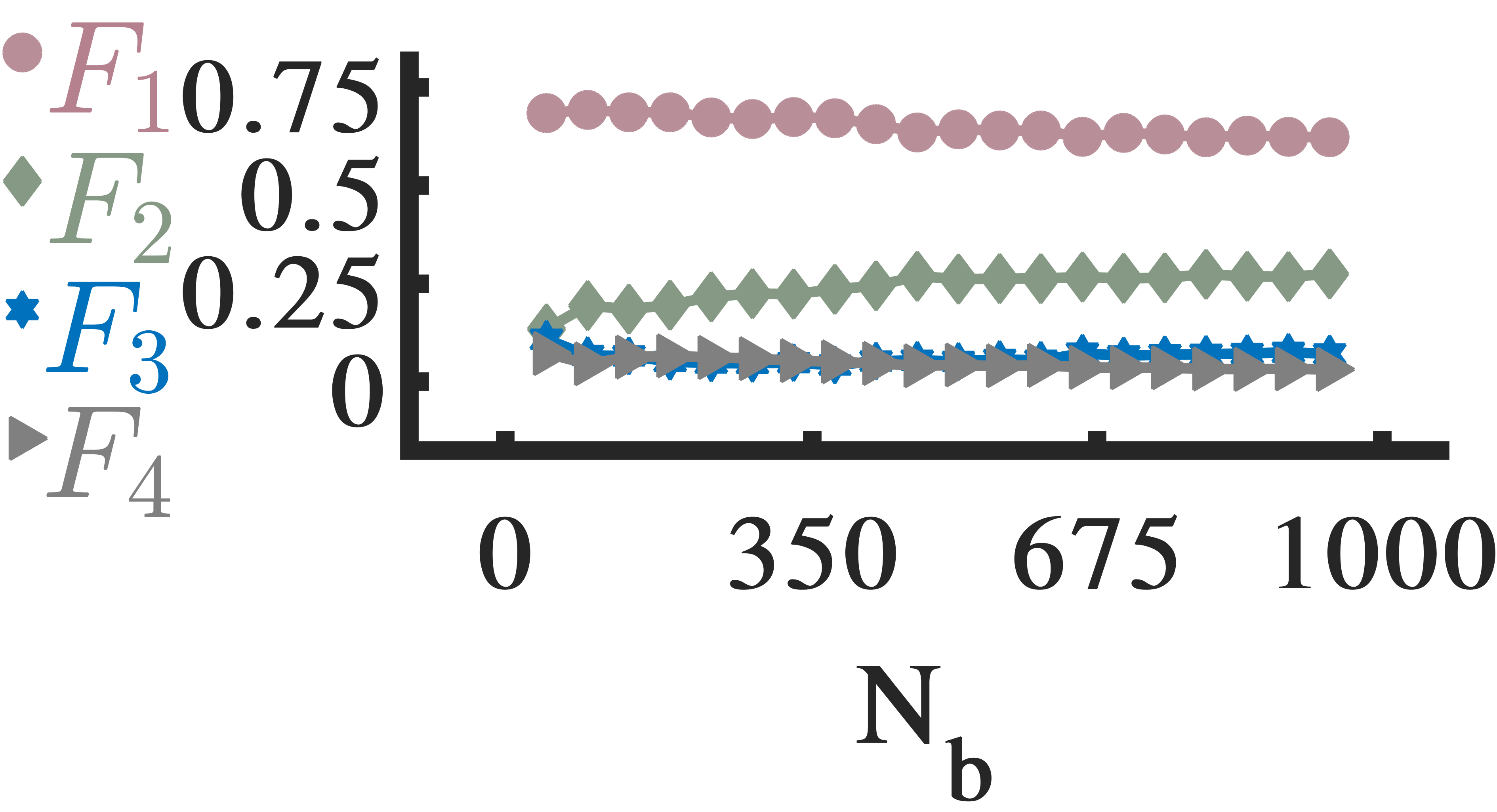}}
     \subfigure[S-Epinions]{\includegraphics[width=0.24\textwidth]{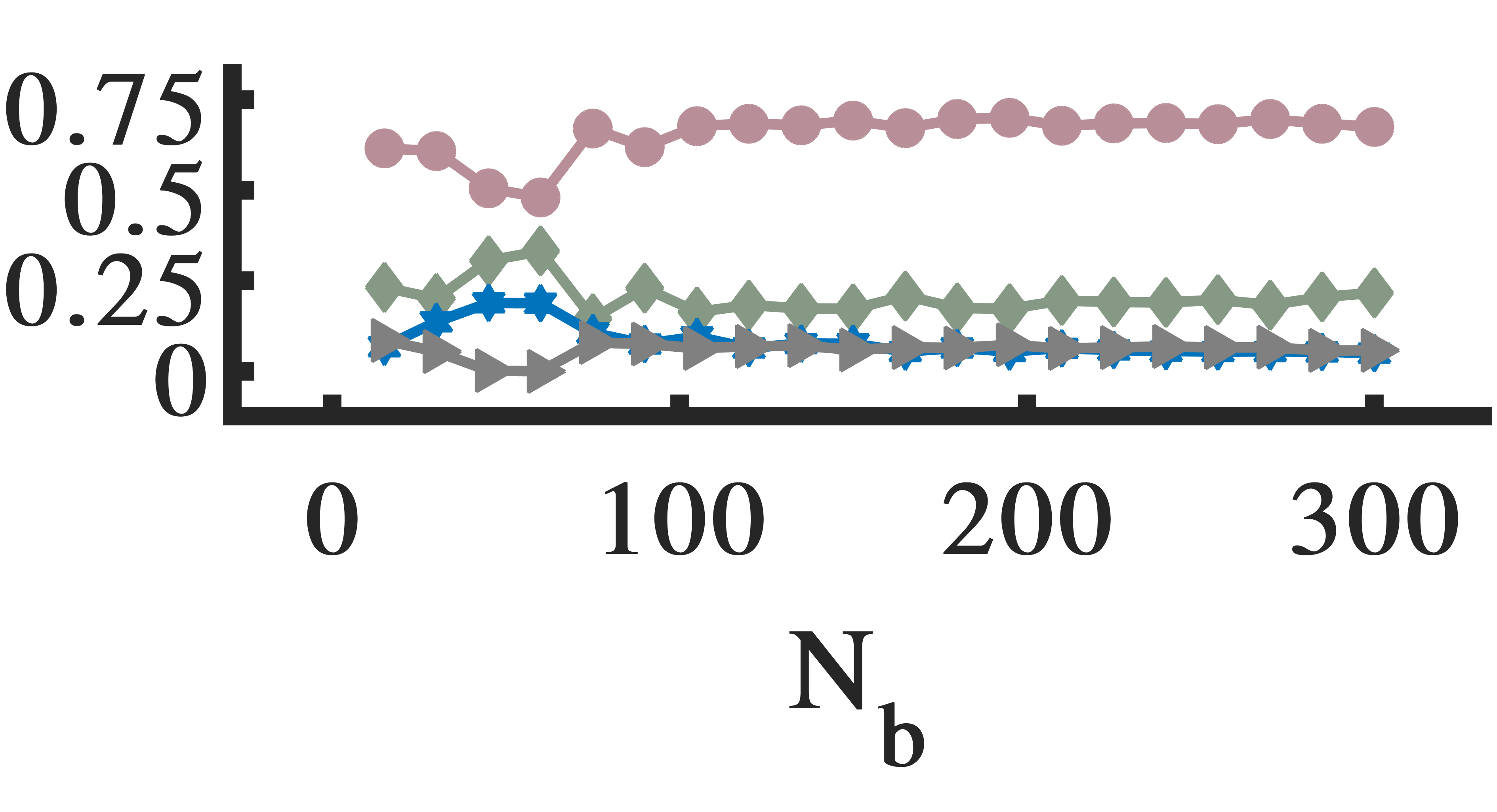}}
  \subfigure[S-WikiLens]{\includegraphics[width=0.24\textwidth]{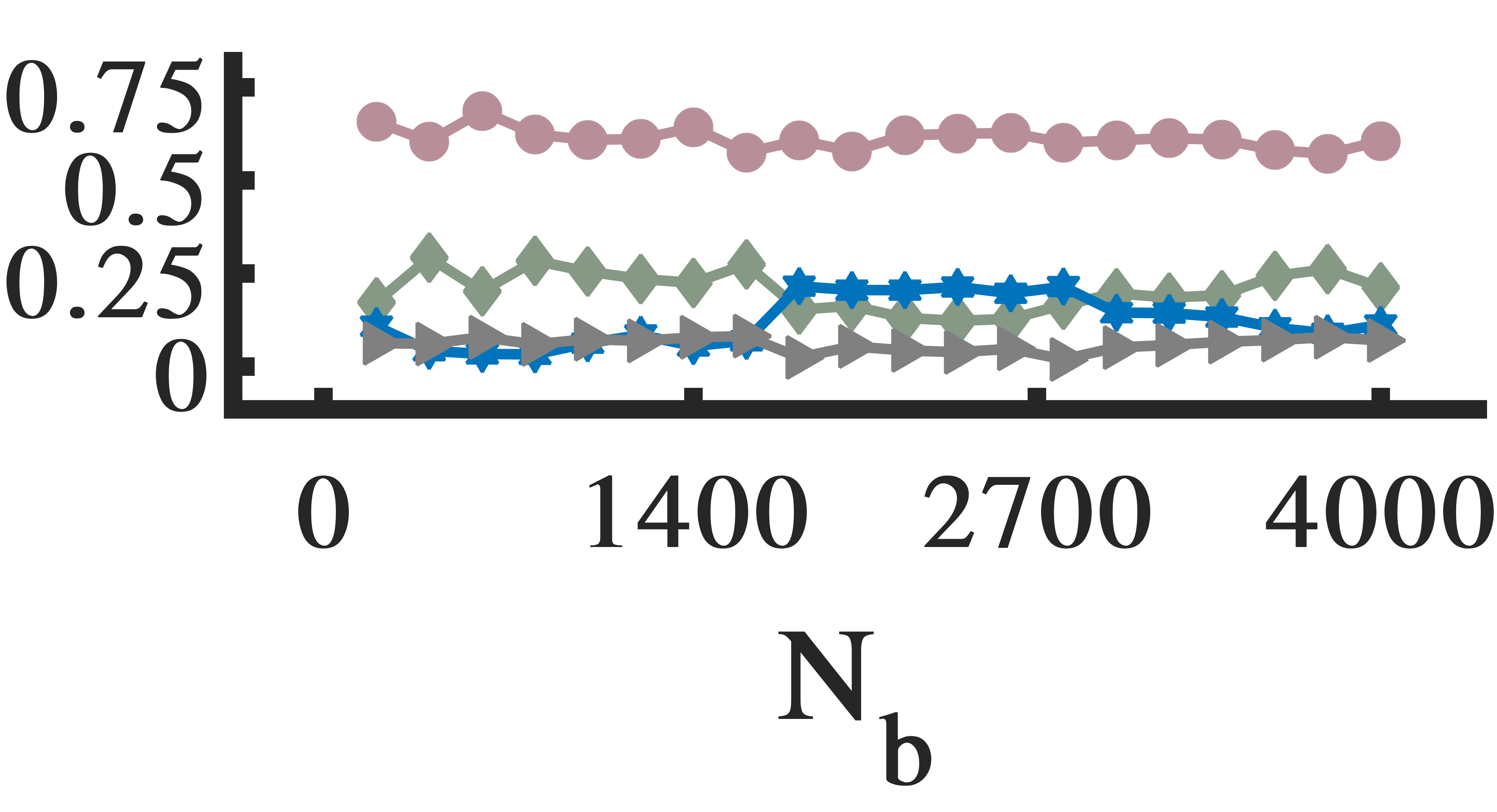}}
  \subfigure[S-ML100k]{\includegraphics[width=0.24\textwidth]{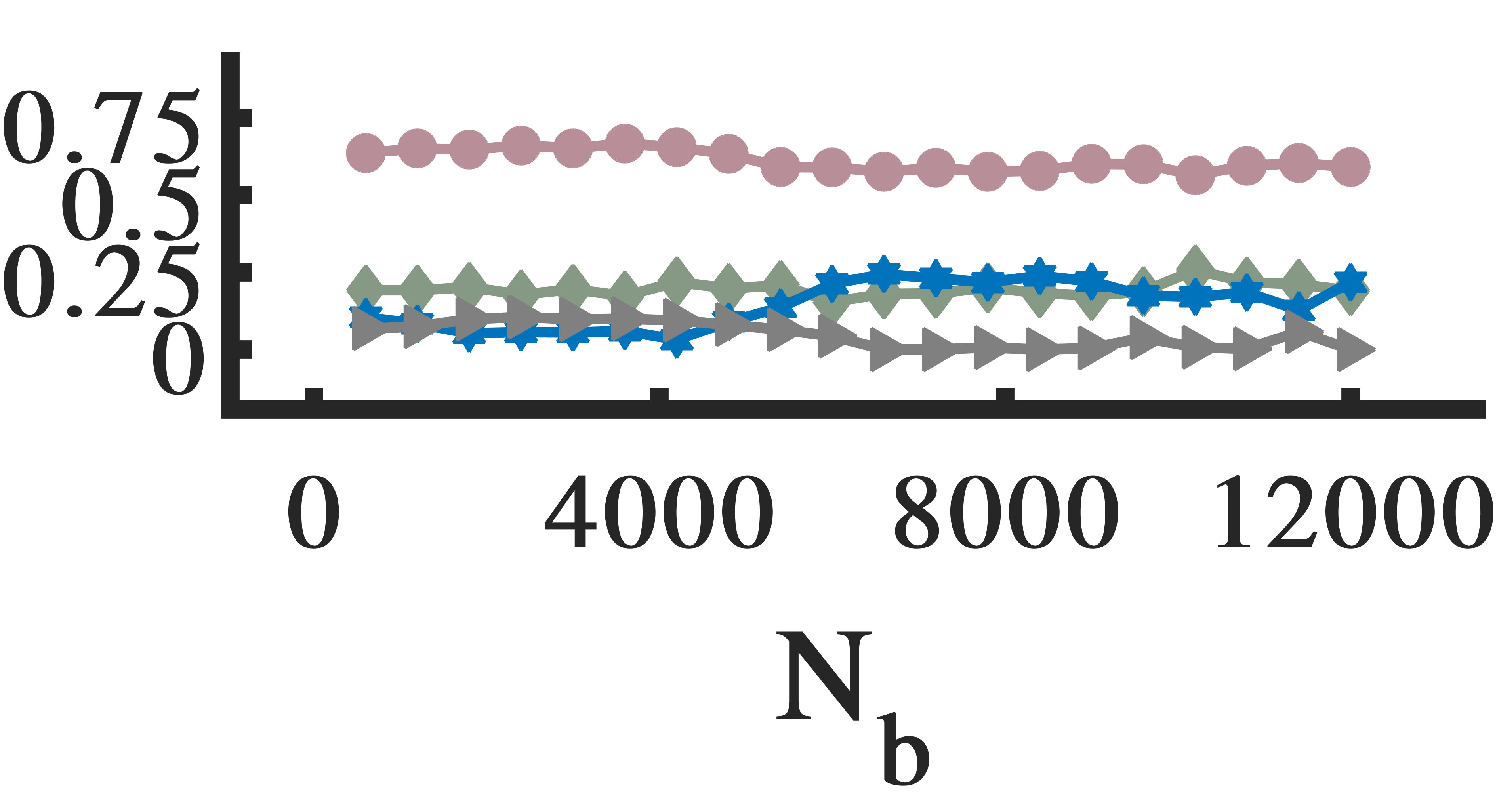}}
  \subfigure[S-ML1m]{\includegraphics[width=0.24\textwidth]{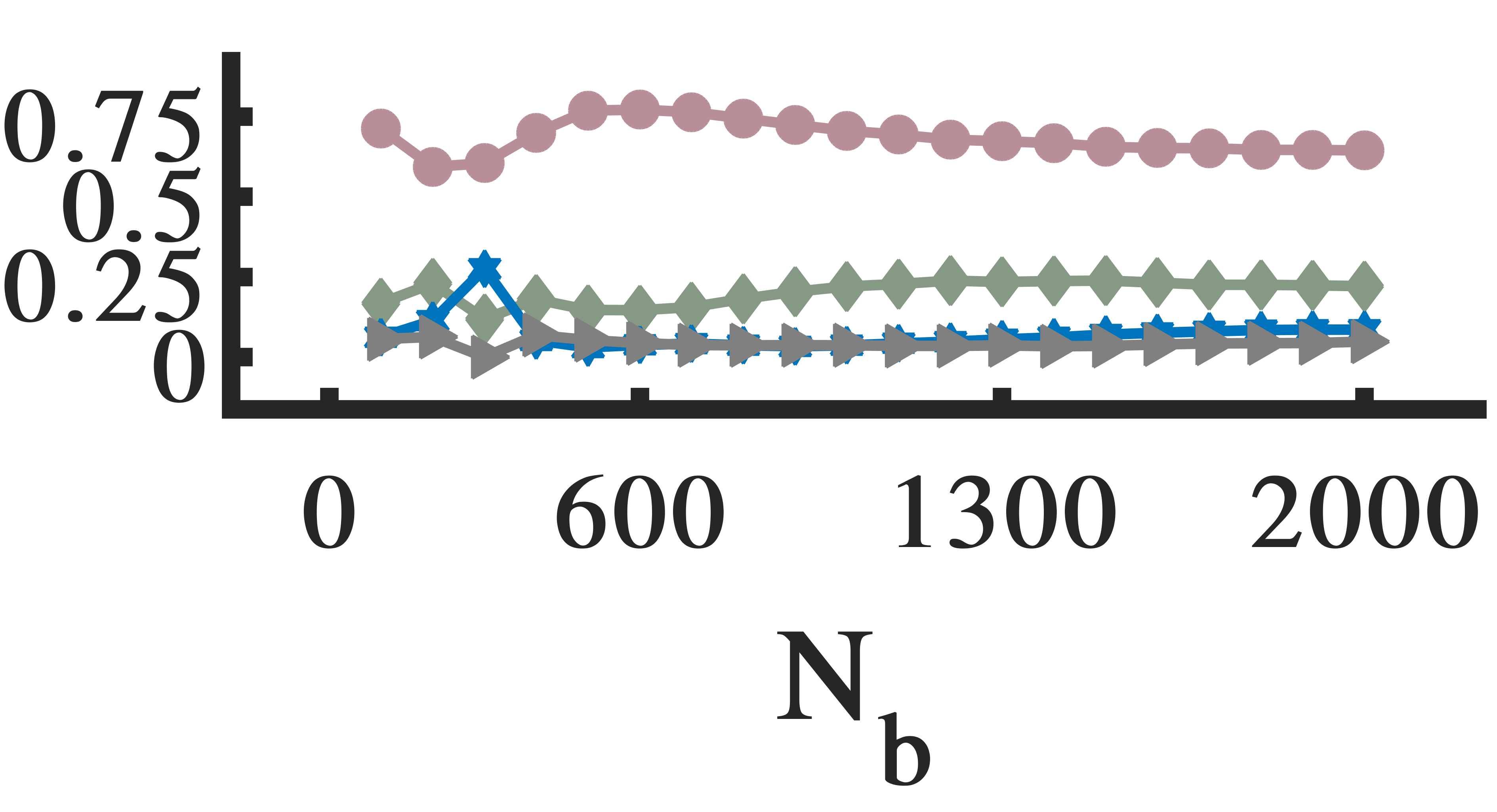}}
\subfigure[S-Amazon]{\includegraphics[width=0.24\textwidth]{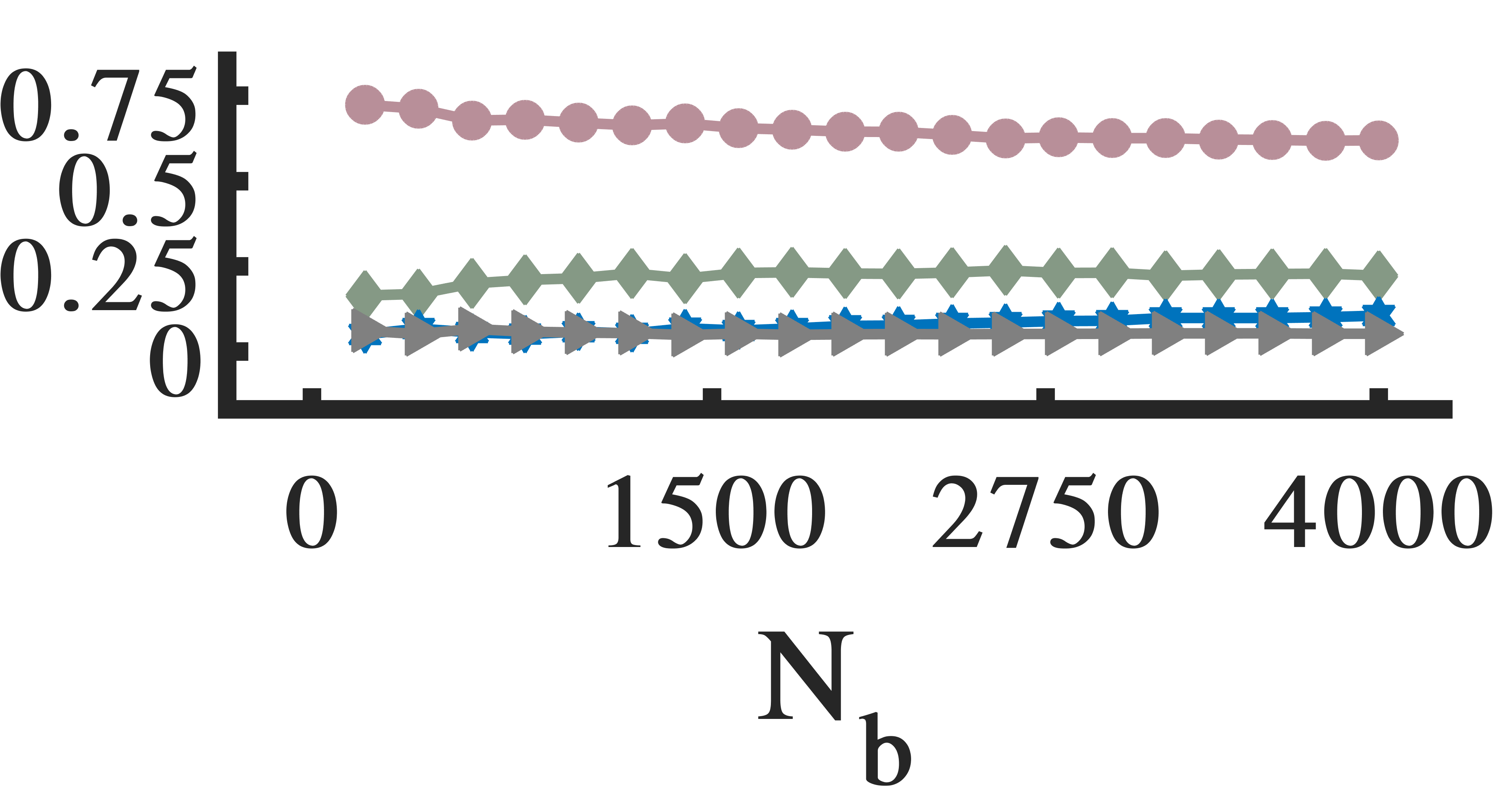}}
\subfigure[S-Yahoo]{\includegraphics[width=0.24\textwidth]{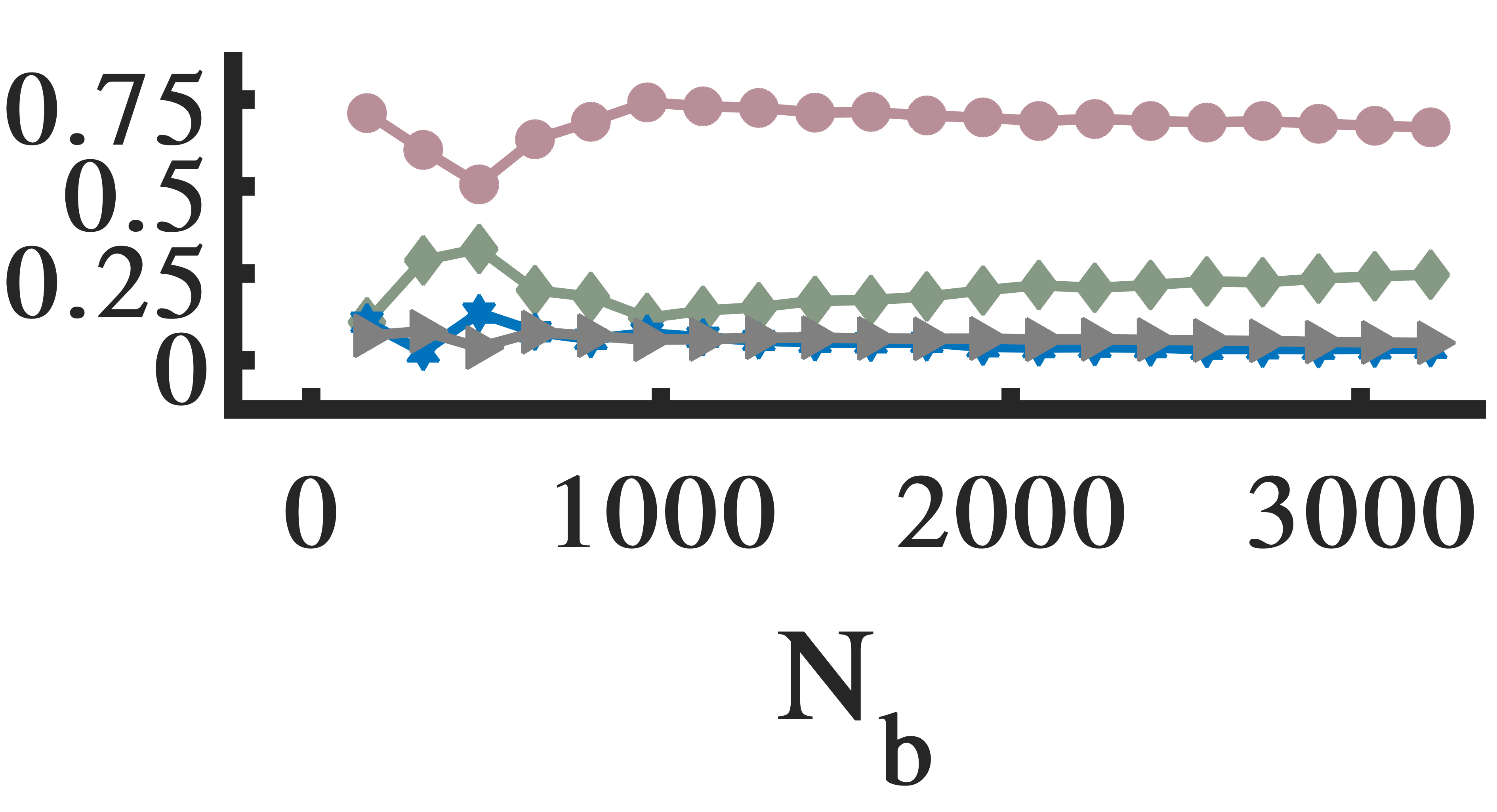}}
    \caption{F elements over the timeline of burst arrivals.}
    \label{fig:Fsyntheticgraphs}
\end{figure*}
\begin{figure*}[]
    \centering
  \subfigure[S-Ciao]{\includegraphics[width=0.24\textwidth]{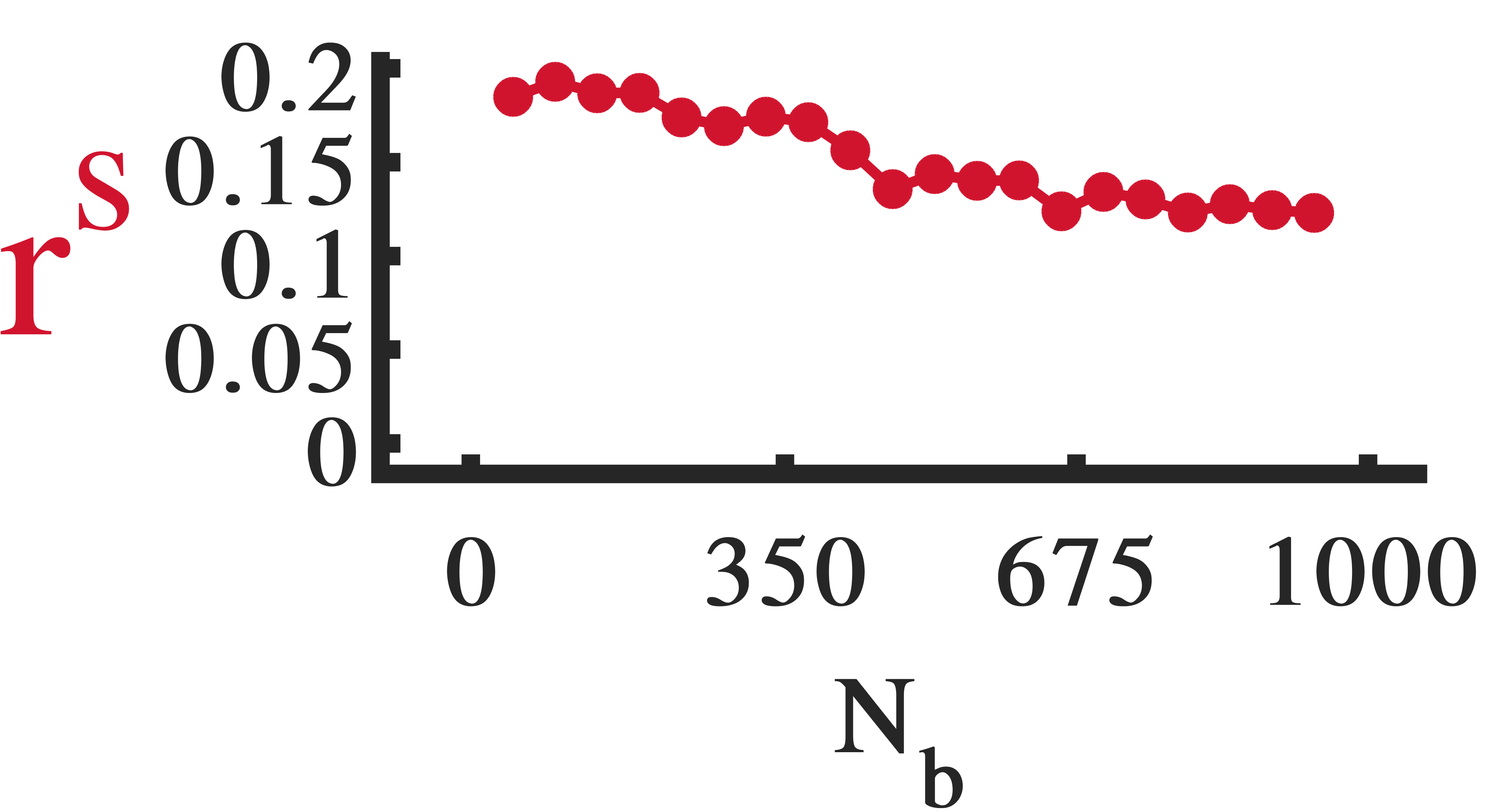}}
  \subfigure[S-Epinions]{\includegraphics[width=0.24\textwidth]{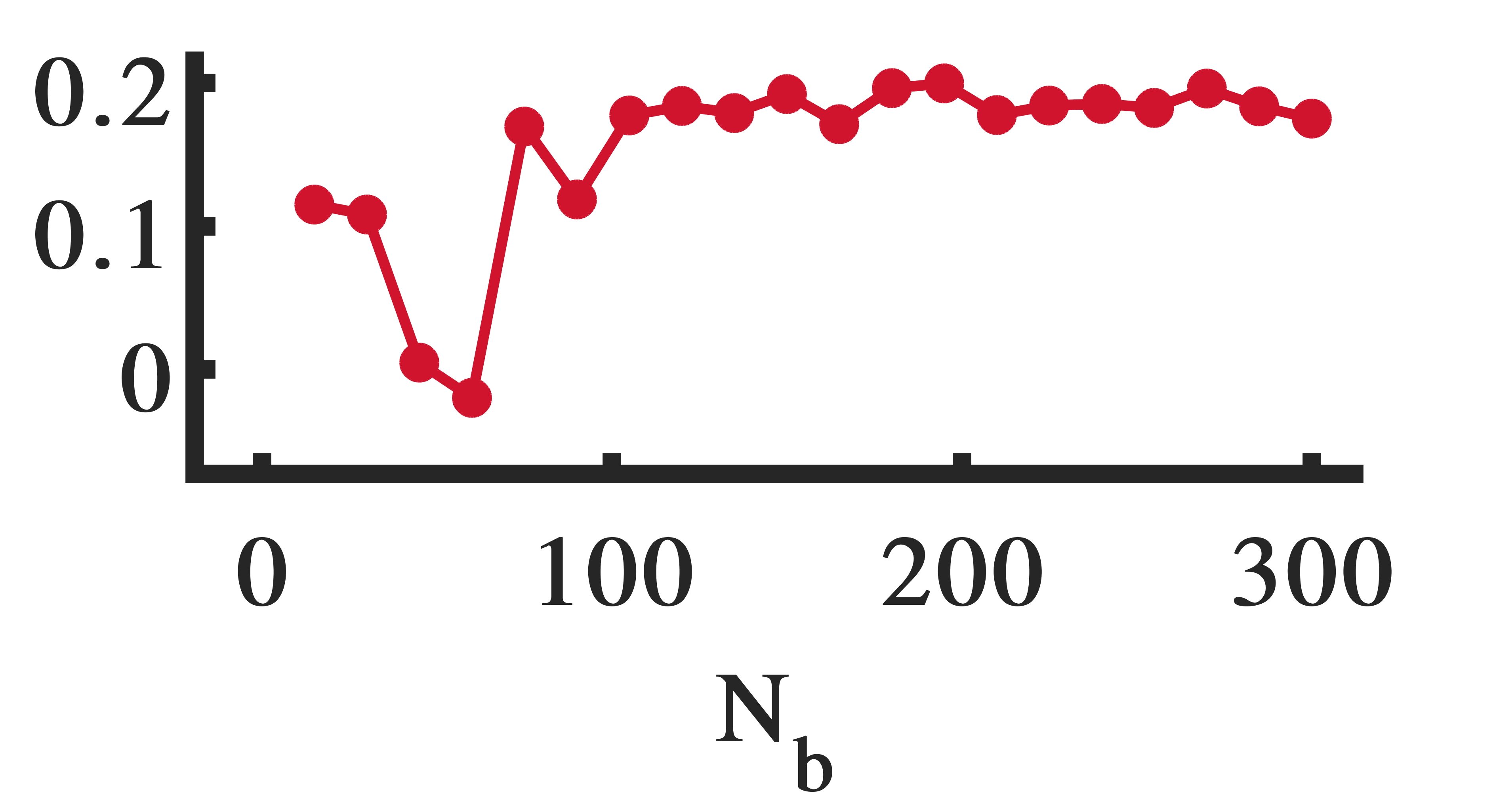}}
  \subfigure[S-WikiLens]{\includegraphics[width=0.24\textwidth]{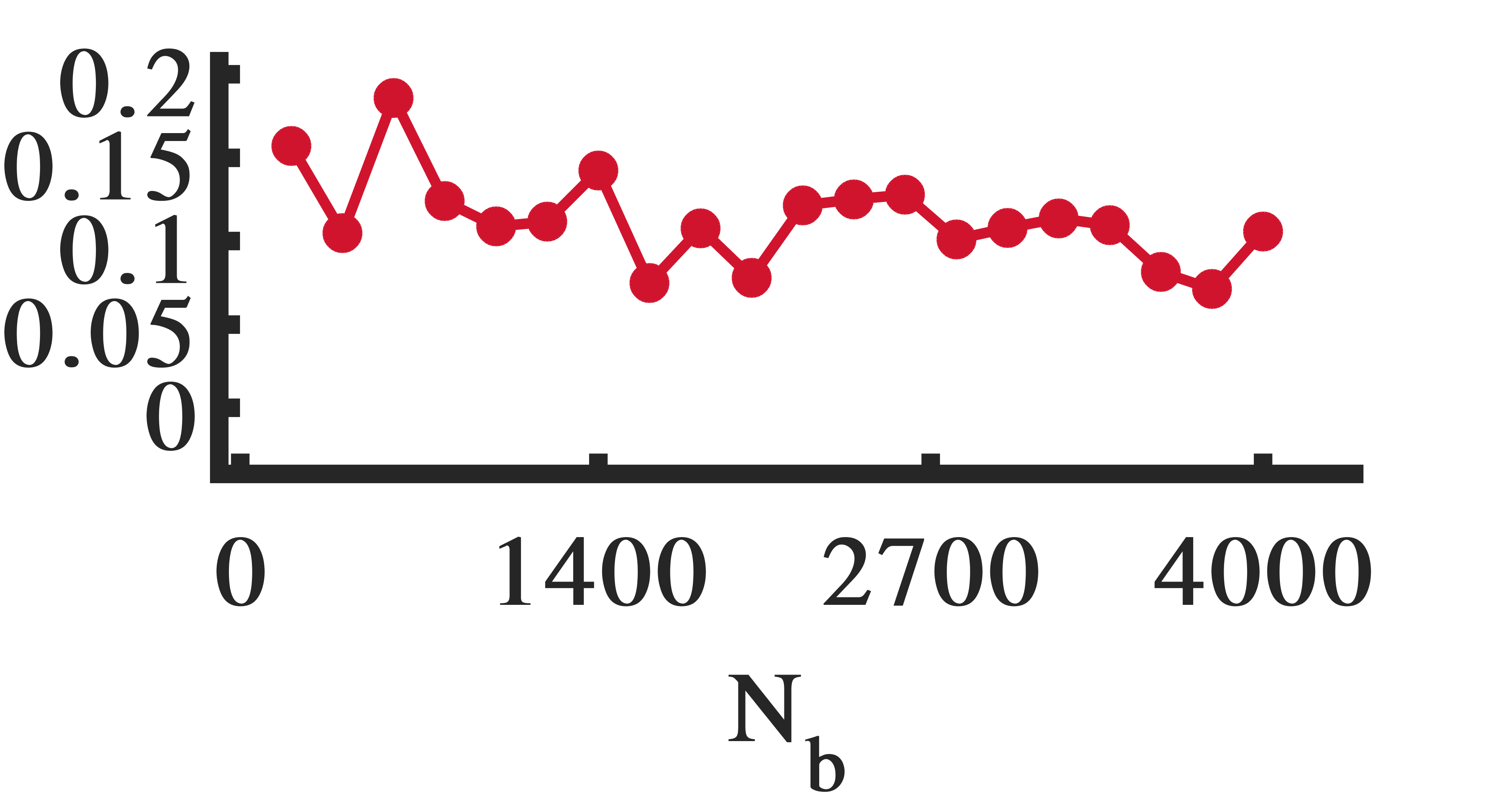}}
  \subfigure[S-ML100k]{\includegraphics[width=0.24\textwidth]{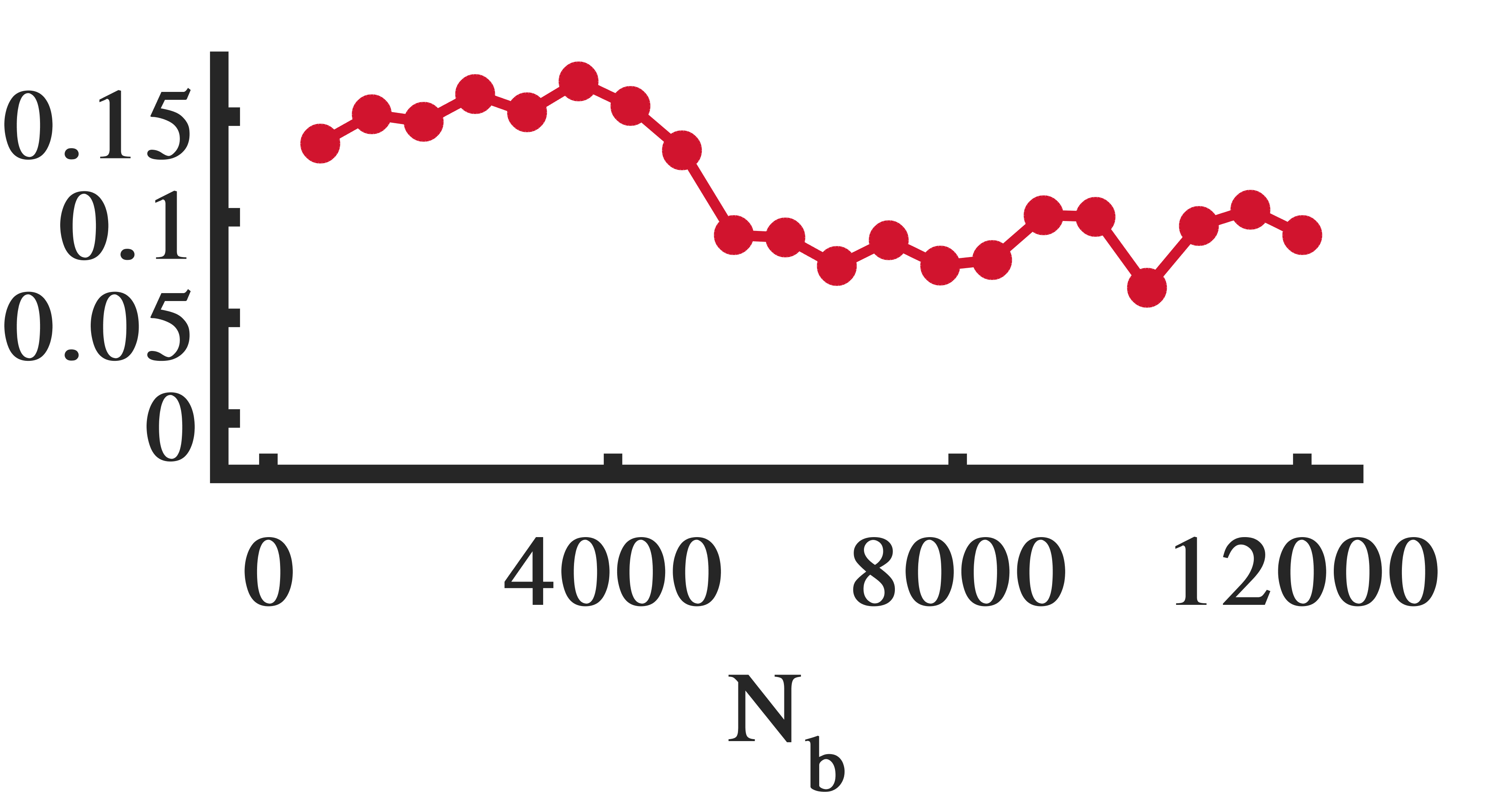}}
  \subfigure[S-ML1m]{\includegraphics[width=0.24\textwidth]{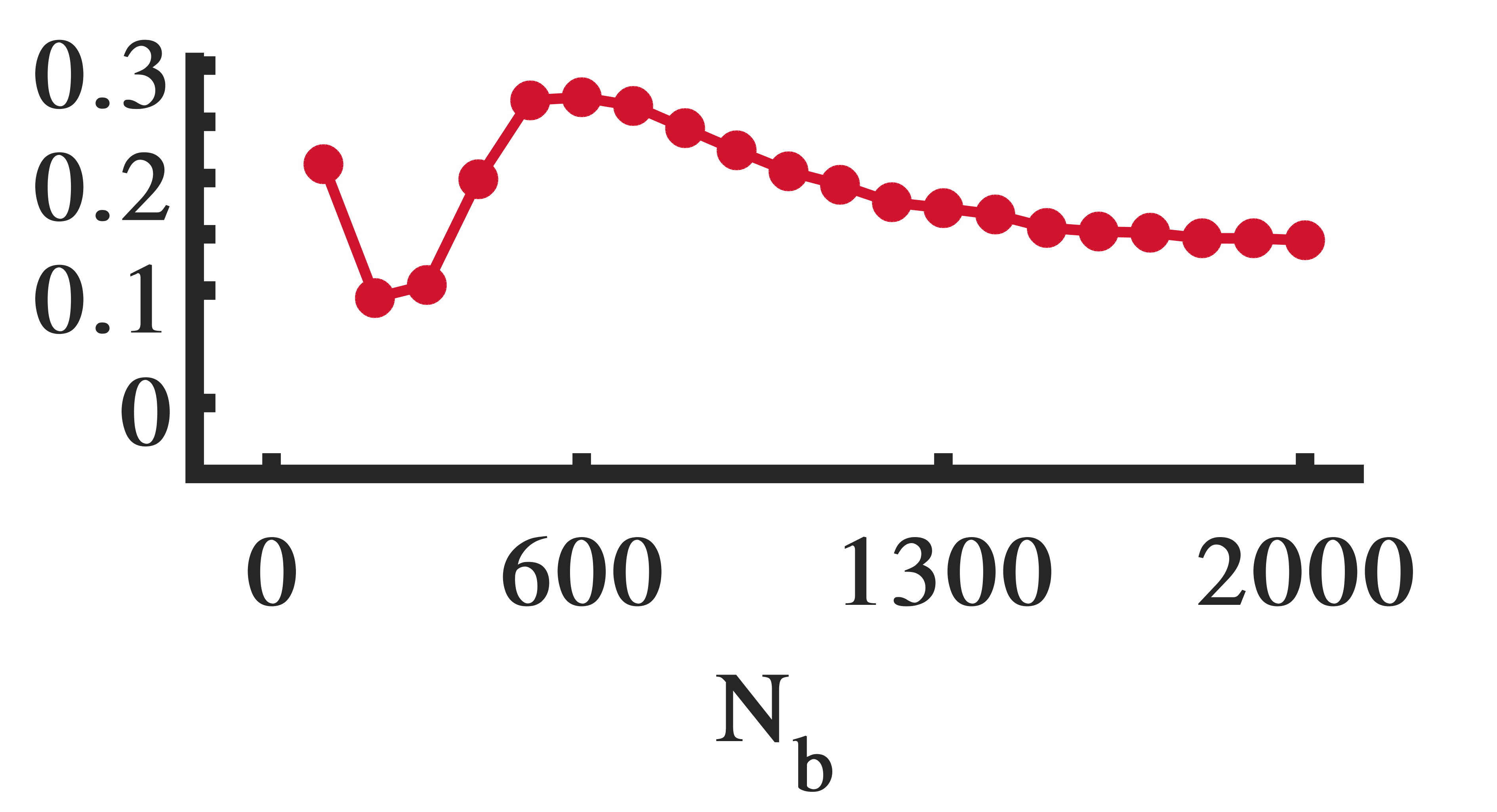}}
  \subfigure[S-Amazon]{\includegraphics[width=0.24\textwidth]{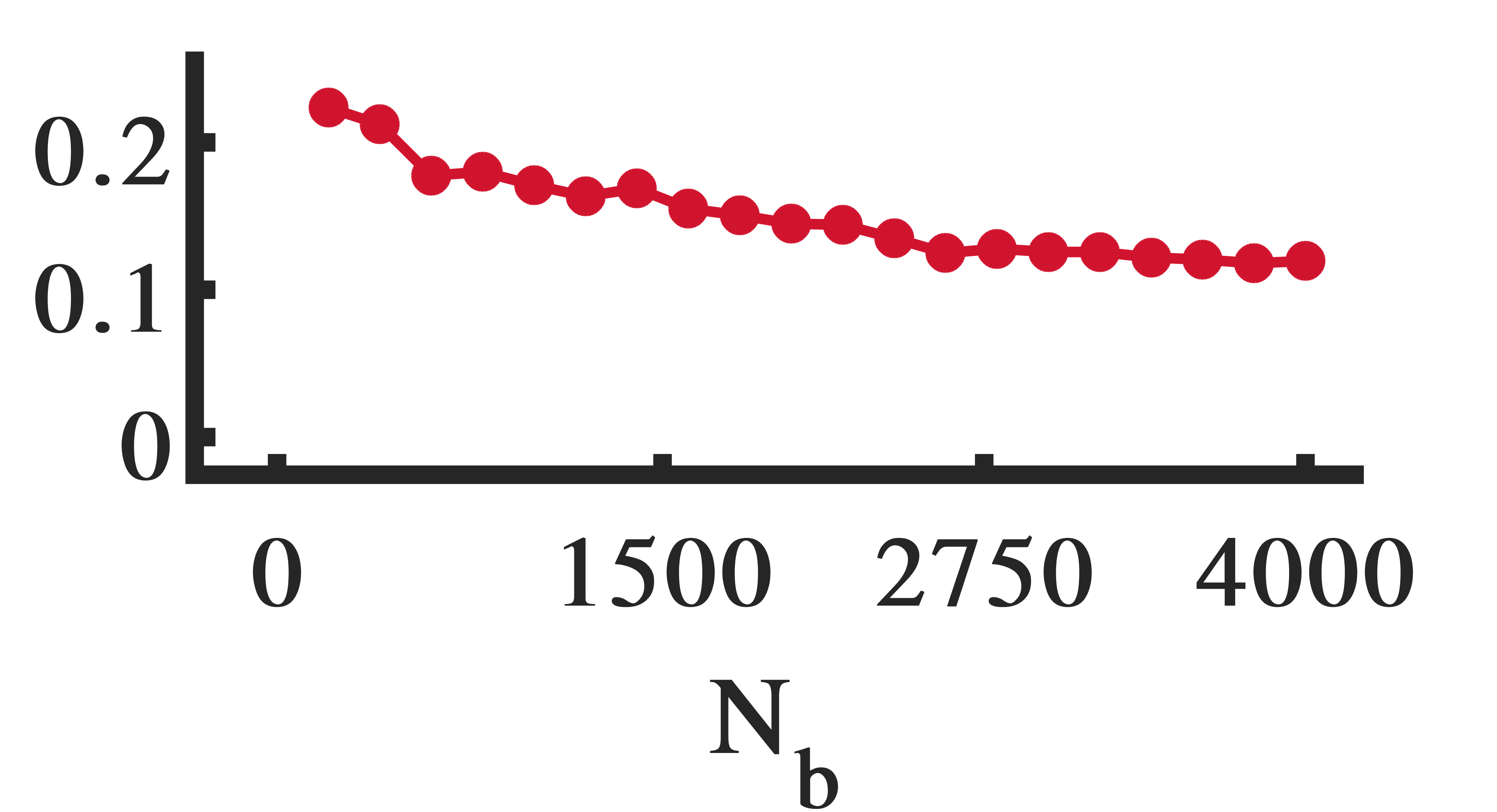}}
   \subfigure[S-Yahoo]{\includegraphics[width=0.24\textwidth]{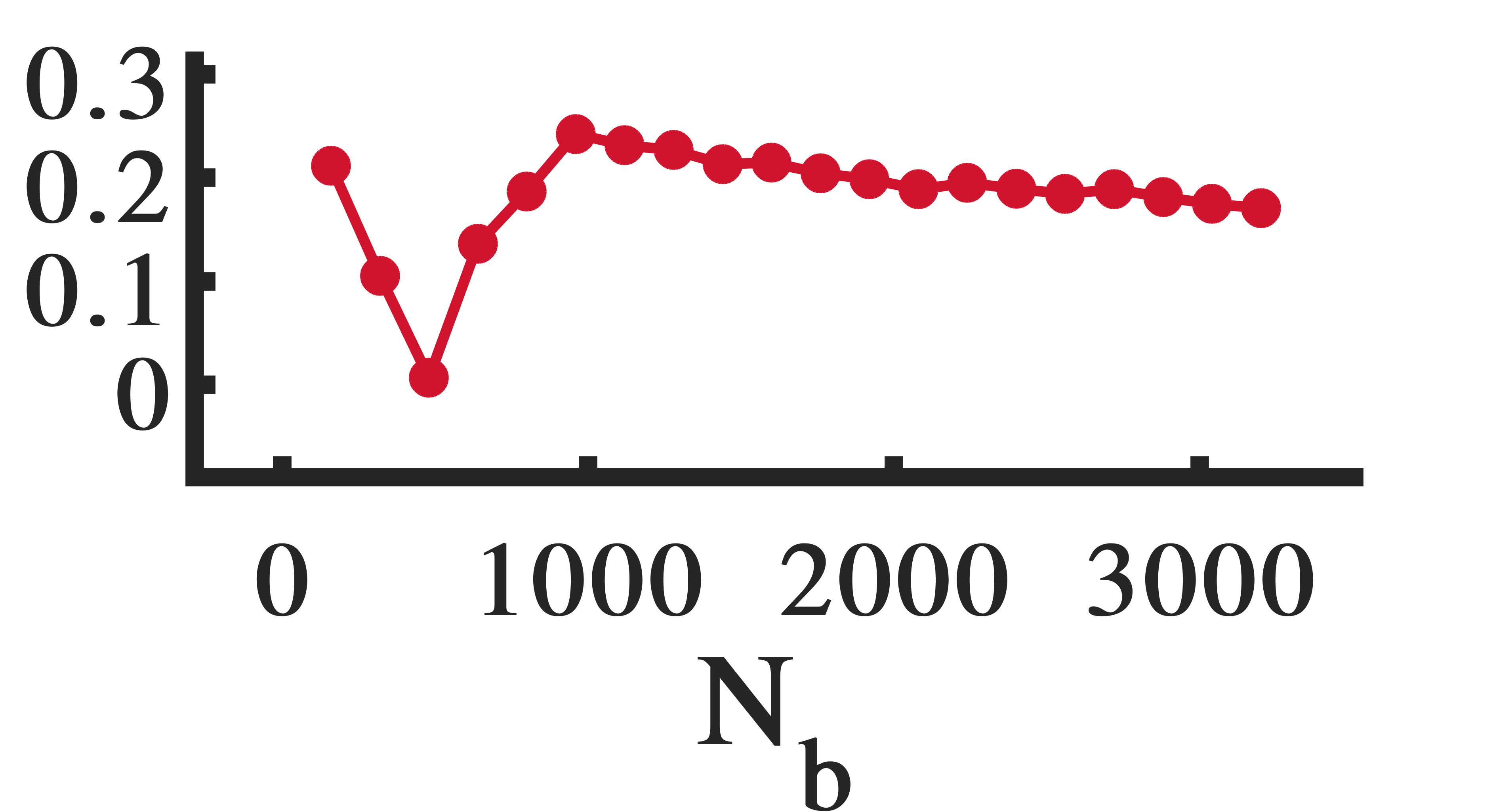}}
    \caption{Strength assortativity localization factor $r^s$ of butterfly vertices over the timeline of burst arrivals.}
    \label{fig:rssynthetic}
\end{figure*}
\begin{table*}[]\caption{Average butterfly rate in synthetic streams generated by our model.} \small \centering
    \begin{tabular}{p{2cm} p{3cm} p{1.5cm} p{1.5cm} p{1.5cm}}
    &  average butterfly rate & $|E^{20}|$ & $N_b^{20}$ & $\bowtie^{20}$  \\\hline 
   S-Ciao &  $37.2440\pm 31.4987$ & $72,966$ & $940$ & $7,430,097$
    \\S-Epinions &  $9.4358\pm 6.1519$ & $14,495$ & $300$ & $304,417$ 
    \\S-WikiLens & $6.3037\pm 5.6897$ & $25,957$ & $4,000$ & $69,302$
    \\S-ML100k &  $2.191\pm 1.3045$ & $87,951$ & $12,000$ & $204,118$
    \\S-ML1m &  $59.8963\pm29.2066$ & $85,397$ & $2,000$ & $7,860,689$ 
    \\S-Amazon & $30.8994\pm28.8517$ & $106,714$ & $4,000$ & $9,599,739$
    \\S-Yahoo & $51.1708\pm27.891$ & $120,616$ & $3,200$ & $10,847,746$
    \end{tabular}\label{tab:butterflyrateModel}
\end{table*}
\begin{table*}[h]\caption{ Mean absolute error of $r^s$ and $F$ elements.}
\small \centering
    \begin{tabular}{p{1.5cm}p{1.5cm}p{1.5cm}p{1.5cm}p{1.5cm}}
          &  $F_1$, $r^s$ &  $F_2$ &  $F_3$ &  $F_4$ \\ \hline
         S-Ciao & $0.0286$ & $0.05$ & $0.0569$ & $0.0199$\\
         S-Epinions & $0.0318$ & $0.0389$ & $0.0219$ & $0.0163$\\ 
         S-wikiLens & $0.0528$ & $0.0705$ & $0.08$ & $0.0194$\\
         S-ML100k & $0.0376$ & $0.0513$ & $0.0642$ & $0.0384$\\
         S-ML1m & $0.0617$ & $0.0445$ & $0.049$ & $0.015$\\
         S-Amazon & $0.1064$ & $0.0539$ & $0.0476$ & $0.094$\\
         S-Yahoo & $0.0578$ & $0.0782$ & $0.0519$ & $0.0316$
    \end{tabular}\label{tab:Ferrors}
\end{table*}
\subsection{Stress testing}\label{subsec:stresstesting}
We evaluate the impact of \emph{sGrow} techniques (batch of isolated edges, probabilistic connections, the random walk backbone, and the sliding window) as the following: 
\begin{itemize}
    \item We use a diverse range of parameters $M\in\{ 100,150,200,250,300 \}$, $\rho\in\{ 0.3,0.4,0.5,0.6,0.7 \}$, $L_{min}\in\{ 1,2,3,4 \}$, $L_{max}\in\{ 3,4,5 \}$, and $\beta\in\{ 5,10,15,20 \}$ to create S-Amazon stream with initial prefix of $10^3$ sgrs from the Amazon stream. In each row of Tables~\ref{tbl:tableoffiguresrs}, \ref{tbl:tableoffigures}, and \ref{tbl:burstsize}, we examine the effect of one parameter on effectiveness, efficiency, and frequency distribution of burst sizes, respectively. 
    \item We also create S-Amazon stream with $10^7$ sgrs such that after generating $5\times10^6$, one of the parameters switches from $M=100$, $\beta=5$, $L\in[1 5]$ and $\rho=0.4$ to $M=300$, $\beta=20$, $L\in[4 5]$ and $\rho=0.8$. In Figure~\ref{fig:dynamic}, we examine the effect of the parameter switch on effectiveness and efficiency.
\end{itemize}

\subsubsection{Effectiveness}\label{subsubsec:effectiveness}
In Table~\ref{tbl:tableoffiguresrs}, we check the evolution of $r^s$ as the number of butterflies grows to $10^7$ in S-Amazon stream with different parameter configurations. We observe that data points are not clustered by colors (corresponding parameters $[L_{min},L_{max}]$, $M$, and $\beta$) in rows a, c, and d, however they are clustered and ordered by $\rho$. Also, $0.1 \leq r^s \leq 0.15$ regardless of  $[L_{min},L_{max}]$, $M$, and $\beta$. The value of $r^s$ is stable at a positive value, which is higher for lower connection probabilities $\rho$. As the connection probability $\rho$ increases, the probability of establishing connections between the newly added vertices with low strength and high strength neighbors of the PRW vertices increases, therefore the assortativity level decreases. In Figure~\ref{fig:dynamic}, we check the evolution of $r^s$ as the number of butterflies grows to $10^7$ in S-Amazon stream with parameter switch in the middle of stream generation. We observe that compared to the stream with static parameters, the data points of streams with dynamic parameters follow the same pattern after the switch of $[L_{min},L_{max}]$, $M$, and $\beta$. Increasing the value of $\rho$ slightly decreases $r^s$, yet the steady state is retained after the switch.

\subsubsection{Efficiency}\label{subsubsec:efficiency} 
In Table~\ref{tbl:tableoffigures}, we check the time for generation as the stream grows to $10^7$ sgrs in S-Amazon stream with different parameter configurations.  
We observe that data points are not clustered by colors (corresponding parameters $[L_{min},L_{max}]$, $M$, and $\beta$) in rows a, c, and d, however they are clustered by $\rho$ and ordered by both $\rho$ and $[L_{min},L_{max}]$. That is, the generation time is not impacted by $M$ 
(Table~\ref{tbl:tableoffigures}.c) or $\beta$ 
(Table~\ref{tbl:tableoffigures}.d), however it is affected by $\rho$ and $[L_{min},L_{max}]$. 
As the connection probability $\rho$ increases, the generation time decreases since the number and the size of bursts created at each time step increases (Table~\ref{tbl:tableoffigures}.b).  
As the range of random walk length $L\in[L_{min},L_{max}]$  increases, the generation time decreases since the size of bursts created at each time step increases (Table~\ref{tbl:tableoffigures}.a).
In Figure~\ref{fig:dynamic}, we check the time for generation as the stream grows to $10^7$ sgrs in S-Amazon stream with parameter switch in the middle of stream generation. We observe that switching $M$ and $\beta$ gradually decreases the slope of generation time curve, while switching $L$ and $\rho$ promptly and significantly changes the slope. This confirms that $L$ and $\rho$ significantly determine the generation time.

\subsubsection{Burst Size}\label{subsubsec:burstsize} 
In Table~\ref{tbl:burstsize}, we check the frequency distribution of all burst sizes in S-Amazon with $10^7$ sgrs generated by different parameter configurations. We observe that for all values of $\rho$ and $L_{max}$ by increasing $M$, the range of burst sizes expands, therefore $M$ significantly impacts burst sizes. Also, the maximum burs size increases as $\rho$ and $L_{max}$ increase.

\subsubsection{Parameter Configuration}\label{subsec:parameterconfig}
The introduced techniques with configurable parameters enable generating realistic bipartite streaming graphs with generation time sub-linear with respect to the number of sgrs. The  burstier the stream, the lower the generation time. In the following we elaborate a reference guide for configuring the parameters.

$\boldsymbol{M}$ -- The upper bound for the random number of new batched sgrs added at each timestep does not impact the strength assortativity patterns and the generation time. This parameter can be comfortably used to adjust the level of burstiness of the streaming graph without affecting the performance of the generative algorithm.

$\boldsymbol{\rho}$ -- The probability of creating an edge between each random walk vertex and a random vertex or an edge between each new sgr and neighbors of the random walk vertices impacts the level of strength assortativity but not its steady state. It also affects the generation time. This parameter can be used to trade off scalability and the level of strength assortativity. The default value is $\rho=0.3$, yet increasing $\rho$ would decrease the generation time and strength assortativity level. Values less than or equal to $0.7$ ensure positive strength assortativity. Also, $\rho$ determines the probability of out-of-order sgrs.

$\boldsymbol{[L_{min},L_{max}]}$ -- The range for random length of PRW backbone used for establishing burst of sgrs does not impact the strength assortativity patterns. It affects the generation time. This parameter can be used to increase the scalability of the stream generation. The default range is $[1,2]$ and shifting/expanding the range by increasing the lower or upper bound would decrease the generation time.

$\boldsymbol{\beta}$ -- The sliding window parameter used as the frequency and the size of sliding does not impact the strength assortativity patterns and generation time. This parameter can be comfortably used for creating streams in which sgrs are semantically time-sensitive and need a user-specified slide parameter. The default value is $\beta=5$, while any other value can be set.
\newcolumntype{M}[1]{>{\centering\arraybackslash}m{#1}}
\begin{table}[htb!]\centering
        \begin{tabular}{cM{35mm}M{35mm}M{35mm}}
            
             & \hspace{8mm}\boldsymbol{$L_{max}$$=$$3$$-$$5$} & \hspace{8mm}\boldsymbol{$L_{min}$$=$$1$$-$$4$} &  \\       
            \toprule   
            \makecell{(a)\\$M$$=$$100$\\$\rho$$=$$0.3,0.5,0.7$\\$\beta$$=$$5$} &
            \includegraphics[width=0.288\textwidth]{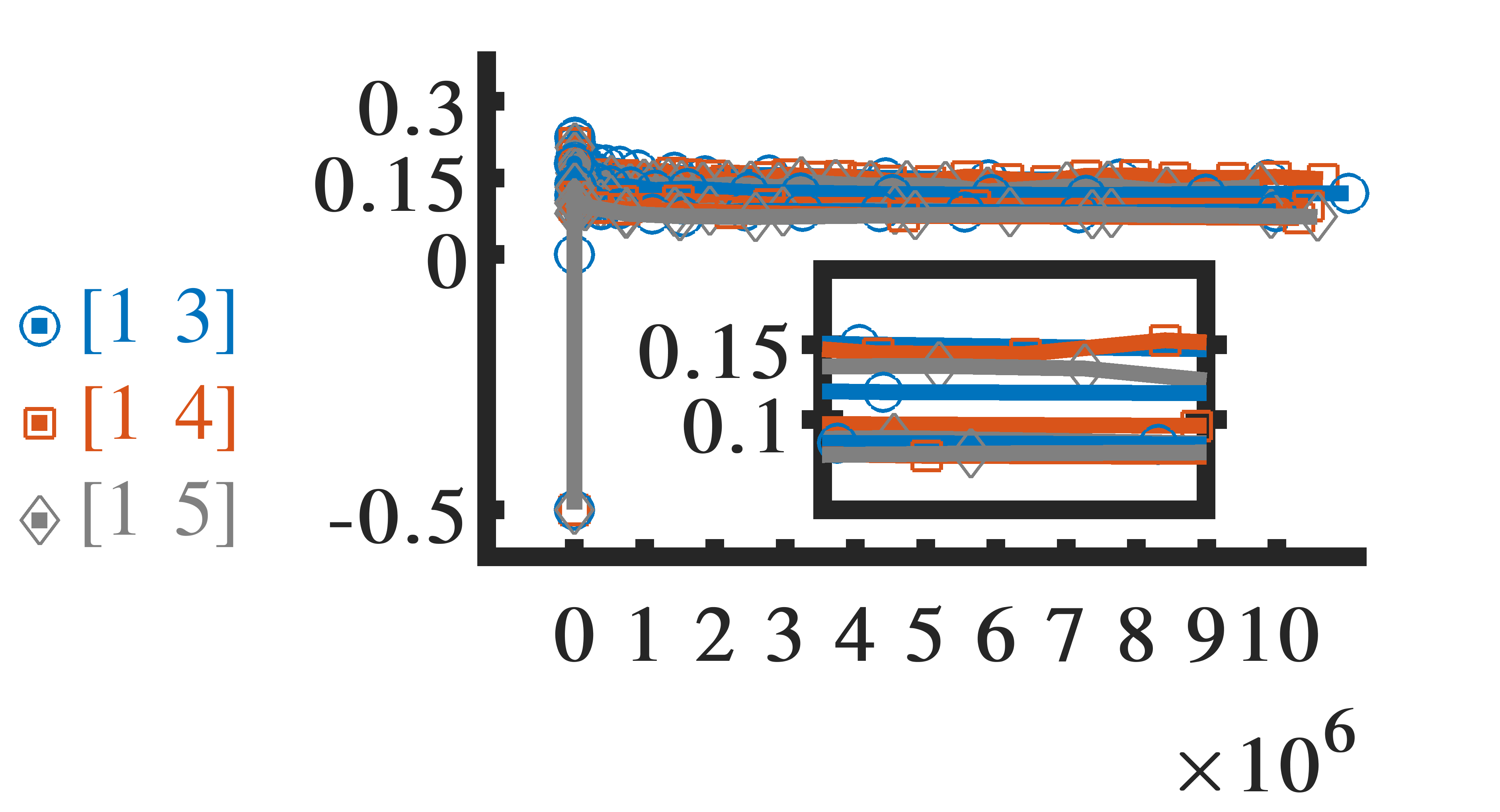} &
            \includegraphics[width=0.288\textwidth]{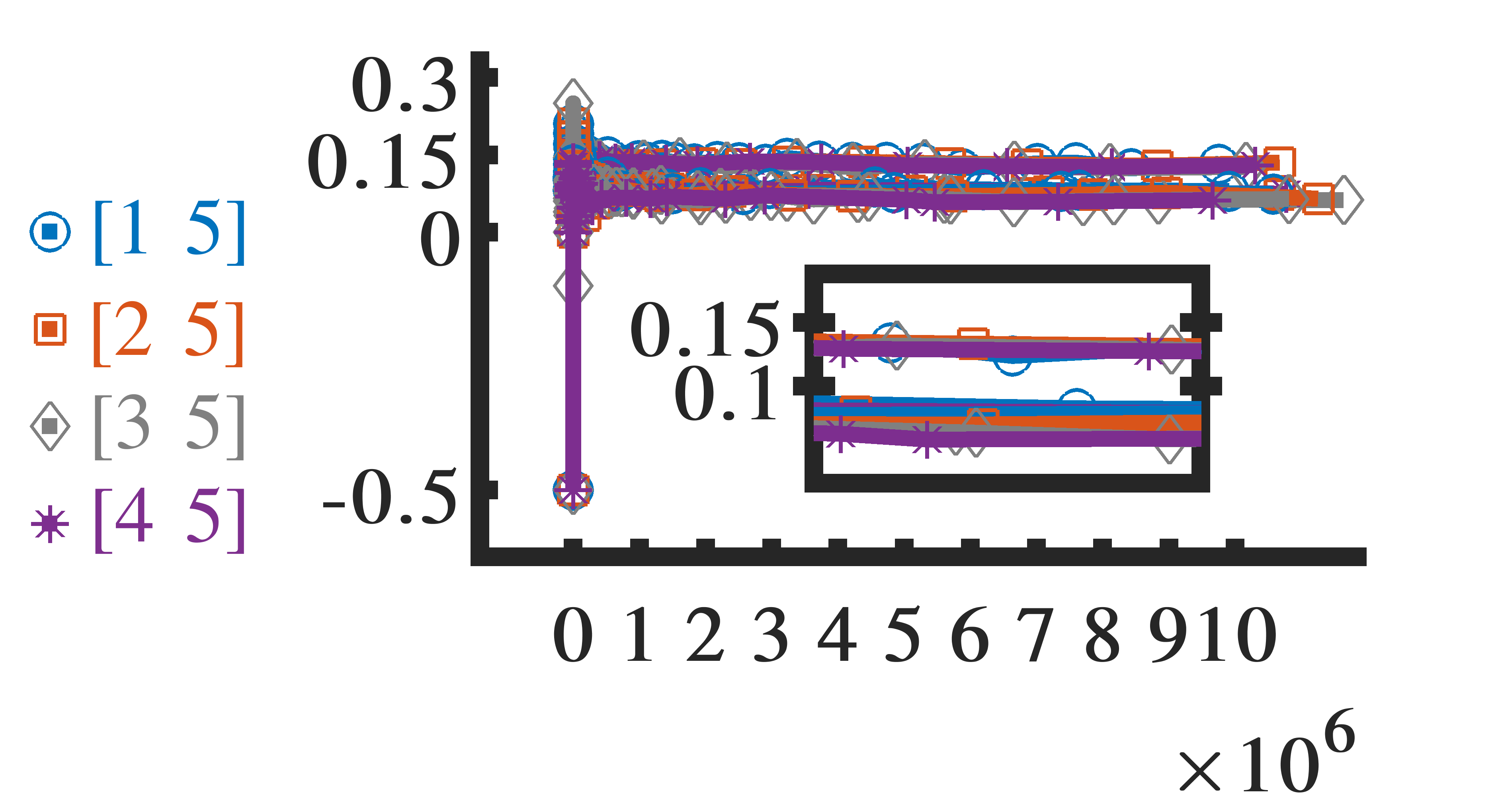} \\
            
             &\hspace{8mm} $L$$\in$$[1,3]$ & \hspace{8mm}$L$$\in$$[1,4]$ & \hspace{8mm} $L$$\in$$[1,5]$   \\
            \toprule 
            
            \makecell{(b)\\$M$$=$$100$\\\boldsymbol{$\rho$$=$$0.3-0.7$}\\$\beta$$=$$5$} & 
            \includegraphics[width=0.279\textwidth]{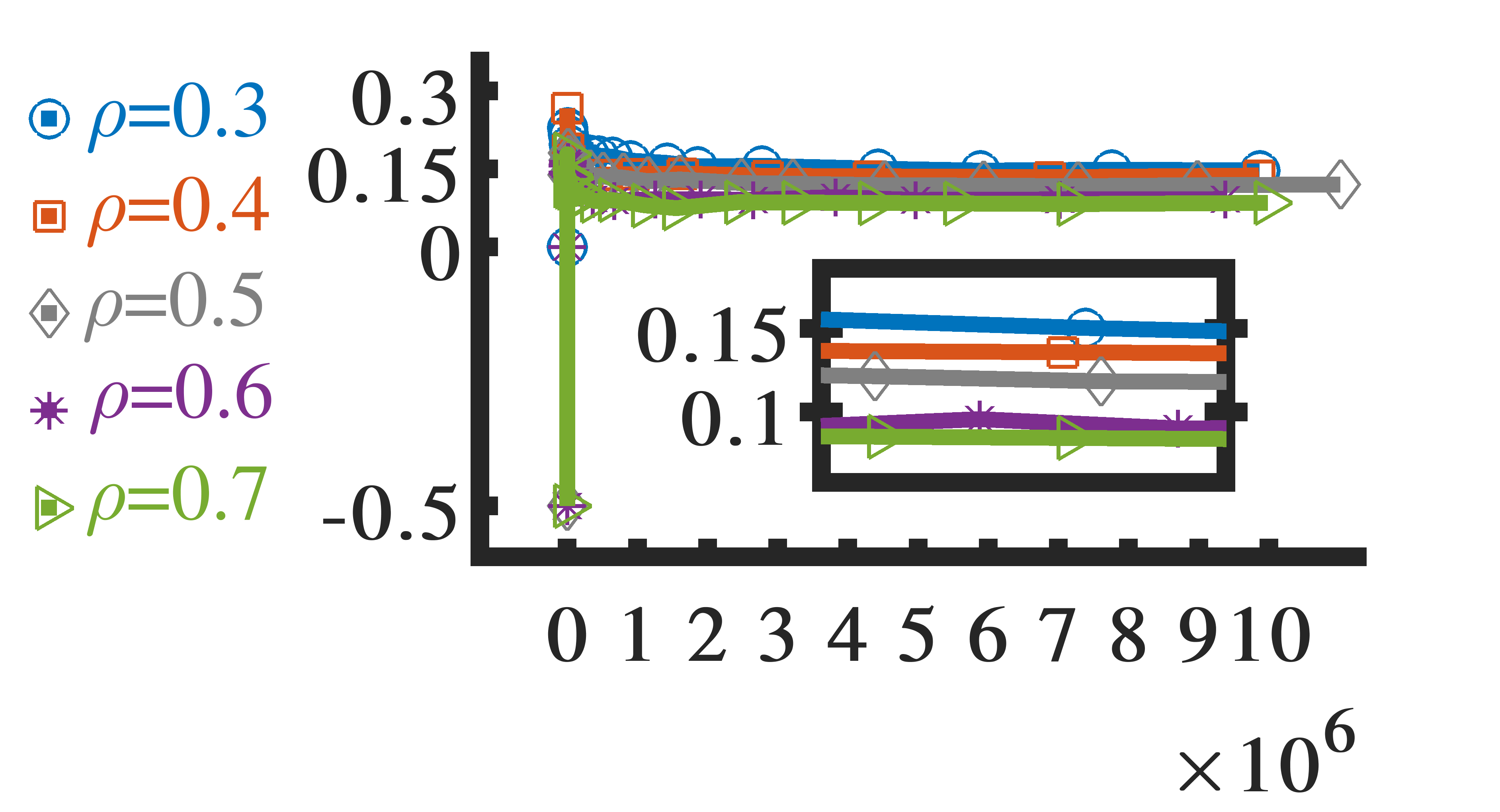} &  
            \includegraphics[width=0.288\textwidth]{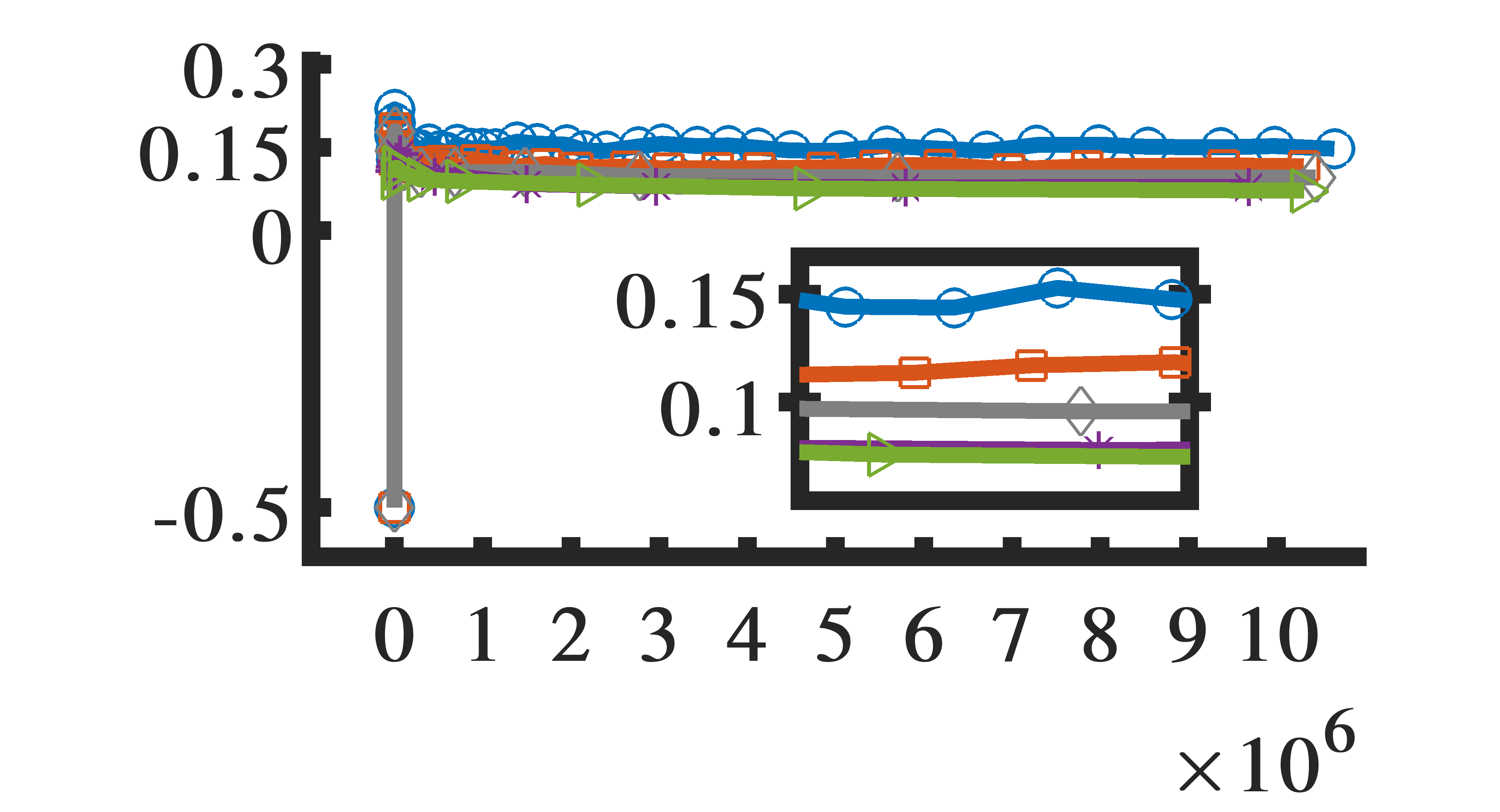}& 
            \includegraphics[width=0.288\textwidth]{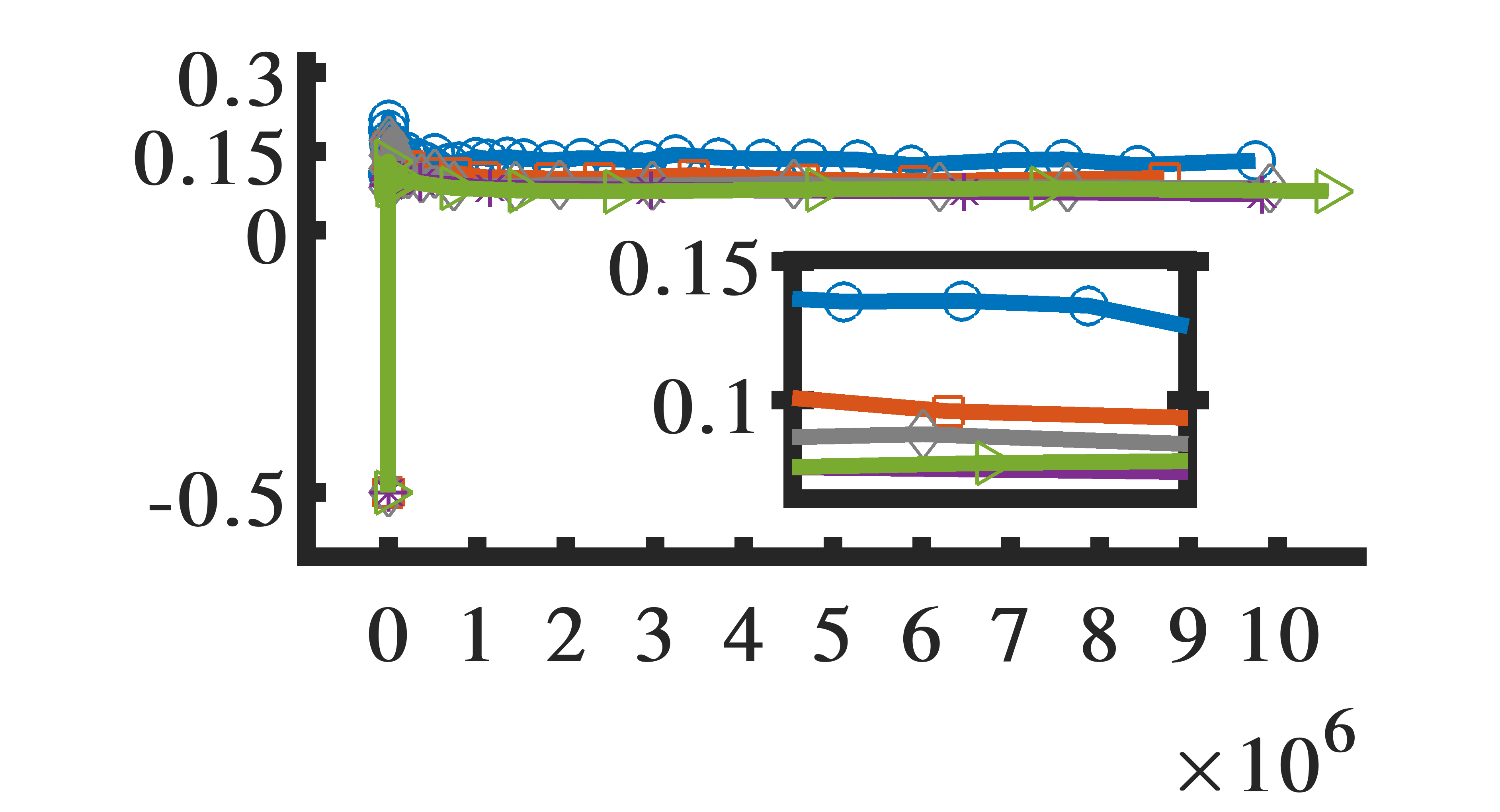}\\
            
            \midrule           
            \makecell{(c)\\\boldsymbol{$M$$=$$100-300$}\\$\rho$$=$$0.3,0.5,0.7$\\$\beta$$=$$5$} &  
           \includegraphics[width=0.28\textwidth]{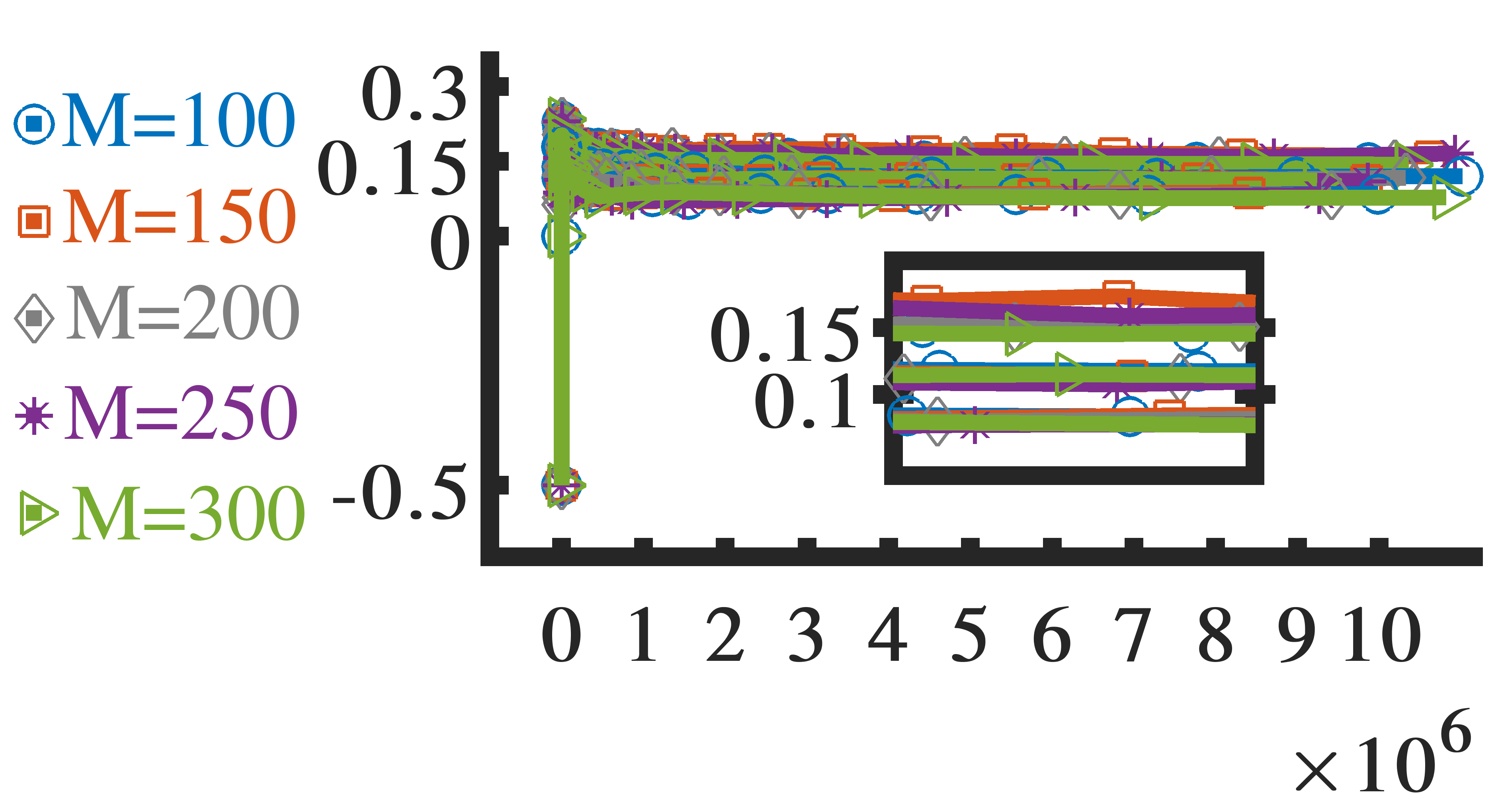} & 
            \includegraphics[width=0.288\textwidth]{L4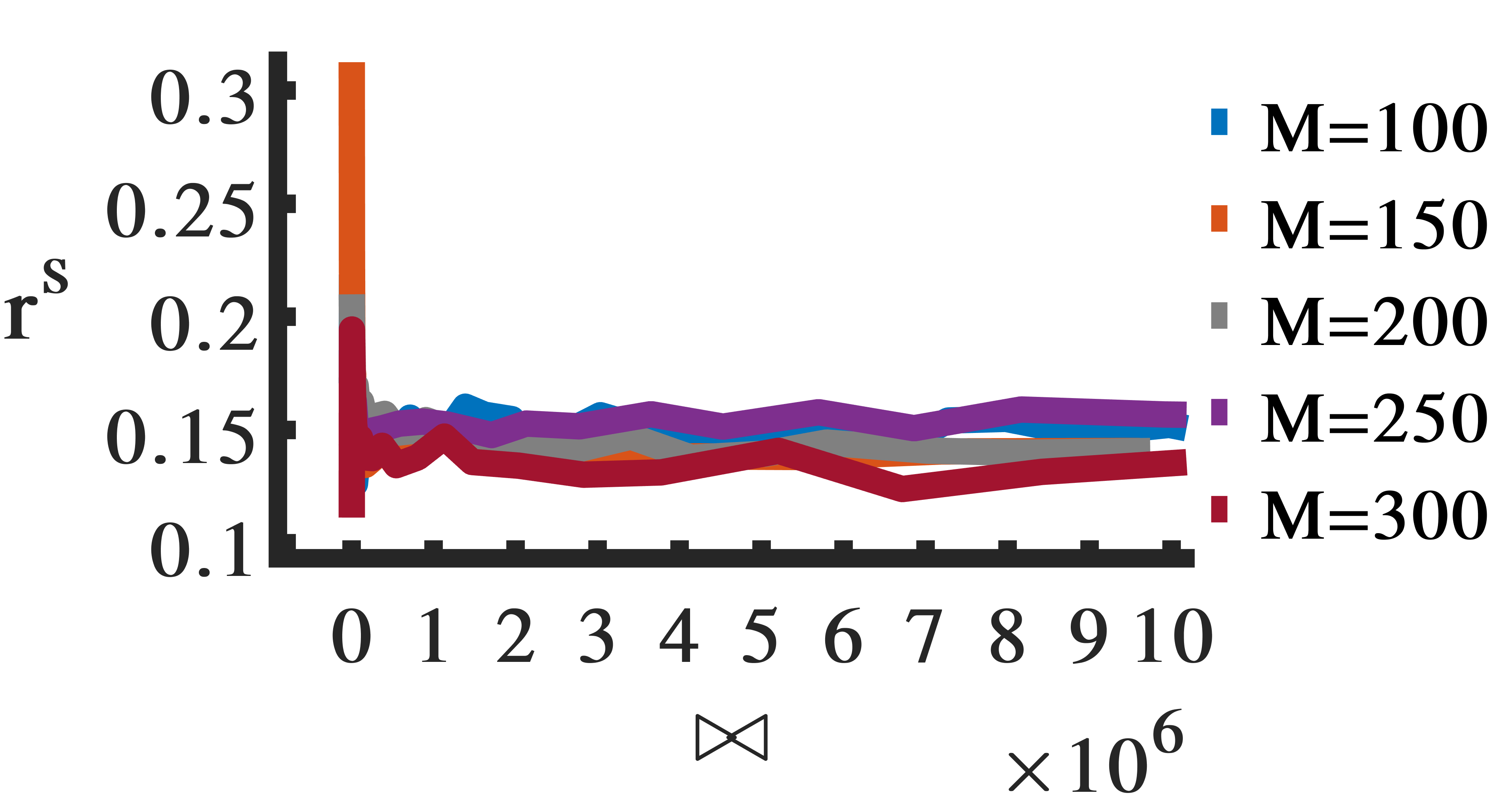} & 
            \includegraphics[width=0.288\textwidth]{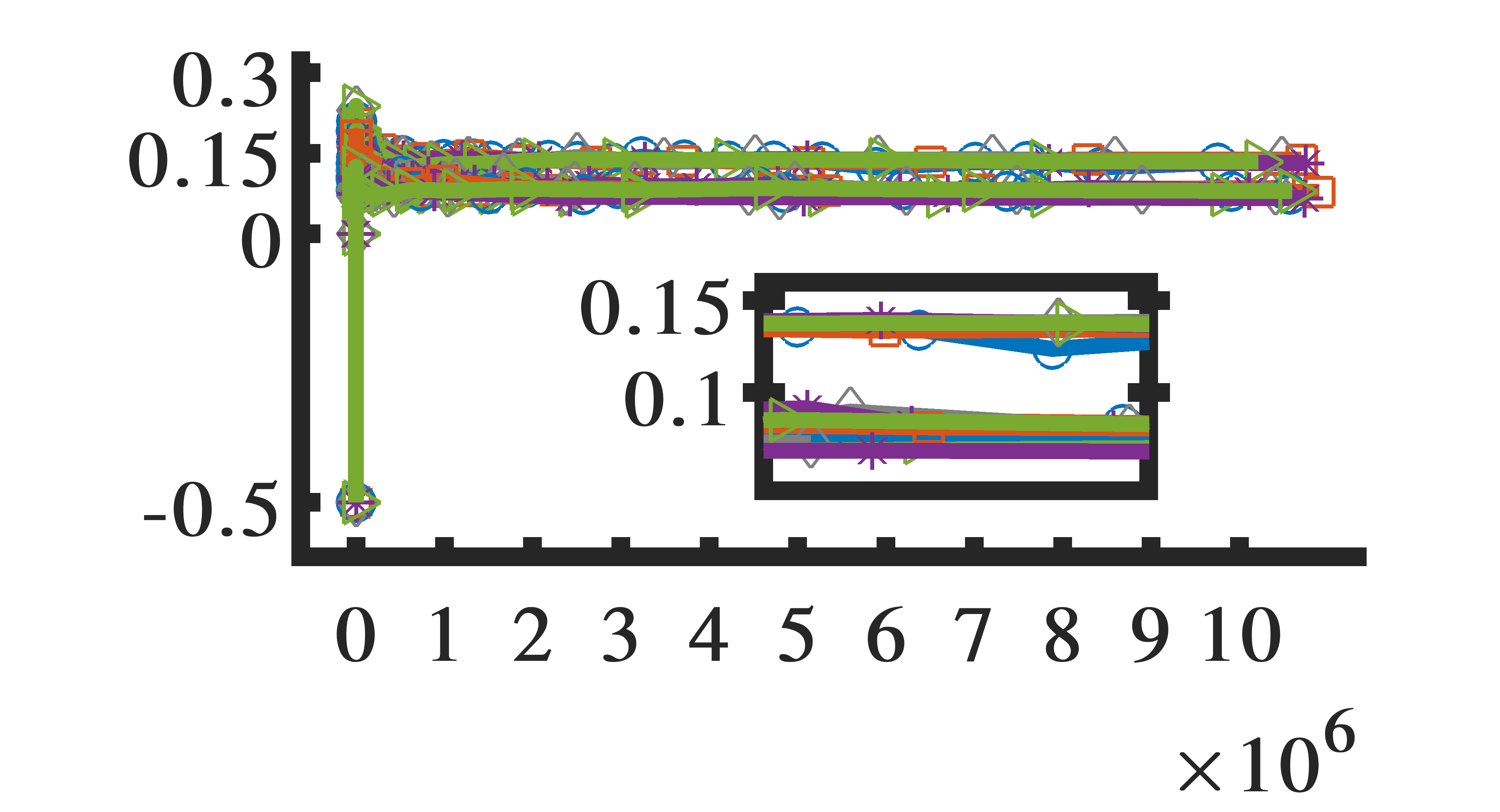}  \\            
            \midrule
            \makecell{(d)\\$M$$=$$100$\\$\rho$$=$$0.3,0.5,0.7$\\\boldsymbol{$\beta$$=$$5-20$}} &  
            \includegraphics[width=0.28\textwidth]{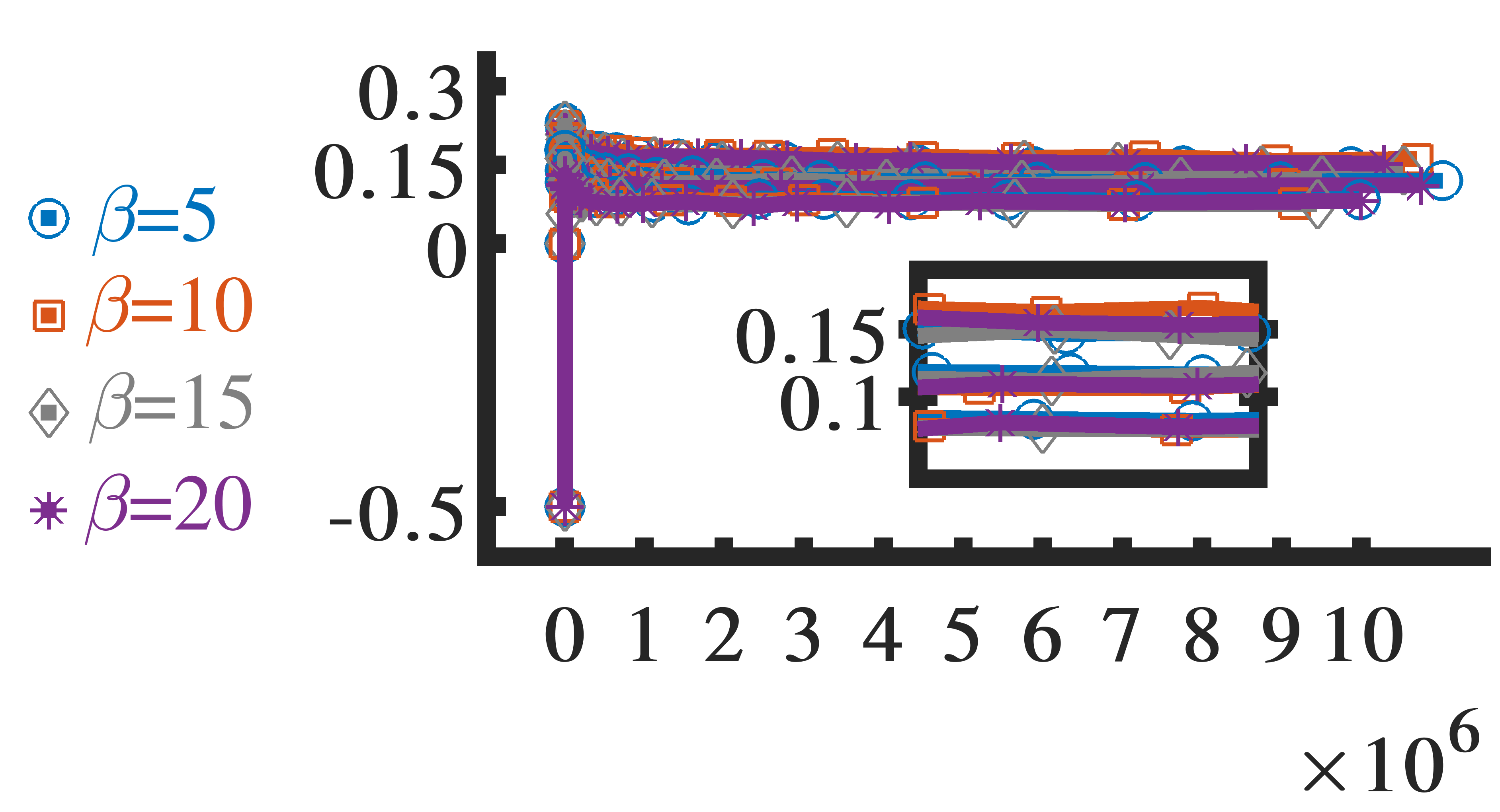} & 
            \includegraphics[width=0.288\textwidth]{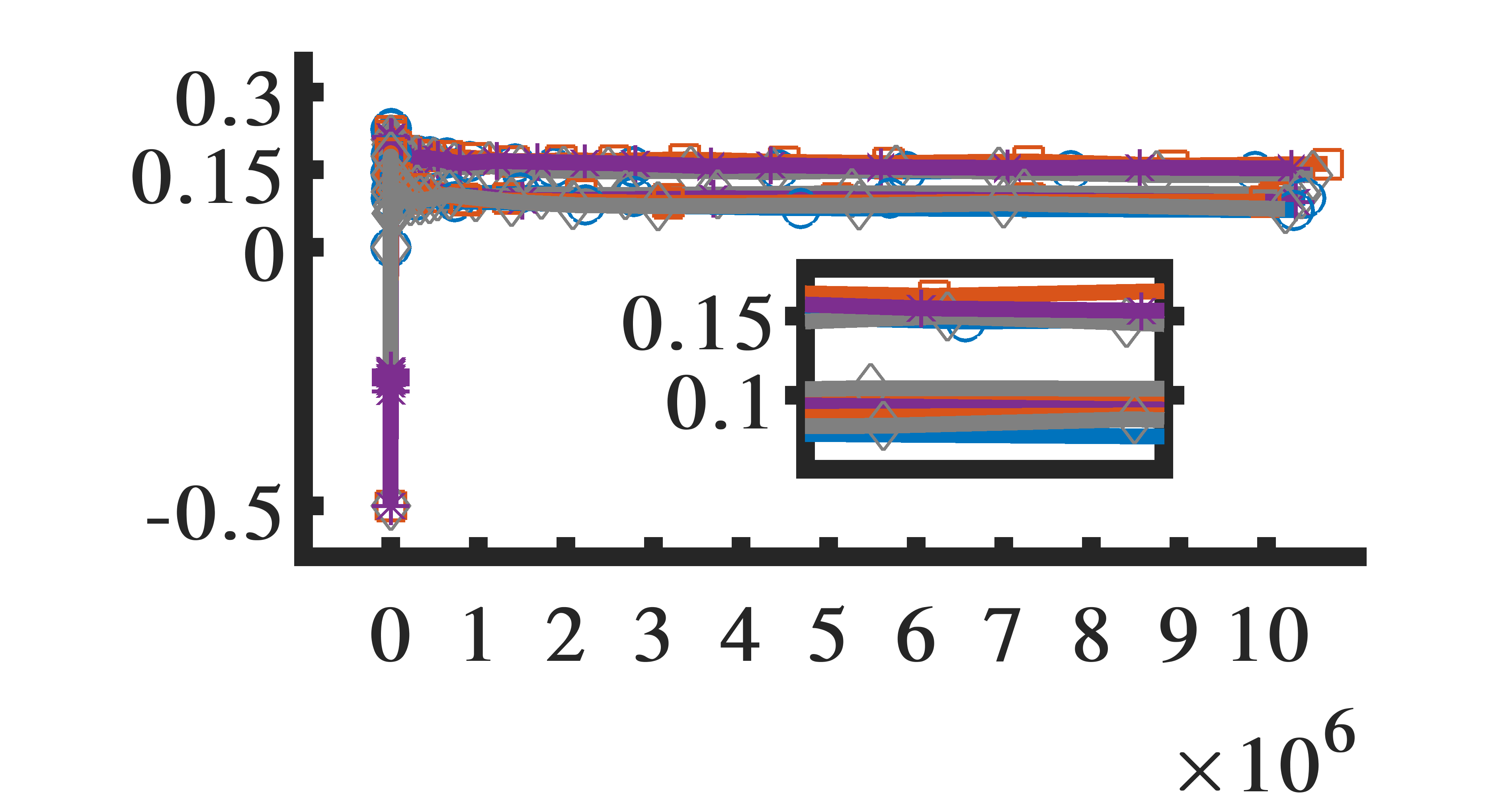} &  
            \includegraphics[width=0.288\textwidth]{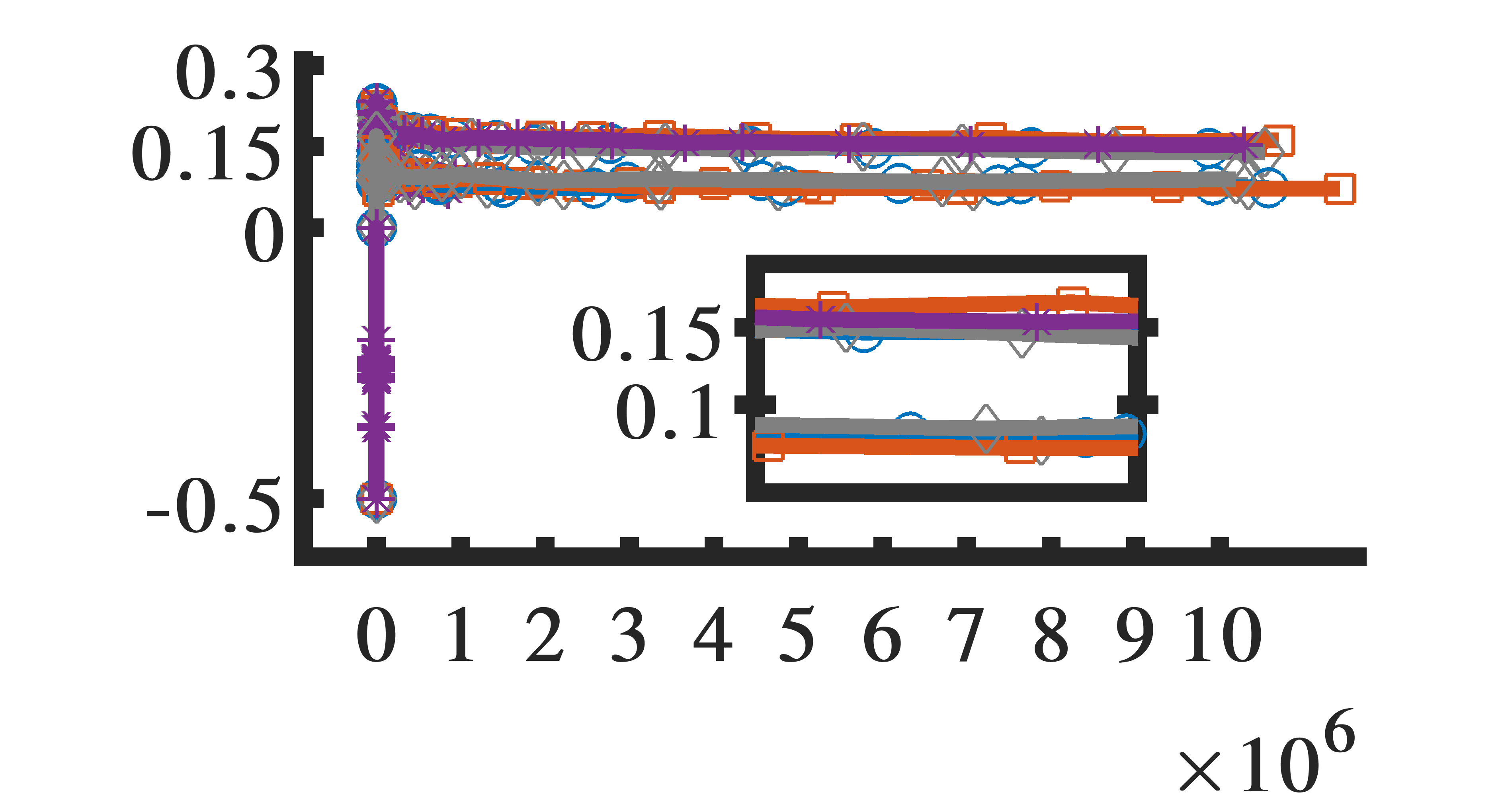}\\

        \end{tabular}
        \caption{Strength assortativity localization factor $r^s$ of sGrow (y axes) versus butterfly count $\bowtie$ (x axes) for S-Amazon with different parameter configurations.}
        \label{tbl:tableoffiguresrs}
    \end{table}
\newcolumntype{M}[1]{>{\centering\arraybackslash}m{#1}}
\begin{table}[htb!]\centering
        \begin{tabular}{cM{35mm}M{35mm}M{35mm}}
            
             & \hspace{8mm}\boldsymbol{$L_{max}$$=$$3$$-$$5$} & \hspace{8mm}\boldsymbol{$L_{min}$$=$$1$$-$$4$} &  \\       
            \toprule   
            \makecell{(a)\\$M$$=$$100$\\$\rho$$=$$0.3,0.5,0.7$\\$\beta$$=$$5$} &
            \includegraphics[width=0.288\textwidth]{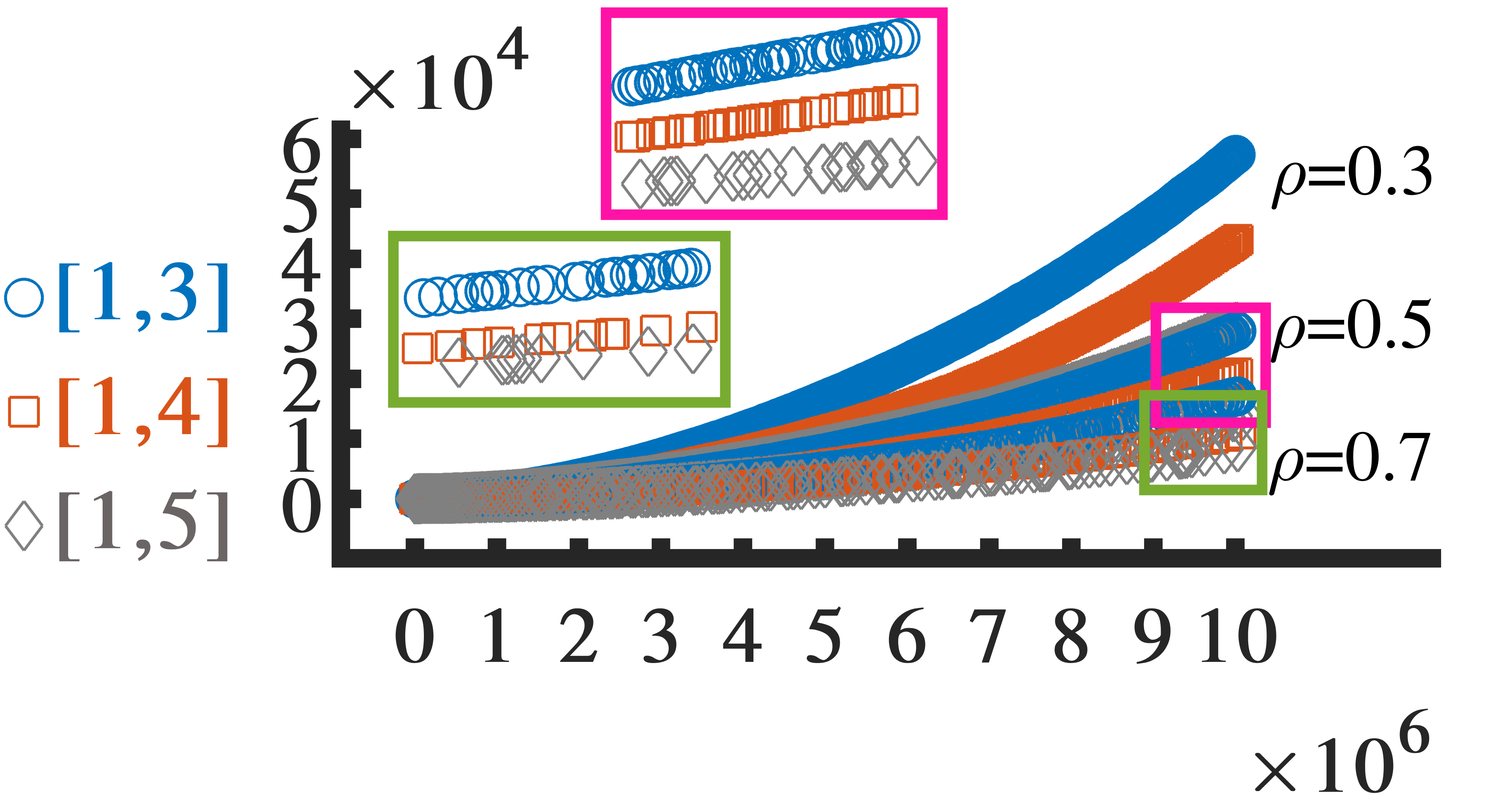} &
            \includegraphics[width=0.288\textwidth]{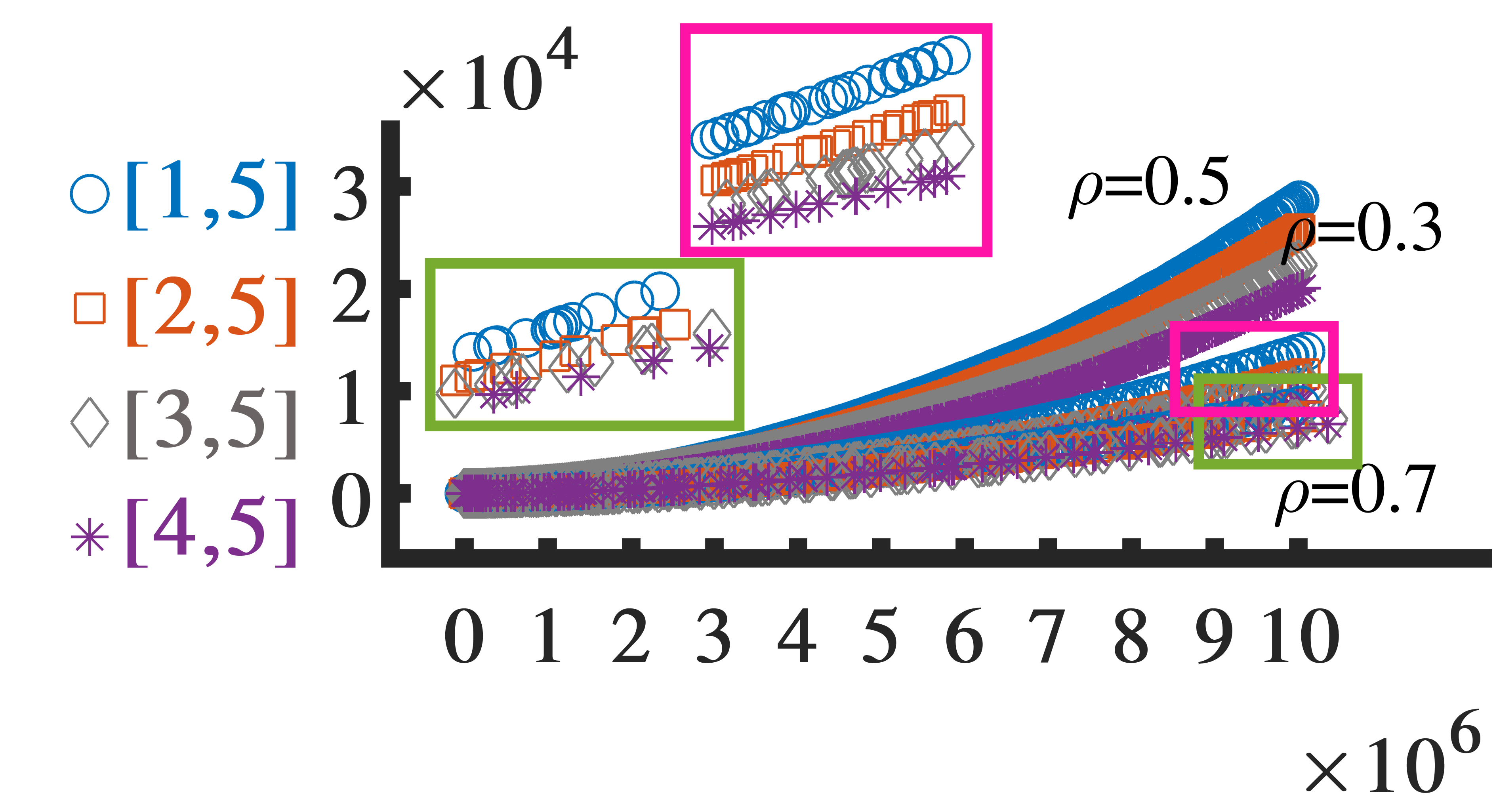} \\
            
             &\hspace{8mm} $L$$\in$$[1,3]$ & \hspace{8mm}$L$$\in$$[1,4]$ & \hspace{8mm} $L$$\in$$[1,5]$   \\
            \toprule 
            
            \makecell{(b)\\$M$$=$$100$\\\boldsymbol{$\rho$$=$$0.3-0.7$}\\$\beta$$=$$5$} & 
            \includegraphics[width=0.279\textwidth]{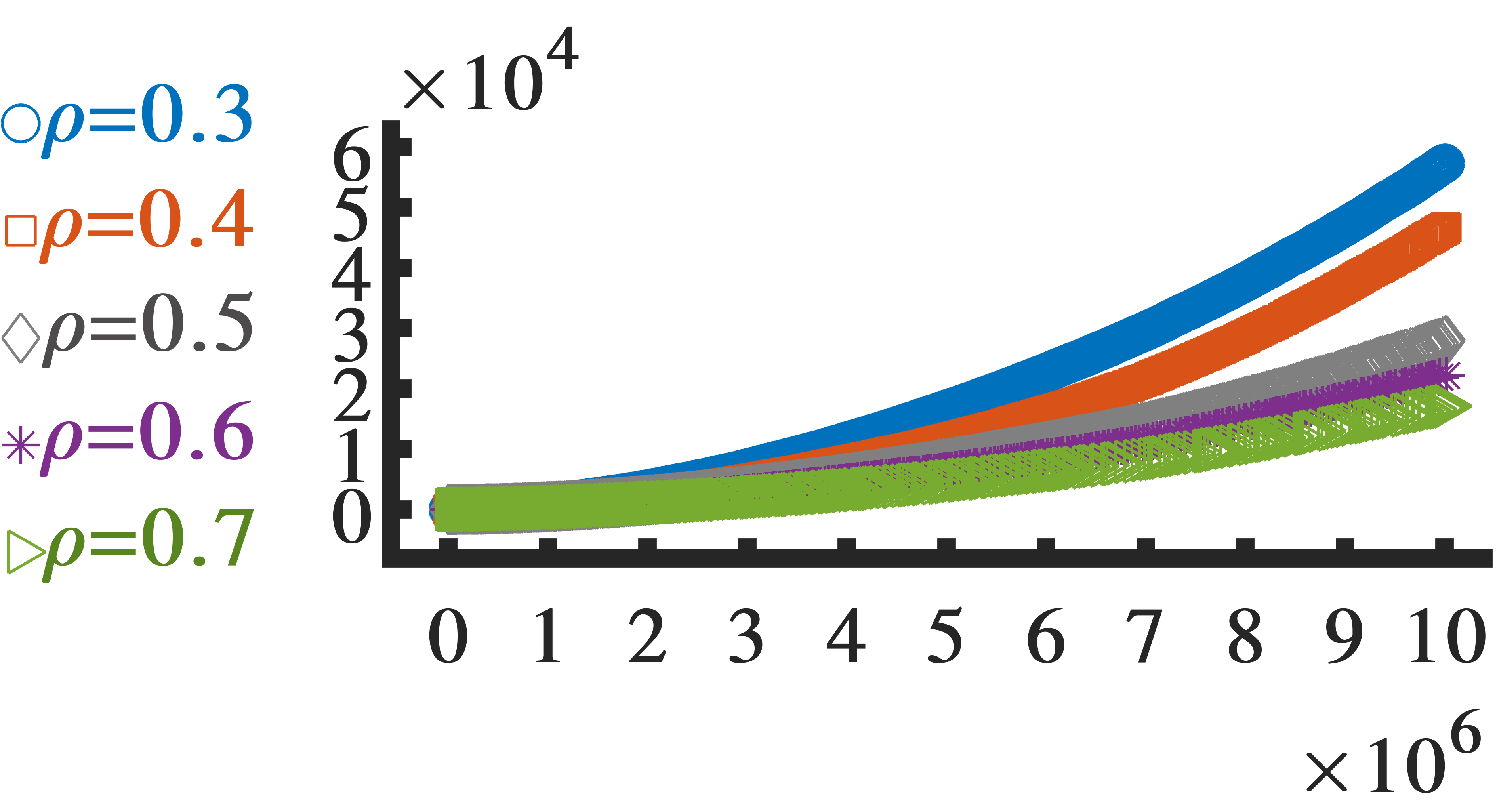} &  
            \includegraphics[width=0.288\textwidth]{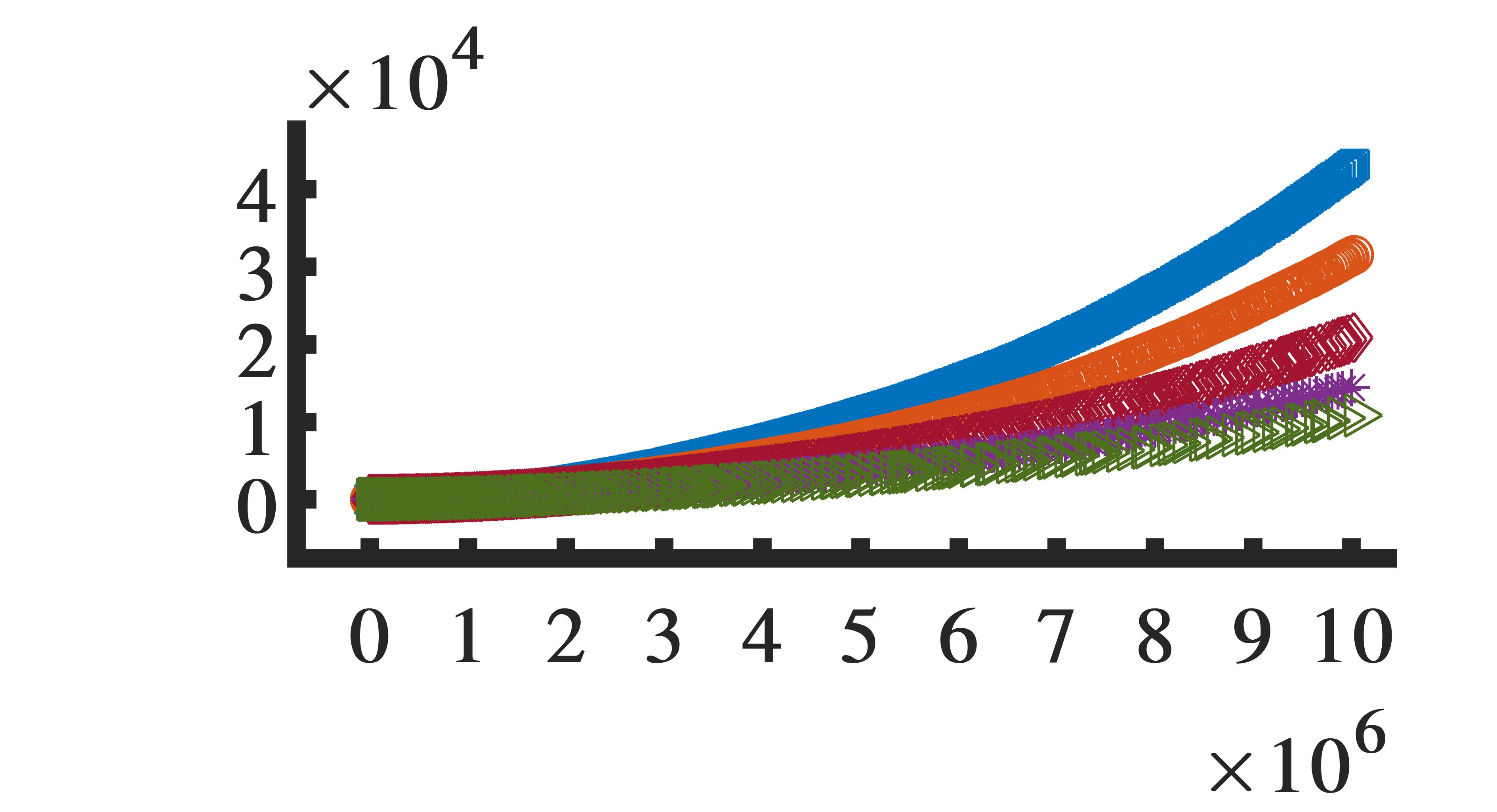}& 
            \includegraphics[width=0.288\textwidth]{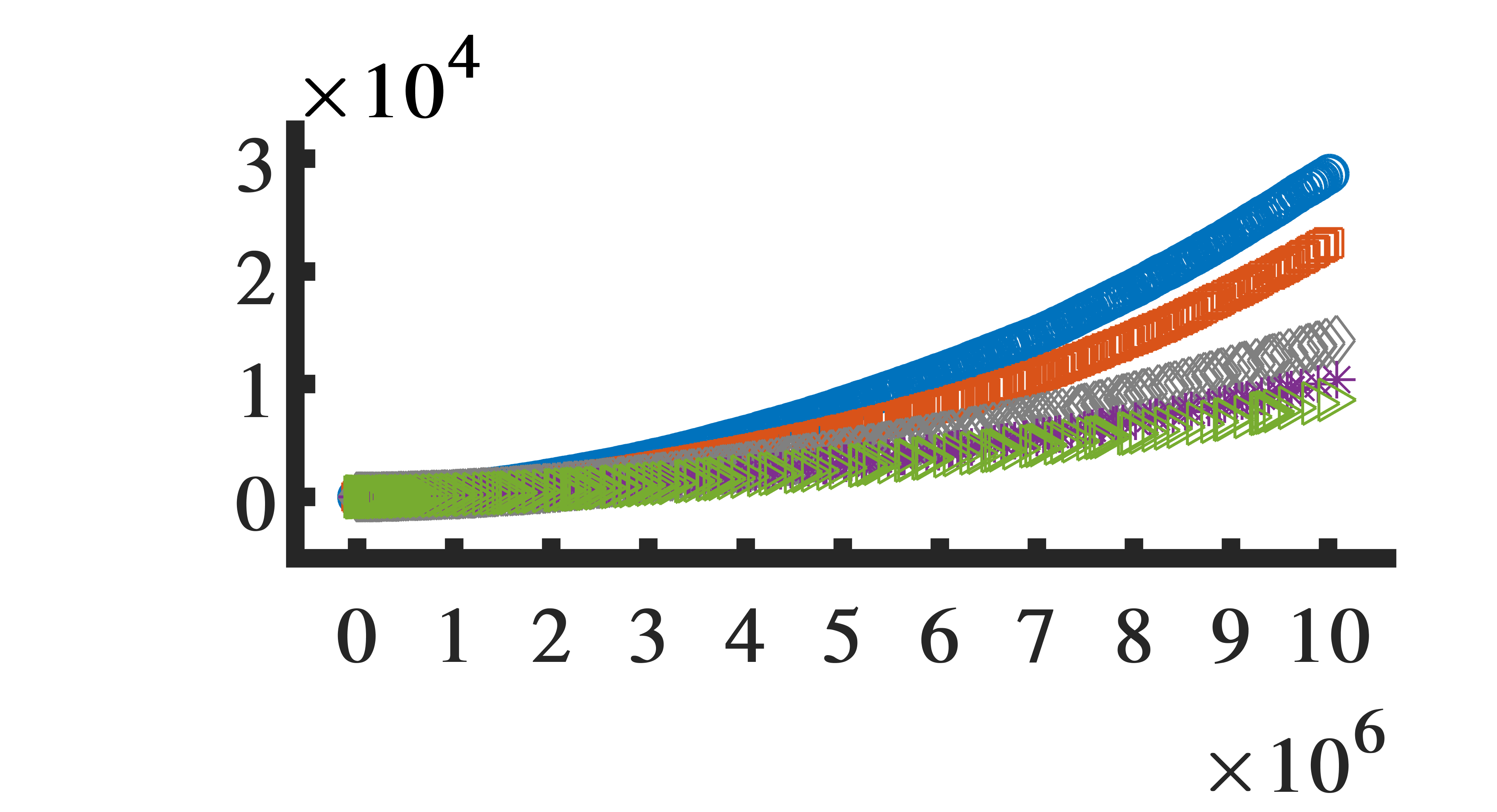}\\
            
            \midrule           
            \makecell{(c)\\\boldsymbol{$M$$=$$100-300$}\\$\rho$$=$$0.3,0.5,0.7$\\$\beta$$=$$5$} &  
           \includegraphics[width=0.28\textwidth]{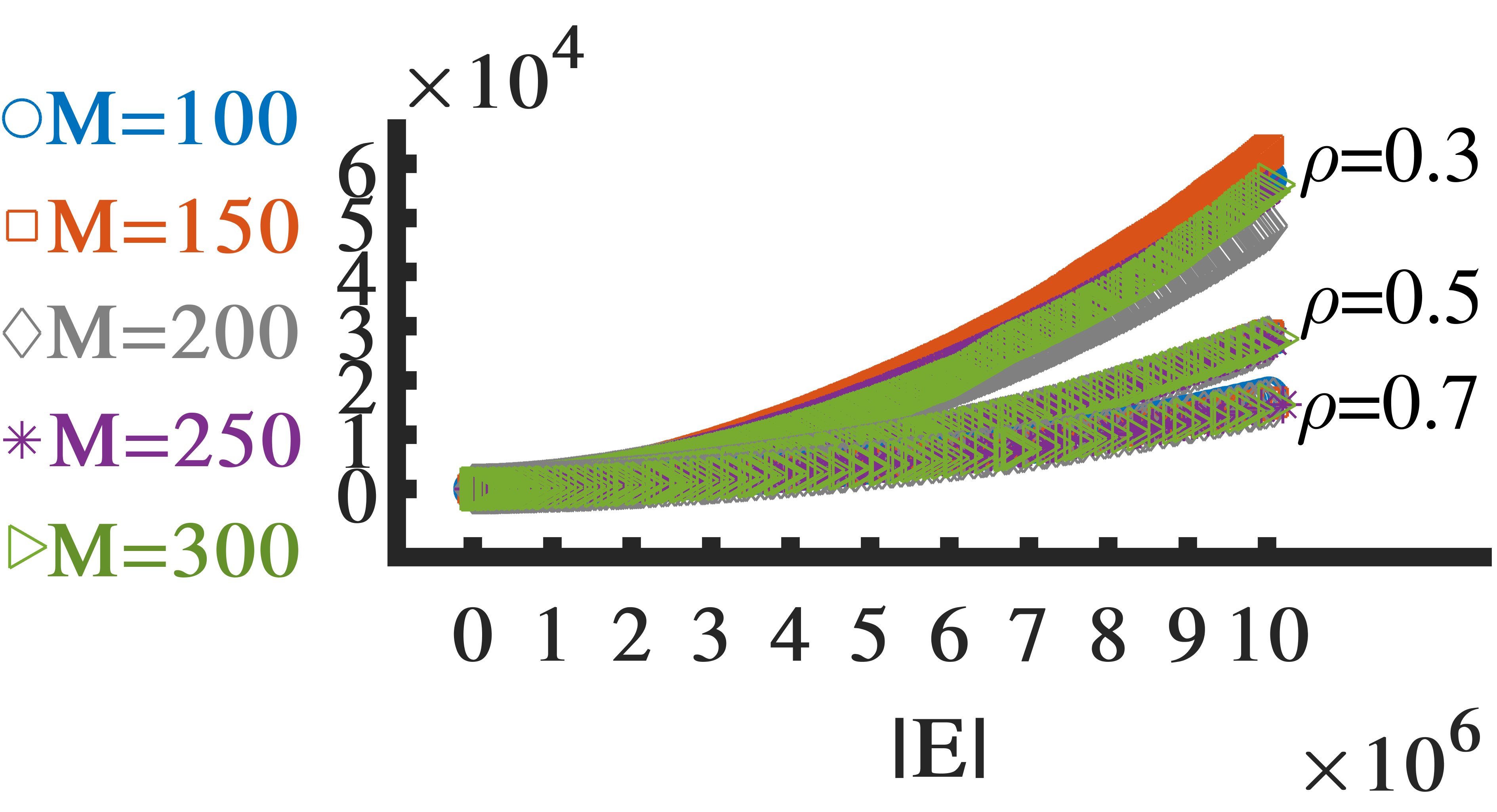} & 
            \includegraphics[width=0.288\textwidth]{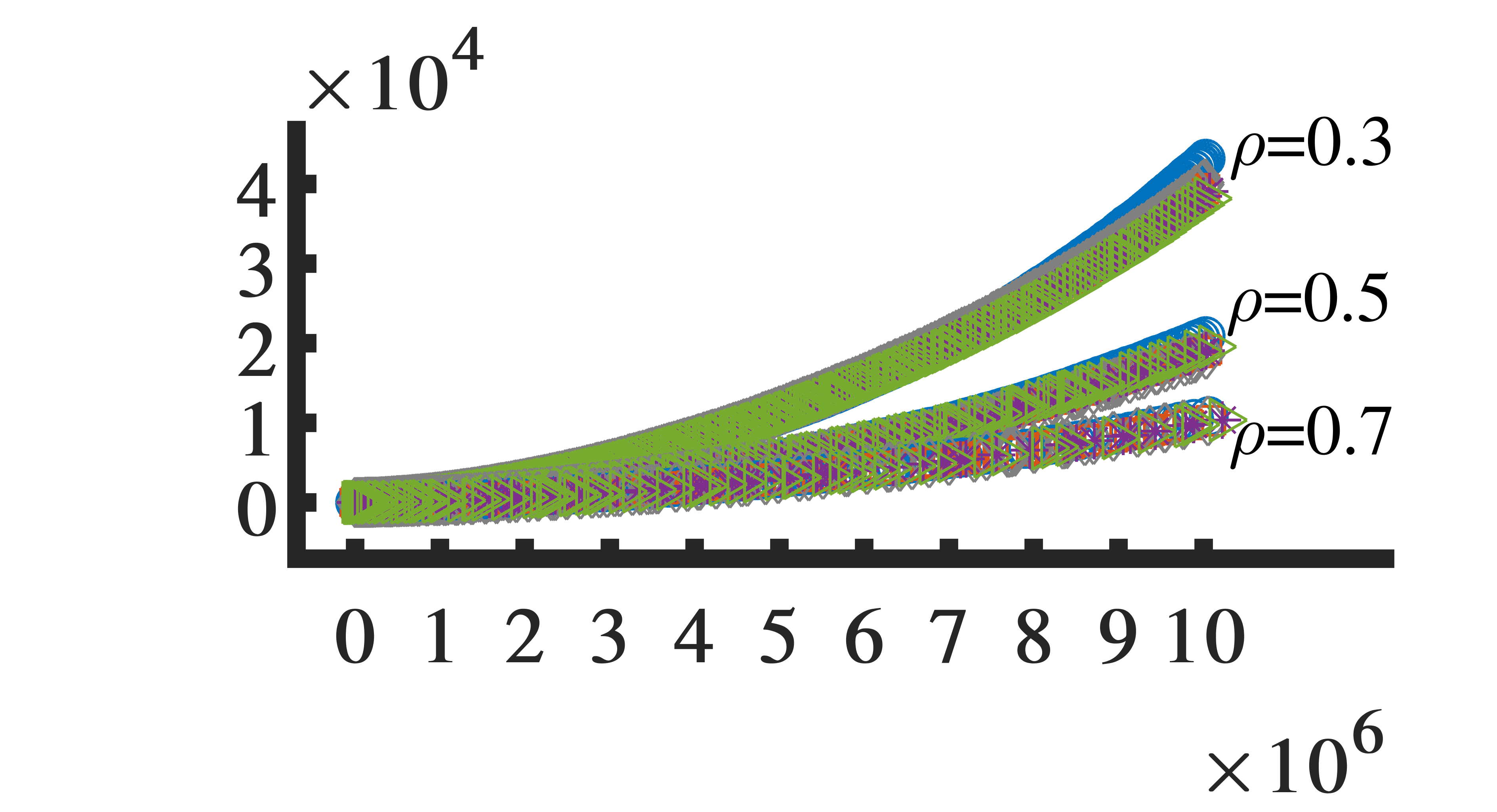} & 
            \includegraphics[width=0.288\textwidth]{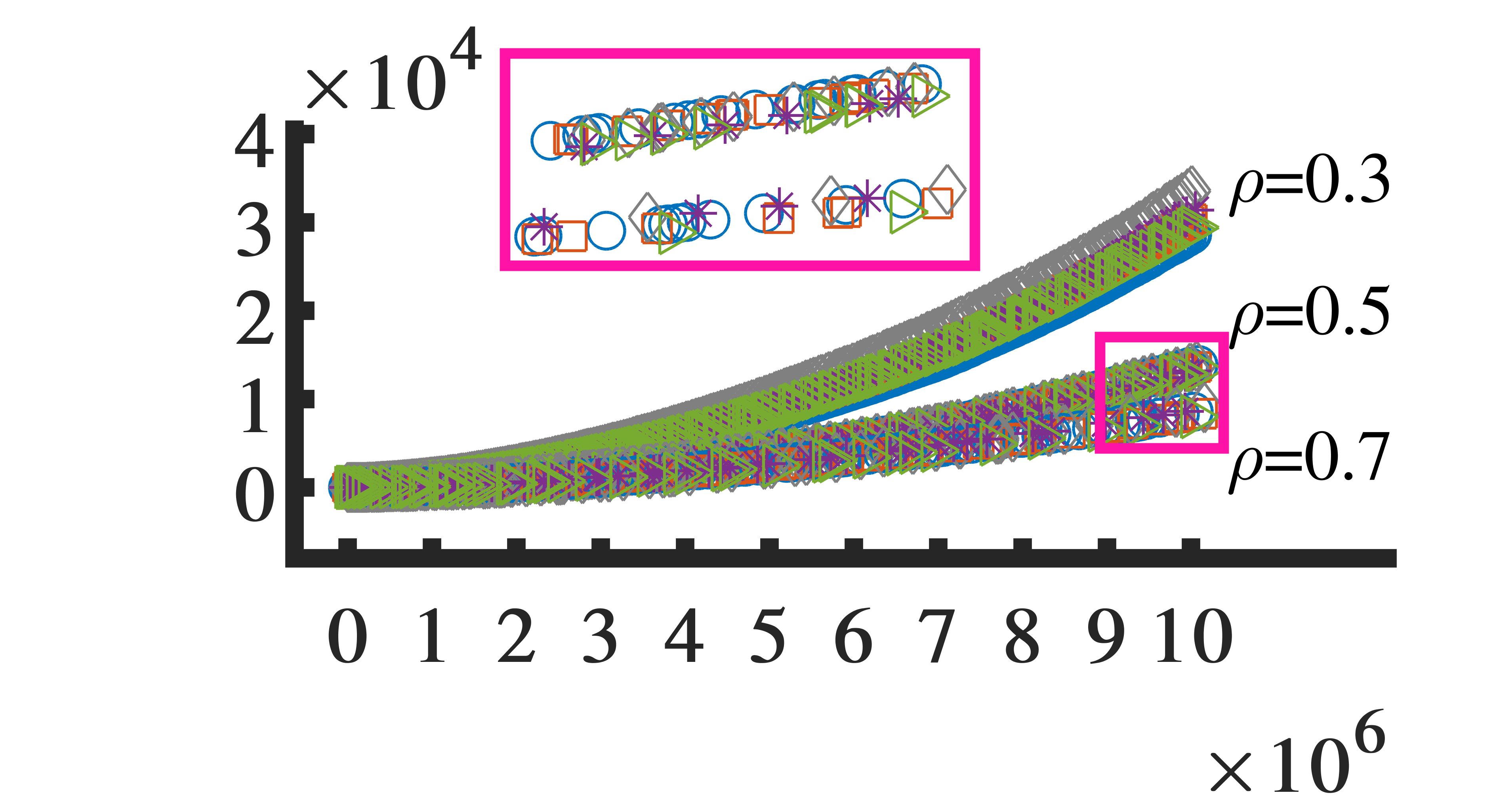}  \\            
            \midrule
            \makecell{(d)\\$M$$=$$100$\\$\rho$$=$$0.3,0.5,0.7$\\\boldsymbol{$\beta$$=$$5-20$}} &  
            \includegraphics[width=0.28\textwidth]{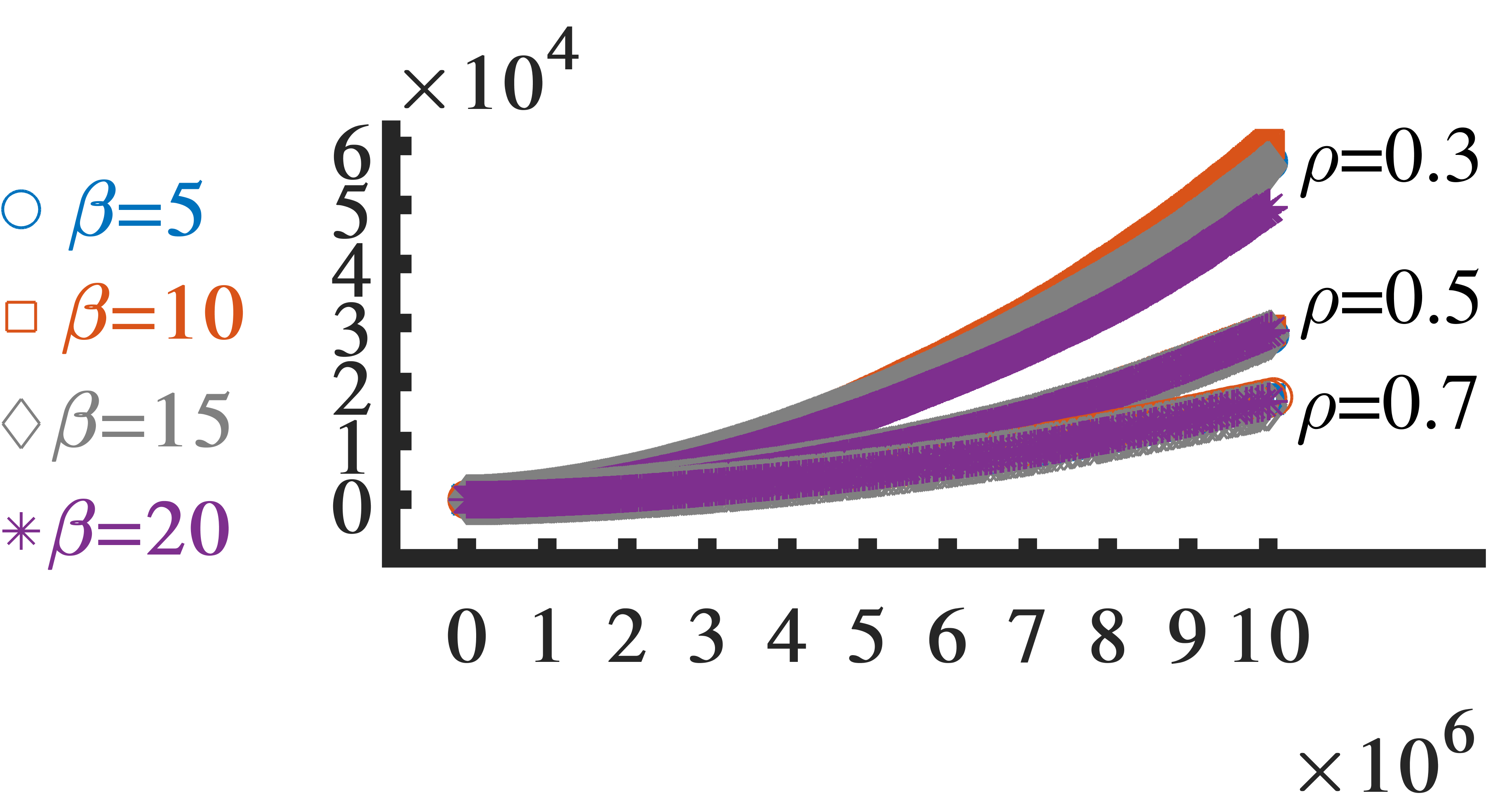} & 
            \includegraphics[width=0.288\textwidth]{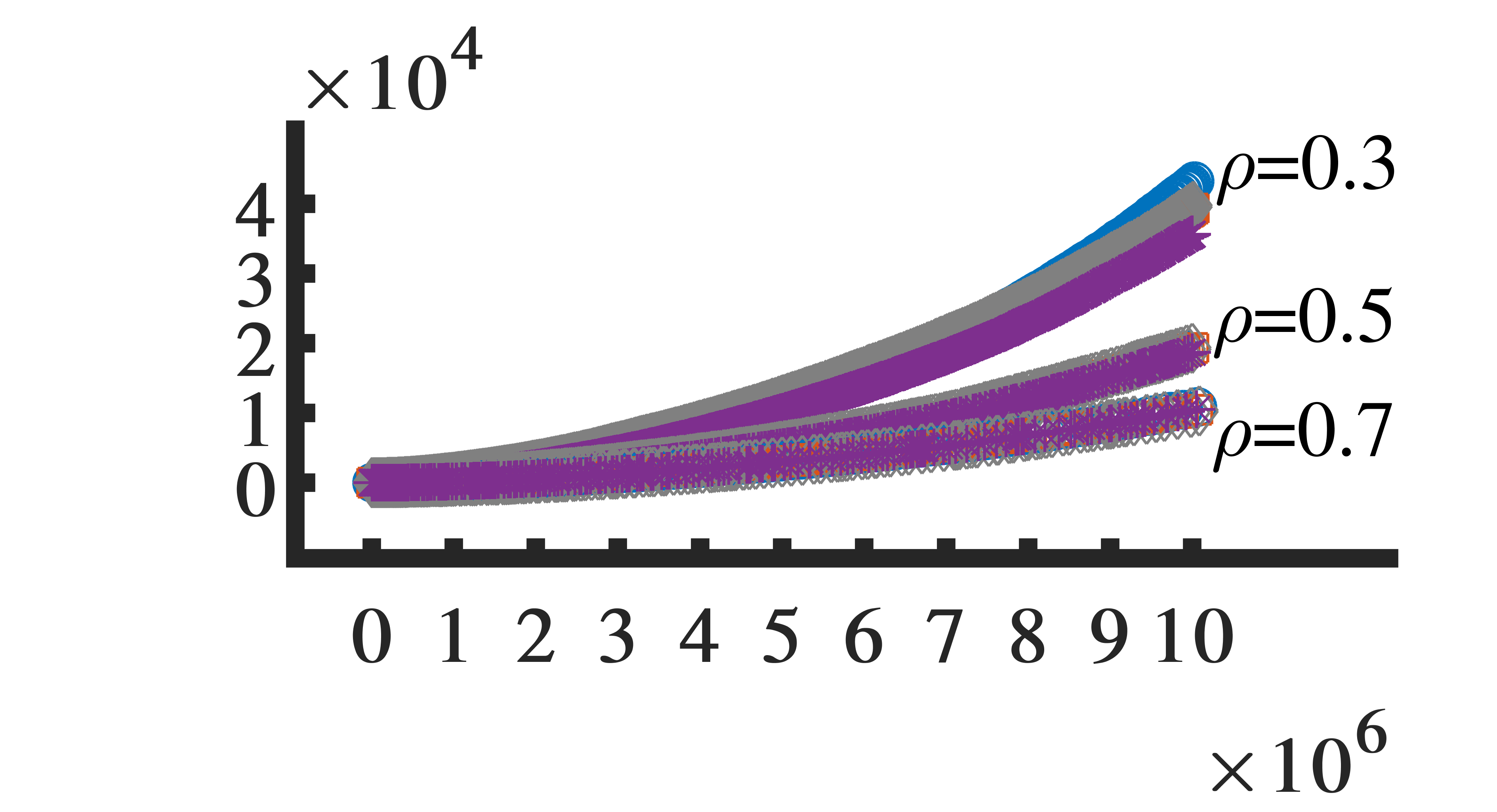} &  
            \includegraphics[width=0.288\textwidth]{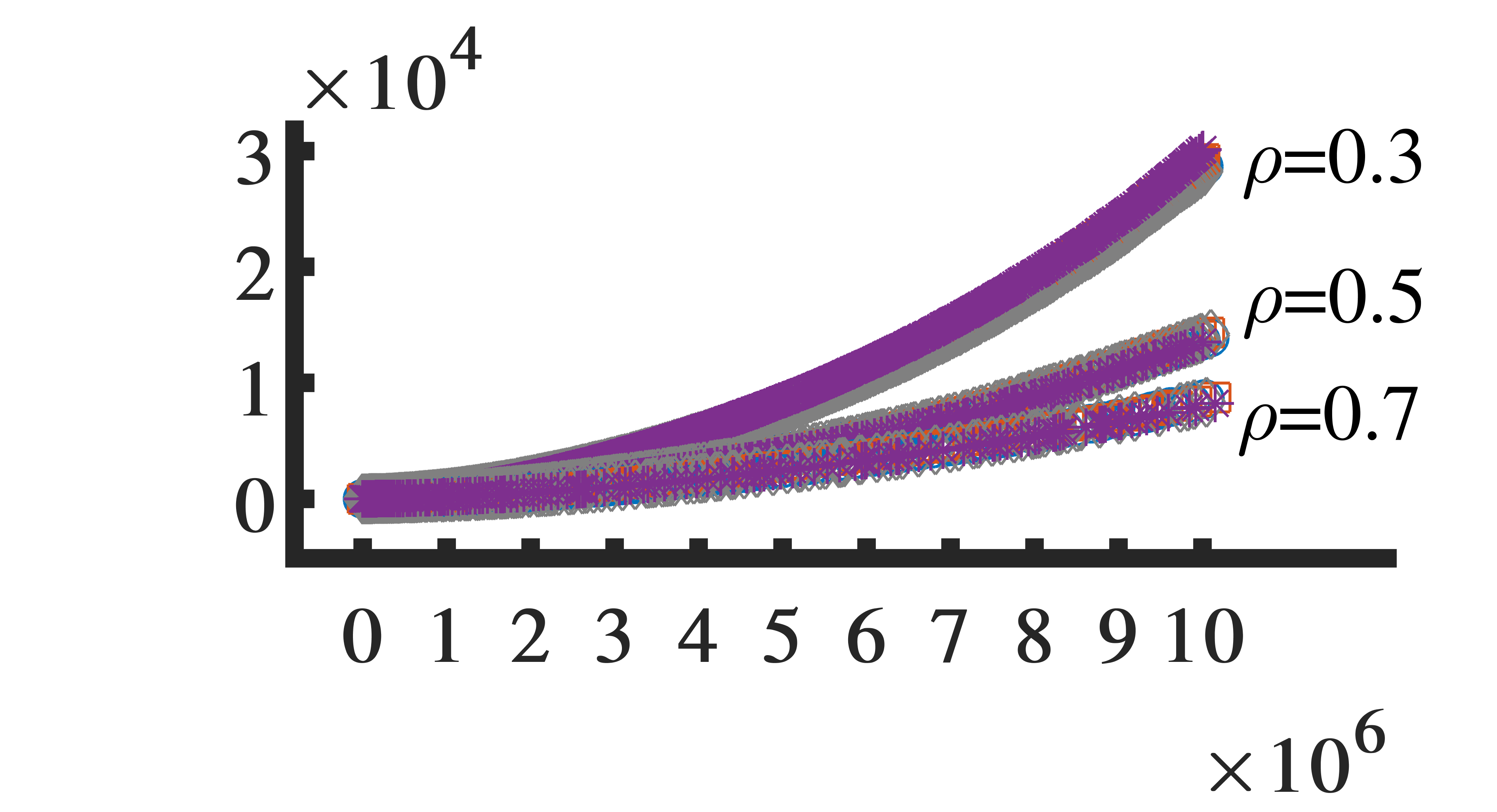}\\

        \end{tabular}
        \caption{Generation time (s) of sGrow (y axes) versus the number of sgrs $|\Re|$ (x axes) for S-Amazon with different parameter configurations.}
        \label{tbl:tableoffigures}
    \end{table}
    \begin{figure}[htb!]\centering
    \subfigure{\includegraphics[width=0.17\textwidth]{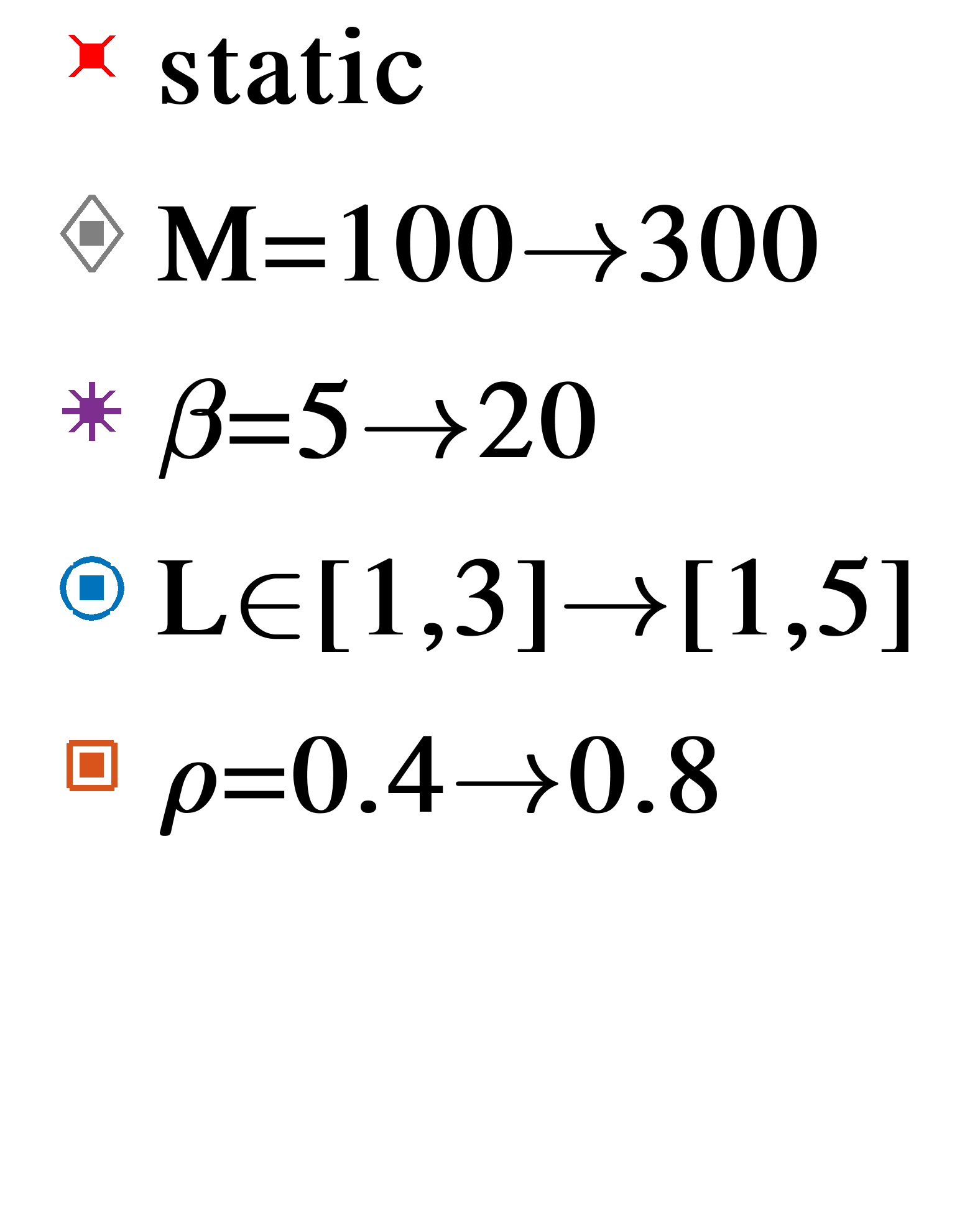}}
    \subfigure{\includegraphics[width=0.4\textwidth]{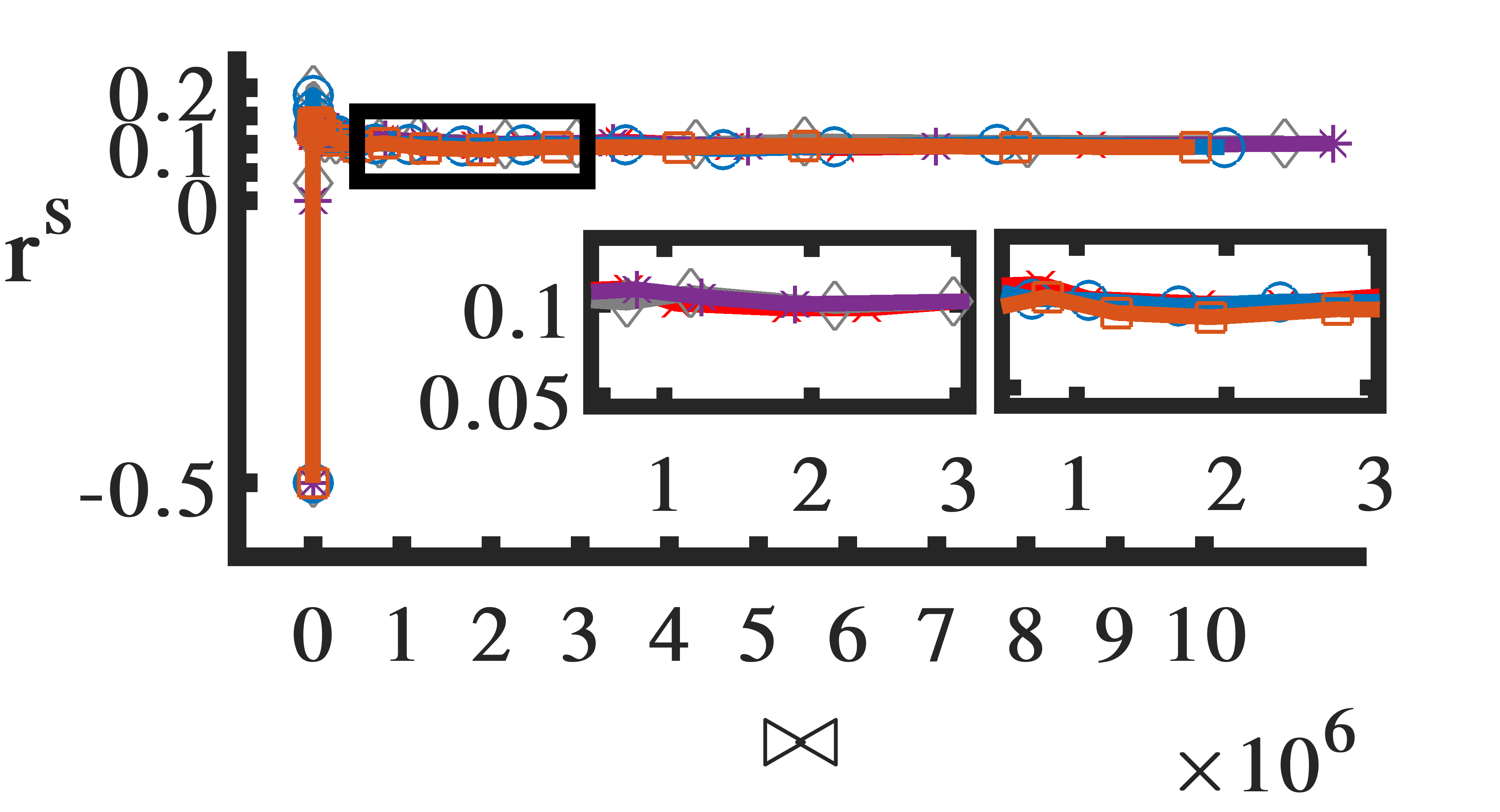}}
    \subfigure{\includegraphics[width=0.4\textwidth]{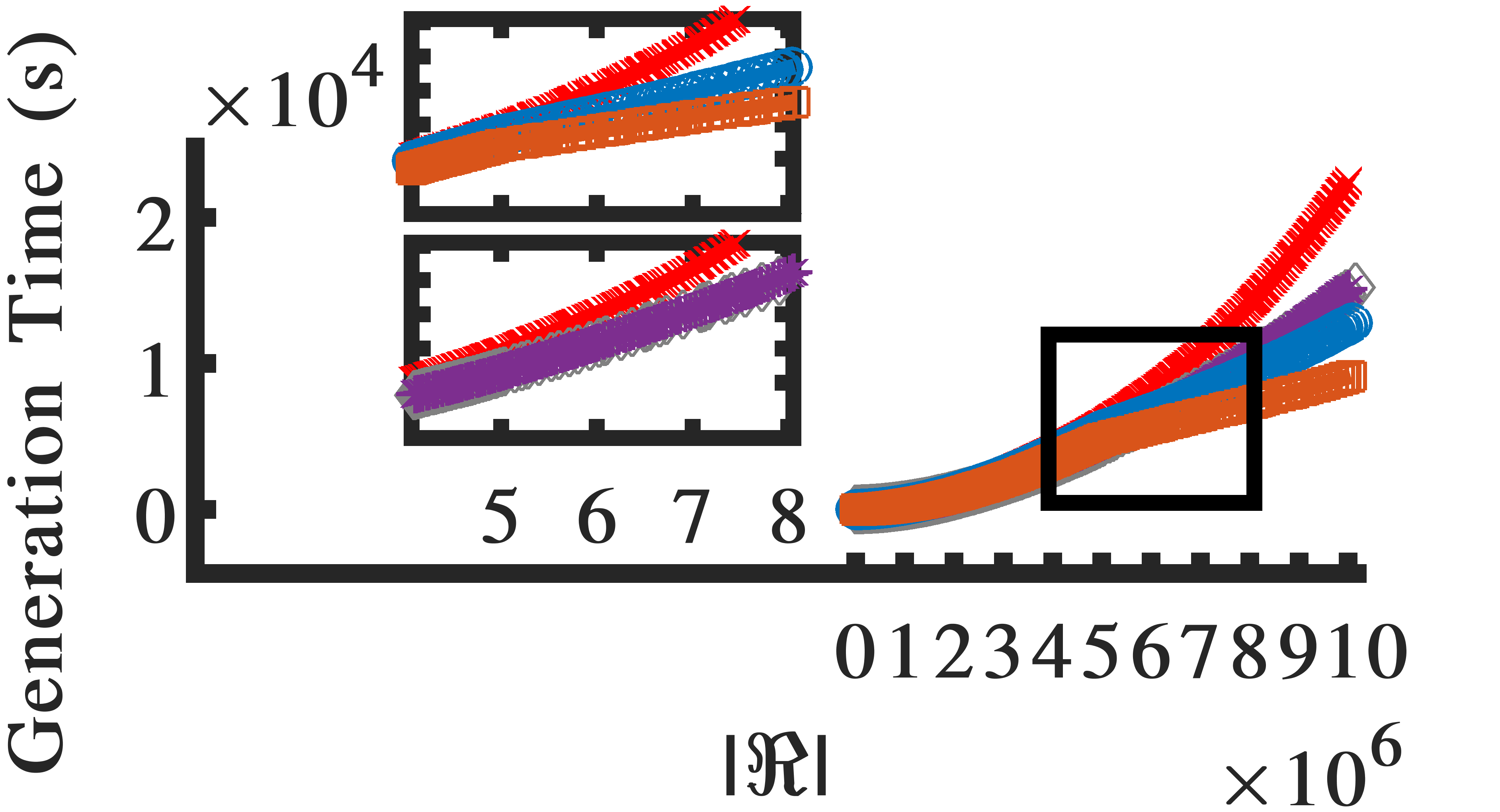}}
   \caption{ Strength assortativity localization factor $r^s$ versus butterfly count $\bowtie$ and  generation time (s) versus the number of sgrs $|\Re|$ in S-Amazon with parameter switch from $M=100$, $\beta=5$, $L\in[1,5]$ and $\rho=0.4$ to $M=300$, $\beta=20$, $L\in[4,5]$ and $\rho=0.8$ after generation of $5\times10^6$ sgrs.}\label{fig:dynamic}
\end{figure}

\newcolumntype{M}[1]{>{\centering\arraybackslash}m{#1}}
\begin{table}[htb!]\centering
        \begin{tabular}{cM{35mm}M{35mm}M{35mm}}

             &\hspace{8mm} $L$$\in$$[1,3]$ & \hspace{8mm}$L$$\in$$[1,4]$ & \hspace{8mm} $L$$\in$$[1,5]$   \\
            \toprule 
            
            \makecell{(b)\\$\rho$$=$$0.3$\\$\beta$$=$$5$} & 
            \includegraphics[width=0.28\textwidth]{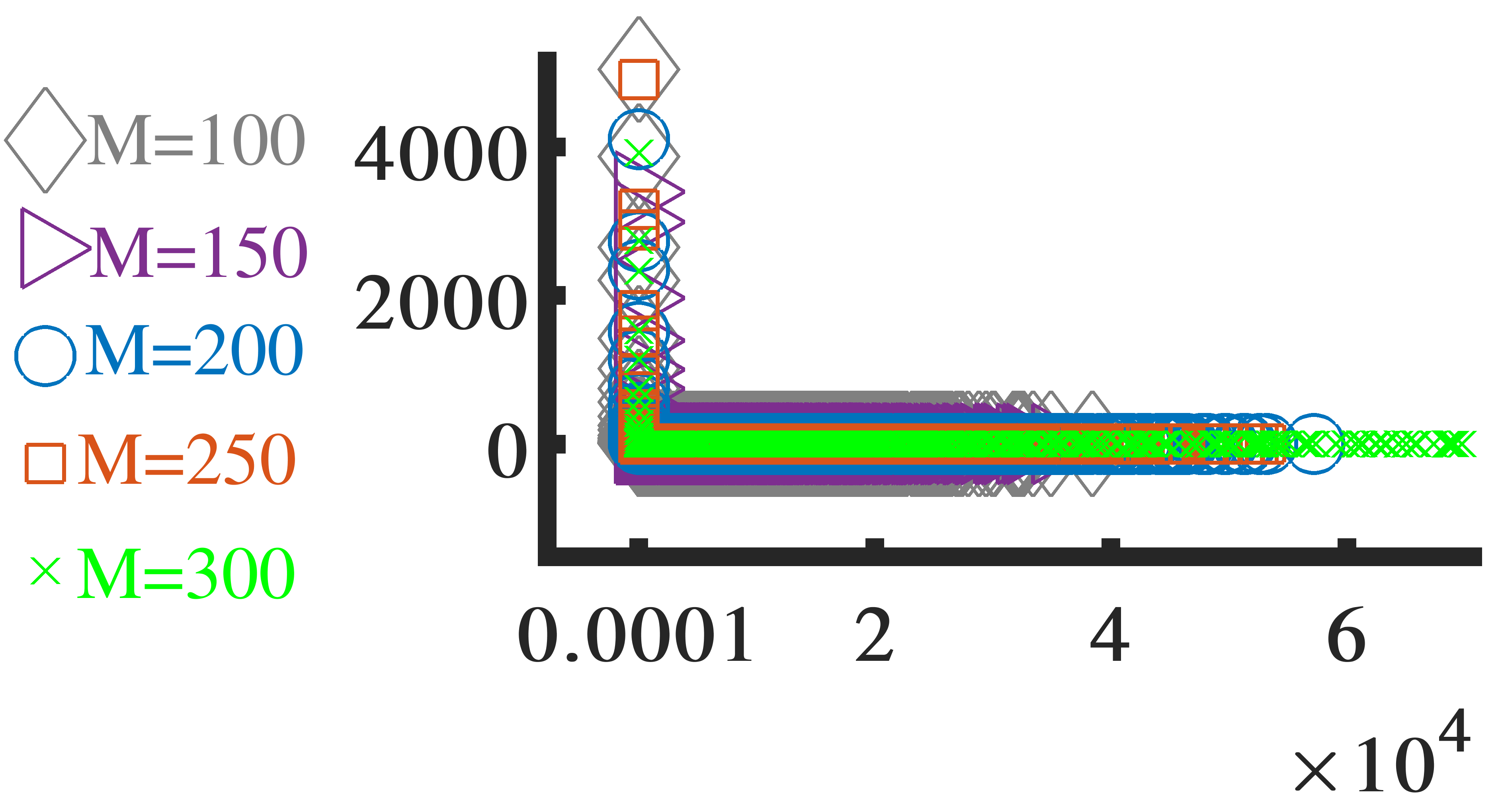} &  
            \includegraphics[width=0.288\textwidth]{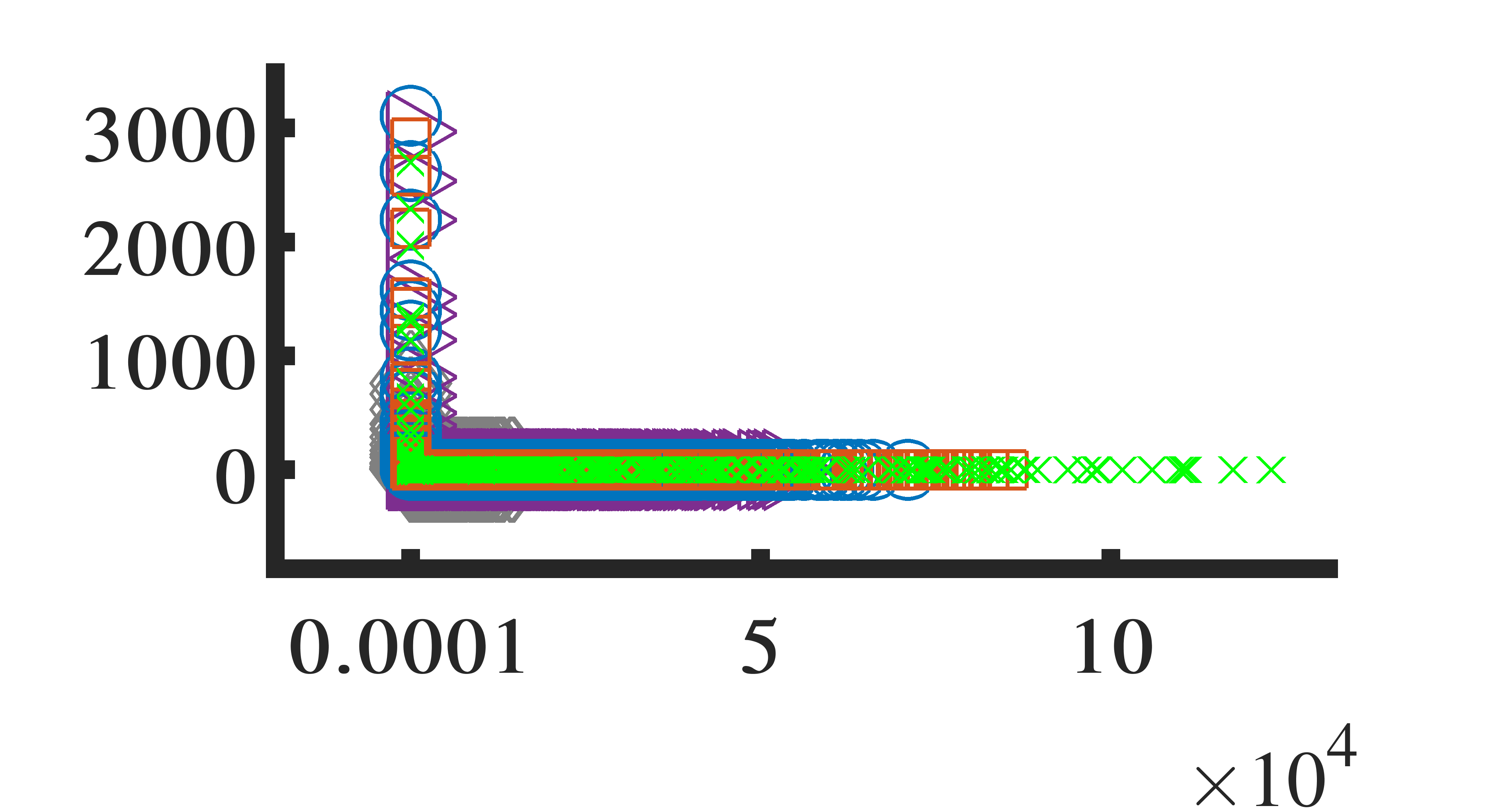}& 
            \includegraphics[width=0.288\textwidth]{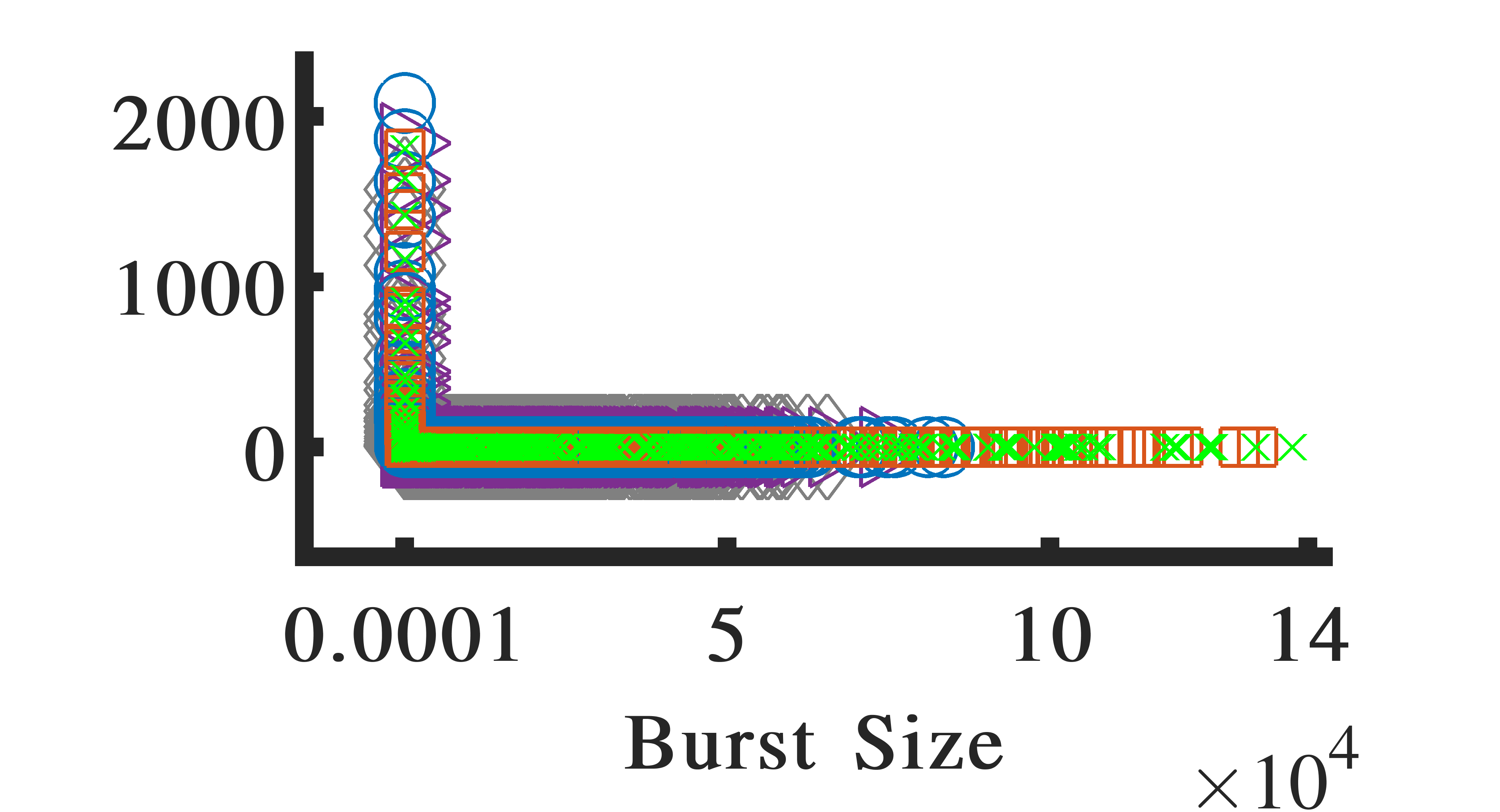}\\
            
            \midrule           
            \makecell{(c)\\$\rho$$=$$0.5$\\$\beta$$=$$5$} &  
           \includegraphics[width=0.28\textwidth]{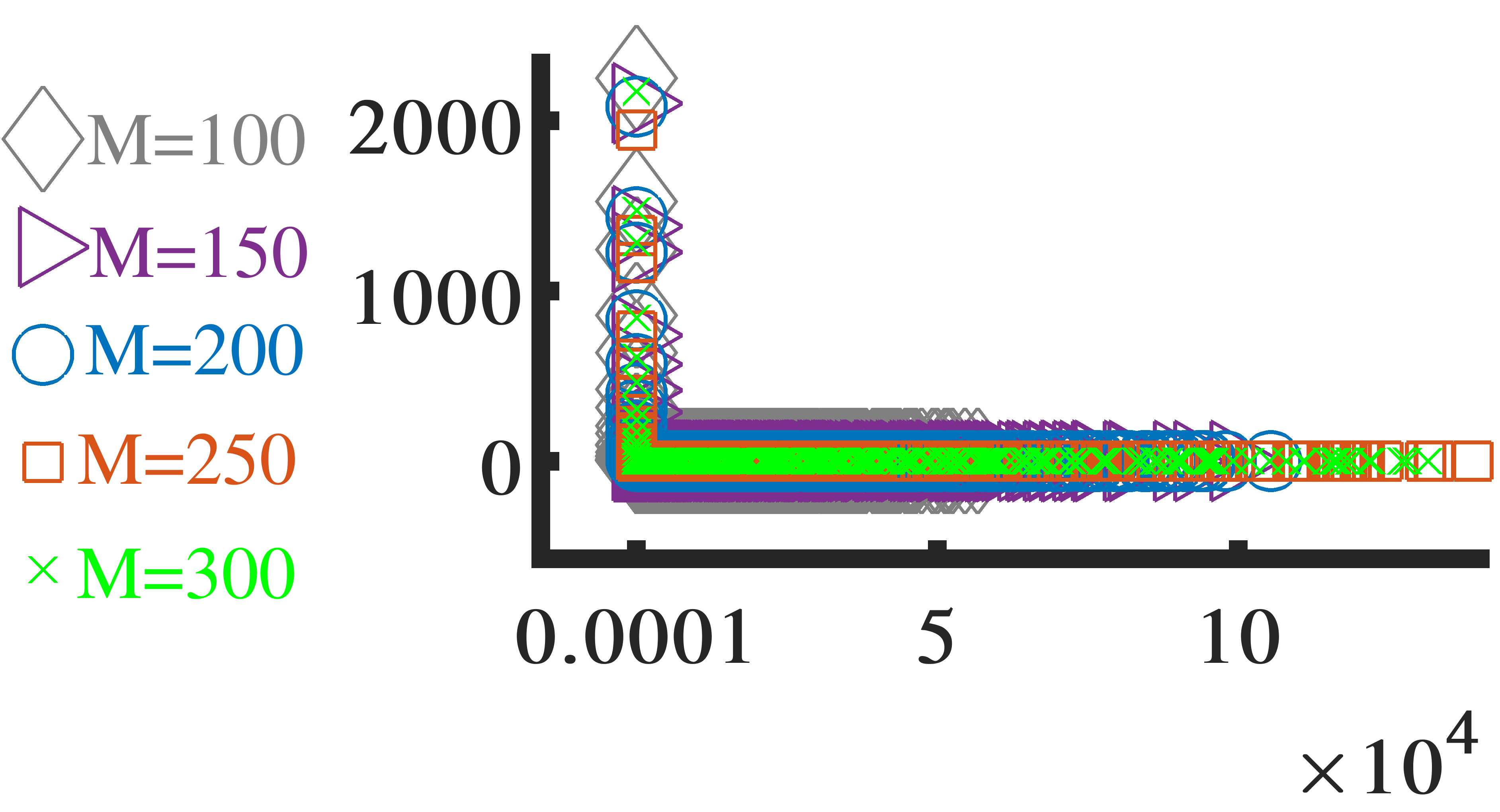} & 
            \includegraphics[width=0.288\textwidth]{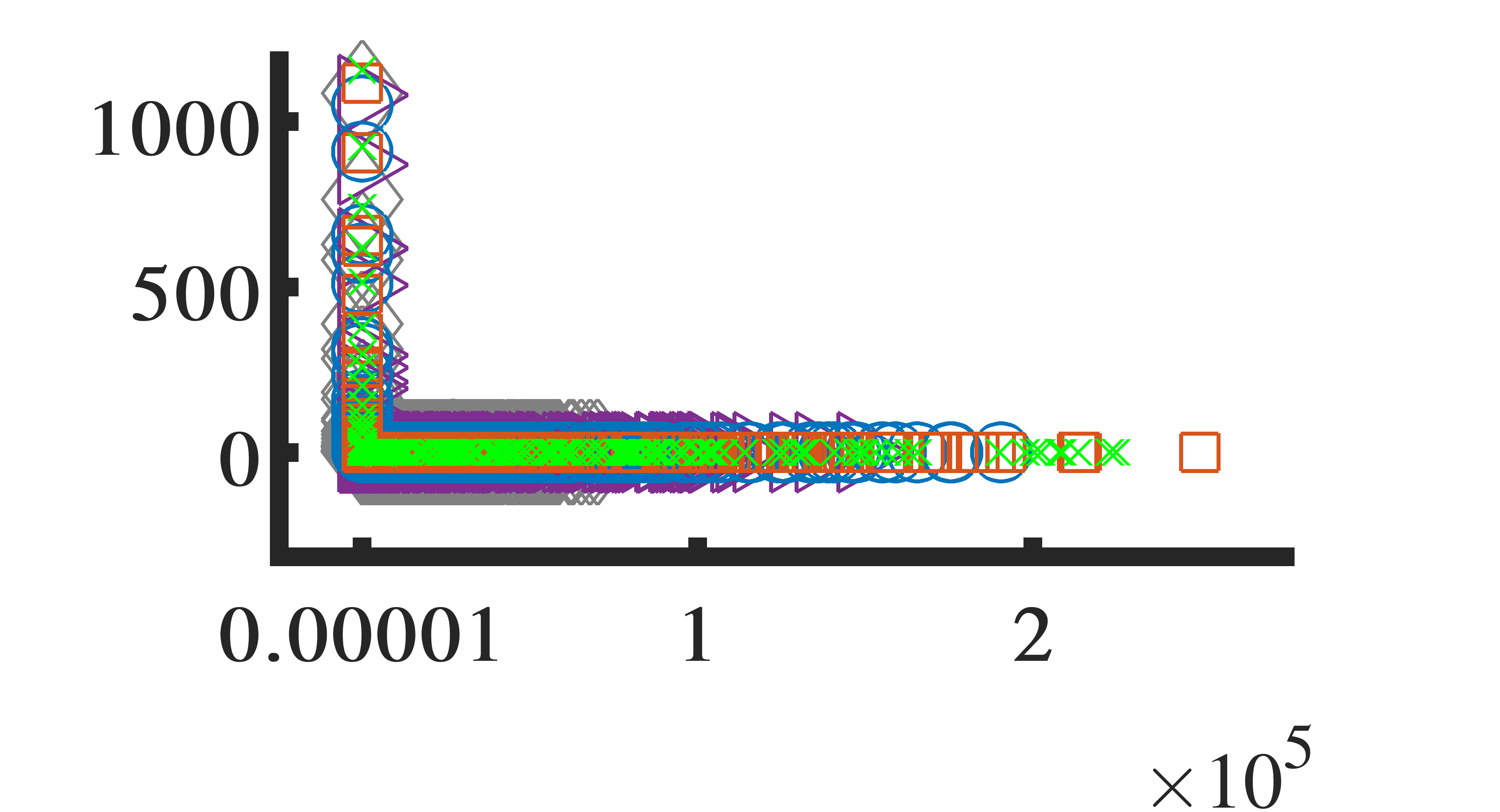} & 
            \includegraphics[width=0.288\textwidth]{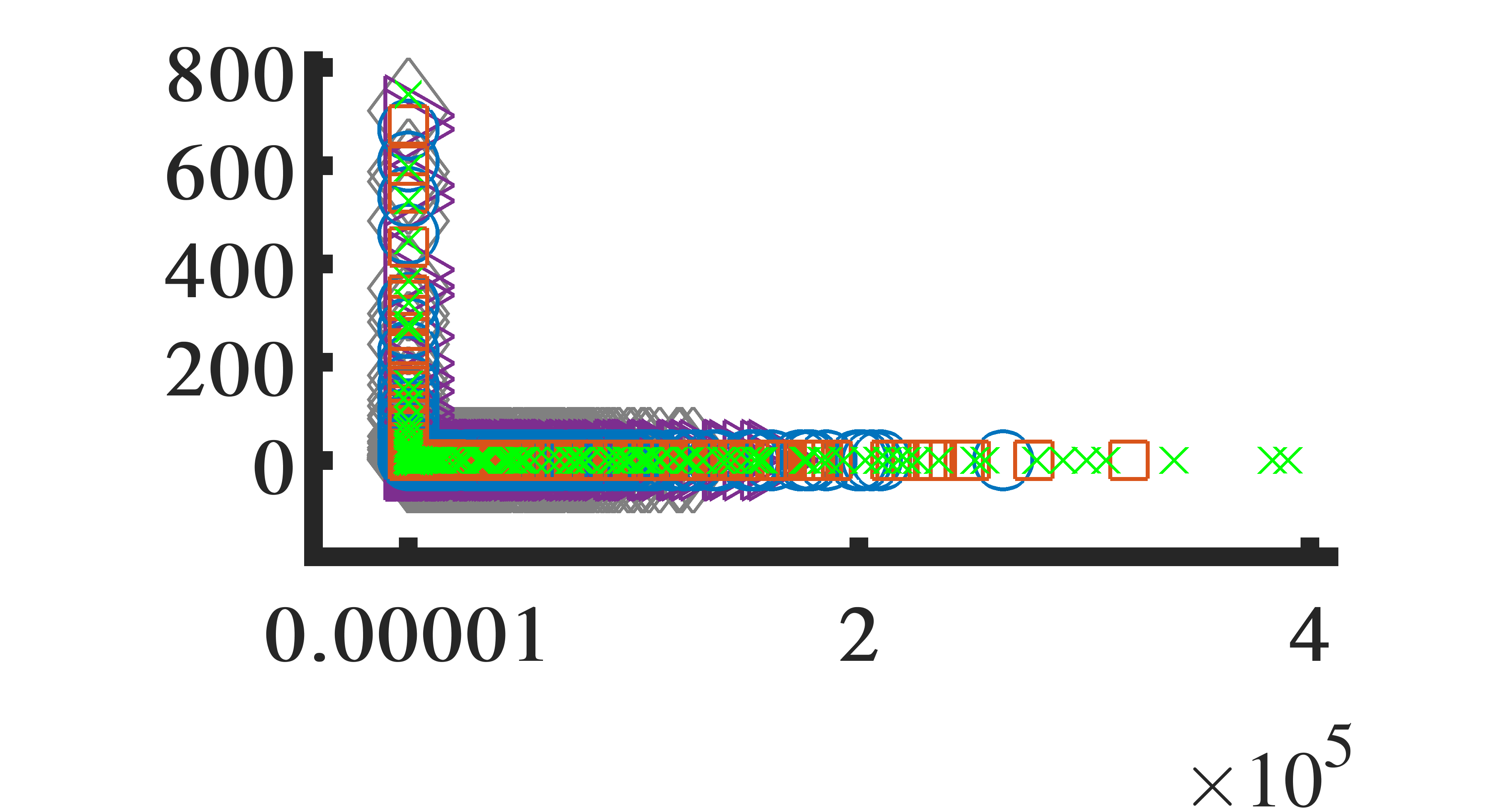}  \\            
            \midrule
            \makecell{(d)\\$\rho$$=$$0.7$\\$\beta$$=$$5$} &  
            \includegraphics[width=0.28\textwidth]{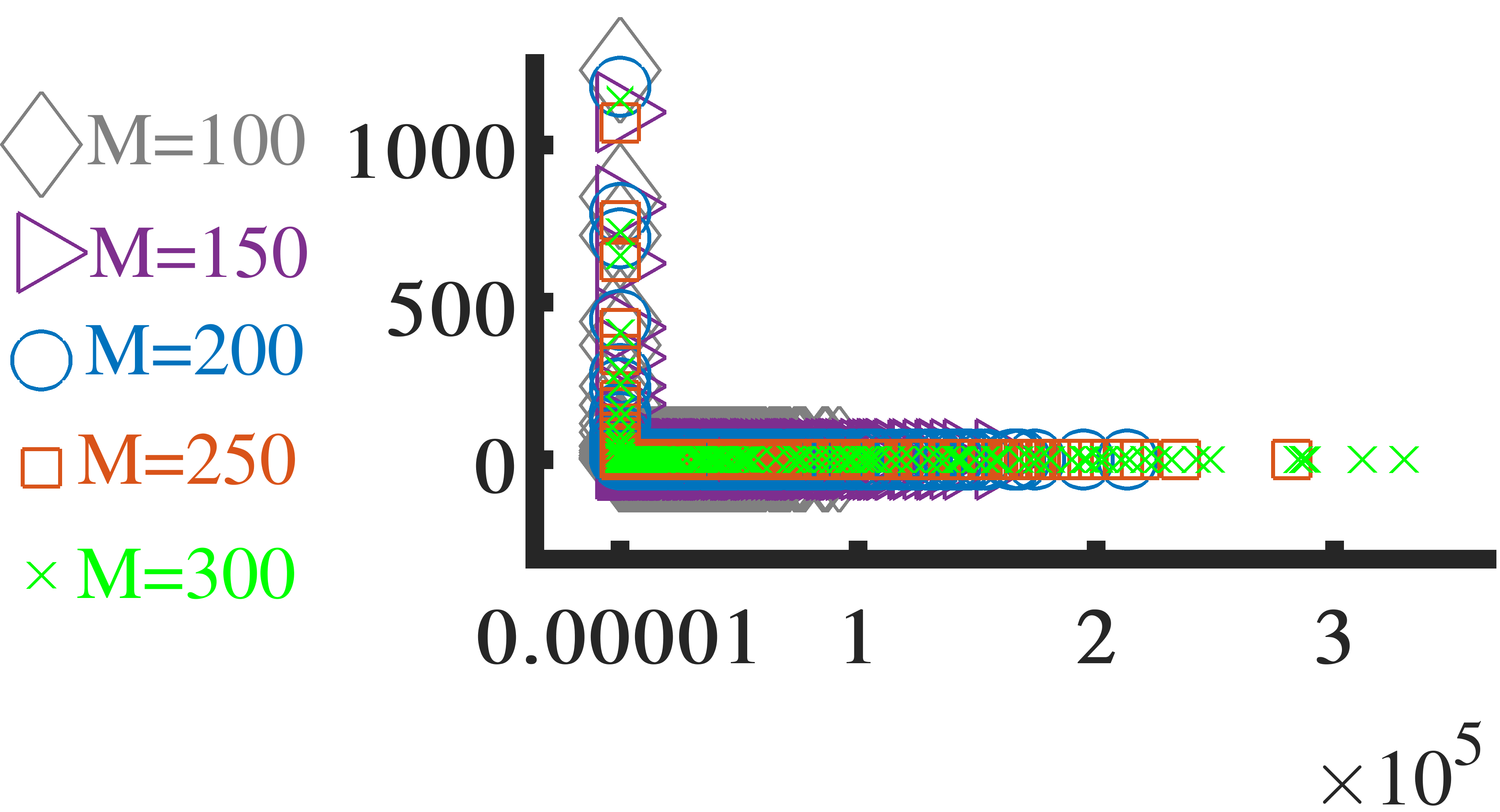} & 
            \includegraphics[width=0.288\textwidth]{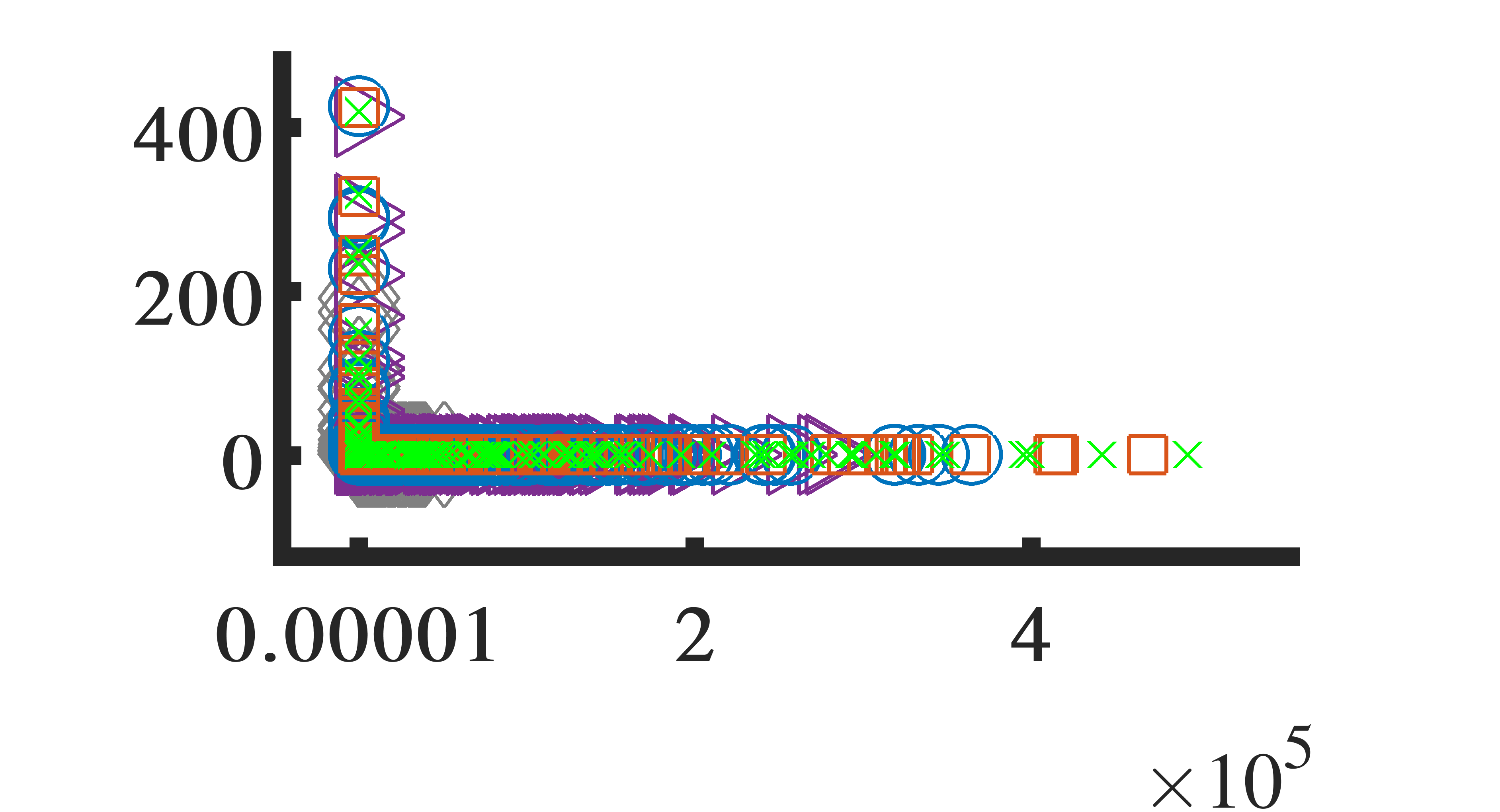} &  
            \includegraphics[width=0.286\textwidth]{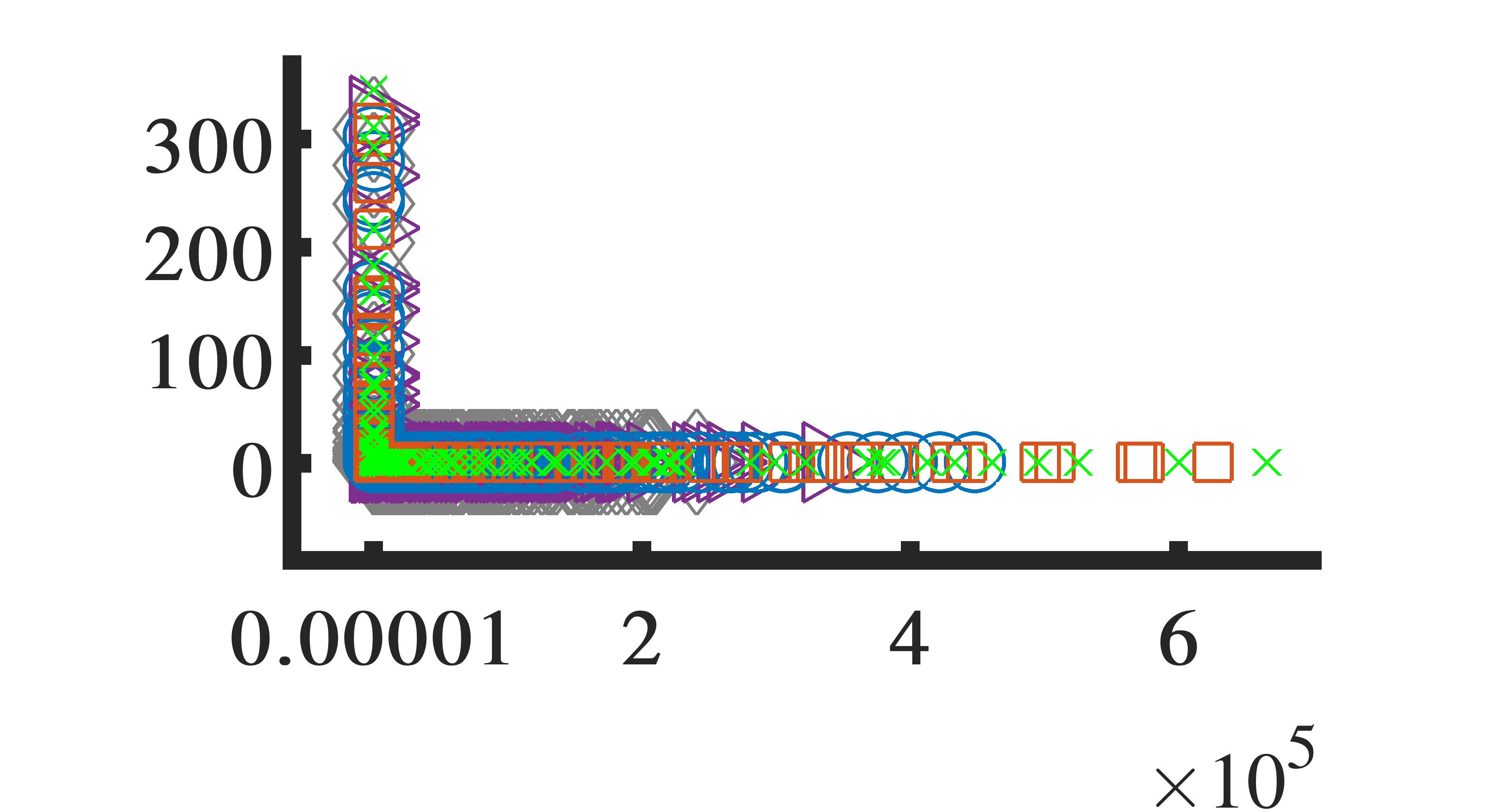}\\

        \end{tabular}
        \caption{Frequency (y axes) of burst sizes (x axes) in S-Amazon with $10^7$ sgrs.}
        \label{tbl:burstsize}
    \end{table}
\subsection{Discussion}\label{subsec:discussevaluations}
Our evaluations indicate that \emph{sGrow} efficiently and effectively reproduces the bursty emergence patterns of cohesive building blocks in the bipartite streaming graphs, regardless of the initial conditions, the scale and temporal characteristics of the generated stream, and the model configurations. This confirms that the introduced microscopic mechanisms in the body of \emph{sGrow} \emph{robustly} explain the observed streaming growth phenomenon in real-world streams. 
Our analysis also verifies the ability of \emph{sGrow} in generating realistic streaming graphs configured with user-specified properties for the scale and burstiness of the stream, level of strength assortativity, probability of-of-order streaming records, generation time, and time-sensitive connections.
\section{Conclusion}\label{sec:conclusion}
Butterflies are key building blocks of streaming bipartite graphs and their emergence patterns implies the growth patterns in streaming graphs. In this paper, we study the emergence of butterflies in streaming graphs displaying superlinear growth of butterfly count wrt the edge count. Integrating the connectivities, weights, and fine-grained temporal information, we investigate the strength assortativity of butterflies over the timeline of burst arrivals. 
We introduce a quantification approach for statistical analysis of confounding distributions that enable effective temporal analysis of strength assortativity in bipartite graphs with skewed strength distribution. Utilizing this approach, we unveil the "scale-invariant strength assortativity of streaming butterflies", a co-occurrence of three patterns: butterfly densification, strength diversification, and steady strength assortativity. We study the existing local rules for graph growth that yield skewed distributions, degree correlation, and cohesive structures to explain the observed mixing patterns. We find that implicit degree-driven preferential attachment and copying mechanisms or solely strength-driven preferential attachment with random assignment of timestamps to the edges can only partially preserve the observed patterns but are not effective enough to reproduce these patterns simultaneously. Therefore, we introduce a set of microscopic mechanisms, in the body of a proposed streaming growth model called \emph{sGrow},  based on realistic streaming graph record generation, probabilistic connections, and strength-driven preferential random walks which explain the emergence patterns of streaming butterflies. 
sGrow is based on iterative addition of bursts of edges which satisfies streaming data model,
preserves realistic patterns of butterfly emergence quantitatively and qualitatively, and 
makes the stream generation scalable. Moreover, sGrow enables generating sequence of bipartite edges attributed with timestamps and weights, isolated/out-of-order edges, and four-vertex graphlets. 
Our comprehensive evaluations validate the efficacy of sGrow in realization of streaming growth patterns effectively and independent of initial conditions, scale and temporal characteristics, and model configurations. Our analysis also verifies the robustness of sGrow in generating streaming graphs based on user-specified properties for the scale and burstiness of the stream, level of strength assortativity, probability of-of-order streaming records, generation time, and time-sensitive connections. On top of the aforementioned advantages and qualified features, sGrow suits the following applications: (1) streaming graph benchmarks by generating configurable realistic data streams supported by a reference guide for parameter configuration and stress testing analysis, (2) machine learning benchmarks by providing annotated data streams which are synthesized by realistic instance injection and suit both testing and training purposes, and (3) development of streaming algorithms and models (e.g., concept drift models) by providing microscopic mechanisms and characteristic patterns that enlighten the architecture of model/algorithm.

\bibliographystyle{plain}
\bibliography{main}

\end{document}